\tikzstyle{block} = [draw, rectangle,  minimum height=2em, minimum width=2em]
\tikzstyle{pall} = [draw,  circle, fill=black, scale=0.3]
\tikzstyle{output} = [node distance=2cm]
\tikzstyle{sum} = [draw, circle, minimum height=2em]
\newtheorem{thm}{Theorem}[section]
\newtheorem{cor}[thm]{Corollary}
\newtheorem{lem}[thm]{Lemma}
\newtheorem{prop}[thm]{Proposition}
\theoremstyle{definition}
\newtheorem{defi}[thm]{Definition}
\newtheorem{rem}[thm]{Remark}
\newtheorem{ex}[thm]{Example}
\newtheorem{alg}[thm]{Algorithm}
\newcommand{\pint}[1]{\left\lfloor#1\right\rfloor}
\newcommand{\Pint}[1]{\left\lceil#1\right\rceil}
\newcommand{\pt}[1]{\left(#1\right)}
\newcommand{\N}{\mathbb{N}}
\newcommand{\F}{\mathbb{F}_q}
\newcommand{\C}{\mathcal{C}}
\newcommand{\cw}{\boldsymbol{c}}
\newcommand{\e}{\boldsymbol{e}}
\newcommand{\rw}{\boldsymbol{r}}
\newcommand{\vct}[1]{\boldsymbol{#1}}
\newcommand{\K}{\mathbb{K}}
\newcommand{\RS}{RS(n,d,\alpha)}
\newcommand{\kn}{\mathbb{K}^n}
\newcommand{\tl}[1]{\widetilde{#1}}
\newcommand{\dist}{\textnormal{dist}}
\newcommand{\wt}{\textnormal{wt}}
\newcommand{\myd}{\textsc{d}}
\newcommand{\rk}{\textnormal{rk}}
\newcommand{\Mod}[1]{\;\left(\textnormal{mod}\,#1\right)}
\newcommand{\Deg}[1]{\textnormal{deg}\left(#1\right)}
\newcommand{\fPGZ}{\textit{f\,PGZ}\xspace}
\newcommand{\PGZ}{\textit{PGZ}\xspace}
\newcommand{\BM}{\textit{BM}\xspace}
\newcommand{\pBM}{\textit{pBM}\xspace}
\newcommand{\pPGZ}{\textit{pPGZ}\xspace}
\newcommand{\BP}{\textit{BP}\xspace}
\newcommand{\refbm}[1]{\BM.\ref{#1}}
\newcommand{\refpgz}[1]{\PGZ.\ref{#1}}
\newcommand{\reffpgz}[1]{\fPGZ.\ref{#1}}
\newcommand{\myhat}[1]{\mathring{#1}}
\newenvironment{mymatrix}{\left(\!\!\begin{array}{lllllll}}{\end{array}\!\!\right)}
\begin{document}

\frontmatter

\begin{titlepage}
\begin{center}
\includegraphics[width=3cm]{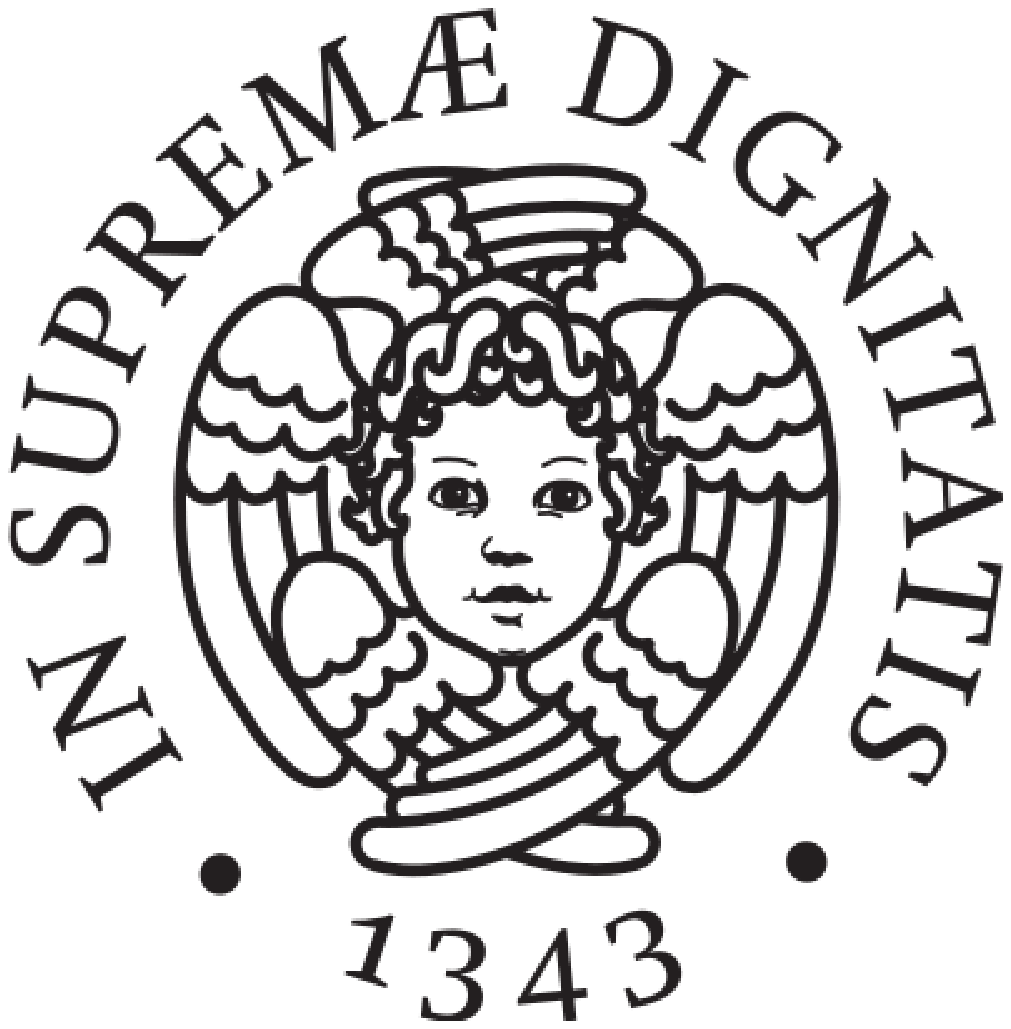}\\[0,1cm]
\textsc{\Large Università degli Studi di Pisa}\\
\vspace{0.2cm}
\textsc{\large Facoltà di Scienze Matematiche, Fisiche e Naturali}\\ 
\textsc{\large{Corso di Laurea Magistrale in Matematica}}\\
\vspace{1cm}
{\huge\textbf{Improved Decoding Algorithms for Reed-Solomon Codes}}\\[0.2cm]
\textsc{\large Tesi di Laurea Magistrale}\\
{\large 3 dicembre 2012}\\[2cm]
{\large \sc{Candidato:}}\\[0.1cm]  
{\Large \bf{Irene Giacomelli}}\\[1,5cm]
{\large \sc{Relatori:}}\\[0.1cm]
{\large \textbf{Prof.ssa Patrizia Gianni\hspace*{0,8cm} Prof. Barry Trager}}\\[1,5cm]
{\large \sc{Controrelatore:}}\\[0.1cm]  
{\large \textbf{Prof. Massimo Caboara}}\\[1,8cm]
\textsc{\Large Anno Accademico 2011/2012}
\end{center}
\end{titlepage}

\thispagestyle{empty}
%

\tableofcontents

\chapter{Preface}
In coding theory, Reed-Solomon codes are one of the most well-known and widely used classes of error-correcting codes. These codes have excellent mathematical features and important applications, from deep-space communication to consumer electronics, such as CDs and DVDs. The extensive practical use of Reed-Solomon codes is made possible by the existence of efficient decoding procedures  and, at the same time, makes it necessary to devise ever-faster decoding algorithms. Indeed, while the structure of Reed-Solomon codes is well understood, the problem of designing optimal decoding algorithms still remains an active area of research.

In this thesis we study and compare two major strategies known for decoding Reed-Solomon codes: the Peterson-Gorenstein-Zierler (\PGZ) and the Berlekamp-Massey (\BM) decoder. Our aim is to improve existing decoding algorithms, by decreasing their computational time complexity, and propose faster new ones, based on a parallel implementation in integrated circuits.

The \PGZ decoder, introduced in 1961, was the first practical decoder for Reed-Solomon codes. It is based on simple tools of linear algebra and finds the number, the positions and the values of errors, which occurred during transmission, by computing determinants or solving linear systems. Despite its simplicity, the \PGZ decoder has often been considered quite inefficient because of its  $O(t^4)$ computational time complexity, where $t$ is the error correction capability of the code. This opinion is not justified, indeed there exists an improved version of the \PGZ decoder with quadratic computational time complexity, which we will call the fast Peterson-Gorenstein-Zierler (\fPGZ) decoding algorithm. This improvement, which was presented by M. Schimidt and G.P. Fettweis in 1996, is obtained by exploiting the Hankel structure of the matrices involved in the decoding and fast inversion techniques for Hankel matrices. 

Because of this, the \PGZ decoder can finally be considered an effective alternative to the  \BM decoder, which was introduced in 1969 as a first example of a quadratic time decoding algorithm for Reed-Solomon codes. In addition we show that the modified version of the \PGZ decoder  is not only an alternative to the \BM decoder, but in a sense it is a particular case of the latter. In fact, we prove that the intermediate outcomes obtained in the implementation of \fPGZ are a subset of those of the \BM decoding algorithm. Thus we show that the relationship between these two decoding strategies for Reed-Solomon codes is much closer than previously thought. 

The \BM decoder is an iterative algorithm which computes a recursive sequence of polynomials converging to the error-locator polynomial $\sigma(x)$, whose degree is the number of errors and whose roots identify the error positions. The error values are usually computed by evaluating $\omega(x)$ in the roots of $\sigma(x)$, where $\omega(x)$ is the error-evaluator polynomial.  We prove that it is possible to improve the error value computation in the \BM decoder avoiding the additional operations needed to compute $\omega(x)$ from $\sigma(x)$. We achieve this result by  using a polynomial which is a byproduct of the computation of $\sigma(x)$ in the place of $\omega(x)$. This alternative method for the evaluation of errors in the \BM decoder had already been observed by T. Horiguchi, but in this thesis we obtain a more direct proof, recovering the new formula as a corollary of the correctness of the \BM decoding algorithm.

Nevertheless we study the techniques of linear algebra used by T. Horiguchi in order to generalize the applications of his new error evaluation method. Thanks to this generalization, we also improve the error value computation in the \PGZ decoder. In fact we prove that the error values can be calculated by evaluating polynomials that are intermediate outcomes of the  \fPGZ decoding algorithm, instead of solving a separate linear system as usual.

Moreover thanks to the study done on the structure of the syndrome matrix and its leading principal minors, we can state a new iterative formulation of the \PGZ decoder well suited to a parallel implementation on integrated microchips. Indeed we prove that the number, the positions and the values of the errors can be computed directly by using the leading principal minors of the syndrome matrix, whose computation can be accomplished iteratively via a parallel implementation of the Laplace expansion for determinants. We show that this parallel version of the \PGZ decoder is a decoding algorithm for Reed-Solomon codes with an $O(e)$ computational time complexity, where $e$ is the number of errors which occurred, although  a fairly large number (about $t\binom{t}{\pint{t/2}}$) of elementary circuit elements is needed.

Finally a parallel implementation for the \BM decoder is given, which is less expensive and simpler from the point of view of the hardware required. In this case we restructure the \BM decoding algorithm in order to avoid some instructions hindering the parallel implementation. The result is a decoding algorithm with an $O(t)$ computational time complexity.

Our conclusions are that the \PGZ and \BM decoder are both valid alternatives for decoding Reed-Solomon codes. As seen, the choice of a decoding algorithm and an architecture for its hardware implementation has to be influenced by the particular application of the code and its resulting conditions. We analyze several different cases throughout our thesis.\\

%

The organization of the thesis is as follows. In chapter 1, Reed-Solomon codes and the general outline of their decoding algorithms are described. In chapter 2, the Peterson-Gorenstein-Zierler decoder and its quadratic improved version are discussed. We also describe the necessary and sufficient conditions to avoid decoder malfunctions in the \fPGZ decoding algorithm. In chapter 3, we study in detail the Berlekamp-Massey decoder and its implementation as a $t$-bounded distance decoding algorithm and we compare the latter with the \fPGZ decoding algorithm. Chapter 4 deals with Horiguchi's formula to compute the error values and its applications to the \fPGZ and the \BM decoding algorithms and  continues the comparison between these two decoding strategies. Finally, in chapter 5 we present and compare two parallel decoding algorithms implementable as integrated circuits, one based on the \PGZ decoder and linear algebra and the other on the \BM decoding algorithm.


\mainmatter

\chapter{Coding Theory}

Coding theory deals with mathematical models and algorithms to transmit data across noisy channels. 
In sections \ref{sect:error-correting codes}, \ref{sect:linear codes} and \ref{sect:cyclic codes} we will give a brief introduction to error-correcting codes and how they work. For the proofs of the propositions and theorems stated in these sections, we refer to \cite{hoffman}. In section \ref{sect:RScodes} we will introduce Reed-Solomon codes, with the notations and the properties that will be used in the next chapters to investigate their decoding procedure. The general outline  of the latter is described in section \ref{sect:decodingRS}.

\section{Error-Correcting Codes}
\label{sect:error-correting codes}
Error-correcting codes are used to detect and correct errors that occur when data are transmitted across some noisy channel or stored on some medium. In many cases, the data are transmitted or stored as a sequence of words of fixed length $n$ and each word is made up of symbols taken from a finite alphabet. We describe this situation considering a finite field $\K$ and a fixed natural number $n$. A \emph{block code} (or simply code)  \emph{of length $n$} is a subset of $\kn$, denoted by $\C$. In this model, the elements of the field $\K$ represent the symbols of the finite alphabet and the vectors in $\kn$ represent all possible words of length $n$. For this reason, throughout this thesis, words (of length $n$) are synonymous with vectors of $\kn$. The vectors of $\C$ represent all words that can form the transmitted sequence of data and they will be called \emph{codewords}.

The noise of the channel used for the transmission may cause some errors, namely some components of a transmitted codeword, during the transmission, may be modified (but cannot be eliminated, indeed we suppose that a codeword of length $n$ is always received as a word length $n$). In this case the received word is of the same length but different from the transmitted one. In order to measure this difference, we introduce the following tools:
\begin{defi}
 The \emph{Hamming weight} of a word $\vct{v}=(v_0,v_1,\dots,v_{n-1}) \in \kn$ is the number of the nonzero components of $\vct{v}$, that is 
 $$\wt(\vct{v})=\#\{i \;|\; v_i\neq0\}$$
The \emph{Hamming distance} between two words $\vct{u},\vct{v} \in \kn$ is the Hamming weight of the vector $\vct{u}-\vct{v}$, that is
$$\dist(\vct{u},\vct{v})=\wt(\vct{u}-\vct{v})=\#\{i \;|\; u_i\neq v_i\}$$
\end{defi}

It can be easily shown that the Hamming distance defines a metric on the vector space $\kn$.\\

Note that if $\cw \in \C$ is sent over a channel and $\rw\in \kn$ is received, then the Hamming distance between $\rw$ and $\cw$ is the number of errors which occurred, that is the number of components of $\cw$ modified during the transmission. The vector $\e=\rw-\cw$ is called the \emph{error vector} and we have that $\wt(\e)=\dist(\rw,\cw)$.

We assume that the errors are distributed randomly, that is a component of the transmitted word can be wrong independently of the other components and the probability that a component is affected in transmission is the same for all the components. If we indicate with $p$ this probability ($0\leq p\leq1$) and with $\rw\in\K^n$ a received word, then the probability that the word $\vct{v}\in\K^n$ is the sent word is
$$p^d\pt{1-p}^{n-d}$$
where $d=\dist(\rw,\vct{v})$. We can reasonably suppose that $0< p<\frac{1}{2}$ and thus we have that 
$$p^{d_1}\pt{1-p}^{n-d_1}\leq p^{d_2}\pt{1-p}^{n-d_2} \Leftrightarrow d_1\geq d_2$$
namely the closest codeword (in the sense of the Hamming distance) to $\rw$ is the most likely to have been sent. For this reason, we correct a received word $\rw$ with the codeword that is the closest  to $\rw$. If there are several words in $\C$ at the same distance from $\rw$, we may arbitrarily choose one of these or we may ask for a retransmission. Any algorithm doing this will be denoted by a \emph{decoding algorithm}.\\

Clearly with this decoding strategy mistakes may happen since the closest codeword to $\rw$ may not be the one sent. In order to understand and avoid decoding mistakes we study and classify the error vectors in the following way:
\begin{defi}
A vector $\e \in\kn\setminus\{\vct{0}\}$ is a \emph{detectable error} if for any $\cw\in\C$ we have that $\cw+\e \notin \C$, while it is an \emph{correctable error} if for any $\cw\in\C$ we have that $\dist(\cw+\e,\cw)<\dist(\cw+\e,\vct{v})$ for any $\vct{v}\in\C\setminus\{\cw\}$.
\end{defi}
In other words, $\e$ is detectable if for any transmitted codeword $\cw$, we can recognize that $\cw+\e$ is not a codeword, whereas it is correctable if $\cw$ is closer to $\cw+\e$ than any other codeword. In order to distinguish between detectable and correctable errors, the following definition is useful:
\begin{defi}
 If $\C$ is a code with at least two different words, then the integer $\dist(\C)$ defined by
 $$\dist(\C)=\min\{\dist(\vct{u},\vct{v}) \;|\; \vct{u},\vct{v} \in\C \text{ and } \vct{u}\neq \vct{v}\}$$
 is called the \emph{distance of the code} $\C$.
\end{defi} 
The following two propositions show the relation between the distance of a code and the identification of detectable or correctable error vectors.  
\begin{prop}\label{prop:detectable}
 Let $\C \subseteq \kn$ be a code of distance $d$. All error vectors $\e\in \kn\setminus\{\vct{0\}}$ such that $\wt(\e)\leqq d-1$ are detectable and  there is at least one error vector in $\kn$ of weight $d$ that is not detectable.
 \end{prop}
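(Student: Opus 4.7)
The plan is to prove the two claims separately, each by a direct argument using the definition of the code distance.

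For the first claim, I would argue by contradiction. Suppose $\e \in \kn \setminus \{\vct{0}\}$ has $\wt(\e) \leq d-1$ but fails to be detectable; then there exists some $\cw \in \C$ such that $\cw + \e \in \C$ as well. Since $\e \neq \vct{0}$, the two codewords $\cw$ and $\cw + \e$ are distinct. Applying the definition of $\dist(\C)$ yields
\[
\dist(\cw, \cw + \e) \geq d,
\]
but the left-hand side equals $\wt(\e) \leq d-1$, a contradiction. Hence every nonzero $\e$ with $\wt(\e) \leq d-1$ is detectable.

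For the second claim, I would exhibit an explicit error vector of weight $d$ that is not detectable. By the definition of $\dist(\C)$, there exist two distinct codewords $\vct{u}, \vct{v} \in \C$ with $\dist(\vct{u}, \vct{v}) = d$. Setting $\e = \vct{v} - \vct{u}$, we have $\wt(\e) = \dist(\vct{u}, \vct{v}) = d$, and taking $\cw = \vct{u}$ gives $\cw + \e = \vct{v} \in \C$. Thus $\e$ is not detectable.

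No serious obstacle is expected; the only subtle point is making sure that the minimum in the definition of $\dist(\C)$ is actually attained, which it is because $\C$ is a finite set containing at least two distinct words. The argument relies on nothing beyond the translation invariance of the Hamming distance (i.e., $\dist(\vct{u}, \vct{v}) = \wt(\vct{u} - \vct{v})$), which was observed immediately after the definition of Hamming distance.
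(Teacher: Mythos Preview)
Your proof is correct and is exactly the standard argument one would expect. Note that the paper itself does not supply a proof of this proposition: at the start of Chapter~1 it explicitly defers the proofs of all propositions and theorems in sections~\ref{sect:error-correting codes}--\ref{sect:cyclic codes} to \cite{hoffman}, so there is no in-paper proof to compare against.
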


\begin{prop}
 Let $\C \subseteq \kn$ be a code of distance $d$. All error vectors $\e\in \kn$ such that $\wt(\e)\leqq \pint{\frac{d-1}{2}}$ are correctable and  there is at least one error vector in $\kn$ of weight $\pint{\frac{d-1}{2}}+1$ that is not correctable.
 \end{prop}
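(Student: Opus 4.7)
Write $t = \pint{(d-1)/2}$, so that $2t+1 \leq d$ (and in fact $2t+1 = d$ when $d$ is odd, $2t+2 = d$ when $d$ is even). The plan is to handle the correctability claim with a single application of the triangle inequality for the Hamming distance (recalled in the excerpt), and then to construct an explicit counterexample at weight $t+1$ by modifying a codeword in the direction of a nearest neighbour.

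For the first part, fix any $\e \in \kn$ with $\wt(\e) \leq t$ and any $\cw \in \C$; I want to show that $\dist(\cw+\e,\cw) < \dist(\cw+\e,\vct{v})$ for every $\vct{v} \in \C \setminus \{\cw\}$. On the one hand, $\dist(\cw+\e,\cw) = \wt(\e) \leq t$. On the other hand, by the triangle inequality and the definition of $\dist(\C)$,
\[
d \leq \dist(\cw,\vct{v}) \leq \dist(\cw,\cw+\e) + \dist(\cw+\e,\vct{v}) = \wt(\e) + \dist(\cw+\e,\vct{v}),
\]
so $\dist(\cw+\e,\vct{v}) \geq d - \wt(\e) \geq d - t \geq t+1$. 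Combining the two bounds yields the required strict inequality, so $\e$ is correctable.

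For the second part, pick two distinct codewords $\cw,\vct{v}\in\C$ with $\dist(\cw,\vct{v}) = d$ (they exist by the definition of $\dist(\C)$), and let $S$ be the set of positions where $\cw$ and $\vct{v}$ differ, so $\#S = d \geq t+1$. Choose any subset $T \subseteq S$ with $\#T = t+1$ and define $\e$ to be the vector that equals $\vct{v}-\cw$ on the coordinates in $T$ and is zero elsewhere; then $\wt(\e) = t+1$. The received word $\rw = \cw+\e$ satisfies $\dist(\rw,\cw) = t+1$ and $\dist(\rw,\vct{v}) = d-(t+1)$. A short case analysis finishes the argument: if $d = 2t+1$ then $\dist(\rw,\vct{v}) = t < t+1 = \dist(\rw,\cw)$, so $\vct{v}$ is strictly closer than $\cw$; if $d = 2t+2$ then $\dist(\rw,\vct{v}) = t+1 = \dist(\rw,\cw)$, so the strict inequality in the definition of correctable error fails. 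Either way $\e$ is not correctable.

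The only subtle point is making sure the bookkeeping $d - t \geq t+1$ really holds for both parities of $d$, which is exactly what the floor in $t = \pint{(d-1)/2}$ is designed to give; the rest is just the triangle inequality and an explicit construction, so I do not anticipate any genuine obstacle.
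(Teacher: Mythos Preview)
Your proof is correct. The paper does not actually give its own proof of this proposition: at the start of the chapter it states that ``for the proofs of the propositions and theorems stated in these sections, we refer to \cite{hoffman}.'' So there is no in-paper argument to compare against.

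That said, your argument is the standard one and is cleanly executed. The triangle-inequality step for correctability is exactly right, and your verification that $d - t \geq t+1$ follows from $2t+1 \leq d$ is the correct hinge. For the sharpness part, your explicit construction---moving $t+1$ coordinates of $\cw$ toward a nearest codeword $\vct{v}$---is the natural one, and your case split on the parity of $d$ correctly distinguishes between ``$\vct{v}$ is strictly closer'' (odd $d$) and ``there is a tie'' (even $d$); in either case the strict inequality in the definition of a correctable error fails for this particular $\vct{v}$, which is all that is needed.
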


For this reason, in a code of distance $d$ the integer $$t=\pint{\frac{d-1}{2}}$$   is called the  \emph{error correction capability} of the code. Indeed as seen if the number of errors happened is less or equal to $t$, then  a decoding algorithm corrects the received word with the sent codeword and no decoding  mistakes can happen.

This codes property leads to define an useful kind of decoding algorithm:  we call \emph{$t$-bounded  distance decoding algorithm} a decoding algorithm that either decodes a received word $\rw$ into the unique codeword $\cw$ at distance not more than $t$ from $\rw$ (if such codeword exists) or indicates that no such codeword exists, declaring a \emph{decoder failure}. 
To understand the behavior of a $t$-bounded  distance decoding algorithm, we introduce the set formed by the disjoint union of all the closed balls of radius $t$ centered at a codeword: $$\mathcal B =\bigsqcup_{\cw\in\C} \overline{B}_t(\cw)$$
There are three possible cases that may occur when the codeword $\cw\in\C$ is transmitted and $\rw=\cw+\e$ is received:
\begin{enumerate}
 \item If $\wt(\e)\leq t$, then $\rw \in \mathcal B$. A $t$-bounded distance decoding algorithm decodes correctly $\rw$, giving as output $\cw$.
 \item If $\wt(\e)> t$ and $\rw\in\mathcal B$, then there exists $\vct{\tl{c}}\in \C$ such that $\vct{\tl{c}}\neq \cw$ and $\rw \in \overline{B}_t(\vct{\tl{c}})$. In this case, the output of a $t$-bounded  distance decoding algorithm will be the ``wrong'' codeword $\vct{\tl{c}}$. Thus we talk about \emph{decoder error}. Note that evidently decoder errors cannot be detected in any way and they are unavoidable for any decoding algorithm.
 \item If $\rw\notin\mathcal B$, then a $t$-bounded  distance decoding algorithm must detect the error vector $e$ and declare the decoder failure,  even though it is not capable of correcting it. 
\end{enumerate}

As we will see later on, when   $\rw\notin\mathcal B$ some decoding algorithms may not indicate decoder failure as expected and  may instead produce an output vector that is not a codeword at all. This can happen even if they behave as  $t$-bounded  distance decoding algorithm in the case that $\rw\in\mathcal B$. We refer to this event as a \emph{decoder malfunction}. We will show that decoder malfunctions can be detected and avoided by adding conditions which declare a decoder failure.

\section{Linear Codes}
\label{sect:linear codes}
Linear codes are a large family of error-correcting codes, for which tools and techniques of linear algebra are used in  encoding and decoding algorithms. In this section we briefly recall the main proprieties of these codes.
\begin{defi}
Let $\C$ be a code of length $n$. If $\C$ is also a linear subspace of $\kn$, then it is called a \emph{linear code}. In this case, the dimension of $\C$ as a linear subspace of $\kn$ is called the \emph{dimension of the code} $\C$, 
\end{defi}
We indicate with $\C(n,k,d)$ a linear code of length $n$, dimension $k$ and distance $d$.
The structure of vector space allows a simpler description of the code. For example, if $\C$ is a linear code, then its distance can be calculated simply as 
$$\dist(\C)=\min\{\wt(\vct{v}) \;|\; \vct{v} \in\C \text{ and } \vct{v}\neq \vct{0}\}$$
and moreover it is possible to define the following useful matrices, which  characterize a linear code:
\begin{defi}
 Let $\C$ be a linear code of length $n$ and dimension $k$. Any $k\times n$ matrix whose rows form a basis for $\C$, as a linear subspace of $\kn$, is called a \emph{generator matrix}. While any $n\times (n-k)$ matrix whose columns form a basis for the orthogonal complement of $\C$ is called a \emph{parity-check matrix}.
\end{defi}
Obviously, if $G$ is a generator matrix for the code $\C(n,k,d)$, then we have $\rk(G)=k$ and 
$$\vct{v}\in\C \Longleftrightarrow \exists\, \vct{u}\in\K^k \text{ such that } \vct{v}=\vct{u}G$$
while if $P$ is a parity-check matrix for $\C(n,k,d)$, then $\rk(P)=n-k$ and
$$\vct{v}\in\C \Longleftrightarrow \vct{v}P=\vct{0}$$

The following proposition shows how to calculate the distance of a linear code knowing a parity-check matrix. 
\begin{prop}\label{prop:distanceParity}
 Consider a linear code $\C$ of length $n$ and dimension $k$. Let $P$ be a parity-check matrix for $\C$. Then  $\C$ has distance $d$ if and only if any $d-1$ rows of $P$ are linearly independent, and at least $d$ rows of $P$ are linearly dependent.
\end{prop}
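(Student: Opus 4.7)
The plan is to translate the condition ``$\vct{v}\in\C$'' into a statement about linear relations among the rows of $P$. Writing the rows of $P$ as $P_0,P_1,\dots,P_{n-1}$, the product $\vct{v}P$ is exactly the linear combination $\sum_{i=0}^{n-1}v_iP_i$. Hence $\vct{v}\in\C\setminus\{\vct{0}\}$ if and only if the coefficients $v_i$ give a nontrivial linear dependence among the rows $P_i$, and the rows that actually appear (with nonzero coefficient) are precisely those indexed by the support of $\vct{v}$, i.e.\ by $\{i\mid v_i\neq 0\}$. The size of this support is $\wt(\vct{v})$.

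First I would make this correspondence precise in both directions: given a nonzero codeword $\vct{v}$ of weight $w$, the rows $\{P_i\mid v_i\neq 0\}$ form a set of $w$ linearly dependent rows (with dependence coefficients $v_i$, all nonzero); conversely, given any nontrivial dependence $\sum_{i\in S}\lambda_iP_i=\vct{0}$, the vector whose $i$-th component is $\lambda_i$ for $i\in S$ and $0$ otherwise is a nonzero codeword whose weight equals the number of $\lambda_i$ that are nonzero, hence at most $|S|$. This shows that the minimum weight of a nonzero codeword equals the minimum cardinality of a linearly dependent set of rows of $P$.

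Next, using the fact (recalled earlier in the section) that for a linear code $\dist(\C)=\min\{\wt(\vct{v})\mid\vct{v}\in\C\setminus\{\vct{0}\}\}$, I would conclude as follows. If $\dist(\C)=d$, then no nonzero codeword has weight less than $d$, which via the correspondence means that every subset of at most $d-1$ rows of $P$ is linearly independent; on the other hand a nonzero codeword of weight exactly $d$ exists, which gives $d$ rows that are linearly dependent. Conversely, if every $d-1$ rows are independent and some $d$ rows are dependent, then no nonzero codeword has weight $\leq d-1$, while the dependence among $d$ rows produces a nonzero codeword of weight at most $d$; combining these forces the minimum weight, and hence $\dist(\C)$, to equal exactly $d$.

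The only delicate point is the last step, where one must note that a dependence relation on $d$ rows produces a codeword of weight \emph{at most} $d$, not automatically $d$; but the assumption that all smaller subsets of rows are independent rules out strictly smaller weights, so the weight is forced to be exactly $d$ (and, as a byproduct, all $d$ coefficients in the dependence relation must in fact be nonzero). Apart from this bookkeeping, the proof is a direct rephrasing of the defining property $\vct{v}P=\vct{0}$, so no real technical obstacle is expected.
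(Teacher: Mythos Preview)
Your argument is correct and is precisely the standard proof of this fact: identify codewords with linear relations among the rows of $P$ via $\vct{v}P=\sum_i v_iP_i$, so that the minimum weight equals the smallest size of a dependent set of rows, and then read off both directions. The paper does not actually supply its own proof of this proposition; it states at the outset of the chapter that proofs in this introductory section are deferred to \cite{hoffman}, so there is nothing to compare against beyond noting that your write-up matches the usual textbook argument.
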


As consequence, we obtain the following theorem that shows the existing relation between the three main characteristics of a linear code: length, dimension and distance.
\begin{thm}[Singleton Bound]
For any  linear code $\C(n,k,d)$, $$\,d\geq n-k+1.$$
\end{thm}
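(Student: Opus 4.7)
The statement as printed, $d \ge n-k+1$, is in fact the reverse of the standard Singleton bound and is false in general: for example, the subspace of $\mathbb{F}_{2}^{4}$ spanned by the vectors $(1,0,0,0)$ and $(0,1,0,0)$ has $n=4$, $k=2$ and $d=1$, while $n-k+1=3$. So I will interpret the inequality sign as a typographical slip and plan to prove the intended bound $d\le n-k+1$; I will flag the issue explicitly at the end.

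The natural route is through Proposition \ref{prop:distanceParity}. Let $P$ be a parity-check matrix of $\C(n,k,d)$; by the definition already given, $P$ is an $n\times(n-k)$ matrix whose rows are vectors of $\K^{n-k}$. By a pure dimension count, any collection of $n-k+1$ rows of $P$ must be linearly dependent, since no more than $n-k$ vectors can be linearly independent inside a space of dimension $n-k$.

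I would then invoke Proposition \ref{prop:distanceParity}, which characterises $d$ as the smallest positive integer such that some $d$ rows of $P$ are linearly dependent (equivalently, every $d-1$ rows are independent). Combined with the dimension bound above, the smallest such integer cannot exceed $n-k+1$, which gives $d\le n-k+1$, the Singleton bound in its usual form. Equivalently, one can argue directly: if we had $d\ge n-k+2$, then every $n-k+1$ rows of $P$ would be linearly independent, contradicting the dimension of $\K^{n-k}$.

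The main ``obstacle'' here is not mathematical but typographical: once the inequality is read in the intended direction $d\le n-k+1$, the proof is essentially a one-line consequence of Proposition \ref{prop:distanceParity} together with the rank of $P$. Insisting on the displayed direction $d\ge n-k+1$ is not something one can prove, because it fails on the simplest non-trivial linear codes (and would in particular contradict the later statement that Reed-Solomon codes, being MDS, \emph{saturate} the bound $d\le n-k+1$).
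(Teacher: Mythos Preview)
Your diagnosis is correct: the displayed inequality is a typo, and the intended statement is $d\le n-k+1$. This is confirmed later in the paper itself, where the proof of Proposition~\ref{prop:rscode}(3) invokes the Singleton bound to deduce $\dist(RS(n,d,\alpha))\le d$ from $\dim(RS(n,d,\alpha))=n-d+1$, which only makes sense with the inequality in the $\le$ direction.

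As for comparison with the paper's proof: there is none. The paper explicitly states at the start of Chapter~1 that proofs of the results in sections~\ref{sect:error-correting codes}--\ref{sect:cyclic codes} are deferred to~\cite{hoffman}, and the Singleton bound is one of these unproved background facts. Your argument via Proposition~\ref{prop:distanceParity} (any $n-k+1$ rows of the $n\times(n-k)$ parity-check matrix must be dependent, hence $d\le n-k+1$) is correct and is the standard proof; it is exactly the kind of argument the cited reference would give.
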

A linear code $\C(n,k,d)$ that realizes the equality $d=n-k+1$ is said to be a \emph{maximum distance separable} (or MDS) \emph{code}.
If we recall that the integer $t=\pint{\frac{d-1}{2}}$  represents the error correction capability of the code, we can understand that MDS codes are optimal codes, in the sense that they have the greatest error correction capability for the same length and dimension.

\section{Cyclic Codes}
\label{sect:cyclic codes}
In order to have very efficient decoding algorithms, linear codes may be no sufficient. So we introduce codes with more algebraic  structure:
\begin{defi}
A code $\C$ of length $n$ is \emph{cyclic} if it is closed under the operation cyclic right shift of a codeword. In other words $\C \subseteq \kn$ is a cyclic code if 
$$\vct{v}=(v_0, v_1,\dots,v_{n-1})\in\C\Longrightarrow \pi(\vct{v})\overset{def}{=}(v_{n-1},v_0,\dots,v_{n-2})\in\C$$ 
\end{defi}

To appreciate the algebraic structure of a cyclic code we introduce the representation of the elements in $\kn$ as polynomials.
There is an one-to-one correspondence between  vectors in $\kn$ and polynomials in $\K[x]$ of degree less than $n-1$, defined by
$$\vct{v}=(v_0, v_1,\dots,v_{n-1}) \longleftrightarrow \vct{v}(x)=\sum_{i=0}^{n-1}v_{i} x^i$$
Thus a code of length $n$ can be represented as a set of polynomials of degree at most $n-1$. Moreover if the vector $\vct{v}$ corresponds to the polynomial $\vct{v}(x)$, as showed above, the vector $\pi(\vct{v})$ corresponds to the polynomial $x\vct{v}(x) \Mod{x^n-1}$. So a cyclic code of length $n$ can be seen as a subspace of the ring $\K[x]/(x^n-1)$ closed under the multiplication by $x$. That is, a cyclic code of length $n$ corresponds to an ideal of the ring $\K[x]/(x^n-1)$.
Moreover any ideal of $\K[x]/(x^n-1)$ corresponds to a cyclic code and so we have the one-to-one correspondence given by:
$$\C \subseteq \kn \text{ cyclic code } \longleftrightarrow \C\subseteq \K[x]/(x^n-1) \text{ ideal } $$

Henceforth, when studying cyclic codes, we will refer to the elements in $\kn$ both as vectors and as polynomials. \\

Since $\K[x]/(x^n-1)$ is a principal ideal ring, every cyclic code, as an ideal, can be generated by just one element. More precisely, we have:
\begin{prop}\label{cyclic1}
Let $\C$ be a cyclic code, then there exists a unique nonzero polynomial $g(x)\in \C$ such that
$$\vct{v}(x)\in\C \Longleftrightarrow g(x)|\vct{v}(x)$$
\end{prop}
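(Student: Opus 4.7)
The plan is to exploit the correspondence between cyclic codes and ideals of $\K[x]/(x^n-1)$ just established, together with the division algorithm in $\K[x]$. Although $\K[x]/(x^n-1)$ is not itself an integral domain, any ideal in it can still be singled out by a single generator: namely, a polynomial of minimal degree lying inside the ideal.

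For existence, I would first pick $g(x) \in \C$ to be a nonzero polynomial of minimal degree, rescaling so that $g$ is monic (this normalization will later pin down true uniqueness). To establish $g(x) \mid \vct{v}(x)$ for every $\vct{v}(x) \in \C$, I would apply polynomial division in $\K[x]$ to write $\vct{v}(x) = q(x) g(x) + r(x)$ with $\deg r(x) < \deg g(x)$. The decisive step is checking that $q(x) g(x)$ lies in $\C$: because $\C$ is cyclic it is closed under multiplication by $x$ modulo $x^n-1$, hence by linearity under multiplication by any polynomial in $\K[x]/(x^n-1)$; and since $\deg(qg) \le \deg \vct{v} < n$, the reduction modulo $x^n-1$ has no effect, so $q(x) g(x)$ as an element of $\K[x]$ coincides with its class in $\C$. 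It then follows that $r(x) = \vct{v}(x) - q(x) g(x)$ belongs to $\C$, and by the minimal choice of $\deg g$ we must have $r(x) = 0$.

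For the converse, if $g(x) \mid \vct{v}(x)$ in $\K[x]$ with $\deg \vct{v}(x) < n$, I would write $\vct{v}(x) = q(x) g(x)$, noting that this product has degree less than $n$ and therefore represents precisely the same element in the quotient ring; since $g(x) \in \C$ and $\C$ is closed under multiplication by $q(x)$ modulo $x^n-1$, we conclude $\vct{v}(x) \in \C$.

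For uniqueness, suppose $g_1(x)$ and $g_2(x)$ both satisfy the divisibility characterization. Then $g_2 \in \C$ forces $g_1 \mid g_2$, and symmetrically $g_2 \mid g_1$, so $g_1 = c \cdot g_2$ for some nonzero $c \in \K$; insisting that $g$ be monic removes this ambiguity. The main obstacle I expect is not computational but rather bookkeeping: one has to track carefully when a statement lives in $\K[x]$ versus in $\K[x]/(x^n-1)$, because the divisibility relation $g \mid \vct{v}$ is stated in $\K[x]$ while membership in $\C$ is inherited from the quotient ring. The whole argument goes through precisely because codewords are represented by polynomials of degree less than $n$, which keeps every product that appears small enough that reduction modulo $x^n-1$ never intervenes.
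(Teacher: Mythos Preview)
The paper does not actually supply a proof of this proposition: it prefaces the statement with the one-line remark that $\K[x]/(x^n-1)$ is a principal ideal ring, and for the details of all propositions in sections~1.1--1.3 it refers the reader to \cite{hoffman}. Your argument is correct and is precisely the standard proof one finds in such references: choose a nonzero element of minimal degree, use the division algorithm in $\K[x]$, and exploit the ideal structure of $\C$ to force the remainder to vanish. Your care in distinguishing computations in $\K[x]$ from those in $\K[x]/(x^n-1)$, and your observation that the monic normalization is what secures literal uniqueness (the proposition as stated is slightly loose on this point), are both well placed.
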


We define the \emph{generator polynomial} of the cyclic code $\C$ the polynomial $g(x)$ described above and we call the cyclic code  of length $n$ generated by the polynomial  $g(x)$, the code formed by all vectors $\vct{v}\in \K^n$ such that $g(x)|\vct{v}(x)$. Remark that not all polynomials in $\K[x]$ can be generator polynomials of a cyclic code. Indeed the set $\{\vct{v}\in \K^n \,:\, g(x)|\vct{v}(x)$\} is a cyclic code if and only if the polynomial $g(x)| x^n-1$.

\begin{prop}\label{prop:cycliccode}
If $\C$ is a cyclic code of length $n$ and $g(x)$ is its generator polynomial, then 
$$\dim(\C)=n-\deg(g(x))$$\\
\end{prop}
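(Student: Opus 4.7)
The plan is to exhibit an explicit $\K$-basis of $\C$ of cardinality $n - \deg(g(x))$, namely the set
\[
\mathcal{B} = \{g(x),\, xg(x),\, x^2 g(x),\, \ldots,\, x^{n-r-1} g(x)\},
\]
where $r = \deg(g(x))$. Since $g(x) \mid x^n - 1$, we have $r \leq n$, so this set is well-defined. First I would check that each element of $\mathcal{B}$ genuinely lies in $\C$ and can be viewed as a vector in $\K^n$: the polynomial $x^i g(x)$ for $0 \leq i \leq n - r - 1$ has degree $r + i \leq n - 1$, hence corresponds to a bona fide element of $\K^n$ under the identification of the preceding discussion, and by Proposition \ref{cyclic1} it lies in $\C$ because $g(x)$ trivially divides $x^i g(x)$ in $\K[x]$.

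Next I would prove linear independence over $\K$. The key observation is that $x^i g(x)$ has degree exactly $r + i$ with leading coefficient equal to that of $g(x)$, which is nonzero. As $i$ ranges over $0, 1, \ldots, n-r-1$, the degrees $r, r+1, \ldots, n-1$ are all distinct. Thus in any nontrivial linear combination $\sum_{i} \lambda_i\, x^i g(x)$, the term of largest index $i$ with $\lambda_i \neq 0$ contributes a nonzero coefficient in degree $r + i$ that no other term can cancel, so the combination is nonzero.

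For the spanning step I would take an arbitrary $\vct{v}(x) \in \C$ and apply Proposition \ref{cyclic1} to get $g(x) \mid \vct{v}(x)$ in $\K[x]$; write $\vct{v}(x) = q(x) g(x)$ with $q(x) \in \K[x]$. Since $\deg(\vct{v}) \leq n - 1$ and $\deg(g) = r$, we obtain $\deg(q) \leq n - 1 - r$, so $q(x) = \sum_{i=0}^{n-r-1} q_i x^i$ and therefore
\[
\vct{v}(x) = \sum_{i=0}^{n-r-1} q_i \bigl(x^i g(x)\bigr),
\]
which is a $\K$-linear combination of elements of $\mathcal{B}$. Combining this with the previous step shows $\mathcal{B}$ is a basis of $\C$ of size $n - r$, giving $\dim(\C) = n - \deg(g(x))$.

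There is no real obstacle here; the only subtlety worth being careful about is interpreting the divisibility of Proposition \ref{cyclic1} as genuine divisibility in $\K[x]$ on representatives of degree $< n$, rather than in the quotient ring $\K[x]/(x^n - 1)$, so that the degree bound $\deg(q) \leq n - 1 - r$ is legitimate. With that in hand, the proof is essentially a matter of reading off degrees.
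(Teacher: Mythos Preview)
Your argument is correct and is the standard textbook proof. Note, however, that the paper does not actually supply its own proof of this proposition: at the start of the chapter it states that for the results in sections \ref{sect:error-correting codes}--\ref{sect:cyclic codes} the proofs are deferred to the reference \cite{hoffman}. The argument you give---exhibiting $\{g(x), xg(x), \ldots, x^{n-r-1}g(x)\}$ as an explicit basis via degree considerations---is precisely the one found in that and other standard sources, so there is nothing to compare.
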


An important class of cyclic codes, still used a lot in practice, was discovered by R. C. Bose and D. K. Ray-Chaudhuri (1960) and independently by A. Hocquenghem (1959). The codes are known as \emph{BCH codes}.
\begin{defi}[BCH code]\label{def:BCH}
Let $\F$ be the finite field consisting of $q$ elements and let $n$ and $\delta$ be two positive integer. Given the smallest $a\in \N$ such that $\mathbb F_{q^a}$ has a primitive $n$th root of unity $\beta$, then a cyclic code of length $n$ over $\F$ is called a \emph{BCH code of parameters $n$ and $\delta$ generated by $\beta$} if its generator polynomial is the least common multiple of the minimal polynomials over $\F$ of $\beta^l, \beta^{l+1}, \dots, \beta^{l+\delta-2}$, for some $l\in \N$. This code is indicated by $BCH(n,\delta,\beta)$.
\end{defi}
We observe that the the least common multiple of the minimal polynomials of $\beta^l, \beta^{l+1}, \dots, \beta^{l+\delta-2}$  divides $x^n-1$ because  $\beta$ is a primitive $n$th root of unity. So the above definition is correct.


\begin{ex}\label{ex_bch}
Let $l=1$ and $q=2$. Let $\beta\in\mathbb F_{16}$ be a primitive fourth root of unity satisfying $\beta^4+\beta+1=0$, then its minimal polynomials over $\mathbb F_2$ is the irreducible polynomial $f_1(x)=x^4+x+1$. 
Since $f_1(\beta^2)=f_1(\beta)^2$ and $f_1(\beta^4)=f_1(\beta)^4$, $f_1(x)$ is also the minimal polynomials over $\mathbb F_2$ of $\beta^2$ and of $\beta^4$. Moreover it can be easily proved that the minimal polynomials over $\mathbb F_2$ of $\beta^{3}$ is $f_2(x)=x^4+x^3+1$.
Thus the code $BCH(15,5,\beta)$ over $\mathbb F_2$ is generated by the polynomial
$$g(x)=f_1(x)\cdot f_2(x)=x^8+x^7+x^5+x^4+x^3+x+1$$
\end{ex}

\section{Reed-Solomon Codes}
\label{sect:RScodes}
Reed-Solomon codes, named after their inventors \cite{reedSolomon}, are an extensively studied family of error-correcting codes, heavily used in theoretical and practical settings. Furthermore, the optimality of the Reed-Solomon codes, in terms of error correction capability, and their algebraic properties are the main reasons for the  great fame of these codes and their widespread use in coding theory and computer science.  \\

From here on, let $\F$ be the finite field consisting of $q$ elements ($\F=\mathbb F_{p^m}$ with $m,p\in\N$ and $p$ prime). We indicate with $n$ the cardinality of the multiplicative group of $\F$, that is $n=q-1$, and with $\alpha$ a primitive element of $\F$, i.e.\ an element with multiplicative order equal to $n$. Reed-Solomon codes are special case of cyclic codes of length $n$ over the field $\F$. More precisely:
\begin{defi}[Reed-Solomon code] \label{defiRS}
Let $n$ and $\alpha$ be as above and let $d$ be an integer less than or equal to $n$. The \emph{Reed-Solomon code of parameters $n$ and $d$ generated by $\alpha$} (denoted by $RS(n,d,\alpha)$) is the cyclic code of length $n$ over $\F$ generated by the polynomial $$g(x)=(x-\alpha^l)(x-\alpha^{l+1}) \cdots(x-\alpha^{l+d-2})$$ 
for some $l\in \N$
\end{defi}

We observe that Reed-Solomon codes are examples (the simplest ones) of BCH codes, namely in the case $n=q-1$.
For simplicity, we will always take $l=1$, but the general treatment is equivalent. The following proposition summarizes the properties of a Reed-Solomon code. 
\begin{prop}\label{prop:rscode}
Let $RS(n,d,\alpha)$ be as in definition \ref{defiRS}. Then the following holds:
 \begin{enumerate}[\ \ \ (1)]
  \item $\dim\left(RS(n,d,\alpha)\right)=n-d+1$;
  \item $\vct{v}\in RS(n,d,\alpha)\Leftrightarrow \vct{v}(\alpha^i)=0 \; \forall \, i=1,2,\dots,d-1$ and a parity-check matrix for the code $RS(n,d,\alpha)$  is
         $$ P=\begin{mymatrix}
                        1            & 1               & \cdots & 1 \\
                        \alpha       &\alpha^2         & \cdots & \alpha^{d-1}\\
                        \alpha^2     &(\alpha^2)^2     & \cdots & (\alpha^{d-1})^2\\
                        \vdots       & \vdots          &        & \vdots \\
                        \alpha^{n-1} & (\alpha^2)^{n-1} & \cdots &(\alpha^{d-1})^{n-1}\\
        \end{mymatrix};$$
  \item $\dist\left(RS(n,d,\alpha)\right)=d$;
  \item the code $RS(n,d,\alpha)$ is an MDS code.
  \end{enumerate}
\end{prop}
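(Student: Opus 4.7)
The plan is to prove (1)--(4) in order, since each part feeds into the next, exploiting the cyclic description via the generator polynomial $g(x) = \prod_{i=1}^{d-1}(x-\alpha^i)$ together with Vandermonde-type arguments on powers of $\alpha$.

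For (1), since $\deg(g(x)) = d-1$, Proposition \ref{prop:cycliccode} immediately gives $\dim(RS(n,d,\alpha)) = n-(d-1) = n-d+1$. For (2), I would invoke Proposition \ref{cyclic1}: $\vct{v}(x) \in RS(n,d,\alpha)$ iff $g(x)\mid \vct{v}(x)$, and since $\alpha, \alpha^2, \ldots, \alpha^{d-1}$ are distinct roots of the squarefree polynomial $g$ (they are distinct because $\alpha$ has multiplicative order $n \geq d-1$), this is in turn equivalent to $\vct{v}(\alpha^i) = 0$ for $i = 1, \ldots, d-1$. Writing $\vct{v}(\alpha^i) = \sum_{j=0}^{n-1} v_j (\alpha^i)^j$ shows that this vanishing system is exactly the matrix equation $\vct{v}P = \vct{0}$ for the displayed $P$. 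To certify that $P$ is genuinely a parity-check matrix and not merely a matrix whose left kernel contains $\C$, I also need $\rk(P) = n - \dim(\C) = d-1$; this is immediate because the submatrix on rows $0,1,\ldots,d-2$ is (the transpose of) a Vandermonde matrix on the distinct points $\alpha, \alpha^2, \ldots, \alpha^{d-1}$, hence invertible.

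Part (3) is the main content. By Proposition \ref{prop:distanceParity} it suffices to show that (a) every $d-1$ rows of $P$ are linearly independent, and (b) some $d$ rows are dependent. Claim (b) is automatic since each row of $P$ lies in $\F^{d-1}$, so any $d$ rows are necessarily linearly dependent. For (a), take row indices $0 \leq i_1 < i_2 < \cdots < i_{d-1} \leq n-1$; the resulting $(d-1)\times(d-1)$ submatrix has $(k,j)$-entry $(\alpha^j)^{i_k} = (\alpha^{i_k})^j$. Factoring $\alpha^{i_k}$ out of row $k$ leaves a Vandermonde matrix on the points $\alpha^{i_1}, \ldots, \alpha^{i_{d-1}}$, which are pairwise distinct since $\alpha$ has order $n$ and $0 \leq i_k < n$. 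Hence the determinant is a nonzero scalar multiple of a nonvanishing Vandermonde determinant. Finally, (4) is a formality: combining (1) and (3) gives $d = n - (n-d+1) + 1$, so the Singleton bound is attained with equality and $RS(n,d,\alpha)$ is MDS.

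The only real obstacle is the careful bookkeeping in (3): one must check that after row $k$ is rescaled by $\alpha^{-i_k}$ the remaining entries become $1, \alpha^{i_k}, (\alpha^{i_k})^2, \ldots, (\alpha^{i_k})^{d-2}$, forming a genuine Vandermonde matrix on distinct base points, and similarly that the rescaling factor $\prod_k \alpha^{i_k}$ is nonzero. Once this is nailed down, the whole proposition follows directly from the cyclic-code machinery already developed.
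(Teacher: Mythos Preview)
Your proposal is correct and follows essentially the same route as the paper: part (1) via $\deg g = d-1$ and Proposition \ref{prop:cycliccode}, part (2) via Proposition \ref{cyclic1} and the inner-product interpretation of $\vct{v}(\alpha^j)$, part (3) via the Vandermonde-times-diagonal factorisation of any $(d-1)\times(d-1)$ submatrix of $P$, and part (4) as a formality. The only cosmetic difference is that for the upper bound $\dist(\C)\le d$ the paper invokes the Singleton bound together with (1), whereas you observe directly that $d$ rows of $P$ must be dependent in $\F^{\,d-1}$; your explicit verification that $\rk(P)=d-1$ is a detail the paper leaves implicit.
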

\begin{proof}
 \begin{enumerate}
 \item From definition \ref{defiRS} it follows that $\deg(g(x))=d-1$, thus (1) is an immediate consequence of proposition \ref{prop:cycliccode}.
 \item Since the code $RS(n,d,\alpha)$ is a cyclic code generated by the polynomial $g(x)$, the equivalence stated in (2) follows directly from proposition \ref{cyclic1}. To verify that $P$ is a parity-check matrix for the code  $RS(n,d,\alpha)$ is sufficient to observe that, if $\vct{v} \in(\F)^n $, then we may evaluate the polynomial $\vct{v}(x)$ in $\alpha^j$ with an inner product, in fact: 
 $$\vct{v}(\alpha^j)= \left(v_0, v_1,\dots,v_{n-1}\right)\begin{mymatrix}
                                                         1 \\ \alpha^j\\  \vdots\\ \alpha^{(n-1)j} 
                                                        \end{mymatrix}=
                                                        \sum_{i=0}^{n-1}v_i\alpha^{ij}$$
 Therefore we have that $\vct{v}(\alpha^i)=0 \; \forall \, i=1,2,\dots,d-1 \Leftrightarrow \vct{v}P=\vct{0}$.
 \item From the Singleton bound, using (1), it  follows that $$\dist\left(RS(n,d,\alpha)\right)\leq d$$ To conclude the proof it is sufficient to show that any $d-1$ rows of $P$ are linearly independent. In fact, by proposition \ref{prop:distanceParity}, this leads to $\dist\left(RS(n,d,\alpha)\right)\geq d$. Consider $d-1$ rows in $P$. In other words, for any $i=1,2,\dots,d-1$, we choose $m_i\in\{0,1,\dots,n-1\}$ (with the condition $m_i\neq m_j$, if $i\neq j$) and we define $a_i=\alpha^{m_i}$. So we obtain the square submatrix 
 $$M=\begin{mymatrix}
      a_1   &  a_1^2 & \cdots & a_1^{d-1}\\
      a_2   &  a_2^2 & \cdots & a_2^{d-1}\\
      \vdots & \vdots &       &\vdots\\
      a_{d-1}   &  a_{d-1}^2 & \cdots & a_{d-1}^{d-1}\\
   \end{mymatrix}$$ 
 and it is easy to show that $M$ is non-singular because it can be written as a product of a non-singular Vandermonde matrix and a non-singular diagonal matrix. That is
 $$M=\begin{mymatrix}
      1   &a_1 & a_1^2 & \cdots & a_1^{d-2}\\
      1& a_2   &  a_2^2 & \cdots & a_2^{d-2}\\
      \vdots & \vdots &  \vdots  &    &\vdots\\
      1 &a_{d-1}   &  a_{d-1}^2 & \cdots & a_{d-1}^{d-2}\\
   \end{mymatrix}  
   \begin{mymatrix}
      a_1 & 0     & \cdots   & 0\\
      0   &  a_2 & \ddots & \vdots\\
      \vdots & \ddots &   \ddots    & 0\\
      0   &   \cdots &0 & a_{d-1}\\
   \end{mymatrix}$$ 
\item Trivial using (1) and (3).
\end{enumerate}
\end{proof}

\section{Decoding Reed-Solomon Codes}\label{sect:decodingRS}
Consider a Reed-Solomon code of length $n$ and distance $d$ over $\F$ and let $\alpha$ be the primitive $n$th root of unity that defines the code. 

Suppose that $\cw=(c_0,c_1,\dots,c_{n-1})\in RS(n,d,\alpha)$ is the codeword sent and $\e=(e_0,e_1,\dots,e_{n-1})\in (\F)^n$ is the error vector, thus the received word is $\rw=\cw+\e=(r_0,r_1,\dots,r_{n-1})\in (\F)^n$. We represent this situation with the polynomials
$$\cw(x)=\sum_{i=0}^{n-1}c_ix^i \in \F[x]$$
$$\e(x)=\sum_{i=0}^{n-1}e_ix^i \in \F[x]$$
$$\rw(x)=\sum_{i=0}^{n-1}r_ix^i \in \F[x]$$
and, clearly, we have $\rw(x)=\cw(x)+\e(x)$.

From now on, we call $e$ the Hamming weight of the error vector $\e$ and we define $$\mathcal I=\{i\; |\; e_i\neq0\}=\{p_1<p_2<\cdots<p_e\}$$ the set of the positions where an error occurs. Thus for any 
$i\in\{1,2,\dots, e\} $, $E_i=e_{p_i}\in \F$ represents the value of the error that occurs in the position $p_i$.
In order to correct the received word $\rw$, we need to find the set $\mathcal I$ and the values $E_i$. For this purpose we introduce the following tools:
\begin{defi}\label{defi:sigma_omega}
For any $i\in\{1,2,\dots, e\} $, we define 
\begin{equation}\label{rel:X_i}
X_i=\alpha^{p_i}\in\F 
\end{equation}\\[-1cm]
Moreover we call
\begin{align*}
&\sigma(x)=\prod_{i=1}^e (1-X_ix) \quad \quad \text { the \emph{error-locator polynomial} } \\
&\omega(x)=\sum_{i=1}^e E_iX_i \prod_{j\neq i} (1-X_jx) \quad \quad \text{ the \emph{error-evaluator polynomial.}}\\
\end{align*}
\end{defi}

\noindent Evidently the elements $X_1^{-1}, X_2^{-1},\dots, X_e^{-1}$ are precisely the roots of $\sigma(x)$ in the field $\F$ and  they identify the errors positions thanks to the relation
$$X_i^{-1}=\alpha^{n-p_i} \quad \quad \text{ for any } i\in\{1,2,\dots,e\}$$
which follows by (\ref{rel:X_i}). Thus if we are able to calculate the error-locator polynomial, then we can know all the error positions $\{p_1,p_2,\dots,p_e\}$ from its roots.
While the error-locator polynomial permits calculating the error positions, the error-evaluator polynomial is linked to the computation of the error values $E_1,E_2,\dots, E_e$. Indeed it is easy to prove the following relation, stated for the first time in \cite{forney} and known as \emph{Forney's formula}: 
\begin{equation}\label{forney}
E_i=-\frac{\omega(X_i^{-1})}{\sigma'(X_i^{-1})}\quad \quad \forall\; i=1,2,\dots,e                                                                                                                                                                                                                                         \end{equation}
where $\sigma'(x)$ is the formal derivative of $\sigma(x)$. Thus the knowledge of the polynomials  $\sigma(x)$ and $\omega(x)$ permits a complete decoding of the received word $\rw$, since it allows the computation of the error positions and of error values. For this reason from here on our aim is the determination of the these polynomials.

Other simple properties of the error-locator and the error-evaluator follow immediately by their definition. We list them below:

\begin{rem} \hspace{3cm}
\begin{enumerate}\label{rem:sigmaproperties} 
 \item $\deg(\sigma(x))=e=\wt(\e)$ and   $\sigma(0)=1$;
 \item $\sigma(x)$ splits completely in $\F[x]$ and does not have multiple roots;
 \item $\deg(\omega(x))\leq e-1$;
 \item $\gcd\left(\sigma(x), \omega(x)\right)=1$.\\
\end{enumerate}
\end{rem}

In proposition \ref{prop:rscode} we saw that
$$\vct{v}\in RS(n,d,\alpha)\Leftrightarrow \vct{v}(\alpha^i)=0 \quad \forall \, i=1,2,\dots,d-1$$ 
 and this justifies the introduction of the following definition:
\begin{defi}\label{defi:syndromes}
For any $i\in\{1,2,\dots, d-1\}$ we define  
$$S_i=\rw(\alpha^i) \quad \quad \text{ the \emph{$i$th syndrome}}$$
Moreover $S(x)=\displaystyle\sum_{i=1}^{d-1} S_i x^{i-1}$ is the \emph{syndrome polynomial}.
\end{defi}

It is clear that if $S_1=S_2=\cdots=S_{d-1}=0$, then $\rw\in RS(n,d,\alpha)$ and thus either $\e=\vct{0}$ or $\e$ is a not detectable error. If we suppose that $e=\wt(\e)$ is less or equal to $t$, the error correction capability of the code  $ RS(n,d,\alpha)$, then the second case is not possible by proposition \ref{prop:detectable} and hence the condition $S_1=S_2=\cdots=S_{d-1}=0$ implies that no error has occurred.\\

Since $\cw\in RS(n,d,\alpha)$, $\cw(\alpha^i)=0$ for every $i\in\{1,2,\dots,d-1\}$, thus we can observe that 
\begin{equation}\label{rel:syndrome}
S_i=\rw(\alpha^i)=\underbrace{\cw(\alpha^i)}_{=0}+\e(\alpha^i)=\e(\alpha^i)=\displaystyle\sum_{j=1}^eE_jX_j^i
\end{equation}
for any $i\in\{1,2,\dots,d-1\}$. This characterization of the syndromes is of fundamental importance because it allows to write a relation, more precisely a congruence, stated in the following proposition and satisfied by the error-locator and error-evaluator polynomials.

\begin{prop}[key equation] If $\sigma(x)$, $\omega(x)$ and $S(x)$ are the polynomials respectively defined in definition \ref{defi:sigma_omega} and \ref{defi:syndromes}, then
\begin{equation}\label{keyequation}
 \sigma(x)S(x)\equiv\omega(x) \Mod{  x^{d-1}}
\end{equation}
\end{prop}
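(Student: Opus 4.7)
The plan is to recognise $\omega(x)/\sigma(x)$ as the partial fraction decomposition
$$\frac{\omega(x)}{\sigma(x)}=\sum_{i=1}^e \frac{E_i X_i}{1-X_ix},$$
which is immediate from Definition \ref{defi:sigma_omega} by clearing denominators on the right (the numerator matches $\omega(x)$ term by term, and the common denominator is exactly $\sigma(x)$). All subsequent manipulations take place either in the formal power series ring $\F[[x]]$ or in $\F[x]/(x^{d-1})$, where each factor $1-X_ix$ is invertible (since its constant term $1$ is a unit), so $\sigma(x)$ itself is invertible and the identity above makes rigorous sense.

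Next, I expand each summand as a geometric series
$$\frac{E_iX_i}{1-X_ix}=\sum_{k\geq 0} E_i X_i^{k+1}x^k$$
in $\F[[x]]$, and sum over $i$. Interchanging the two summations and using the characterisation \eqref{rel:syndrome} of the syndromes, namely $S_{k+1}=\sum_{i=1}^e E_i X_i^{k+1}$, yields
$$\frac{\omega(x)}{\sigma(x)} \;=\; \sum_{k\geq 0} S_{k+1}\, x^{k} \quad\text{in } \F[[x]].$$
Multiplying through by $\sigma(x)$ gives the identity $\omega(x)=\sigma(x)\bigl(\sum_{k\geq 0}S_{k+1}x^k\bigr)$ in $\F[[x]]$.

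Finally, I reduce modulo $x^{d-1}$. The tail $\sum_{k\geq d-1}S_{k+1}x^k$ lies in $x^{d-1}\F[[x]]$, so multiplication by $\sigma(x)$ leaves it in $x^{d-1}\F[[x]]$ as well. Therefore
$$\omega(x)\;\equiv\;\sigma(x)\sum_{k=0}^{d-2}S_{k+1}x^k\;=\;\sigma(x)S(x)\pmod{x^{d-1}},$$
which is \eqref{keyequation}. Both sides are polynomials (not merely power series), so the congruence holds in $\F[x]$.

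I do not expect a substantial obstacle: the only subtle point is to justify working modulo $x^{d-1}$ rather than in the power-series ring, which is handled by the observation above that the discarded tail contributes a multiple of $x^{d-1}$ even after multiplication by $\sigma(x)$. The partial fraction step is purely formal, and the geometric-series expansion is valid term by term because the coefficients lie in the field $\F$.
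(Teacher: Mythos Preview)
Your proof is correct and follows essentially the same route as the paper's: partial fractions for $\omega(x)/\sigma(x)$, geometric-series expansion, interchange of sums to recognise the syndromes, then reduction modulo $x^{d-1}$. The only minor quibble is that the notation $S_{k+1}$ for $k\geq d-1$ is not defined; the paper simply leaves those coefficients as $\sum_{i=1}^e E_iX_i^{k+1}$, which avoids the abuse.
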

\begin{proof}
We observe that 
$$\dfrac{\omega(x)}{\sigma(x)}=
   \dfrac{\displaystyle\sum_{i=1}^e E_iX_i \prod_{j\neq i} (1-X_jx)}{\displaystyle\prod_{k=1}^e (1-X_kx)}
   =\sum_{i=1}^e E_iX_i \dfrac{1}{1-X_ix}$$
and using the identity $\displaystyle\sum_{i\geq 0}y^i=\frac{1}{1-y}$ we obtain
$$\dfrac{\omega(x)}{\sigma(x)}=\sum_{i=1}^e E_iX_i\sum_{j=0}^{+\infty} (X_ix)^j=\sum_{j=0}^{+\infty} \left(\sum_{i=1}^eE_iX_i^{j+1}\right)x^j$$
By (\ref{rel:syndrome}), we have
$$\dfrac{\omega(x)}{\sigma(x)}
=\underbrace{\sum_{j=0}^{d-2}S_{j+1}x^j}_{S(x)}+\sum_{j\geq d-1}\pt{\sum_{i=1}^eE_iX_i^{j+1}}x^j $$
Considering the last equality modulo $x^{d-1}$ the proof is completed.\\
\end{proof}

Given the syndrome polynomial, the pair $(\sigma(x),\omega(x))$ is not the unique possible solution of the key equation (\ref{keyequation}). For this reason we consider the set:
$$\mathcal M=\left\{(a(x),b(x))\in \F[x]^2 \;|\; a(x)S(x)\equiv b(x) \Mod{  x^{d-1}}\right\}$$
Evidently, $\mathcal M$ is a submodule of $\F[x]^2$ and $(\sigma(x),\omega(x))\in\mathcal M$.
Considering that $\Deg{\omega(x)}<\Deg{\sigma(x)}$, we are interested in the pairs $(a(x),b(x))\in \mathcal M$ such that $\Deg{b(x)}<\Deg{a(x)}$ and thus we introduce the followings definitions:
\begin{defi}\label{defi:validsolution}
We call a pair $(a(x),b(x))\in \F[x]^2$ a \emph{valid solution} of the key equation (\ref{keyequation}) if $ (a(x),b(x))\in \mathcal M$ and $\Deg{b(x)}<\Deg{a(x)}$ . Moreover we call a pair $(a(x),b(x))\in \mathcal M$ a \emph{valid solution of minimal degree} if it is a valid solution and for any other valid solution $(a_1(x),b_1(x))$, we have that $\deg(a(x))\leq \deg(a_1(x))$.
\end{defi}
Recalling that $e=\wt(\e)$ and $t$ is the error correction capability of the code $RS(n,d,\alpha)$, we have:

\begin{thm}\label{thm:keyequation2}
If $e\leq t$, then the pair $(\sigma(x),\omega(x))$ is the  unique, up to a multiplicative constant, valid solution of minimal degree of the key equation (\ref{keyequation}). 
\end{thm}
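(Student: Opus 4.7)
The plan is to show that the pair $(\sigma(x),\omega(x))$ is itself a valid solution and then argue that any other valid solution of degree at most $e=\deg(\sigma(x))$ must be a scalar multiple of it. The starting point is that $(\sigma(x),\omega(x))\in\mathcal M$ by the key equation, and $\Deg{\omega(x)}\leq e-1<e=\Deg{\sigma(x)}$ by Remark~\ref{rem:sigmaproperties}, so $(\sigma(x),\omega(x))$ is already a valid solution. In particular, a valid solution of minimal degree exists, and if $(a(x),b(x))$ is one, then $\Deg{a(x)}\leq e$.

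The key manipulation is to combine the two congruences. Given such an $(a(x),b(x))$, I would multiply the defining congruence for $(\sigma,\omega)$ by $a(x)$ and the one for $(a,b)$ by $\sigma(x)$, obtaining
\begin{equation*}
a(x)\omega(x)\equiv \sigma(x)b(x)\Mod{x^{d-1}}.
\end{equation*}
Now the hypothesis $e\leq t=\pint{(d-1)/2}$ enters: since $\Deg{a(x)}\leq e$ and $\Deg{b(x)}<\Deg{a(x)}\leq e$, together with $\Deg{\omega(x)}\leq e-1$ and $\Deg{\sigma(x)}=e$, both sides have degree at most $2e-1\leq d-2<d-1$. Therefore the congruence lifts to an actual equality of polynomials
\begin{equation*}
a(x)\omega(x)=\sigma(x)b(x).
\end{equation*}

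From here, Remark~\ref{rem:sigmaproperties}(4) says $\gcd(\sigma(x),\omega(x))=1$, so $\sigma(x)$ must divide $a(x)$. Since $a(x)\neq 0$ (the condition $\Deg{b(x)}<\Deg{a(x)}$ forbids $a(x)=0$) and $\Deg{a(x)}\leq\Deg{\sigma(x)}$, we conclude that $a(x)=c\,\sigma(x)$ for some $c\in\F\setminus\{0\}$, and substituting back gives $b(x)=c\,\omega(x)$. This shows simultaneously that any valid solution of minimal degree has degree exactly $e$ and is a nonzero scalar multiple of $(\sigma(x),\omega(x))$, proving uniqueness up to a constant factor.

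The only step that requires genuine care is the degree counting used to promote the congruence modulo $x^{d-1}$ to a polynomial identity; this is precisely where the assumption $e\leq t$, equivalent to $2e\leq d-1$, is indispensable. Everything else (cross-multiplication, coprimality, divisibility) is routine once the equality of polynomials has been established.
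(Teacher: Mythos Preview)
Your proof is correct and uses essentially the same approach as the paper: cross-multiply the two congruences, use the hypothesis $2e\leq d-1$ to promote the congruence modulo $x^{d-1}$ to a genuine polynomial identity, then invoke $\gcd(\sigma,\omega)=1$ to force divisibility. The only organizational difference is that the paper splits the argument into two parts (first a proof by contradiction that no valid solution has degree strictly less than $e$, then a separate uniqueness argument among solutions of degree exactly $e$), whereas you handle both at once by starting from an arbitrary minimal valid solution $(a,b)$ and showing directly that $a=c\,\sigma$; your packaging is slightly more economical but the mathematical content is identical.
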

\begin{proof}
We have already observed that  $(\sigma(x),\omega(x))$  is a valid solution of the key equation (\ref{keyequation}). Now we show that $(\sigma(x),\omega(x))$ is a valid solution of minimal degree. If there were  a valid solution $(a(x),b(x))$ such that $\deg(a(x))<\deg(\sigma(x))$ then we would have 
\begin{equation}\label{aiutino0}
a(x)\omega(x)\equiv a(x)\sigma(x)S(x)\equiv b(x)\sigma(x) \Mod{ x^{d-1}} 
\end{equation}
and, since   $\deg(\sigma(x))=e$,
$$\deg(a(x)\omega(x))=\deg(a(x))+\deg(\omega(x))< 2e < d-1$$ 
$$\deg(b(x)\sigma(x))=\deg(b(x))+\deg(\sigma(x))< 2e < d-1$$
From the degree inequalities, it follows that the congruence (\ref{aiutino0}) is  an equality, i.e.\ :  $$a(x)\omega(x)= b(x)\sigma(x).$$ Therefore   $\sigma(x)| a(x)\omega(x)$.  Since $\gcd\left(\sigma(x),\omega(x)\right)=1$, it follows that $\sigma(x)| a(x)$ and this is absurd. Hence $(\sigma(x),\omega(x))$ is a valid solution of minimal degree. 
Finally we show that $(\sigma(x),\omega(x))$ is the unique (up to a multiplicative constant) valid solution of minimal degree. Let $(\tl{\sigma}(x),\tl{\omega}(x))\in \mathcal M$ be a valid solution of minimal degree of the key equation (\ref{keyequation}). Then $\Deg{\tl{\sigma}(x)}=\deg(\sigma(x))=e$ and 
$$\tl{\sigma}(x)\omega(x)\equiv \tl{\sigma}(x)\sigma(x)S(x)\equiv \tl{\omega}(x)\sigma(x) \Mod{ x^{d-1}}$$
with $\deg(\tl{\sigma}(x)\omega(x))< 2e < d-1$ and $\deg(\tl{\omega}(x)\sigma(x))< 2e < d-1$.
Thus we have that $$\tl{\sigma}(x)\omega(x)= \tl{\omega}(x)\sigma(x)$$
Since $\gcd\left(\sigma(x),\omega(x)\right)=1$ and $\Deg{\tl{\sigma}(x)}=\deg(\sigma(x))$, we can conclude that there exists $k\in\F$ such that $$\tl{\sigma}(x)=k\sigma(x) \text{ and } \tl{\omega}(x)=k\omega(x)$$
The proof is concluded.
\end{proof}

Thanks to theorem \ref{thm:keyequation2}, developing procedures that, given the syndrome polynomial $S(x)$, solve the key equation (\ref{keyequation}) finding its minimal valid solution, will allow us to be able to find the polynomials $\sigma(x)$ and $\omega(x)$, that is to decode $\rw$. \\

Any procedure that solves the key equation (\ref{keyequation}) and more in general any procedure that performs the decoding of Reed-Solomon codes, requires computations using finite field arithmetic. In this thesis we do not discuss in detail the implementations and the circuits that perform addition, multiplication and division over $\F$ (for which we refer to \cite{lin} and \cite{peterson}). We only recall that the elements of $\F=\mathbb F_{p^m}$ can be represented as vectors of $\pt{\mathbb F_p}^m$ with respect to a basis (for example $\left\{1,\alpha,\alpha^2,\dots,\alpha^{m-1}\right\}$ with $\alpha$ primitive element of $\F$) and that, with the vector representation, additions and multiplications by constant field element are simple, but unconstant  multiplication and division are not. 
In particular:
\begin{itemize}
 \item[-] To add two field elements, we simply add their vector representations over the base field $\mathbb{F}_p$. This operation is the simplest among $\F$ arithmetic.
  
 \item[-] Multiplying  a field element by a fixed element from the same field is simple enough because it can be seen as a $\mathbb F_p$ linear map of $\F$. Thus it involves only few additions and multiplications over the base field $\mathbb{F}_p$ (see example \ref{ex:multiplication}), whereas multiplying two arbitrary field elements is more expensive in terms of number of operations over $\mathbb{F}_p$ involved and in terms of implementation. 
 
 \item[-] Division can be handled by first computing the inverse and then multiplying by it. It is the most expensive  operation in $\F$ because it adds to the  multiplication complexity the time of an inverse calculation. The inversion in $\F$ is a quite complicated operation: a direct approach is to use a table of $mp^m$ positions in which the inverses of the field elements are stored, but there are also alternative methods that can be more advantageous for decoder where codes are defined over $\F$ for different values of $q$. For example there are several methods for computing an inverse based on the Extended Euclidean Algorithm. See \cite{bhargava} for details. \\
 \end{itemize}

 \begin{ex}\label{ex:multiplication}
  Suppose we want to multiply a field element $\beta$ in $\mathbb F_{2^4}$ by the primitive element $\alpha$ whose minimal polynomial is $f(x)=x^4+x+1$. The element $\beta$ can be expressed as $\beta=b_0+b_1\alpha+b_2\alpha^2+b_3\alpha^3$, with $b_i\in \mathbb F_2$. Thus
  $$\beta\cdot\alpha=b_0\alpha+b_1\alpha^2+b_2\alpha^3+b_3\underbrace{\alpha^4}_{=\,\alpha+1}=
  b_3+(b_0+b_3)\alpha +b_1\alpha^2+b_2\alpha^3$$
  This multiplication can be carried out by one shift of the feedback register shown in figure \ref{fig:multiplier} with the cost of only one addition in the base field $\mathbb F_2$. The circle represents an adder over $\mathbb F_2$, while the rectangular elements are storage devices.\\
 \end{ex}
 
 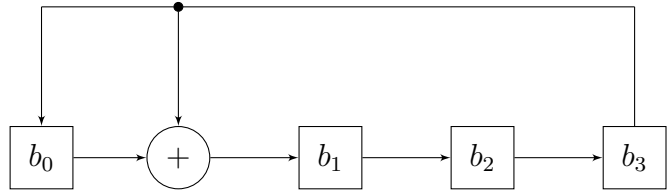
\begin{figure*}[h!]
\begin{center}
\begin{tikzpicture}[auto, node distance=2cm,>=latex']
    \node [coordinate](pall1){};
    \node [pall, right of=pall1, node distance=6cm](pall2){};
    \node [coordinate, node distance=6cm, right of=pall2](pall3){};
    \node [block, below of=pall1](b0){$b_0$};
    \node [sum, below of=pall2](somma){$+$};
    \node [block, right of=somma] (b1){$b_1$};
    \node [block, right of=b1](b2){$b_2$};
    \node [block, below of=pall3](b3){$b_3$};
     
    \draw [->] (b0) --  (somma);
    \draw [->] (somma) --  (b1);
    \draw [->] (b1) --  (b2);
    \draw [->] (b2) -- (b3);
    \draw [->] (pall1) -- (b0);
    \draw [->] (pall2) -- (somma);
    \draw [] (pall2) -- (pall1); 
    \draw [-](pall3) -- (pall2);
    \draw [](b3) -- (pall3);
\end{tikzpicture}
\caption{circuit for multiplying arbitrary element in $\mathbb F_{2^4}$ by $\alpha$}\label{fig:multiplier}
\end{center}
\end{figure*}
First the vector representation $(b_0,b_1,b_2,b_3)$ of $\beta$ is loaded into the register. Then the register is pulsed: all the stored elements shift right and the element $b_3$ goes from the fourth storage device  into the first one. At the same time,  the element $b_0$ is moved into the adder where it is added with $b_3$. The sum is stored in the second device. The final configuration contains the vector representation of $\beta\alpha$ (see figure \vref{fig:multiplier2}).\\

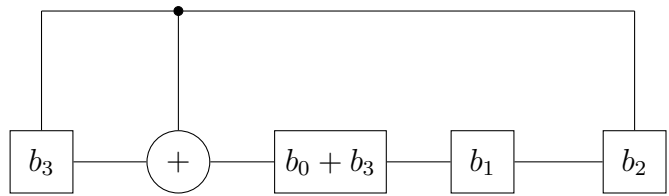
\begin{figure*}[h!]
\begin{center}
\begin{tikzpicture}[auto, node distance=2cm,>=latex']
    \node [coordinate](pall1){};
    \node [pall, right of=pall1, node distance=6cm](pall2){};
    \node [coordinate, node distance=6cm, right of=pall2](pall3){};
    \node [block, below of=pall1](b0){$b_3$};
    \node [sum, below of=pall2](somma){$+$};
    \node [block, right of=somma] (b1){$b_0+b_3$};
    \node [block, right of=b1](b2){$b_1$};
    \node [block, below of=pall3](b3){$b_2$};
      
     \draw [] (b0) --  (somma);
     \draw [] (somma) --  (b1);
     \draw [] (b1) --  (b2);
     \draw [] (b2) -- (b3);
     \draw [] (pall1) -- (b0);
     \draw [] (pall2) -- (somma);
     \draw [] (pall2) -- (pall1); 
     \draw [](pall3) -- (pall2);
     \draw [](b3) -- (pall3);
\end{tikzpicture}
\caption{final configuration of the register of fig.\ \ref{fig:multiplier}}\label{fig:multiplier2}
\end{center}
\end{figure*}

At this point, we can outline the general procedure of the decoding algorithms for Reed-Solomon codes, which consists of four major steps:
\begin{enumerate}[\bfseries Step 1.]
 \item \textbf{Computation of the syndromes.} Since for any $i\in\{1,2,\dots,d-1\}$ $S_i=\rw(\alpha^i)$, this first task may be accomplished using \emph{Horner's method} for evaluating polynomials. This technique is based on the polynomial decomposition given by
 $$r_{n-1}x^{n-1}+\cdots+r_1x+r_0= \big(\cdots(r_{n-1}x+r_{n-2})x+\cdots\big)x+r_0$$
 that allows to compute $S_j$ gradually as $r_i$'s are received, as it is shown by the following calculations:\\[0,3cm]
 \texttt{Input}:  $\rw=(r_0,r_1,\dots,r_{n-1})$, $\alpha^j$;\\
 \texttt{Output}:  $S_j$;\\[-0,8cm]
 \begin{tabbing}
 \texttt{Begin}\\
   $S_j:=r_{n-1}$;\\
  \texttt{for } \= $i=n-2,n-3,\dots,0$ { do }\\
  \> $S_j:=\alpha^j\cdot S_j+ r_i$;\\
  \texttt{endfor};\\ \texttt{End}
 \end{tabbing}
 We note that all syndromes can be computed simultaneously and that $r_{n-1}$ is the first received symbol. Thus after $r_0$ is received, all $d-1$ syndrome computations are completed at the same time.
  
 Figure \ref{fig:horner} represents a circuit that after $n$ shits contains $S_i=\rw(\alpha^i)$. The vector $\rw=(r_0,r_1,\dots,r_{n-1})$ is shifted into the circuit one component a time. After the first shift, the storage device contains $r_{n-1}$, after the second it contains  $r_{n-1}\alpha^i+r_{n-2}$ and so on.\\
  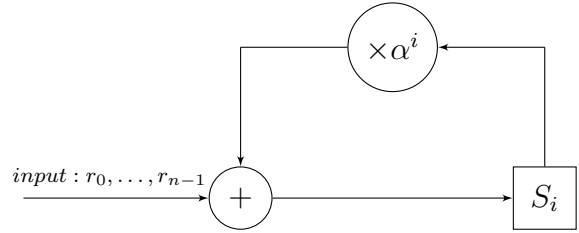
\begin{figure*}[h!]
\begin{center}
\begin{tikzpicture}[auto, node distance=2cm,>=latex']
    \node [coordinate](pall0){};
    \node [coordinate, right of=pall0, node distance=3cm](pall1){};
    \node [sum,  right of=pall1](alpha){$\times \alpha^i$};
    \node [coordinate, right of=alpha](pall2){};
    \node [output, below of=pall0](input){};
    \node [sum, below of=pall1](somma){$+$};
    \node [block, below of=pall2](s){$S_i$};
     
    \draw [->] (pall1) -- (somma);
    \draw [->] (pall2) -- (alpha);
    \draw [->] (somma) -- (s);
    \draw [->] (input) -- node {\scriptsize $input:r_0,\dots,r_{n-1}\;\;$}(somma);
    \draw [] (alpha) -- (pall1);
    \draw [] (s) -- (pall2);
\end{tikzpicture}
\caption{circuit for computing $\rw(\alpha^i)$}\label{fig:horner}
\end{center}
\end{figure*}
  
 \item \textbf{Determination of the error-locator polynomial and of the number of errors that occurred.} This is done exploiting the characterization of $\sigma(x)$ due to the key equation (\ref{keyequation}). In chapters 2 and 3 we will study two different strategies to accomplish this point.
  
 \item \textbf{Finding the error positions.}  Recalling that the elements $X_i=\alpha^{p_i}$ are the inverses of the roots of $\sigma(x)$, the error positions  $p_1,p_2,\dots,p_e$ are computed using \emph{Chien's search}. This is an exhaustive search over all the elements in $\F$ that finds the roots of $$\sigma(x)=\sigma_ex^e+\sigma_{e-1}x^{e-1}+\cdots+\sigma_1x+1$$ based on the fact that 
 $$\alpha^i \text{ is a root of } \sigma(x) \Longleftrightarrow 1+\sum_{j=1}^e \sigma_j\alpha^{ij}=0$$
 Thus if $i$ goes from $1$ to $n-1$, at each step it is necessary only to multiply the $j$th addend $\sigma_j\alpha^{ij}$ by $\alpha^j$ in order to obtain the next set of addends.
  Chien's search procedure can be summarized as follow:
 \begin{alg}[Chien's search for error positions]\label{alg:chien}\hspace{3cm}
 \begin{description}
  \item[\texttt{Input}:]  the length of the code $n$, the error number $e$ and the coefficients of 
  $\sigma(x)=\sigma_ex^e+\sigma_{e-1}x^{e-1}+\cdots+\sigma_1x+1$.
  \item[\texttt{Output}:]  the error positions $p_1<p_2<\dots<p_e$.\\[-1cm]
  \end{description}
  \begin{tabbing}
  \texttt{Begin}\\[0,2cm]
   $k:=e$;\\
   \texttt{for }\= $i=1,2,\dots,n$ \texttt{ do }\\
   \> $\sigma_l:=\sigma_l\cdot \alpha^l$ \texttt{ for any } $l=1,2,\dots,e$;\\
   \> $\displaystyle S:=1+\sum_{j=1}^e \sigma_j$;\\
   \> \texttt{if }\= $S=0$ \texttt { then }\\ 
   \>\> $p_k:=n-i$;\\
   \>\> $k:=k-1$;\\
   \> \texttt{endif}\\
    \texttt{endfor}\\
    \texttt{If } $k\neq0$ \texttt{ then declare a decoder malfunction;}\\[0,2cm]
   \texttt{End}
  \end{tabbing}
 \end{alg}
 Chien's search can be implemented in a single circuit (see figure \ref{fig:chien}) with $t$ multipliers for multiplying by $\alpha, \alpha^2, \dots,\alpha^t$ respectively and few adders, with a computational complexity upper bounded by $n$ multiplications and $en$ additions. More details are given in \cite{chien}.\\
 
\begin{figure*}[h!]
\begin{center}
\begin{tikzpicture}[auto, node distance=15mm,>=latex']
    \node [coordinate](pall1){};
    \node [sum, right of=pall1,node distance=23mm](pall2){$+$};
    \node [coordinate, right of=pall2](spazio){};
    \node [sum, right of=spazio](pall3){$+$};
    \node [sum, right of=pall3, node distance=3cm](pall4){$S=0?$};
    \node [output, right of=pall4](pall5){};
    \node [coordinate](aiuto1){};
    \node [block, below of=pall1](sigma1){$\sigma_1$};
    \node [coordinate, right of=sigma1](aiuto1){};
    \node [block, below of=pall2](sigma2){$\sigma_2$};
    \node [coordinate, right of=sigma2](aiuto2){};
    \node [coordinate, right of=aiuto2](aiutot){};
    \node [block, below of=pall3] (sigmat){$\sigma_t$};
    \node [coordinate, right of=sigmat](aiutot){};
    \node [sum, below of=sigma1] (alpha1){$\times\alpha^{\phantom{3}}$};
    \node [sum, below of=sigma2] (alpha2){$\times\alpha^2$};
    \node [sum, below of=sigmat] (alphat){$\times\alpha^t$};
   
    \draw [->] (sigma1) --  (alpha1);
    \draw [->] (sigma2) --  (alpha2);
    \draw [->] (sigmat) --  (alphat);
    \draw [->] (alpha1) -|(aiuto1) -- (sigma1);
    \draw [->] (alpha2) -|(aiuto2) -- (sigma2);
    \draw [->] (alphat) -|(aiutot) -- (sigmat);
    \draw [] (sigma1) -- (pall1); 
    \draw [->](pall1) -- (pall2);
    \draw [->, dashed](pall2)-. (pall3);
    \draw [->] (sigma2) -- (pall2);
    \draw [->] (sigmat) -- (pall3);
    \draw [->] (pall3) -- (pall4);
    \draw [->] (pall4) -- node{{\scriptsize $\quad output$}}(pall5);
\end{tikzpicture}
\caption{circuit for Chien's search}\label{fig:chien}
\end{center}
\end{figure*}
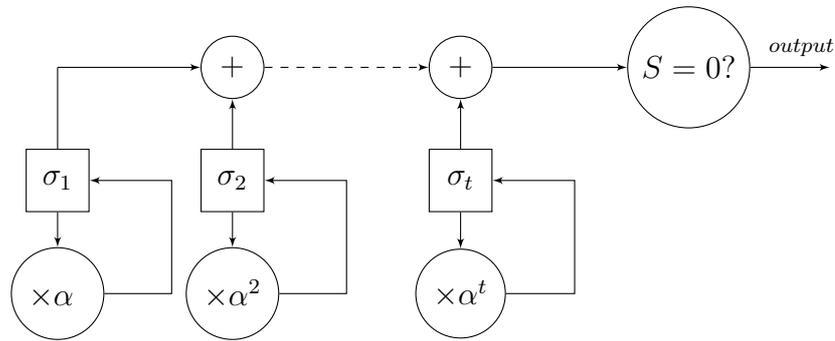

 \item \textbf{Finding the error values.}\\
 This task can be accomplished using Forney's formula (\ref{forney}), but throughout this thesis we will  also see other alternative  methods to compute error values. We observe that, depending on the decoding algorithm used, this last step may be implemented not only after the step 3, but also during the step 3: each time that Chien's search finds an error position $p_i$,  the decoding algorithm corrects the $p_i$th components of $\rw$ computing $E_i$ immediately. 
 \end{enumerate}

It is important to note that steps 1 and 3 of the general procedure involve only additions and multiplications by fixed elements (the first $d-1$ powers of $\alpha$) and their implementation is simple, so these steps requires a negligible computational costs compared with steps 2 and 4. Moreover steps 1 and 3  are essentially the same in all the decoding algorithms that we will study in this thesis. For these reasons they  will be not discussed over and they will be not counted in estimating of the computational cost of  the decoding algorithms that we will analyze later on.\\
 
\begin{figure*}[h!]
\begin{center}
\begin{tikzpicture}[auto, node distance=1cm,>=latex']
    \node [output](pall0){};
    \node [pall, right of=pall0, node distance=6cm](pall1){};
    \node [coordinate, right of=pall1, node distance=2cm](pall2){};
    \node [block, right of=pall2] (delay){\footnotesize Delay};
    \node [block, below of=pall2, node distance=2cm, text width=22mm](b0){\footnotesize Syndrome  Computation};
    \node [block, right of=b0, text width=2cm, node distance=3cm] (b1){\footnotesize Solving Key Equation};
    \node [block, right of=b1, text width=22mm, node distance=3cm](b2){\footnotesize Chien's Search \& Error values Computation};
    \node [sum, right of=delay, node distance=7cm](somma){$+$};
    \node [output, right of=somma, node distance=26mm](output){};
      
     \draw [->] (b0) --  (b1);
     \draw [->] (b1) --  (b2);
     \draw [->] (b2) -| node[pos=0.6]{$\e$} (somma);
     \draw [->] (delay) -- (somma);
     \draw [-] (pall1) -- (delay);
     \draw [->] (pall1) |- (b0);
    \draw [->] (pall0) -- node[text width=2cm]{{\footnotesize Received word} $\rw$}(pall1);
    \draw [->] (somma) -- node[text width=2cm]{\footnotesize Decoded word}(output);
\end{tikzpicture}
\caption{a block diagram for the general decoding procedure}\label{fig:procedure}
\end{center}
\end{figure*}
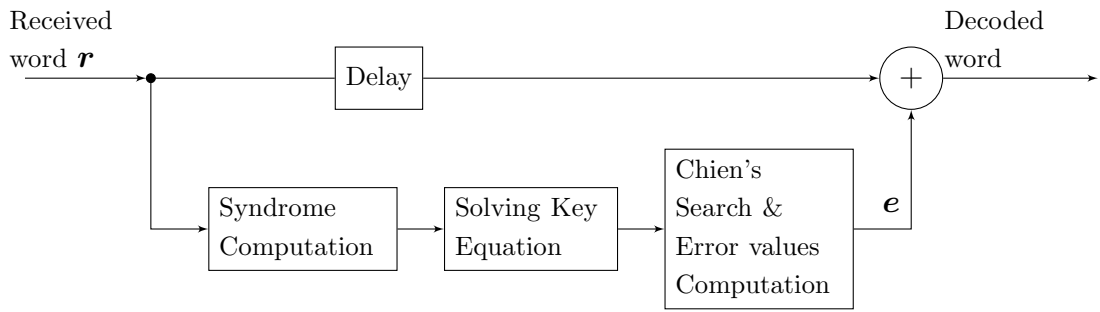

%
%

\chapter{Peterson-Gorenstein-Zierler Decoder} 
W. Wesley Peterson in 1960 developed the first practical decoder for Reed-Solomon codes based on syndrome computation only using tools of linear algebra. In particular, Peterson developed an algorithm for binary BCH codes that finds the location of the errors solving a linear system and Daniel E. Gorenstein and Neal Zierler, a year later, extended this algorithm for nonbinary codes.

In this chapter we will analyze the Peterson-Gorenstein-Zierler decoding algorithm  and we will present an important improvement \cite{schmidt} obtained using a fast inversion  method of an Hankel matrix. Finally we will  describe the necessary and sufficient conditions to avoid decoder malfunctions in the fast implementation of the Peterson-Gorenstein-Zierler decoder.

As discussed in section \ref{sect:decodingRS}, while studying the Peterson-Gorenstein-Zierler decoding algorithm  we will focus our attention only  in how it carries out  step 2 (computing the error-locator polynomial) and step 4 (computing the error values) of the general outline. 

\section{\PGZ Decoding Algorithm}\label{sect:PGZ}
Consider a code $RS(n,d,\alpha)$ over the field $\F$ and let $\cw$, $\e$ and $\rw=\cw+\e$ be the codeword sent, the error vector and the word received  in a data transmission, like in section \ref{sect:decodingRS}. We recall that $t$ represents the error correction capability of the code $RS(n,d,\alpha)$ (i.e.\ $t=\pint{\frac{d-1}{2}}$) and from now on  we assume that $e=\wt(\e)\leq t$. If $e>t$ we do not expect to be able to correct the errors.  By remark \ref{rem:sigmaproperties}, we can express the error-locator polynomial $\sigma(x)$ and the error-evaluator polynomial $\omega(x)$ as:
\begin{align*}
\sigma(x)&=\sigma_ex^e+\sigma_{e-1}x^{e-1}+\cdots+\sigma_1x+1\\
\omega(x)&=\omega_{e-1}x^{e-1}+\omega_{e-2}x^{e-2}+\cdots+\omega_1 x+\omega_0 
\end{align*}
The validity of the key equation (\ref{keyequation}) for the polynomials $\sigma(x)$ and $\omega(x)$ implies that the coefficients of $x^e,x^{e+1},\dots,x^{d-2}$ in the polynomial $\sigma(x)S(x)$ are equal to zero. Thus we have the following linear system of $d-1-e$ equations for the unknowns $\sigma_1, \sigma_2,\dots,\sigma_e$:
\begin{equation} \label{sistema1}
 \begin{cases}
  \displaystyle \sum_{i=1}^e S_{e+1-i}\,\sigma_i+S_{e+1}=0\\
  \displaystyle \sum_{i=1}^e S_{e+2-i}\,\sigma_i+S_{e+2}=0\\
  \displaystyle \cdots\\
  \displaystyle \sum_{i=1}^e S_{d-1-i}\,\sigma_i+S_{d-1}=0\\
 \end{cases}
\end{equation}
whose solution (if it exists and is unique) determines the coefficients of the error-locator polynomial $\sigma(x)$.
The linear system (\ref{sistema1}) may be expressed in matrix form as
\begin{equation} \label{sistema1Matrice}
 \begin{mymatrix}
        S_1          & S_2             & \cdots & S_{e} \\
        S_2          & S_3             & \cdots & S_{e+1}\\
        \vdots       & \vdots          &        & \vdots \\
        S_{d-1-e}    & S_{d-e}        & \cdots & S_{d-2}\\
 \end{mymatrix} 
 \begin{mymatrix} \sigma_e\\ \sigma_{e-1} \\ \vdots \\ \sigma_1\\
 \end{mymatrix}= - 
 \begin{mymatrix} S_{e+1}\\ S_{e+2} \\ \vdots \\ S_{d-1}\\
 \end{mymatrix} 
\end{equation}
and in order to study its solutions, we define the following useful matrices:

\begin{defi}\label{def:syndromematrix}
Let $t$ be the error correction capability of the code $RS(n,d,\alpha)$, we call
$$ A=\begin{mymatrix}
                        S_1          & S_2             & \cdots & S_{t+1} \\
                        S_2          & S_3             & \cdots & S_{t+2}\\
                        \vdots       & \vdots          &        & \vdots \\
                        S_{d-1-t}    & S_{d-t}        & \cdots & S_{d-1}\\
        \end{mymatrix}$$
the \emph{ syndrome matrix } and we indicate with $A_i$ its $i\times i$ leading principal minor. That is
$$ A_i=\begin{mymatrix}
                        S_1          & S_2             & \cdots & S_{i} \\
                        S_2          & S_3             & \cdots & S_{i+1}\\
                        \vdots       & \vdots          &        & \vdots \\
                        S_{i}    & S_{i+1}        & \cdots & S_{2i-1}\\
        \end{mymatrix}$$
\end{defi}

In the following proposition we study the properties of the syndrome matrix and of its minors $A_i$:
\begin{prop}\label{prop:syndromematrix}
Let $t$ be the error correction capability of $RS(n,d,\alpha)$ and $e=\wt(\e)$. If $e\leq t$, then
  \begin{enumerate}[\ \ \ (1)] 
  \item $A_e$ is a non-singular matrix;
  \item the rank of the syndrome matrix $A$ is equal to $e$;
  \item if $\rho=\max\{i\leq t \;|\; \det(A_i)\neq 0\}$, then the rank of $A$ is equal to $\rho$. So we have $e=\rho$.
 \end{enumerate}
\end{prop}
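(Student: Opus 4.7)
The plan is to exploit the identity $S_i=\sum_{j=1}^e E_j X_j^i$ derived in (\ref{rel:syndrome}), which immediately gives the syndrome matrix (and each principal minor $A_i$) a Vandermonde-type factorisation. Specifically, writing $A_e$ entry-wise,
\[
(A_e)_{rc}=S_{r+c-1}=\sum_{j=1}^e (E_jX_j)\,X_j^{r-1}\,X_j^{c-1},
\]
so $A_e = V D V^T$, where $V$ is the $e\times e$ Vandermonde matrix $V_{rj}=X_j^{r-1}$ and $D=\textnormal{diag}(E_1X_1,\dots,E_eX_e)$. From here (1) is immediate: $\det(A_e)=\det(V)^2\prod_{j=1}^e E_jX_j$, which is nonzero because the $X_j=\alpha^{p_j}$ are pairwise distinct nonzero elements of $\F$ (the $p_j$ are distinct and $\alpha$ has order $n$), each $E_j\neq 0$ by the definition of the error positions, and each $X_j\neq 0$.

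For (2) I would apply the same idea to the full rectangular matrix. Since $e\leq t$, a direct calculation using $t=\pint{\frac{d-1}{2}}$ shows that the number of rows $d-1-t$ and the number of columns $t+1$ are both at least $e$. Writing
\[
A = V_1 D V_2^T,\qquad (V_1)_{rj}=X_j^{r-1},\ (V_2)_{cj}=X_j^{c-1},
\]
with $V_1$ of size $(d-1-t)\times e$ and $V_2$ of size $(t+1)\times e$, the inner factor $D$ is invertible and both $V_1,V_2$ have rank $e$ (each contains the $e\times e$ Vandermonde $V$ as a submatrix in its first $e$ rows). Thus $\rk(A)\leq e$ from the factorisation, while $\rk(A)\geq e$ because $A_e$ is a submatrix of $A$ with nonzero determinant by (1); so $\rk(A)=e$.

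For (3), assertion (2) gives $\rk(A)=e$, so every $i\times i$ submatrix of $A$ with $i>e$ is singular; in particular $\det(A_i)=0$ for every $e<i\leq t$. Combined with $\det(A_e)\neq 0$, this forces $\rho=\max\{i\leq t\mid\det(A_i)\neq0\}=e$, as required.

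The only genuinely delicate point is the bookkeeping that guarantees $d-1-t\geq e$ and $t+1\geq e$ so that the factorisation $A=V_1DV_2^T$ has the advertised ranks; once the dimensions are pinned down, everything else reduces to the standard Vandermonde nonvanishing, and no deeper argument is needed.
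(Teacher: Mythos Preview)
Your proof is correct and follows essentially the same approach as the paper: the Vandermonde--diagonal factorisations $A_e=VDV^T$ and $A=V_1DV_2^T$ are exactly the tools the paper uses for (1) and (2), and your argument for (3) (using (2) to force $\det(A_i)=0$ for $i>e$, hence $\rho=e$) is a minor logical rearrangement of the paper's sandwich $\rho\leq\rk(A)=e\leq\rho$.
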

\begin{proof}
 \begin{enumerate}
  \item We consider the matrices
        $$ V=\begin{mymatrix}
                               1          &   1          & \cdots & 1 \\
                               X_1          & X_2            & \cdots & X_e\\
                               \vdots     & \vdots          &        & \vdots \\
                                X_1^{e-1}         &  X_2^{e-1}            & \cdots & X_e^{e-1}\\
            \end{mymatrix}\quad
        D=\begin{mymatrix}
                               E_1X_1       & 0                & \cdots       & 0 \\
                               0            & E_2X_2           &   \ddots     & \vdots \\
                               \vdots       & \ddots           &\ddots        & 0 \\
                               0            & \cdots           & 0            & E_eX_e\\
         \end{mymatrix}$$  
         Since the values $X_i$'s are all distinct and different from zero and  the value $E_i$'s are all different from zero, the Vandermonde matrix $V$ and the diagonal matrix $D$ are both  non-singular. The proof can be easily concluded verifying that $A_e=VDV^T$ by (\ref{rel:syndrome}).
  \item Since the matrix $A_e$ is a submatrix of $A$, it follows from (1) that $\rk(A)\geq e$. We consider the matrices
  $$ V_1=\begin{mymatrix}
                               1              & 1               & \cdots   & 1 \\
                               X_1            & X_2             & \cdots & X_e \\
                               \vdots         & \vdots          &        & \vdots \\
                               X_1^{d-2-t}    & X_2^{d-2-t}     & \cdots & X_e^{d-2-t}\\
        \end{mymatrix} \quad
       V_2=\begin{mymatrix}
                               1          & X_1            & \cdots & X_1^{t} \\
                               1          & X_2            & \cdots & X_2^{t}\\
                               \vdots     & \vdots          &        & \vdots \\
                               1          & X_e             & \cdots & X_e^{t}\\
        \end{mymatrix}
          $$  
        It is easy to verify that $A=V_1DV_2$. Recalling that the rank of a product of matrices is less or equal to the rank of each factor matrix and observing that $\rk(D)=e$, we obtain that $\rk(A)\leq e$. So the proof is complete.
  
  \item Since, by definition,  $\det(A_\rho)\neq 0 $, we have that $\rk(A)\geq \rho$. On the other hand, we saw in (1) that $\det(A_e)\neq 0$, therefore we have $e\leq\rho$. Hence, from (2) it follows that $\rk(A)\leq\rho$ and this concludes the proof.
 \end{enumerate}
\end{proof}

An immediate consequence is that, given the syndrome matrix $A$,  we can calculate the number of errors which occurred $e$ as
$$e=\max\{i\leq t \;|\; \det(A_i)\neq 0\}$$
Furthermore by proposition \ref{prop:syndromematrix} we know that there exists a unique solution  of  the linear system (\ref{sistema1}) and  the first $e$ equations are sufficient to calculate it.  In other words, the  coefficients $\sigma_1,\sigma_2,\dots,\sigma_e$ of the error-locator polynomial are completely determined by solving
\begin{equation} \label{sistema2Matrice}
 A_e\begin{mymatrix} \sigma_e\\ \sigma_{e-1} \\ \vdots \\ \sigma_1\\
   \end{mymatrix}= - 
   \begin{mymatrix} S_{e+1}\\ S_{e+2} \\ \vdots \\ S_{2e}\\
   \end{mymatrix} 
\end{equation}  
Once that the error-locator polynomial is known,  we will calculate its roots  and the error positions $p_1,p_2,\dots, p_e$ using (\ref{rel:X_i}) and Chien's search as seen in section \ref{sect:decodingRS}.\\

At this stage it remains to calculate the error values, knowing their positions. For this aim we consider the relations (\ref{rel:syndrome}) about the syndromes $S_1,S_2,\dots, S_{d-1}$. As the elements $X_1,X_2,\dots, X_e$ are known,  we obtain a set of $d-1$ linear equations in the unknowns $E_1,E_2,\dots,E_e$:
\begin{equation*}
\begin{cases}
 \displaystyle S_1=\sum_{j=1}^e E_j\,X_j\\
 \displaystyle S_2=\sum_{j=1}^e E_j\,X_j^2\\
 \cdots\\
 \displaystyle S_{d-1}=\sum_{j=1}^e E_j\,X_j^{d-1}
\end{cases}
\end{equation*}
The first $e$ equations can be solved for the error values $E_i$'s considering the following linear system:
\begin{equation} \label{syst:GZ}
 \begin{mymatrix} X_1   & X_2  & \cdots & X_e \\
                 X_1^2   & X_2^2  & \cdots & X_e^2 \\
                 \vdots &\vdots & &\vdots\\
                 X_1^e   & X_2^e  & \cdots & X_e^e \\
  \end{mymatrix}
 \begin{mymatrix} E_1\\ E_2 \\ \vdots \\ E_e\\
   \end{mymatrix}= 
   \begin{mymatrix} S_{1}\\ S_{2} \\ \vdots \\ S_{e}\\
   \end{mymatrix} 
\end{equation} 
The coefficient matrix of linear system  (\ref{syst:GZ}) has a special structure, indeed  it is the product of a Vandermonde matrix and a diagonal matrix:
\begin{equation*}
 \begin{mymatrix} X_1   & X_2  & \cdots & X_e \\
                 X_1^2   & X_2^2  & \cdots & X_e^2 \\
                 \vdots &\vdots & &\vdots\\
                 X_1^e   & X_2^e  & \cdots & X_e^e \\
  \end{mymatrix}=
  \underbrace{\begin{mymatrix}
                               1          &   1          & \cdots & 1 \\
                               X_1          & X_2            & \cdots & X_e\\
                               \vdots     & \vdots          &        & \vdots \\
                                X_1^{e-1}         &  X_2^{e-1}            & \cdots & X_e^{e-1}\\
  \end{mymatrix}}_{\textstyle V}
  \underbrace{\begin{mymatrix}
                               X_1       & 0                & \cdots       & 0 \\
                               0            & X_2           &   \ddots     & \vdots \\
                               \vdots       & \ddots           &\ddots        & 0 \\
                               0            & \cdots           & 0            & X_e\\
   \end{mymatrix}}_{\textstyle X}
\end{equation*}
Since the $X_i$'s are distinct and nonzero by definition, both the Vandermonde matrix $V$ and the diagonal matrix $X$ are invertible. Thus,
$$\begin{mymatrix}
   E_1\\ E_2 \\ \vdots \\E_e
  \end{mymatrix}=X^{-1}V^{-1}\underbrace{\begin{mymatrix}
   S_1\\ S_2 \\ \vdots \\S_e
  \end{mymatrix}}_{\textstyle \vct{s}}
$$
Since $V$ is a Vandermonde matrix, the step $V^{-1}\vct{s}$ can be efficiently carried out using the \emph{Bj\"{o}rck-Pereyra algorithm} for the solution of Vandermonde systems, while the step $X^{-1}\left(V^{-1}\vct{s}\right)$ consists simply in $e$ divisions. The Bj\"{o}rck-Pereyra algorithm transforms the right side vector of a Vandermonde system by a sequence of simple transformations into the solution vector. The components are modified in a suitable order so that no extra storage is needed. For more details and a complete exposition of the Bj\"{o}rck-Pereyra algorithm, we refer to \cite{bjorck}. In the following we present an algorithm that first compute the vector $\vct{y}=(y_0,y_1,\dots,y_{e-1})^T\in(\F)^e$ such that $\vct{y}=V^{-1}\vct{s}$ (implementing the Bj\"{o}rck-Pereyra algorithm for the matrix $V$ and the right side vector $\vct{s}$) and after it solves the diagonal system. 

\begin{alg}[\textbf{\BP algorithm for computing error values}]\hspace{2cm}
\label{alg:BP}
\begin{description}
 \item[\texttt{Input}:] the number of error $e$, the elements $X_1, X_2, \dots, X_e$ and the syndrome vector  $\vct{s}=\left(S_1,S_2,\dots,S_{e}\right)^T$;\\[-0.8cm]
 \item[\texttt{Output}:] the error values $E_1, E_2,\dots,E_e$;
\end{description}
\vspace*{-0.2cm}
\texttt{Begin}
\begin{enumerate}[\bfseries \BP.1]
  \item  \begin{tabbing}
         \textit{(finding  $\vct{y}=V^{-1}\vct{s}$)}\\
         \=$\vct{y}:=\vct{s}$;\\
         \>\texttt{for }\= $k=1,2,\dots, e-1$ \texttt{ do }  \\
         \>\>\texttt{for }\= $i=e-1, e-2, \dots, k$ \texttt{ do }  \\
                    \>\> \>$y_i:=y_i-X_ky_{i-1}$;\\
         \>\>\texttt{endfor}\\
         \>\texttt{endfor}\\ \\
         \>\texttt{for }\= $k=e-1, e-2, \dots, 1$ \texttt{ do }  \\
         \>\>\texttt{for }\= $i=k, k+1, \dots, e-1$ \texttt{ do } \\
                    \>\> \>$y_i:=\dfrac{y_i}{X_{i+1}-X_{i+1-k}}$;\\
         \>\>\texttt{endfor}\\
         \>\>\texttt{for }\= $i=k-1, k, \dots, e-2$ \texttt{ do }  \\
                    \>\> \>$y_i:=y_i-y_{i+1}$;\\
        \>\>\texttt{endfor}\\
        \>\texttt{endfor}\\
    \end{tabbing}
  \item \begin{tabbing}
         \textit{(solving the diagonal system)}\\
         \texttt{for }\= $i=1,2,\dots, e$ \texttt{ do } \\
            \>$E_i:=y_{i-1}X_i^{-1}$;
        \end{tabbing}
 \end{enumerate}
\texttt{End}
\end{alg}

By direct counting, it is easy to verify that algorithm \ref{alg:BP} has a computational time complexity of $O(e^2)$ arithmetic operations.\\

\noindent With this observations, the Peterson-Gorenstein-Zierler (\PGZ) decoding algorithm consists of the followings steps:\\
\begin{alg}[\textbf{\PGZ decoding algorithm} for $\RS$]\hspace{2cm}
\label{alg:PGZ}
\begin{description}
 \item[\texttt{Input}:] the received word $\rw(x)$;
 \item[\texttt{Output}:] the  codeword $\cw(x)$;
\end{description}
\texttt{Begin}
\begin{enumerate}[\bfseries \PGZ.1]
         \item \label{pgz1}
               \begin{tabbing}
                \textit{(syndrome computation)}\\
                \texttt{for }\= $i=1,2,\dots,d-1$ \texttt{ do }\\
                \> $S_i:=\rw(\alpha^i)$;\\
                \texttt{endfor}               
               \end{tabbing}
  
  \item  \label{pgz2}
         \begin{tabbing}
          \textit{(error-locator polynomial computation)}\\
          \=$i:=\pint{\frac{d-1}{2}}$;\\
          \>$det:=\det(A_i)$;\\
          \>\texttt{whi}\=\texttt{le }  $(det=0)$ \texttt{ repeat} \\
          \>\>$i:=i-1$;\\
          \>\>$det:=\det(A_i)$;\\
          \>\texttt{endwhile};\\
          \> $e:=i$;
        \end{tabbing}
        \texttt{solve the linear system (\ref{sistema2Matrice}) to find} $\sigma_1, \sigma_2, \dots,\sigma_e$;\\
        $\sigma(x):=\sigma_ex^e+\sigma_{e-1}x^{e-1}+\cdots+\sigma_1x+1$;
  
  \item \label{pgz3}
        \textit{(finding error positions)}\\
        \texttt{calculate  the error positions $p_1,p_2,\dots,p_e$ and the elements $X_1^{-1}, X_2^{-1},\dots,X_e^{-1}$ using Chien's search (alg.\ \ref{alg:chien});}
        
  \item \label{pgz4}
        \textit{(finding error values)}\\
        \texttt{solve linear system (\ref{syst:GZ}) with algorithm \ref{alg:BP}};\\[0,1cm]
        $\begin{mymatrix}
   E_1\\ E_2 \\ \vdots \\E_e
  \end{mymatrix}=X^{-1}V^{-1}\begin{mymatrix}
   S_1\\ S_2 \\ \vdots \\S_e
  \end{mymatrix}$
 \end{enumerate}
\texttt{Return} $\quad \cw(x):=\displaystyle \rw(x)-\sum_{i=1}^e E_{i}x^{p_i}$;\\
\texttt{End}
\end{alg}

The correctness of the \PGZ decoding algorithm \ref{alg:PGZ} follows immediately from proposition \ref{prop:syndromematrix}. We are now interested in estimating its computational cost. 
The calculation of $e$ in \refpgz{pgz2} is done by  testing the non-singularity of $A_i$, beginning from $i=t$ and decrementing $i$. Since the computational complexity of $\det(A_i)$ is $O(i^3)$ and 
$$\sum_{i=e}^t i^3= \left(\frac{t(t+1)}{2}\right)^2-\left(\frac{(e-1)e}{2}\right)^2$$
then calculating $e$ has a computational cost of order $O(t^4)$.
The computational cost of finding the coefficients $\sigma_1,\sigma_2,\dots,\sigma_e$ is the cost of solving an $e\times e$ linear system, that is  $O(e^3)$, whereas the linear system solved in \refpgz{pgz4} has a computational complexity of $O(e^2)$ operations due to algorithm \ref{alg:BP}. However the total computational time complexity of \PGZ decoding algorithm \ref{alg:PGZ} is of order $O(t^4)$ and this makes it not applicable for large $t$.\\

We also observe  that in \PGZ decoder the steps 3 and 4 of the general outline are sequential. Indeed step \refpgz{pgz4} can be implemented only after the end of \refpgz{pgz3}, because algorithm \ref{alg:BP} used in  \refpgz{pgz3} to calculate the error values needs all the elements $X_i$'s together. In chapter 4 we will see a modified version of \PGZ decoder with a different strategy for the error value computation.


\section{\fPGZ Decoding Algorithm} \label{sect:fastPGZ}

Although the \PGZ decoding algorithm \ref{alg:PGZ} involves simple linear algebra, its implementation is computationally expensive in general. For this reason in this section we will study and formalize an alternative implementation of the \PGZ decoder, shown in \cite{schmidt}, in which some special proprieties of the syndrome matrix are used to decrease the computational cost of finding the number of error $e$ and solving the linear system (\ref{sistema2Matrice}) (step \refpgz{pgz2}).\\

We observe that the syndrome matrix $A$ is an \emph{Hankel matrix}, that is a matrix with constant positive sloping diagonals. Techniques for fast inversion of Hankel matrices are well known (see \cite{heinigRost}) and this section we  use them to calculate the rank of the syndrome matrix $A$ and the solutions of the linear system (\ref{sistema2Matrice}). For a more convenient notation, we will indicate with $\vct{0}^{(j)}$ the zero vector of $j$ components, for any $j\in \N$. 

The algorithm that we present is iterative and is based on the following idea: 
if for an index $i\leq t$ we have $\det(A_i)\neq 0$ and we already know the vectors $\vct{w}^{(i)}$ and $\vct{y}^{(i)}$ in $(\F)^i$ such that
 \begin{equation}\label{fastPGZw}
 A_i\vct{w}^{(i)}=-\begin{mymatrix} S_{i+1}\\ S_{i+2}\\\vdots \\ S_{2i}\\
   \end{mymatrix}
 \end{equation}
 \begin{equation}\label{fastPGZy}
  A_i\vct{y}^{(i)}=\begin{mymatrix}\vct{0}^{(i-1)}\\1\\
   \end{mymatrix}
 \end{equation}
then, using the Hankel form of the syndrome matrix, we can calculate the first index $j>i$ such that $\det(A_j)\neq 0$ and the vectors $\vct{w}^{(j)}$ and $\vct{y}^{(j)}$ in $(\F)^j$ such that the systems (\ref{fastPGZw}) and (\ref{fastPGZy}) also hold  for $i=j$.
Clearly, if we are able to do this step, repeating it,  we can implement an iterative algorithm which calculates the maximum index $\rho\leq t$ such that $\det(A_{\rho})\neq 0$ and the vectors $\vct{w}^{(\rho)}$ and $\vct{y}^{(\rho)}$ that solve (\ref{fastPGZw}) and (\ref{fastPGZy}) for $i=\rho$. As we saw in proposition \ref{prop:syndromematrix}, supposing  $e\leq t$, we have that the index $\rho$ is equal to $e$. So the vector $\vct{w}^{(\rho)}$ given as output is the unique vector that satisfies the linear system (\ref{sistema2Matrice}). In other words $$\vct{w}^{(\rho)}=\begin{mymatrix}                                                                                                                                                                                                                                                                                                                                                                                                                                           
                                                                                                                                                         
                                \sigma_e \\ \sigma_{e-1}\\ \vdots \\ \sigma_1                                                                                                                                                                                                                                                                                                                                                                                                                                                                                                                                                                   \end{mymatrix}$$
We will call the \emph{fast Peterson-Gorenstein-Zierler} (\fPGZ) \emph{decoding algorithm} the decoding algorithm obtained by solving the linear system (\ref{sistema2Matrice}) with this iterative procedure.\\

In order to study the iterative step on which is based the \fPGZ decoding algorithm, we need the following definitions and a propriety of Hankel matrices.
\begin{defi} \label{defi:epsilon}
 Let $i\leq t$ such that $\det(A_i)\neq0$.
 If $$\displaystyle\vct{w}^{(i)}=\begin{mymatrix}
                                   w^{(i)}_0\\w^{(i)}_1\\ \vdots \\ w^{(i)}_{i-1}
                                 \end{mymatrix} \in (\F)^i$$
 is the solution of system (\ref{fastPGZw}), then for any $j\in\{1,2,\dots,t-i\}$ we define 
 $$\varepsilon_j=S_{i+j}w_0^{(i)}+S_{i+j+1}w_1^{(i)}+\cdots+S_{2i+j-1}w_{i-1}^{(i)}+S_{2i+j}\in\F $$ 
 Furthermore if the elements $\varepsilon_j$'s are not all equal to zero, then we call
 $$r=\min\left\{j\in\{1,2,\dots,t-i\} \;|\; \varepsilon_j\neq 0\right\}\in\N^+$$
 the \emph{singularity gap} of the step $i$.
\end{defi}

The following lemma is stated and proved in section 1 chapter 1  of \cite{heinigRost} for Hankel matrices with entries in $\mathbb C$, but it is easy to verify that the same proof  also holds for Hankel matrices with entries in the field $\F$.
\begin{lem}\label{lemma:hankelinvertibile}
 Let  $$M=\begin{mymatrix} 
       m_1 & m_2 & \cdots & m_l\\
       m_2 & m_3 & \cdots & m_{l+1}\\
       \vdots &\vdots & &\vdots \\
       m_l & m_{l+1} & \cdots&  m_{2l-1}\\
     \end{mymatrix}\in (\F)^{l\times l}$$ 
 a general $l\times l$ Hankel matrix. If there exist $m\in\F$,  $\vct{w}\in (\F)^l$ and $\vct{y}\in(\F)^l$  such that
 $$M\vct{w}=\begin{mymatrix} m_{l+1}\\m_{l+2}\\ \vdots \\ m_{2l-1}\\m\\
   \end{mymatrix} \quad \text{ and } \quad
  M\vct{y}=\begin{mymatrix}\vct{0}^{(i-1)}\\1\\
   \end{mymatrix}$$ 
 then $M$ is non-singular.
\end{lem}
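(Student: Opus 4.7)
My plan is to show that $M$ is non-singular by proving that its column span $V \subseteq (\F)^l$ coincides with the whole space. The central gadget is the nilpotent upshift operator $\sigma\colon (\F)^l \to (\F)^l$ given by $\sigma(v_0, v_1, \ldots, v_{l-1})^T = (v_1, \ldots, v_{l-1}, 0)^T$. Writing $\vct{u}_k$ for the vector with a $1$ in position $k$ and zeroes elsewhere, one checks directly that $\sigma^k(\vct{u}_l) = \vct{u}_{l-k}$ for $k = 0, 1, \ldots, l-1$. Thus, once I know that $\vct{u}_l \in V$ and that $V$ is $\sigma$-stable, iterating $\sigma$ on $\vct{u}_l$ will produce every standard basis vector inside $V$, forcing $V = (\F)^l$ and hence $\rk(M) = l$.

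The hypothesis $M\vct{y} = \vct{u}_l$ gives $\vct{u}_l \in V$ at once. For $\sigma$-stability I would exploit the Hankel structure of $M$. Writing $\vct{c}_j = (m_j, m_{j+1}, \ldots, m_{j+l-1})^T$ for the $j$-th column of $M$, a direct inspection yields the recurrence
\[ \sigma(\vct{c}_j) \;=\; \vct{c}_{j+1} - m_{j+l}\,\vct{u}_l \qquad (j = 1, \ldots, l-1), \]
so $\sigma(\vct{c}_j)$ sits in $V$ as a linear combination of $\vct{c}_{j+1}$ and $\vct{u}_l$, both already in $V$.

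The only missing case is $j = l$, and this is precisely where the first hypothesis is needed: the right-hand side $(m_{l+1}, \ldots, m_{2l-1}, m)^T$ factors as $\sigma(\vct{c}_l) + m\,\vct{u}_l$, and therefore $\sigma(\vct{c}_l) = M\vct{w} - m\,\vct{u}_l \in V$. Putting this together with the previous paragraph, $\sigma$ sends each spanning vector of $V$ back into $V$, so $\sigma(V) \subseteq V$ and the opening argument closes.

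The step I expect to require the most care is the bookkeeping that identifies the right-hand side of the first hypothesis with $\sigma(\vct{c}_l) + m\,\vct{u}_l$ and pins down the Hankel recurrence linking $\sigma(\vct{c}_j)$ to $\vct{c}_{j+1}$; once these two identities are in place, the $\sigma$-invariance of $V$ and the exhaustion of the standard basis through the $\sigma$-orbit of $\vct{u}_l$ are essentially formal.
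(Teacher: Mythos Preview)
Your argument is correct. The paper does not actually prove this lemma; it cites Heinig--Rost and remarks that their proof over $\mathbb{C}$ carries over verbatim to $\F$. Your shift-operator approach is the standard one in that tradition: the Hankel structure makes the column space $V$ nearly invariant under the upshift $\sigma$, the hypothesis on $\vct{y}$ places $\vct{u}_l$ in $V$, and the hypothesis on $\vct{w}$ supplies the one missing containment $\sigma(\vct{c}_l)\in V$; once $V$ is $\sigma$-stable and contains $\vct{u}_l$, the orbit $\vct{u}_l,\vct{u}_{l-1},\ldots,\vct{u}_1$ exhausts the standard basis. One cosmetic point: your indexing drifts between $0$-based (in the definition of $\sigma$) and $1$-based (for $\vct{u}_l$), which is harmless but worth tidying.
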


At this stage we can  enunciate the theorem that shows formally how the iterative step of \fPGZ  decoding algorithm works. The proof of this theorem is a constructive proof and gives the formulas necessary to implement the algorithm.
\begin{thm}\label{thm:fastPGZ}
Let $i$ be an index less or equal to $t$ such that $\det(A_i)\neq 0$. Let $\vct{w}^{(i)}$ and $\vct{y}^{(i)}$ be the vectors in $(\F)^i$ that solve respectively (\ref{fastPGZw}) and (\ref{fastPGZy}). Recalling the notation of definition \ref{defi:epsilon}, we have that:
\begin{enumerate}[\ \ \ (1)]
 \item If $\varepsilon_j=0$ for all $j\in\{1,2,\dots,t-i\}$, then 
           $$i=\max\{j\leq t \;|\; \det(A_j)\neq 0\}$$
 \item If there exists $j\in\{1,2,\dots,t-i\}$ such that $\varepsilon_j\neq0$ and $r$ is the singularity gap, then we have that
           $$r=\min\left\{j\in\{1,2,\dots,t-i\} \;|\; \det(A_{i+j})\neq 0\right\}$$
\end{enumerate}
\end{thm}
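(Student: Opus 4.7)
The whole argument hinges on a single observation: comparing Definition \ref{defi:epsilon} with system (\ref{fastPGZw}), one sees that $\varepsilon_c$ is precisely the defect $\sum_{l=0}^{i-1}w_l^{(i)}S_{(i+c)+l}+S_{2i+c}$ of the linear recurrence $S_{k+i}=-\sum_{l=0}^{i-1}w_l^{(i)}S_{k+l}$ at $k=i+c$. This recurrence holds trivially for $k\le i$ by (\ref{fastPGZw}), and extends to $k=i+c$ precisely when $\varepsilon_c$ vanishes. Consequently, whenever $\varepsilon_1=\cdots=\varepsilon_j=0$, the recurrence is valid for every $k=1,\ldots,i+j$, so entry by entry
\[
\textnormal{row}_{i+1}(A_{i+j})=-\sum_{l=0}^{i-1}w_l^{(i)}\,\textnormal{row}_{l+1}(A_{i+j}),
\]
whence $\rk(A_{i+j})\le i<i+j$ and $\det(A_{i+j})=0$. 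Applied with $j$ running through $\{1,\ldots,t-i\}$ this gives (1); applied with $j\in\{1,\ldots,r-1\}$ it handles the ``singular'' half of (2).

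It remains, under the hypothesis of (2), to show $\det(A_{i+r})\ne 0$. My plan is to apply Lemma \ref{lemma:hankelinvertibile} with $M=A_{i+r}$, $l=i+r$ and $m_k=S_k$. The vector $\vct{y}$ required by the lemma is easy to produce: setting $\vct{y}=\varepsilon_r^{-1}(w_0^{(i)},\ldots,w_{i-1}^{(i)},1,0,\ldots,0)^T\in\F^{i+r}$ and computing $A_{i+r}\vct{y}$ row by row, the first $i$ rows vanish by (\ref{fastPGZw}), the next $r-1$ vanish because $\varepsilon_1=\cdots=\varepsilon_{r-1}=0$, and the last equals $\varepsilon_r^{-1}\varepsilon_r=1$, giving exactly $A_{i+r}\vct{y}=(\vct{0}^{(i+r-1)},1)^T$.

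The main obstacle is constructing a vector $\vct{w}\in\F^{i+r}$ with $A_{i+r}\vct{w}=(S_{i+r+1},\ldots,S_{2(i+r)-1},m)^T$ for some $m\in\F$. Starting from the right-shift ansatz $-\vct{p}=-(0,\ldots,0,w_0^{(i)},\ldots,w_{i-1}^{(i)})^T$ (with $r$ leading zeros), the extended recurrence makes $A_{i+r}(-\vct{p})$ match the target in rows $a=1,\ldots,i-1$ but overshoots by exactly $-\varepsilon_r,-\tilde\varepsilon_{r+1},\ldots,-\tilde\varepsilon_{2r-1}$ in rows $a=i,\ldots,i+r-1$, where $\tilde\varepsilon_c$ denotes the natural syndromic extension of $\varepsilon_c$ beyond $c=t-i$ (which remains well-defined since all relevant $S_m$ have $m\le 2(i+r)-1\le d-2$). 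The overshoot at row $a=i$ cannot be produced by any shift of $\vct{z}_0=(w_0^{(i)},\ldots,w_{i-1}^{(i)},1,0,\ldots,0)^T$ fitting in $\F^{i+r}$; the remedy is the companion vector $\vct{t}_0=(y_0^{(i)},\ldots,y_{i-1}^{(i)},0,\ldots,0)^T$, whose image under $A_{i+r}$ is forced by (\ref{fastPGZy}) to vanish in rows $a<i$ and equal $1$ in row $a=i$. Adding $\varepsilon_r\vct{t}_0$ kills the overshoot at row $i$ at the cost of introducing new mismatches in rows $a>i$; these, together with the remaining $\tilde\varepsilon$-corrections, are absorbed by a combination $\sum_{s=1}^{r-1}\lambda_s\vct{z}_s$ of right-shifted copies of $\vct{z}_0$. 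The key structural fact is that $\vct{z}_s$ first contributes $\varepsilon_r\ne 0$ precisely in row $a=i-s+r$, so processing rows $a=i+1,i+2,\ldots,i+r-1$ in turn yields a lower-triangular system for $(\lambda_{r-1},\ldots,\lambda_1)$ with constant diagonal $\varepsilon_r$, uniquely solvable. The scalar $m$ is whatever appears in the last row, and Lemma \ref{lemma:hankelinvertibile} then delivers $\det(A_{i+r})\ne 0$.
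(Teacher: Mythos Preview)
Your argument is correct and follows the paper's overall strategy: for (1) and the singular half of (2) you exhibit a nontrivial linear dependence among the rows of $A_{i+j}$ (the paper phrases the same dependence as a kernel vector $(\vct{w}^{(i)},1,\vct{0}^{(j-1)})^T$, which is equivalent since Hankel matrices are symmetric), and for the nonsingular half of (2) you invoke Lemma~\ref{lemma:hankelinvertibile} with the same choice of $\vct{y}$.

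One small overclaim: from the single relation $\textnormal{row}_{i+1}=-\sum_l w_l^{(i)}\,\textnormal{row}_{l+1}$ you deduce $\rk(A_{i+j})\le i$, but one row dependence only yields $\rk\le i+j-1$. This is harmless here, since $\det(A_{i+j})=0$ follows either way.

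Where you diverge from the paper is in building the $\vct{w}$-vector for Lemma~\ref{lemma:hankelinvertibile}. The paper constructs the \emph{full} solution $\vct{w}^{(i+r)}$ of (\ref{fastPGZw2}) --- matching all $i+r$ rows, including the last --- via two recursive auxiliary families: $\vct{b}^{(0)},\ldots,\vct{b}^{(r)}\in\F^i$ (formula (\ref{formulaB}), built from $\vct{w}^{(i)}$ and $\vct{y}^{(i)}$, successively shifting the target of $A_i$) and $\vct{a}^{(0)},\ldots,\vct{a}^{(r-1)}\in\F^{i+r}$ (formula (\ref{formulaA}), built from $\vct{y}^{(i+r)}$ so that $A_{i+r}\vct{a}^{(j)}=e_{i+r-j}$), combined as in (\ref{formula:w^i+r}). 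Your route is lighter: you start from a single right-shifted copy of $-\vct{w}^{(i)}$, patch row $i$ directly with $\vct{y}^{(i)}$, and absorb the remaining $r-1$ defects by solving a triangular system in the raw shifts $\vct{z}_s$. This is perfectly adequate as an existence proof, but the paper's explicit recursions are not gratuitous: they are precisely the update rules executed by the \fPGZ algorithm, and they produce a $\vct{w}^{(i+r)}$ whose last row is $-S_{2(i+r)}$ rather than an unspecified $m$, ready to feed into the next iteration.
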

\begin{proof} 
By the linear system (\ref{fastPGZw}) and by definition \ref{defi:epsilon} it follows that for any $j\in\{1,2,\dots,t-i\}$:
$$ A_{i+j}\begin{mymatrix}\vct{w}^{(i)}\\ 1\\ \vct{0}^{(j-1)}\\ 
          \end{mymatrix}=
          \begin{mymatrix} \vct{0}^{(i)}\\ \varepsilon_1 \\ \vdots\\ \varepsilon_j\\
          \end{mymatrix}$$
If the $\varepsilon_j$'s are all equal to zero, then $\ker(A_{i+j}) \neq\{\vct{0}\}$ for any $j\in\{1,2,\dots,t-i\}$.
In particular $\det(A_{i+j})=0$ for any $j\in\{1,2,\dots,t-i\}$ and hence (1) is proved.
If not all the $\varepsilon_j$'s are zero, then  by definition of $r$ we have that $\varepsilon_1 = \varepsilon_2=\cdots =\varepsilon_{r-1}=0$. Hence $\det(A_{i+j})=0$ for any $j\in\{1,2,\dots,r-1\}$ and to complete the proof of (2) it is sufficient  to show that $\det(A_{i+r})\neq0$.   By lemma \ref{lemma:hankelinvertibile}, it is sufficient to show the existence of vectors $\vct{w}^{(i+r)}$ and $\vct{y}^{(i+r)}$ in $(\F)^{i+r}$ such that
 \begin{equation}\label{fastPGZw2}
     A_{i+r}\vct{w}^{(i+r)}=-\begin{mymatrix} 
                                S_{i+r+1}\\ S_{i+r+2}\\\vdots \\ S_{2i+2r}\\
                             \end{mymatrix}
 \end{equation}
 \begin{equation}\label{fastPGZy2}                
     A_{i+r}\vct{y}^{(i+r)}=\begin{mymatrix} 
                               \vct{0}^{(i+r-1)}\\ 1\\
                             \end{mymatrix}
 \end{equation}
Since $ A_{i+r}\begin{mymatrix} \vct{w}^{(i)}\\ 1\\ \vct{0}^{(r-1)}\\ 
            \end{mymatrix}=
            \begin{mymatrix} \vct{0}^{(i+r-1)}\\ \varepsilon_r 
            \end{mymatrix}$
and $\varepsilon_r\neq 0$, it follows that the vector
\begin{equation}\label{formula:y^i+r}
\vct{y}^{(i+r)}=\frac{1}{\varepsilon_r}\begin{mymatrix} \vct{w}^{(i)}\\1\\\vct{0}^{(r-1)}
                                                 \end{mymatrix}\in (\F)^{i+r} 
\end{equation}
is well defined  and satisfies the system (\ref{fastPGZy2}).
Moreover we define the vectors $\vct{a}^{(0)},\vct{a}^{(1)},\dots, \vct{a}^{(r-1)}\in (\F)^ {i+r}$ as
 \begin{equation}\label{formulaA}
   \begin{cases}
       \vct{a}^{(0)}=\vct{y}^{(i+r)}\\
       \vct{a}^{(j)}=\displaystyle \begin{mymatrix}
                                     0\\ a^{(j-1)}_{0}\\ a^{(j-1)}_{1}\\\vdots \\ a^{(j-1)}_{i+j-1}\\ \vct{0}^{(r-j-1)}\\ 
                                   \end{mymatrix}
                     -\alpha_j\vct{a}^{(0)}                  \quad \quad\forall\, j=1,2, \dots,r-1       
   \end{cases}
 \end{equation}
where
\begin{equation}\label{rel:alpha}
 \alpha_j=S_{i+r+1}a_0^{(j-1)}+S_{i+r+2}a_1^{(j-1)}+\cdots+S_{2i+r+j}a_{i+j-1}^{(j-1)}
\end{equation}
It can be easily proved by induction that, for any $j\in\{0,1,\dots,r-1\}$, the product $A_{i+r}\vct{a}^{(j)}$ is equal to the vector with all the components zero except the component $i+r-j$, which is equal to 1. Indeed the base case is implied by (\ref{fastPGZy2}) and if
$$A_{i+r}\vct{a}^{(j)}=\begin{mymatrix}
                          \vct{0}^{(i+r-j-1)}\\ 1\\ \vct{0}^{(j)}
                       \end{mymatrix}$$
then            
 \begin{align*}
 \displaystyle
 A_{i+r}\vct{a}^{(j+1)}&=A_{i+r}\begin{mymatrix}
                                   0\\ a^{(j)}_{0}\\ a^{(j)}_{1}\\ \vdots \\ a^{(j)}_{i+j}\\ \vct{0}^{(r-j-2)}\\ 
                                 \end{mymatrix}-\alpha_{j+1}A_{i+r}\vct{a}^{(0)}\; =\\ \\
                       &=\begin{mymatrix}
                             S_2 & S_3 & \cdots & S_{i+j+2}\\
                             S_3 & S_4 & \cdots & S_{i+j+3}\\
                             \vdots &\vdots & &\vdots\\
                             S_{i+r} & S_{i+r+1} & \cdots & S_{2i+j+r}\\
                         \end{mymatrix}
                         \begin{mymatrix}
                             a^{(j)}_{0}\\a^{(j)}_{1}\\ \vdots \\ a^{(j)}_{i+j}\\ 
                         \end{mymatrix}-\alpha_{j+1}A_{i+r}\vct{y}^{(i+r)}
                  \overset{\displaystyle\overset{\overset{\displaystyle \text{{\tiny inductive}}}{\displaystyle \text{{\tiny hypothesis}}}}
                  {\big\downarrow}}{=}\\ \\
                      &=\begin{mymatrix}
                           \vct{0}^{(i+r-j-2)}\\ 1\\ \vct{0}^{(j)}\\ \alpha_{j+1}
                        \end{mymatrix}-\alpha_{j+1}\begin{mymatrix} 
                                                      \vct{0}^{(i+r-1)}\\ 1\\
                                                     \end{mymatrix}=
                                                  \begin{mymatrix}
                                                   \vct{0}^{(i+r-j-2)}\\ 1\\ \vct{0}^{(j+1)}\\                                  
                                                   \end{mymatrix} \\
\end{align*}
In a similar way, we recursively define the vectors $\vct{b}^{(0)},\vct{b}^{(1)},\dots, \vct{b}^{(r)}$ in $(\F)^ {i}$ as
 \begin{equation}\label{formulaB}
   \begin{cases}
      \vct{b}^{(0)}=\vct{w}^{(i)}\\
      \vct{b}^{(j)}=\begin{mymatrix}
                        0 \\ b_0^{(j-1)}\\  b_1^{(j-1)}\\\vdots \\ b_{i-2}^{(j-1)}   \\                                 
                      \end{mymatrix}
                    - b_{i-1}^{(j-1)}\vct{w}^{(i)}-\beta_j \vct{y}^{(i)}  \quad \quad\forall\, j=1,2, \dots,r
   \end{cases}
 \end{equation}
where 
\begin{equation}\label{rel:beta}
\beta_j=S_{i+1}b_0^{(j-1)}+S_{i+2}b_1^{(j-1)}\cdots+S_{2i}b_{i-1}^{(j-1)}+S_{2i+j} 
\end{equation}
It can be proved, again by induction, that for any $j\in\{0,1,\dots,r\}$ we have 
$$A_i\vct{b}^{(j)}=-\begin{mymatrix}
                        S_{i+j+1} \\S_{i+j+2}\\ \vdots \\S_{2i+j}
                    \end{mymatrix}$$
Finally, for any $j\in\{1,2,\dots,r\}$ we define 
\begin{equation}\label{rel:gamma}
\gamma_j=S_{i+j}b_0^{(r)}+S_{i+j+1}b_1^{(r)}+\cdots+ S_{2i+j-1}b_{i-1}^{(r)} 
\end{equation}
and we observe that 
        $$A_{i+r}\begin{mymatrix}
                  \vct{b}^{(r)} \\ 0\\ \vdots \\0 \\
                 \end{mymatrix}=
                 \begin{mymatrix}
                  -S_{i+r+1} \\ -S_{i+r+2}\\\vdots \\-S_{2i+r}\\ \gamma_1\\ \vdots\\ \gamma_r
                 \end{mymatrix}$$       
Thus we conclude the proof defining the vector $\vct{w}^{(i+r)}$ as
 \begin{equation}\label{formula:w^i+r}
         \vct{w}^{(i+r)}=\begin{mymatrix}
                              \vct{b}^{(r)} \\ \vct{0}^{(r)} \\
                           \end{mymatrix} -
                           \displaystyle\sum_{l=0}^{r-1} \left[\gamma_{r-l}+S_{2i+2r-l}\right] \vct{a}^{(l)}
 \end{equation}
\end{proof}

In  order to work, the \fPGZ  decoding algorithm needs a base step. For this reason we observe that:
\begin{rem}\label{rem:i0}
If $e=\wt(\e)\leq t$, then either $S_1=S_2=\cdots=S_{d-1}=0$ or there exists $i\leq e$ such that $S_i\neq 0$. In fact, if not all the syndromes were zero and $S_1=S_2=\cdots=S_e=0$ held, then we would have at the same time that $\det(A_e)\neq0$ (by proposition \ref{prop:syndromematrix}) and that all the entries of the first row of $A_e$ are equal to zero. This obviously is absurd. 
 
Thus we can define \begin{equation*} 
i_0=\min\left\{j\in\N^+ \;|\; S_{j}\neq 0\right\}                                                               
\end{equation*}
and we have that  $$A_{i_0}=\begin{mymatrix}
                        0          &  \cdots      & 0          &0           & S_{i_0} \\
                        0          &  \cdots      & 0          & S_{i_0}    & S_{i_0+1}\\
                        \vdots     &              & \vdots     & \vdots     & \vdots \\
                        S_{i_0}    &  \cdots      & S_{2i_0-3} & S_{2i_0-2} & S_{2i_0-1}\\
        \end{mymatrix}$$
is a non-singular lower triangular matrix. So 
$$\vct{y}^{(i_0)}=\begin{mymatrix}
                   S_{i_0}^{-1}\\
                   \vct{0}^{i_0-1}
                  \end{mymatrix}$$
satisfies (\ref{fastPGZy}) for $i=i_0$ and the vector  $\vct{w}^{(i_0)}$, which satisfies (\ref{fastPGZw}) for $i=i_0$, can be calculated with linear algebra algorithms for triangular matrix of complexity $O((i_0)^2)$.\\
\end{rem}

\noindent We are now ready to summarize the \fPGZ  decoding algorithm as follows.
Since when $r=1$  the formula (\ref{formula:w^i+r}) to calculate $\vct{w}^{(i+r)}$ is simpler than in the general case, in the following decoding algorithm we prefer to consider separately the case $\varepsilon_1\neq 0$.\\

\begin{alg}[\textbf{\fPGZ  decoding algorithm} for $\RS$]\hspace{2cm}
\label{alg:fastPGZ}
\begin{description}
 \item[\texttt{Input}:]  the received word $\rw(x)$;
 \item[\texttt{Output}:] the codeword $\cw(x)$;
\end{description}
\texttt{Begin}
\begin{enumerate}[\bfseries \fPGZ.1]
        \item \label{fpgz1}
               \begin{tabbing}
               \textit{(syndrome computation)}\\
               \texttt{for }\= $i=1,2,\dots, d-1$ \texttt{ do } \\
               \> $S_i:=\rw(\alpha^i)$;\\
               \texttt{endfor}               
              \end{tabbing}

        \item \label{fpgz2}
              \textit{(error-locator polynomial computation)}
              \begin{enumerate}[(a)]
                \item  \begin{tabbing}   
                          {\small\textit{(base step)}}\\              
                          $t:=\pint{\frac{d-1}{2}}$;\\
                          $i_0:=1$;\\  
                          \texttt{whi}\=\texttt{le} ($S_{i_0}=0$ \texttt{and} $i_0\leq d-1$) \texttt{repeat} \\
                          \> $i_0:=i_0+1$;\\
                          \texttt{endwhile} \\
                          \texttt{if} $i_0=d$ \texttt{then return} $\rw$;\\
                          \texttt{if} $i_0>t$ \texttt{then declare a failure};\\
                          \texttt{calculate}   $\vct{w}^{(i_0)}$ \texttt{and}  $\vct{y}^{(i_0)}$;                          
                       \end{tabbing}
                \item  \begin{tabbing}  
                         {\small\textit{(iterative procedure)}}\\
                         $i:=i_0$;\\
                         \texttt{whi}\=\texttt{le} ($i<t$) \texttt{repeat}\\
                        \>$\varepsilon_1:=S_{i+1}w_0^{(i)}+S_{i+2}w_1^{(i)}+\cdots+S_{2i}w_{i-1}^{(i)}+S_{2i+1}$;\\
                         \>\texttt{if} \=($\varepsilon_1\neq0$)  \texttt{then}\\[0,2cm]
                         \>\>$\vct{y}^{(i+1)}:=\frac{1}{\varepsilon_1}\begin{mymatrix} \vct{w}^{(i)}\\1\\ \end{mymatrix}$;\\[0,2cm]
                         \>\>$\eta:=S_{i+1}y_0^{(i)}+S_{i+2}y_1^{(i)}+\cdots+S_{2i}y_{i-1}^{(i)}$;\\[0,2cm]  
                         \>\>$\varepsilon_2:=S_{i+2}w_0^{(i)}+S_{i+3}w_1^{(i)}+\cdots+S_{2i+1}w_{i-1}^{(i)}+
                             S_{2i+2}$;\\[0,2cm] 
                         \>\>$\vct{w}^{(i+1)}:=\begin{mymatrix} 0\\ \vct{w}^{(i)}\\    \end{mymatrix}-\varepsilon_1\begin{mymatrix} \vct{y}^{(i)}\\0\\ \end{mymatrix}+(\varepsilon_1\eta-\varepsilon_2) \vct{y}^{(i+1)}$;\\                                                                                                                                       \>\>$i:=i+1$;\\
                         \>\texttt{else}\=\\
                         \>\>$r:=1$; \\
                         \>\> \texttt{whi}\=\texttt{le} ($\varepsilon_r=0$) \texttt{repeat} \\
                         \>\>\>$r:=r+1$;\\ 
                         \>\>\>\texttt{if} \= $r>t-i$ \texttt{then} \\
                         \>\>\>\> \texttt{go to step c};\\
                         \>\>\> \texttt{else }  $\varepsilon_r:=S_{i+r}w_0^{(i)}+S_{i+r+1}w_1^{(i)}+\cdots+S_{2i+r-1}w_{i-1}^{(i)}+S_{2i+r}$;\\
                         \>\> \texttt{endwhile}\\
                         \>\> $\vct{y}^{(i+r)}:=\frac{1}{\varepsilon_r}\begin{mymatrix} \vct{w}^{(i)}\\1\\                        						\vct{0}^{(r-1)} \end{mymatrix}$;\\
                         \>\> \texttt{calculate  vectors  $\vct{a}^{(j)}$'s  using  (\ref{formulaA})};\\
                         \>\> \texttt{calculate  vectors  $\vct{b}^{(j)}$'s  using  (\ref{formulaB})};\\
                         \>\> $l:=1$;\\
                         \>\> \texttt{for } \= $l=1,2,\dots, r$\\
                         \>\>\>$\gamma_l:=S_{i+l}b_0^{(r)}+S_{i+l+1}b_1^{(r)}+\cdots+ S_{2i+l-1}b_{i-1}^{(r)}$;\\
                         \>\> \texttt{endfor}\\
                         \>\> $\vct{w}^{(i+r)}:=\begin{mymatrix} \vct{b}^{(r)} \\ \vct{0}^{(r)} \\
                           \end{mymatrix} -\displaystyle\sum_{l=0}^{r-1} \left[\gamma_{r-l}+S_{2i+2r-l}\right] \vct{a}^{(l)}$;\\
                         \>\>$i:=i+r$;\\                  
                         \texttt{endwhile}
                          \end{tabbing}
                   \item  $e:=i$;\\
                         $\sigma(x):=w_0^{(i)}x^i+w_1^{(i)}x^{i-1}+\cdots+w_{i-1}^{(i)}x+1$;
                   \end{enumerate}                                                  
      \item \label{fpgz3}
            \textit{(finding error positions)}\\
           \texttt{calculate  the error positions $p_1,p_2,\dots,p_e$ and the elements $X_1^{-1}, X_2^{-1},\dots,X_e^{-1}$ using Chien's search (alg.\ \ref{alg:chien});}
      \item \label{fpgz4}
        \textit{(finding error values)}\\
        \texttt{solve linear system (\ref{syst:GZ}) with algorithm \ref{alg:BP}};\\[.2cm]
         $\begin{mymatrix}
    E_1\\ E_2 \\ \vdots \\E_e
   \end{mymatrix}=X^{-1}V^{-1}\begin{mymatrix}
    S_1\\ S_2 \\ \vdots \\S_e
   \end{mymatrix}$
 \end{enumerate}
\texttt{Return} $\quad \cw(x):=\displaystyle \rw(x)-\sum_{i=1}^e E_{i}x^{p_i}$;\\
\texttt{End}
\end{alg}

From theorem \ref{thm:fastPGZ} and from what we have seen until now, it follows that if $e\leq t$, then the \fPGZ  decoding algorithm \ref{alg:fastPGZ} correctly gives as output the transmitted codeword $\cw$. Now we will estimate its computational complexity. We consider step \reffpgz{fpgz2}, by remark \ref{rem:i0} it follows that the computational cost of (a) is of order $O(e^2)$, thus we can focus our attention on the iterative step from $i$ to $i+r$ in (b):
\begin{itemize}
 \item the computation of the gap $r$ and of the elements $\varepsilon_j$'s needs at most $ir$ multiplications and $ir$ additions      
 \item from (\ref{formula:y^i+r}) we deduce that the vector $\vct{y}^{(i+r)}$ can be calculated from $\vct{w}^{(i)}$ with one inversion and $i$ multiplications;
 \item from (\ref{rel:alpha}) we find that the computational cost of the elements  $\alpha_j$'s is upper bounded by $ir+r^2$ multiplications and $ir+r^2$ additions, while from (\ref{rel:beta}) and (\ref{rel:gamma}) we find that the  elements  $\beta_j$'s and $\gamma_j$'s are computed with at most $2ir$ multiplications and $2ir$ additions;
\item from (\ref{formulaA}) we deduce that the cost of the vectors $\vct{a}^{(j)}$'s is at most  $ir+r$ multiplications and $ir+r^2$ additions, while  (\ref{formulaB}) implies that the vectors  $\vct{b}^{(j)}$'s can be calculated with $ir$ multiplications and $ir$ additions;
\item finally (\ref{formula:w^i+r}) implies that the number of operations necessary to calculate the vector $\vct{w}^{(i+r)}$ from the vectors $\vct{b}^{(r)}$ and $\vct{a}^{(j)}$'s is upper bounded by $ir+r^2$ multiplications and $ir+r^2+r+i$ additions.
\end{itemize}
Considering that $i,r\leq t$, the iterative step from $i$ to $i+r$ requires at most one inversion, $9tr+t+r$ multiplications and $10tr+t+r$ additions. Evidently the total computational cost of (b) in \reffpgz{fpgz2} is given by the sum of the costs of the iterative steps and since the sum of the singularity gaps is clearly equal to $e-i_0$ (i.e.\ less or equal to $e$), we can conclude that in (b) the total number of operations is 
\begin{center}
 \begin{tabular}{ll}
  $e$ & inversions\\
  $10et+e$ & multiplications\\
  $11et+e$ & additions\\
 \end{tabular}
\end{center}
Step \reffpgz{fpgz4} is the same of \refpgz{pgz4} and has a computational cost is  of $O(e^2)$ operations. Hence the computational complexity of the \fPGZ  decoding algorithm \ref{alg:fastPGZ} is $O(et)$, much better than in the case of the \PGZ decoding algorithm \ref{alg:PGZ}.\\

To conclude we give an example of decoding of a Reed-Solomon code using the \fPGZ  decoding algorithm \ref{alg:fastPGZ}:
 \begin{ex}\label{ex_fastpgz} 
  Let $\alpha$ be a primitive element of $\mathbb{F}_{2^4}$ satisfying $\alpha^4+\alpha+1=0$. Consider  over $\mathbb{F}_{2^4}$ the code, $RS(15,9,\alpha)$ generated by 
  $$g(x)=(x-\alpha)(x-\alpha^2)\cdots(x-\alpha^8)$$
  The code has distance $9$, so $t=4$.  Suppose that the codeword sent is $\cw=\vct{0}$ and the error vector is
  $$\e(x)=\alpha^2x^2+\alpha x^8+\alpha^7x^{13}$$
  Clearly $\rw(x)=\e(x)$ and step \reffpgz{fpgz1} reads out the syndromes
  $$\begin{array}{llll}
    S_1=\rw(\alpha)=\alpha^{12} & S_2=\rw(\alpha^2)=0    &S_3=\rw(\alpha^3)=0  & S_4=\rw(\alpha^4)=\alpha^{5}\\
    S_5=\rw(\alpha^5)=\alpha^{11}   & S_6=\rw(\alpha^6)=\alpha^{13} &S_7=\rw(\alpha^7)=\alpha^{3} &S_8=\rw(\alpha^8)=\alpha
  \end{array}$$ 
 Step \reffpgz{fpgz2} in (a) sets $i_0=1$, $\vct{w}^{(1)}=0$, $\vct{y}^{(1)}=\alpha^3$, while in (b) calculates:\\
  \begin{tabbing}
   ($i=1$)$\xrightarrow{\hspace*{1cm}}\,$ \= $\varepsilon_1=S_2w_0^{(1)}+S_3=0$\\
                           \>$\varepsilon_2=S_3w_0^{(1)}+S_4=0+\alpha^{5}=\alpha^{5}$\\ \\
                           \>$\varepsilon_1=0 \text{ and }\varepsilon_2\neq 0\Rightarrow$ \= the singularity gap is $r=2$, hence\\
                           \>\> $i=i+2=3$;\\
                           \>\>$\vct{y}^{(3)}=\frac{1}{\varepsilon_2}\begin{mymatrix}
                                    w_0^{(1)}\\1\\0
                                   \end{mymatrix}=\begin{mymatrix}
                                    0\\ \alpha^{10}\\0
                                   \end{mymatrix}$;\\                 
                           \>\>$\vct{w}^{(3)}=\begin{mymatrix}
                                    \alpha^8\\ \alpha^9\\ \alpha^6
                                   \end{mymatrix}$;\\
  \end{tabbing}
  \begin{tabbing}
   ($i=3$)$\xrightarrow{\hspace*{1cm}}\,$ \= $\varepsilon_1=S_4w_0^{(3)}+S_5w_1^{(3)}+S_6w_2^{(3)}+S_7=
                                                            \alpha^{13}+\alpha^5+\alpha^4+\alpha^3=0$\\ 
                           \>$\varepsilon_1=0 \Rightarrow$ \= the singularity gap is $r>t-i=1$, hence\\
                           \>\> $e=i=3$;\\
                           \>\> $\sigma(x)=\alpha^8x^3+\alpha^9x^2+\alpha^6x+1$;            
  \end{tabbing}
  Since $\sigma(x)=\alpha^8x^3+\alpha^9x^2+\alpha^6x+1=
  \left(1+\alpha^{2}x\right)\left(1+\alpha^8x\right)\left(1+\alpha^{13}x\right)$,  step \reffpgz{fpgz3} calculates that  $$X_1=\alpha^{2}, \quad X_2=\alpha^{8}, \quad X_3=\alpha^{13}$$
  Therefore the error positions are 
  $$p_1=2, \quad p_2=8, \quad p_3=13$$
  Finally in \reffpgz{fpgz4} the linear system 
  $$\begin{mymatrix}
     \alpha^{2} &\alpha^{8}  & \alpha^{13}\\
     \left(\alpha^{2}\right)^2 &\left(\alpha^{8}\right)^2  & \left(\alpha^{13}\right)^2\\
     \left(\alpha^{2}\right)^3 &\left(\alpha^{8}\right)^3  & \left(\alpha^{13}\right)^3\\
    \end{mymatrix}
    \begin{mymatrix}
     E_1 \\ E_2\\ E_3 
    \end{mymatrix}=
    \begin{mymatrix}
     \alpha^{12}\\0\\0
    \end{mymatrix}$$
  is solved to find that the error values are
  $$E_1=\alpha^2,\quad E_2=\alpha, \quad E_3=\alpha^7 $$
  Thus $\e(x)=E_1x^{p_1}+E_2x^{p_2}+E_3x^{p_3}=\alpha^2x^2+\alpha x^8+\alpha^7x^{13}$
  and the received word $\rw$ is correctly decoded in $\cw=\rw-\e=\vct{0}$.\\
\end{ex}
                  
\section{Avoiding Malfunctions in \fPGZ}
In section \ref{sect:error-correting codes}, we studied  the definition and the behavior of a $t$-bounded distance decoding algorithm. In this section we want to understand whether the \fPGZ  decoding algorithm \ref{alg:fastPGZ} for Reed-Solomon codes is $t$-bounded  distance or not. In particular  we will describe the necessary and sufficient conditions  to avoid  malfunctions of this decoding algorithm.\\

Consider the code $RS(n,d,\alpha)$ and let $t$ be its error correction capability. Let $\cw$, $\e$ and $\rw=\cw+\e$ respectively denote the transmitted codeword, the vector error and the received word. Without any conditions about $e=\wt(\e)$, we suppose that  the received word $\rw$ is the input given to the \fPGZ  decoding algorithm \ref{alg:fastPGZ}.  In section \ref{sect:error-correting codes}, we defined  the set $\mathcal{B}$ as:
$$\mathcal B =\bigsqcup_{\cw\in RS(n,d,\alpha)} \overline{B}_t(\cw)$$
By what we have already proved, if $\wt(\e)\leq t$, then $\rw \in \mathcal B$ and the \fPGZ  decoding algorithm \ref{alg:fastPGZ} decodes correctly $\rw$, giving as output $\cw$. Whereas  if  $\wt(\e)>t$ and $\rw \in \mathcal B$, then an unavoidable decoder error has happened. So to know whether the \fPGZ  decoding algorithm \ref{alg:fastPGZ} is $t$-bounded  distance or not, we have to study what happens when  $\rw \notin \mathcal B$, recalling  that when $\rw \notin \mathcal B$ then a $t$-bounded  distance decoding algorithm does not correct $\rw$ and declare the decoder failure. 

For this aim, we introduce the following notation: denote by $\tl{e} $ the assumed number of occurred errors computed  at the end of \reffpgz{pgz2}  and  recall that  $$\tl{e}=\max\{i\in\{1,2,\dots,t \,|\,\det(A_i)\neq0\}\}\leq t$$
Similarity we assume that:
\begin{itemize}
 \item[-] $\tl{\sigma}(x)=\tl{\sigma}_{\tl{e}}x^{\tl{e}}+\cdots+\tl{\sigma}_1 x+1$ is the polynomial calculated in \reffpgz{pgz2},
 \item[-] $\tl{X}_1,\tl{X}_2,..., \tl{X}_{\tl{e}}$ are the inverses of the roots of $\tilde\sigma(x)$ calculated in \reffpgz{pgz3},
 \item[-] $\tl{E}_i$ is the $i$th assumed error value,
 \item[-] $\vct{\tl{e}}$ is the assumed error vector. Note that  $\tl{e}=\wt(\vct{\tl{e}})$.
\end{itemize}
and we call $\vct{\tl{c}}=\rw-\vct{\tl{e}}$ the output vector. 
We observe that, since $\tl{e}\leq t$, it holds that
\begin{equation}\label{condition1}
\rw\in\mathcal B \Longleftrightarrow \vct{\tl{c}}\in RS(n,d,\alpha)
\end{equation}
and we show in the following example a malfunction of the \fPGZ decoder.


 \begin{ex}\label{ex_fastpgz_mal} 
  Let $\alpha$ denote a primitive element of $\mathbb{F}_{2^4}$ satisfying $\alpha^4+\alpha+1=0$. Consider  over $\mathbb{F}_{2^4}$ the code, $RS(15,9,\alpha)$ generated by 
  $$g(x)=(x-\alpha)(x-\alpha^2)\cdots(x-\alpha^8)$$
  The code has distance $9$, so $t=4$.  Suppose that the sent codeword is $\vct{0}$ and the error vector is
  $$\e(x)=\alpha^3x+\alpha^3x^2+\alpha^{14}x^{10}+\alpha^{5}x^{12}+\alpha^{8}x^{13}$$
  Clearly $e=5$ and $\rw(x)=\e(x)$. The syndrome values are:
  $$\begin{array}{llll}
    S_1=\rw(\alpha)=\alpha^{10} & S_2=\rw(\alpha^2)=\alpha^2 &S_3=\rw(\alpha^3)=\alpha^8 & S_4=\rw(\alpha^4)=\alpha^{7}\\
    S_5=\rw(\alpha^5)=0   & S_6=\rw(\alpha^6)=\alpha^{3} &S_7=\rw(\alpha^7)=\alpha^{9} &S_8=\rw(\alpha^8)=\alpha^8
  \end{array}$$ 
  So \reffpgz{pgz2} starts setting $i_0=1$, $\vct{w}^{(1)}=\alpha^7$ and $\vct{y}^{(1)}=\alpha^5$ and continues with the iterations:
  \begin{tabbing}
   ($i=1$)$\xrightarrow{\hspace*{1cm}}$   											\=$\varepsilon_1=S_2w_0^{(1)}+S_3=\alpha^{2}\alpha^7+\alpha^{8}=\alpha^{12}$;\\[.3cm]
           \>$\varepsilon_1\neq 0\Rightarrow$ \= $\displaystyle\vct{y}^{(2)}=\frac{1}{\varepsilon_1}\begin{mymatrix}
                             w_0^{(1)}\\ 1                                                                                                                                                       \end{mymatrix}
              =\frac{1}{\alpha^{12}}\begin{mymatrix}
                       \alpha^{7}\\1
                      \end{mymatrix}
                      =\begin{mymatrix}
                        \alpha^{10}\\ \alpha^3
                       \end{mymatrix}$;\\[0.3cm]
                      \>\>$\varepsilon_2=S_3w_0^{(1)}+S_4=\alpha^{8}\alpha^7+\alpha^{7}=\alpha^{9}$;\\[.3cm]
                      \>\>$\eta=S_2y_0^{(1)}+S_4=\alpha^{2}\alpha^5=\alpha^{7}$;\\[.3cm]
                      \>\>$\vct{w}^{(2)}$\=$=\begin{mymatrix}
                                          0\\ \alpha^{7}
                                         \end{mymatrix}-\alpha^{12}\begin{mymatrix}\alpha^{5}\\0
                                                                   \end{mymatrix}+\left[\varepsilon_1 \eta-\varepsilon_2\right]
                                         \begin{mymatrix}
                                          \alpha^{10}\\ \alpha^3
                                           \end{mymatrix}=$\\[.3cm]
                      \>\>\>$=\begin{mymatrix}\alpha^{11}\\ \alpha^{12}
                                                          \end{mymatrix}$;\\
   \end{tabbing}
   \begin{tabbing}   
     ($i=2$)$\xrightarrow{\hspace*{1cm}}$
          \=$\varepsilon_1=S_3w_0^{(2)}+S_4w_1^{(2)}+S_5=\alpha^{8}\alpha^{11}+\alpha^{7}\alpha^{12}+0=0$;\\[.3cm]
          \>$\varepsilon_2=S_4w_0^{(2)}+S_5w_1^{(2)}+S_6=\alpha^{7}\alpha^{11}+0+\alpha^{3}=0$;\\[.3cm]
          \>$\varepsilon_1=\varepsilon_2=0\Rightarrow$\= $ \tl{e}=2$;\\[.3cm]
                                                      \>\>$\tl{\sigma}(x)=\alpha^{11}x^2+\alpha^{12}x+1$;
  \end{tabbing}
  Since $\tl{\sigma}(x)=(1+x)(1+\alpha^{11}x)$, in \reffpgz{pgz3} we have  $\tl{X}_1=1=\alpha^0$ and $\tl{X}_2=\alpha^{11}$. So the assumed error positions are $i_1=0$ and $i_2=11$. Finally in \reffpgz{pgz4} calculates the assumed error values solving
  $$\begin{mymatrix}
     1 & \alpha^{11}\\
     1 & \alpha^{7} 
    \end{mymatrix}\begin{mymatrix} \tl{E}_1 \\ \tl{E}_2\\
                 \end{mymatrix}=\begin{mymatrix} \alpha^{10}\\ \alpha^{2}
                               \end{mymatrix} $$   
  Thus we have that $\tl{E}_1=\alpha^6$ and $\tl{E}_2=\alpha^{11}$ and the output vector is 
  \begin{align*}
   \vct{\tl{c}}(x)&=\rw(x)-(\alpha^{11}x^{11}+\alpha^{6})=\\
   &=\alpha^{6}+\alpha^3x+\alpha^3x^2+\alpha^{14}x^{10}
         +\alpha^{11}x^{11}+\alpha^{5}x^{12}+\alpha^{8}x^{13}\neq \cw(x)
  \end{align*}
  We note that 
  $$\vct{\tl{c}}(\alpha^7)=\alpha^7\neq0 \Rightarrow \vct{\tl{c}}\notin RS(15,9,\alpha)\Rightarrow 
  \rw\notin \mathcal B$$
  Thus a decoder malfunction has occurred.\\
 \end{ex}
 
To avoid malfunctions of this kind in the \fPGZ decoding algorithm we need to add in its implementation the conditions necessary and sufficient to assure that the received word $\rw$ belongs to the set $\mathcal B$. 
If one of these conditions will be not satisfied, then the \fPGZ decoder must declare a failure without carry out the decoding.  
By (\ref{condition1}), one way to prevent malfunction from occurring is simply to add a step at the end of the algorithm which checks if $\vct{\tl{c}}\in RS(n,d,\alpha)$. This involves checking $d-1$ equations.  We will see later in this section that the same goal may be achieved with a lower number of equations to check. First a lemma about Hankel matrices is necessary:
 \begin{lem}\label{lemma:hankel}
  Let $M$ be a $k\times k$ or $k\times (k+1)$ Hankel matrix and let $m$ be its rank. If the $m\times m$ leading principal minor in $M$ is non-singular, then all the entries of $M$ are completely determined by the first $2m$ entries.
 \end{lem}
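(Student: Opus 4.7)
Write the entries of $M$ as $M_{ij} = h_{i+j-1}$, so that the distinct values read off in order are $h_1, h_2, \ldots, h_{2k-1}$ (in the square case) or $h_1, \ldots, h_{2k}$ (in the rectangular case); the ``first $2m$ entries'' of the statement are then $h_1, \ldots, h_{2m}$. The case $m = k$ is vacuous, so the plan is to work under $m < k$. Since $M_m$ is non-singular, the first $m$ rows $R_1, \ldots, R_m$ are linearly independent, and the rank hypothesis forces every subsequent row $R_i$ to be a unique linear combination of them.

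The first step is to extract a recurrence from the relation $R_{m+1} = c_1 R_1 + \cdots + c_m R_m$. Reading this identity on the first $m$ columns gives the linear system
\[ M_m \begin{pmatrix} c_1 \\ \vdots \\ c_m \end{pmatrix} = \begin{pmatrix} h_{m+1} \\ \vdots \\ h_{2m} \end{pmatrix}, \]
uniquely solvable thanks to non-singularity of $M_m$; the coefficients depend only on $h_1, \ldots, h_{2m}$. Reading the identity on column $l$ instead yields $h_{m+l} = c_1 h_l + c_2 h_{l+1} + \cdots + c_m h_{l+m-1}$, i.e.\ the $m$-term linear recurrence $h_n = c_1 h_{n-m} + \cdots + c_m h_{n-1}$. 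This single relation establishes the recurrence only for $m < n \le m+k$ (resp.\ $m < n \le m+k+1$).

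The main obstacle is that when $m \le k-2$ these equations do not yet reach the highest entries $h_{m+k+1}, \ldots, h_{2k-1}$. To overcome it I would show by induction on $s$ that the \emph{same} coefficients govern every shifted relation:
\[ R_{m+s+1} = c_1 R_{s+1} + c_2 R_{s+2} + \cdots + c_m R_{s+m} \qquad (s = 0, 1, \ldots, k-m-1). \]
Given the claim for $s$, both sides of the claimed identity for $s+1$ lie in $\mathrm{span}(R_1, \ldots, R_m)$, and reading the inductive hypothesis at columns $2, \ldots, k$ shows that they already agree in columns $1, \ldots, k-1$. Since $k-1 \ge m$, they agree in particular on the first $m$ columns; but any two elements of $\mathrm{span}(R_1, \ldots, R_m)$ with identical restrictions to the first $m$ columns must coincide, because that restriction map is represented by the non-singular matrix $M_m$. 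The rectangular case is identical with $k$ replaced by $k+1$ in the column count.

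Reading each shifted relation column by column extends the recurrence $h_n = c_1 h_{n-m} + \cdots + c_m h_{n-1}$ to all indices $n$ up to $2k-1$ (resp.\ $2k$). Since the initial data $h_1, \ldots, h_m$ and the coefficients $c_1, \ldots, c_m$ are both encoded in $h_1, \ldots, h_{2m}$, iterating the recurrence from these first $2m$ entries reconstructs every remaining entry of $M$, as required.
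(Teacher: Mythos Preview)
Your proof is correct. The paper does not actually supply a proof of this lemma: it states only that the argument ``uses only the symmetric structure of an Hankel matrix and some easy linear algebra'' and leaves it to the reader. Your argument---extracting the $m$-term recurrence from the dependence of $R_{m+1}$ on the first $m$ rows, then propagating it by the shifted-row induction---is exactly the kind of proof the paper is gesturing at, and fills in the details cleanly.
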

We leave the proof of lemma \ref{lemma:hankel} to the reader since it uses only the symmetric structure of an Hankel matrix and some easy linear algebra.

We state now the proposition that fixes the conditions to add in the \fPGZ  decoding algorithm \ref{alg:fastPGZ} in order to make it $t$-bound distance:
\begin{prop}\label{prop:malfunctionPGZ}
 With previous notations, we have that $\vct{\tl{c}}\in RS(n,d,\alpha)$ if and only if
 \begin{enumerate}[\ \ \  (1)]
   \item \label{ca} there exists $i_0\leq t$ as defined in remark \ref{rem:i0};
   \item \label{cb}$\tl{\sigma}(x)$ has exactly  $\tl{e}$ distinct roots, all belonging to the field $\F$ and all different from zero;
   \item \label{cc} $\rk(A)=\tl{e}$;
   \item \label{cd} for any $i\in\{1,2,\dots,\tl{e}\}$, $\tl{E}_i\neq0$.
 \end{enumerate}
\end{prop}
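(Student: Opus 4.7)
The plan is to reduce the biconditional to the syndrome-level equivalence $\vct{\tl c}\in RS(n,d,\alpha)\iff S_j=\tl S_j$ for all $j=1,\ldots,d-1$, where $\tl S_j:=\vct{\tl e}(\alpha^j)$; this follows at once from the parity-check characterisation in Proposition~\ref{prop:rscode}(2), since $\vct{\tl c}(\alpha^j)=S_j-\tl S_j$. Both directions of the proposition will then be established against this benchmark.

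For the direction $(\Rightarrow)$, assume $\vct{\tl c}\in RS(n,d,\alpha)$. If every $S_j$ vanishes the statement is trivial, so assume some $S_j\neq 0$. The algorithm produced the output $\vct{\tl c}$ only by surviving its internal failure checks: this immediately yields (\ref{ca}) (the base step \reffpgz{fpgz2}(a) requires $i_0\leq t$) and (\ref{cb}) (Chien's search must find $\tl e$ distinct roots in $\F\setminus\{0\}$, using $\tl\sigma(0)=1$). For (\ref{cc}) and (\ref{cd}), $\vct{\tl e}=\rw-\vct{\tl c}$ is a genuine error vector with $\wt(\vct{\tl e})=\#\{i:\tl E_i\neq 0\}\leq\tl e$. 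Proposition~\ref{prop:syndromematrix} applied to $\vct{\tl e}$ forces $\rk(A)=\wt(\vct{\tl e})$; combined with the non-singularity of $A_{\tl e}$ that led the algorithm to select $\tl e$, this gives $\rk(A)\geq\tl e$ and hence $\wt(\vct{\tl e})=\tl e$. Consequently no $\tl E_i$ vanishes, establishing (\ref{cd}), and $\rk(A)=\tl e$, establishing (\ref{cc}).

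For the direction $(\Leftarrow)$, assume (\ref{ca})--(\ref{cd}) and prove $S_j=\tl S_j$ in three stages. \emph{Stage~1} ($j=1,\ldots,\tl e$): step \reffpgz{fpgz4} solves the Vandermonde-like system~(\ref{syst:GZ}) with the assumed data, which is precisely the relation $\sum_i \tl E_i\tl X_i^j=S_j$. \emph{Stage~2} ($j=\tl e+1,\ldots,2\tl e$): conditions (\ref{cb}) and (\ref{cd}) ensure $\wt(\vct{\tl e})=\tl e\leq t$, so the key equation~(\ref{keyequation}) applied to $\vct{\tl e}$ produces a polynomial $\tl\omega$ with $\deg\tl\omega<\tl e$ satisfying $\tl\sigma(x)\tl S(x)\equiv\tl\omega(x)\pmod{x^{d-1}}$, yielding the recurrence $\tl S_{\tl e+k}=-\sum_{i=1}^{\tl e}\tl\sigma_i\,\tl S_{\tl e+k-i}$ for every $k\geq 1$. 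The construction of $\tl\sigma$ via system~(\ref{sistema2Matrice}) simultaneously gives $S_{\tl e+k}=-\sum_{i=1}^{\tl e}\tl\sigma_i\,S_{\tl e+k-i}$ for $k=1,\ldots,\tl e$. A straightforward induction on $k\in\{1,\ldots,\tl e\}$, seeded by Stage~1, delivers $S_{\tl e+k}=\tl S_{\tl e+k}$ throughout. \emph{Stage~3} ($j=2\tl e+1,\ldots,d-1$): invoke Lemma~\ref{lemma:hankel} applied to both Hankel matrices $A$ and $\tl A:=(\tl S_{i+j-1})_{i,j}$. Condition (\ref{cc}) and the non-singularity of $A_{\tl e}$ give $\rk(A)=\tl e$, so $A$ is determined by $S_1,\ldots,S_{2\tl e}$; likewise Proposition~\ref{prop:syndromematrix} applied to $\vct{\tl e}$ makes $\rk(\tl A)=\tl e$ with non-singular leading minor, so $\tl A$ is determined by $\tl S_1,\ldots,\tl S_{2\tl e}$. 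Stages~1 and~2 identify these two sets of seeds, so $A=\tl A$ and the equality $S_j=\tl S_j$ extends across the entire range.

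I expect the main obstacle to lie in the bookkeeping of Stage~2: the $S$-side recurrence is guaranteed only for $k\leq\tl e$ (being a consequence of the $\tl e\times\tl e$ system~(\ref{sistema2Matrice})), whereas the $\tl S$-side recurrence holds for every $k\geq 1$ (being an algebraic consequence of the factorisation of $\tl\sigma$). One must therefore stop the induction at exactly $k=\tl e$ and hand the remaining range over to Lemma~\ref{lemma:hankel}. A secondary subtlety in the $(\Rightarrow)$ direction is the edge case in which every syndrome vanishes, where $i_0$ is undefined and condition (\ref{ca}) is to be read vacuously.
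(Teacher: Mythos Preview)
Your proposal is correct and follows essentially the same three-stage architecture as the paper: match the first $\tl e$ syndromes via the Vandermonde system~(\ref{syst:GZ}), the next $\tl e$ via the common linear recurrence on $S_j$ and $\tl S_j$, and the remaining ones via Lemma~\ref{lemma:hankel} together with condition~(\ref{cc}). The only cosmetic differences are that the paper derives the $\tl S$-recurrence directly from $\tl\sigma(\tl X_i^{-1})=0$ rather than invoking the key equation, and in the $(\Rightarrow)$ direction it appeals in one line to ``repeating the construction of section~\ref{sect:PGZ}'' where you spell out the rank argument explicitly.
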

Condition (\ref{ca}) has been already included in the \fPGZ  decoding algorithm \ref{alg:fastPGZ}, but it is not sufficient as seen in example \ref{ex_fastpgz_mal}.
\begin{proof} If  $\vct{\tl{c}}\in RS(n,d,\alpha)$, then $\rw=\vct{\tl{c}}+\vct{\tl{e}}$ with $\wt(\vct{\tl{e}})\leq t$. Thus we  can repeat the construction of section \ref{sect:PGZ}, with $\vct{\tl{c}}$ and  $\vct{\tl{e}}$ in place of $\cw$ and $\e$, concluding that conditions (\ref{ca}), (\ref{cb}), (\ref{cc}) and (\ref{cd}) are satisfied. On the other hand, supposing that conditions (\ref{ca}), (\ref{cb}), (\ref{cc}) and (\ref{cd}) are satisfied, we have to prove that $\vct{\tl{c}}\in RS(n,d,\alpha)$, which is equivalent to prove that $\vct{\tl{c}}(\alpha^i)=0$ for any $i=1,2,\dots,d-1$.
We observe that
$$\vct{\tl{c}}(\alpha^i)=\rw(\alpha^i)-\vct{\tl{e}}(\alpha^i)= S_i-\vct{\tl{e}}(\alpha^i)$$
and we  define $\tl{S}_i=\vct{\tl{e}}(\alpha^i)$. Condition (\ref{ca}) ensures that step \reffpgz{fpgz2} can start correctly, while condition (\ref{cb}) ensures that in step \reffpgz{fpgz3} there exist $p_1,p_2\dots,p_{\tl{e}} $ in $\{1,2,\dots,n\}$ such that  $\tl{X}_i=\alpha^{p_i}$ and $j_a\neq j_b$ if $a\neq b$. Thus $\vct{\tl{e}}(x)=\displaystyle\sum_{i=1}^{\tl{e}}\tl{E}_{i}x^{p_i}$.
Since the assumed error values $\tl{E}_i$'s are computed in \reffpgz{fpgz4} solving the linear system
 $$\begin{mymatrix}\tl{X}_1   & \tl{X}_2  & \cdots & \tl{X}_{\tl{e}} \\
                 \tl{X}_1^2   & \tl{X}_2^2  & \cdots & \tl{X}_{\tl{e}}^2 \\
                 \vdots &\vdots & &\vdots\\
                 \tl{X}_1^{\tl{e}}   & \tl{X}_2^{\tl{e}}  & \cdots & \tl{X}_{\tl{e}}^{\tl{e}} \\
  \end{mymatrix}
    \begin{mymatrix} \tl{E}_1\\ \tl{E}_2 \\ \vdots \\ \tl{E}_{\tl{e}}\\
    \end{mymatrix}= 
   \begin{mymatrix} S_{1}\\ S_{2} \\ \vdots \\ S_{\tl{e}}\\
   \end{mymatrix} $$
we know that 
$$S_j=\sum_{i=1}^{\tl{e}} \tl{E}_i \tl{X}_i^j=\sum_{i=1}^{\tl{e}} \tl{E}_i \tilde (\alpha^{p_i})^j=\vct{\tl{e}}(\alpha^j) \quad \quad \forall \,j=1,2,\dots, \tl{e} $$
So we have already proved that $S_i=\tl{S}_i$ for any $i=1,2,...,\tl{e}$. We will prove now that the same equality  also holds for $i=\tl{e}+1,\tl{e}+2,\dots,d-1$.\\
From $$A_{\tl{e}}\begin{mymatrix}
                                                           \tl{\sigma}_{\tl{e}}\\ \vdots \\ \tl{\sigma}_1\\
                                                                \end{mymatrix}=-\begin{mymatrix}
                                                                                 S_{\tl{e}+1}\\ \vdots \\ S_{2\tl{e}}\\    
                                                                                \end{mymatrix}$$ 
it follows that the first $2\tl{e}$ syndromes are completely determined by the first $\tl{e}$ syndromes, in fact:
\begin{equation}\label{2}
-S_{\tl{e}+l}=\sum_{i=1}^{\tl{e}}S_{\tl{e}+l -i}\,\tl{\sigma}_i \quad \quad \forall\;l=1,2,\dots, \tl{e}
\end{equation}
From $\tl{\sigma}(\tl{X}_i^{-1})=0$ for any $i=1,2,\dots,\tl{e}$ it holds that
$$ \tl{X}_i^{\tl{e}}+\tl{\sigma}_1\tl{X}_i^{\tl{e}-1}+\cdots+\tl{\sigma}_{\tl{e}}=0$$
Let $j\neq0$ be a natural number, we multiply the former equation by $\tl{E}_i\tl{X}_i^j$ (different from zero by conditions (\ref{cb}) and (\ref{cd})) and after we sum over $i=1,2,\dots,\tl{e}$. Thus we obtain
$$\left(\sum_{i=1}^{\tl{e}}\tl{E}_i\tl{X}_i^{\tl{e}+j}\right)+
 \left(\sum_{i=1}^{\tl{e}}\tl{E}_i\tl{X}_i^{\tl{e}+j-1}\right)\tl{\sigma}_1+\dots+
 \left(\sum_{i=1}^{\tl{e}}\tl{E}_i\tl{X}_i^{j}\right)\tl{\sigma}_{\tl{e}}=0 \quad \forall\,j\geq1$$
that is
\begin{equation}\label{3}
  -\tl{S}_{\tl{e}+j}=\sum_{i=1}^{\tl{e}}\tl{S}_{\tl{e}+j -i}\,\tl{\sigma}_i \quad \quad \forall\;j\geq 1
\end{equation}
Hence from (\ref{2}) and (\ref{3}) we  also find that $S_i=\tl{S}_i$ for $i=1,2,\dots, 2\tl{e}$. Finally we consider the matrix $\tl{A}$ defined by $$\tl{A}=\begin{mymatrix}                                                  
                        \tl{S}_1          & \tl{S}_2             & \cdots & \tl{S}_{t+1} \\
                        \tl{S}_2          & \tl{S}_3             & \cdots & \tl{S}_{t+2}\\
                        \vdots       & \vdots          &        & \vdots \\
                        \tl{S}_{d-1-t}    & \tl{S}_{d-t}        & \cdots & \tl{S}_{d-1}\\
        \end{mymatrix}$$             
Since $\tl{A}$ is the syndrome matrix for the error vector $\vct{\tl{e}}$, which has weight less or equal to $t$, we know that $\rk(\tl{A})=\wt\left(\vct{\tl{e}}\right)=\tl{e}$. So we can conclude the proof using condition (\ref{cc}) and lemma \ref{lemma:hankel}, which implies that  $A$ and  $\tl{A}$ are equal. 
\end{proof}

\noindent Note that in the example \ref{ex_fastpgz_mal} the syndrome matrix is 
$$A=\begin{mymatrix}
     \alpha^{10}& \alpha^{2}& \alpha^{8}& \alpha^{7}& 0\\
     \alpha^{2}& \alpha^{8}& \alpha^{7}& 0& \alpha^3\\
     \alpha^{8}& \alpha^{7}& 0& \alpha^3& \alpha^9\\
     \alpha^{7}& 0& \alpha^3& \alpha^9& \alpha^7
    \end{mymatrix}
$$
and it holds that $\rk(A)\geq3$ since
$$\det\begin{mymatrix}
     \alpha^{10}& \alpha^{2}&  0\\
     \alpha^{2}& \alpha^{8}&  \alpha^3\\
     \alpha^{8}& \alpha^{7}& \alpha^9\\
     \end{mymatrix}=\alpha^{14}\neq0
$$ but $\tl{e}=2$, thus the condition $\rk(A)=\tl{e}$ cannot hold.\\

While conditions (\ref{ca}), (\ref{cb}) and (\ref{cd}) in proposition \ref{prop:malfunctionPGZ} may be  easily implemented without any significant extra cost, condition (\ref{cc}) is not good from an implementation point of view. For this reason  we observe that the following equivalent condition holds:
\begin{prop}\label{aiutino2}
 Supposing that conditions (\ref{ca}), (\ref{cb}) and (\ref{cd}) of proposition \ref{prop:malfunctionPGZ} hold, we have that
 $$\rk(A)=\tl{e} \Longleftrightarrow  A\begin{mymatrix}
                                     \vct{0}^{i}\\ \tl{\sigma}_{\tl{e}}\\ \vdots \\ \tl{\sigma}_1 \\ 1 \\ \vct{0}^{t-\tl{e}-i}
                  \end{mymatrix}=\vct{0}\quad\quad \forall\,i=0,1,\dots,t-\tl{e}$$
\end{prop}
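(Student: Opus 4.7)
The plan is to prove the equivalence by showing each direction separately. For convenience let $\vct{v}_i$ denote the $i$-th vector on the right-hand side, i.e.\ the vector in $(\F)^{t+1}$ with entries $\tl{\sigma}_{\tl{e}},\tl{\sigma}_{\tl{e}-1},\ldots,\tl{\sigma}_1,1$ in positions $i+1$ through $i+\tl{e}+1$ and zeros elsewhere.

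For $(\Leftarrow)$, suppose $A\vct{v}_i=\vct{0}$ for every $i\in\{0,1,\ldots,t-\tl{e}\}$. These $t-\tl{e}+1$ vectors are linearly independent (each has its rightmost nonzero entry, namely the $1$, at a distinct position $i+\tl{e}+1$), so $\dim\ker A \geq t-\tl{e}+1$ and hence $\rk A \leq (t+1)-(t-\tl{e}+1)=\tl{e}$. The reverse inequality $\rk A \geq \tl{e}$ is immediate, because $A_{\tl{e}}$ is a non-singular $\tl{e}\times\tl{e}$ submatrix of $A$: the \fPGZ algorithm, by theorem \ref{thm:fastPGZ}, assigns $\tl{e}$ as the largest index $\leq t$ with $\det A_{\tl{e}}\neq 0$. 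Combining yields $\rk A=\tl{e}$.

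For $(\Rightarrow)$, assume $\rk A=\tl{e}$. A direct computation (using the Hankel structure) shows that the target identities $A\vct{v}_i=\vct{0}$ are exactly a rearrangement of the extended key equations $\sum_{m=1}^{\tl{e}}\tl{\sigma}_m S_{N+\tl{e}-m}+S_{N+\tl{e}}=0$ for $N=1,\ldots,d-1-\tl{e}$ (the $k$-th entry of $A\vct{v}_i$ is the $N$-th such identity with $N=k+i$). To establish these I would introduce the auxiliary sequence $\hat{S}_j:=S_j$ for $j=1,\ldots,2\tl{e}$, extended for $j>2\tl{e}$ by $\hat{S}_j:=-\sum_{l=1}^{\tl{e}}\tl{\sigma}_l \hat{S}_{j-l}$, and form the Hankel matrix $\hat{A}$ of shape $(d-1-t)\times(t+1)$ built from $(\hat{S}_j)$. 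By construction $\hat{A}_{\tl{e}}=A_{\tl{e}}$ is non-singular. Moreover, the key equation (\ref{sistema2Matrice}) that defined $\tl{\sigma}$ already forces the same recurrence in the range $\tl{e}+1\leq j\leq 2\tl{e}$, so the recurrence holds for all $j\geq \tl{e}+1$; this makes every column of $\hat{A}$ beyond the $\tl{e}$-th a linear combination of the first $\tl{e}$, whence $\rk\hat{A}=\tl{e}$. Since $A$ and $\hat{A}$ are two $(d-1-t)\times(t+1)$ Hankel matrices of rank $\tl{e}$ sharing the same non-singular leading $\tl{e}\times\tl{e}$ principal minor and agreeing on $S_1,\ldots,S_{2\tl{e}}$, lemma \ref{lemma:hankel} (whose shape hypothesis is satisfied because $d-1-t\in\{t,t+1\}$) gives $A=\hat{A}$. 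The entries of $A\vct{v}_i=\hat{A}\vct{v}_i$ then vanish by the defining recurrence of $(\hat{S}_j)$.

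The main obstacle lies in the $(\Rightarrow)$ direction: translating the abstract rank condition $\rk A=\tl{e}$ into the explicit linear recurrence that ties the syndromes to the coefficients of $\tl{\sigma}$. The auxiliary-sequence device together with lemma \ref{lemma:hankel} closes exactly this gap, exploiting the fact that a Hankel matrix of the prescribed shape and rank is uniquely fixed by its first $2\tl{e}$ entries.
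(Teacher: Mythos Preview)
Your proof is correct. The $(\Leftarrow)$ direction matches the paper's argument exactly: the $t-\tl{e}+1$ shifted vectors are independent, so the kernel is large enough, and $\det(A_{\tl{e}})\neq0$ gives the reverse bound.

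For $(\Rightarrow)$ you take a somewhat different route from the paper. The paper simply invokes the proof of proposition~\ref{prop:malfunctionPGZ}: there the auxiliary syndromes are defined through the error interpretation $\tl{S}_i=\vct{\tl{e}}(\alpha^i)$, which requires conditions~(\ref{cb}) and~(\ref{cd}) (distinct nonzero roots and nonzero error values) to guarantee that $\tl{A}$ really is the syndrome matrix of a weight-$\tl{e}$ error vector, so that proposition~\ref{prop:syndromematrix} yields $\rk(\tl{A})=\tl{e}$; lemma~\ref{lemma:hankel} then forces $A=\tl{A}$ and the recurrence for the $\tl{S}_i$ transfers to the $S_i$. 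You instead define $\hat{S}_j$ directly by the linear recurrence and verify $\rk(\hat{A})=\tl{e}$ from the column relations, reaching lemma~\ref{lemma:hankel} without passing through the error interpretation at all. This is a legitimate shortcut: it is more self-contained and shows that hypotheses~(\ref{cb}) and~(\ref{cd}) are in fact not needed for the $(\Rightarrow)$ implication. The paper's route, on the other hand, reuses machinery already developed and keeps the two propositions tightly linked.
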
     
\begin{proof}
 If $\rk(A)=\tl{e}$, then from what we saw in the proof of proposition \ref{prop:malfunctionPGZ} it follows that
 $$-S_{\tl{e}+j}=\sum_{i=1}^{\tl{e}}S_{\tl{e}+j -i}\,\tl{\sigma}_i \quad \quad \forall\;j=1,2,\dots, d-1-\tl{e} $$ and we obtain straightly the right side condition.
 On the other hand, since $\tl{e}$ is calculated as $\max\left\{i\in\{1,2,\dots, t\} \,|\, \det(A_i)\neq0\right\}$, it is always true that $\rk(A)\geq \tl{e}$ and it remains to prove only that $\rk(A)\leq \tl{e}$. If the right side condition is accomplished, then there are $t+1-\tl{e}$ linearly independent vectors in  $\ker(A)$. It follows that $\dim(\ker(A))\geq t+1-\tl{e}$. By definition of the syndrome matrix, we have that $t+1=\dim(\ker(A))+\rk(A)$, hence $\rk(A)\leq \tl{e}$ and the proof is complete. 
\end{proof}

Thanks to the Hankel structure of the syndrome matrix, checking the right side condition in proposition \ref{aiutino2} is equivalent to checking the following $d-1-2\tl{e}$ equations:
$$\begin{cases}
 \displaystyle S_{2\tl{e}+1}+\sum_{i=1}^{\tl{e}}S_{2\tl{e}+1 -i}\,\tl{\sigma}_i=0\\
 \displaystyle S_{2\tl{e}+2}+\sum_{i=1}^{\tl{e}}S_{2\tl{e}+2 -i}\,\tl{\sigma}_i=0\\
 \cdots\\
 \displaystyle  S_{d-1}+\sum_{i=1}^{\tl{e}}S_{d-1-i}\,\tl{\sigma}_i=0\\
\end{cases}$$
Note that the first $t-\tl{e}$ equations hold by the condition $$\varepsilon_1=\varepsilon_2=\cdots=\varepsilon_{t-\tl{e}}=0$$
that happens at the end \reffpgz{pgz2}. Checking the remaining equations has a computational complexity of $O(\tl{e}t)$ operations, so we may add this check to  step \reffpgz{pgz2} without increasing its total complexity.\\

Finally we observe that the \fPGZ  decoder can malfunction because it may not use all the syndrome values in the decoding, indeed step \reffpgz{pgz2} calculates $\tl{e}$ and $\tl{\sigma}(x)$ using only the first $\tl{e}+t$ syndromes. Thus it is understandable that the controls necessary to avoid malfunctions involve the remaining syndromes. 

\chapter{Berlekamp-Massey Decoder}
Elwyn R. Berlekamp published his algorithm to solve the key equation (\ref{keyequation}) in 1968 and in 1969  James Massey gave a simplified version of it. 
The described algorithm found the polynomials  $\sigma(x)$ and $\omega(x)$ with an iterative procedure based on polynomial families  defined by recursion. 
In section \ref{sect:BMalg} we will study in detail this procedure, while in section \ref{sect:BMmalf} we will  describe the necessary and sufficient conditions to avoid decoder malfunctions in the implementation given of the Berlekamp-Massey decoder. Finally in section \ref{sect:comparison} we will compare the latter with the \fPGZ decoding algorithm.
%

\section{\BM Decoding Algorithm}\label{sect:BMalg}
The Berlekamp-Massey (\BM) decoding algorithm is iterative and solves the key equation (\ref{keyequation}) in successively higher degrees. In other words, for $i=0,1,\dots,d-1$  we attempt to find polynomials of ``small'' degree
$$\sigma^{(i)}(x)=\sum_{j=0}^i \sigma^{(i)}_j x^j\in\F[x]$$
$$\omega^{(i)}(x)=\sum_{j=0}^{i-1} \omega^{(i)}_j x^j\in\F[x]$$
which satisfy
\begin{equation} \label{keyequation_i}
\sigma^{(i)}(x)S(x)\equiv\omega^{(i)}(x) \Mod{  x^{i}}
\end{equation}
We look for  solutions of small degree because we have already seen in the theorem \ref{thm:keyequation2} that the pair $(\sigma(x), \omega(x))$ is the valid solution of minimal degree for the key equation (\ref{keyequation}).
The iterative nature of the algorithm follows from the observation that if we know a solution to (\ref{keyequation_i}) that satisfies the conditions $\Deg{\sigma^{(i)}(x)}\leq i$ and $\Deg{\omega^{(i)}(x)}\leq i-1$, 
then we may write
$$\sigma^{(i)}(x)S(x)\equiv\omega^{(i)}(x)+\Delta_ix^i \Mod{  x^{i+1}}$$
where $\Delta_i$ is  the coefficient of $x^i$ in the polynomial $\sigma^{(i)}(x)S(x)$, that is
\begin{equation}\label{defi:Delta}
\Delta_i\displaystyle\overset{def}{=}\sum_{j=0}^{i}S_{i+1-j}\,\sigma_j^{(i)}
\end{equation}
and it is called \emph{$i$th discrepancy}. If $\Delta_i=0$, then we may evidently proceed taking  $\sigma^{(i+1)}(x)=\sigma^{(i)}(x)$ and $\omega^{(i+1)}(x)=\omega^{(i)}(x)$. In order to define $\sigma^{(i+1)}(x)$ and $\omega^{(i+1)}(x)$ also in the case when $\Delta_i\neq 0$, we introduce the \emph{auxiliary polynomials} $\tau^{(i)}(x)$ and $\gamma^{(i)}(x)$, which will be chosen so that they are a solution of the \emph{auxiliary equation}
\begin{equation}\label{auxiliaryequation}
\tau^{(i)}(x)S(x)\equiv \gamma^{(i)}(x) + x^{i-1} \Mod {x^i}
\end{equation}
Supposing the existence of the auxiliary polynomials $\tau^{(i)}(x)$ and $\gamma^{(i)}(x)$ for any $i\in\{0,1,\dots,d-2\}$, then
\begin{defi}\label{defi:sigma_omega_i} We define
$$\begin{cases}\sigma^{(0)}(x)=1\\
\sigma^{(i+1)}(x)=\sigma^{(i)}(x)-\Delta_ix\tau^{(i)}(x)
\end{cases}$$
$$\begin{cases}\omega^{(0)}(x)=0\\
\omega^{(i+1)}(x)=\omega^{(i)}(x)-\Delta_ix\gamma^{(i)}(x)
\end{cases}$$
for any $i\in\{0,1,\dots,d-2\}$.
\end{defi}  
We will show later on in proposition \ref{prop:BM1} that the polynomials $\sigma^{(i)}(x)$'s and $\omega^{(i)}(x)$'s just defined satisfy (\ref{keyequation_i}) for every $i\in\{0,1,\dots,d-1\}$ and moreover we will prove that $\sigma^{(d-1)}(x)=\sigma(x)$ and $\omega^{(d-1)}(x)=\omega(x)$.

The auxiliary polynomials are recursively defined during the algorithm. Indeed if $\tau^{(i)}(x)$ and $\gamma^{(i)}(x)$ satisfy the congruence (\ref{auxiliaryequation}), then we have two obvious ways to define $\tau^{(i+1)}(x)$ and $\gamma^{(i+1)}(x)$:
$$\begin{cases}\tau^{(i+1)}(x)=x\tau^{(i)}(x)\\
                \gamma^{(i+1)}(x)=x\gamma^{(i)}(x)                                                                                                                                                                                                                                                               
  \end{cases} \quad \text{ or } \quad
  \begin{cases} \tau^{(i+1)}(x)=\frac{\sigma^{(i)}(x)}{\Delta_i}\\
                \gamma^{(i+1)}(x)=\frac{\omega^{(i)}(x)}{\Delta_i}
  \end{cases}$$
If $\Delta_i=0$, then our choice is forced, while, if $\Delta_i \neq0$, then our choice must be based upon the aim to minimize the degrees of $\sigma^{(i+1)}(x)$ and $\omega^{(i+1)}(x)$. For this reason we introduce the function $D:\N\rightarrow\N$, defined by
\begin{defi}\label{defi:D(i)}
Let $D(0)=0$ and
$$D(i+1)= \begin{cases} D(i)  \quad &\text{ if } \;\Delta_i=0 \text{ or } 2D(i)\geq i+1\\
                      i+1-D(i) \quad &\text{ if } \;\Delta_i\neq0 \text{ and } 2D(i)< i+1
        \end{cases}$$
for any $i\geq0$.
\end{defi}
The function $D$ is evidently  nondecreasing and nonnegative and $D(i)\leq i$ for any $i\in\N$. Moreover, as we will prove later on in this section, it represents an upper bound for  $\Deg{\sigma^{(i)}(x)}$ and permits the right definition of the auxiliary polynomials as follows:
\begin{defi}\label{defi:tau_gamma}
Let $\tau^{(0)}(x)=1$ and  $\gamma^{(0)}(x)=-x^{-1}$. Then
$$\displaystyle
\tau^{(i+1)}(x)=\begin{cases} x\tau^{(i)}(x)\quad&\text{ if }\;\Delta_i=0 \text{ or } 2D(i)\geq i+1\\
                 \frac{\sigma^{(i)}(x)}{\Delta_i}\quad&\text{ if }\;\Delta_i\neq0\text{ and }2D(i)<i+1
        \end{cases}$$
$$\displaystyle
\gamma^{(i+1)}(x)=\begin{cases}x\gamma^{(i)}(x)\quad&\text{ if }\;\Delta_i=0\text{ or }2D(i)\geq    i+1\\
                  \frac{\omega^{(i)}(x)}{\Delta_i}\quad &\text{ if } \;\Delta_i\neq0 \text{ and } 2D(i)< i+1
        \end{cases}$$
for any $i\in\{0,1,\dots,d-2\}$.
\end{defi}

With these definitions it is immediate to show that:
\begin{prop}\label{prop:BM0}
If $\sigma^{(i)}(x)$, $\omega^{(i)}(x)$, $\tau^{(i)}(x)$ and $\gamma^{(i)}(x)$ are the polynomials defined respectively in definitions \ref{defi:sigma_omega_i} and \ref{defi:tau_gamma}, then
$$\sigma^{(i)}(x)S(x)\equiv \omega^{(i)}(x) \Mod{ x^i}$$
$$\tau^{(i)}(x)S(x)\equiv \gamma^{(i)}(x) + x^{i-1} \Mod {x^i}$$
for any $i\in\{0,1\dots,d-1\}$. 
Moreover 
\begin{equation}\label{eq:bm0}
\omega^{(i)}(x)\tau^{(i)}(x)-\sigma^{(i)}(x)\gamma^{(i)}(x)=x^{i-1}
\end{equation}
and thus we have that $\gcd\left(\sigma^{(i)}(x),\omega^{(i)}(x)\right)=1$.
\end{prop}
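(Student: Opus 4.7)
The plan is to prove all four claims simultaneously by induction on $i$, since the recursive definitions of $\sigma^{(i+1)}$, $\omega^{(i+1)}$, $\tau^{(i+1)}$, $\gamma^{(i+1)}$ tie the four properties together and force any split attempt to reprove the hypotheses in the middle.

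For the base case $i=0$, the two congruences are modulo $x^0=1$, hence trivial. The identity \eqref{eq:bm0} reduces to $0\cdot 1 - 1\cdot(-x^{-1})=x^{-1}$, which is precisely $x^{i-1}$ for $i=0$. (This is the only place where the Laurent term $\gamma^{(0)}=-x^{-1}$ appears, and it is consumed in the first step of the recursion.)

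For the inductive step I would split according to the two alternatives in definitions \ref{defi:sigma_omega_i} and \ref{defi:tau_gamma}. The key preliminary observation is that, by the inductive hypothesis on the first congruence and the defining formula \eqref{defi:Delta} for the discrepancy,
\begin{equation*}
\sigma^{(i)}(x)S(x)\equiv \omega^{(i)}(x)+\Delta_i x^i \Mod{x^{i+1}}.
\end{equation*}
In the shift case ($\Delta_i=0$ or $2D(i)\geq i+1$), I compute
\begin{equation*}
\sigma^{(i+1)}(x)S(x)=\sigma^{(i)}(x)S(x)-\Delta_i x\,\tau^{(i)}(x)S(x)
\equiv(\omega^{(i)}+\Delta_ix^i)-\Delta_ix(\gamma^{(i)}+x^{i-1})\Mod{x^{i+1}},
\end{equation*}
and the cancellation leaves $\omega^{(i)}-\Delta_i x\gamma^{(i)}=\omega^{(i+1)}$; the auxiliary congruence is immediate from $\tau^{(i+1)}=x\tau^{(i)}$, $\gamma^{(i+1)}=x\gamma^{(i)}$. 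In the swap case ($\Delta_i\neq 0$ and $2D(i)<i+1$), the same calculation gives the $\sigma^{(i+1)}$ congruence, while $\tau^{(i+1)}S=\sigma^{(i)}S/\Delta_i\equiv\omega^{(i)}/\Delta_i+x^i=\gamma^{(i+1)}+x^i\pmod{x^{i+1}}$.

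The identity \eqref{eq:bm0} at step $i+1$ is obtained by a direct expansion that, in either case, telescopes to $x\bigl(\omega^{(i)}\tau^{(i)}-\sigma^{(i)}\gamma^{(i)}\bigr)=x\cdot x^{i-1}=x^i$: in the shift case the $\Delta_i x^2\tau^{(i)}\gamma^{(i)}$ terms cancel, and in the swap case the $\omega^{(i)}\sigma^{(i)}/\Delta_i$ terms cancel. Finally, the coprimality $\gcd(\sigma^{(i)},\omega^{(i)})=1$ is a formal consequence of \eqref{eq:bm0}: any common factor divides $x^{i-1}$, hence is a power of $x$, but a trivial induction on the recurrence shows $\sigma^{(i)}(0)=\sigma^{(0)}(0)=1$, so $x\nmid\sigma^{(i)}$. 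No real obstacle is anticipated; the only subtlety is keeping track of the two parallel recurrences and handling the $i=0$ case cleanly given the Laurent term in $\gamma^{(0)}$.
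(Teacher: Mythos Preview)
Your proof is correct and follows essentially the same inductive scheme as the paper's: trivial base case, the key observation $\sigma^{(i)}S\equiv\omega^{(i)}+\Delta_i x^i\pmod{x^{i+1}}$, a case split on the update rule for the congruences and for the determinant identity \eqref{eq:bm0}, and the coprimality deduced from \eqref{eq:bm0} together with $\sigma^{(i)}(0)=1$. The only cosmetic difference is that the paper derives the $\sigma^{(i+1)}$ congruence in a single case-independent line by substituting the expression $x^i\equiv x\tau^{(i)}S-x\gamma^{(i)}\pmod{x^{i+1}}$ into $\sigma^{(i)}S\equiv\omega^{(i)}+\Delta_i x^i$, whereas you compute it directly in each branch; both arrive at the same place.
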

\begin{proof} 
 We will prove the proposition by induction. If $i=0$ the congruences are both trivial. Suppose the congruences true for $i$, it follows that 
  $$\begin{cases}
  \sigma^{(i)}(x)S(x)\equiv \omega^{(i)}(x)+\Delta_ix^i \Mod {x^{i+1}}\\
  x^i\equiv x\tau^{(i)}(x)S(x)-x\gamma^{(i)}(x)\Mod{ x^{i+1}} \end{cases}$$
  where $\Delta_i$ is defined by (\ref{defi:Delta}).  Substituting the expression for $x^i$ in the first congruence  we obtain that 
  $$
  \sigma^{(i)}(x)S(x)\equiv \omega^{(i)}(x)+\Delta_i\left( x\tau^{(i)}(x)S(x)-x\gamma^{(i)}(x)\right) \Mod {x^{i+1}}$$
  and using that
  $$\sigma^{(i+1)}(x)=\sigma^{(i)}(x)-\Delta_ix\tau^{(i)}(x)$$
  $$\omega^{(i+1)}(x)=\omega^{(i)}(x)-\Delta_ix\gamma^{(i)}(x)$$
  we can conclude that
  $$\sigma^{(i+1)}(x)S(x)\equiv \omega^{(i+1)}(x) \Mod{ x^{i+1}}$$
  Now we consider the relations that can  define $\tau^{(i+1)}(x)$ (definition \ref{defi:tau_gamma}). \\ 
  In the first case 
  $$\tau^{(i+1)}(x)S(x)=x\tau^{(i)}(x)S(x)\equiv  x\gamma^{(i)}(x)+  x^i\Mod{ x^{i+1}}$$
  In the other case
  \begin{align*}\tau^{(i+1)}(x)S(x)=\frac{\sigma^{(i)}(x)}{\Delta_i}S(x)&\equiv  \frac{\omega^{(i)}(x)+ \Delta_ix^i}{\Delta_i}
  \Mod{ x^{i+1}}\\
  &\equiv \gamma^{(i+1)}(x) + x^{i} \Mod {x^{i+1}}
  \end{align*} 
  and this concludes the proof of the inductive step. Thus the congruences are proved.
  In order to prove (\ref{eq:bm0}) we note that  base case is again trivial, so we focus our attention on the inductive step:
  $$\omega^{(i+1)}\tau^{(i+1)}-\sigma^{(i+1)}\gamma^{(i+1)}=\left(\omega^{(i)}-\Delta_ix\gamma^{(i)}\right)\tau^{(i+1)}- \left(\sigma^{(i)}-\Delta_ix\tau^{(i)}\right)\gamma^{(i+1)}$$
  Depending on the values of $\Delta_i$ and $D(i)$ we will substitute in the former expression the right values for $\tau^{(i+1)}(x)$ and $\gamma^{(i+1)}(x)$. It is an easy computation to verify that, in both cases, using the inductive hypothesis we obtain that $$\omega^{(i+1)}(x)\tau^{(i+1)}(x)-\sigma^{(i+1)}(x)\gamma^{(i+1)}(x)=x^{i}$$
  From the relation just proved, it follows that $\gcd\left(\sigma^{(i)}(x),\omega^{(i)}(x)\right)| x^{(i-1)}$, but it is immediate to prove that $\sigma^{(i)}(0)=1$ for every $i$, that is $x\nmid\sigma^{(i)}(x)$. Thus we conclude that $\gcd\left(\sigma^{(i)}(x),\omega^{(i)}(x)\right)=1$.
\end{proof}

In order to prove that the polynomials $\sigma^{(d-1)}(x)$ and $\omega^{(d-1)}(x)$ are respectively the error-locator polynomial and the error-evaluator polynomial, we need the followings propositions and theorems:
\begin{prop}\label{prop:BM1}
If $\sigma^{(i)}(x)$, $\omega^{(i)}(x)$, $\tau^{(i)}(x)$ and $\gamma^{(i)}(x)$ are the polynomials defined respectively in definitions \ref{defi:sigma_omega_i} and \ref{defi:tau_gamma}, then $ \forall\;i\in\{0,1,\dots,d-1\}$ we have that 
\begin{enumerate}[\ \ \ (1)]
\item $\begin{cases} 
       \deg\left(\sigma^{(i)}(x)\right)\leq D(i) \\
       \deg\left(\tau^{(i)}(x)\right)\leq i-D(i)
       \end{cases}$
\item $\begin{cases}
        \deg\left(\omega^{(i)}(x)\right)\leq D(i)-1\\
        \deg\left(\gamma^{(i)}(x)\right)\leq i-D(i)-1
       \end{cases}$       
\end{enumerate}
where  $D$ is the integer function defined in  definition \ref{defi:D(i)}.
\end{prop}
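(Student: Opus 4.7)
The natural approach is induction on $i$, tracking the four degree inequalities simultaneously. The base case $i=0$ is immediate from $D(0)=0$ and the explicit definitions: $\sigma^{(0)}=1$ and $\tau^{(0)}=1$ have degree $0$, while $\omega^{(0)}=0$ and $\gamma^{(0)}=-x^{-1}$ have degree at most $-1$, matching the bounds $D(0)$, $-D(0)$, $D(0)-1$ and $-D(0)-1$.

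For the inductive step, assuming the four inequalities hold at step $i$, I would split according to the two cases in Definition \ref{defi:D(i)}. In the case $\Delta_i=0$ or $2D(i)\geq i+1$, the update rules give $\sigma^{(i+1)}=\sigma^{(i)}-\Delta_ix\tau^{(i)}$, $\tau^{(i+1)}=x\tau^{(i)}$, and similarly for $\omega,\gamma$. The bounds for $\tau^{(i+1)}$ and $\gamma^{(i+1)}$ follow at once from $D(i+1)=D(i)$ and the inductive hypothesis on $\tau^{(i)}$ and $\gamma^{(i)}$, since adding $1$ to $i-D(i)$ gives exactly $(i+1)-D(i+1)$. For $\sigma^{(i+1)}$ I would bound $\deg\sigma^{(i+1)}\leq\max\bigl(\deg\sigma^{(i)},\;1+\deg\tau^{(i)}\bigr)\leq\max(D(i),\,i+1-D(i))$; in the subcase $\Delta_i=0$ the second term is absent and we are done, whereas in the subcase $2D(i)\geq i+1$ we have $i+1-D(i)\leq D(i)=D(i+1)$ so the maximum is still at most $D(i+1)$. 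The analogous computation handles $\omega^{(i+1)}$.

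In the case $\Delta_i\neq 0$ and $2D(i)<i+1$, we have $D(i+1)=i+1-D(i)>D(i)$, the update sets $\tau^{(i+1)}=\sigma^{(i)}/\Delta_i$ and $\gamma^{(i+1)}=\omega^{(i)}/\Delta_i$, so $\deg\tau^{(i+1)}\leq D(i)=(i+1)-D(i+1)$ and $\deg\gamma^{(i+1)}\leq D(i)-1=(i+1)-D(i+1)-1$, as required. For $\sigma^{(i+1)}$ I would again use $\deg\sigma^{(i+1)}\leq\max(D(i),\,1+i-D(i))=i+1-D(i)=D(i+1)$ since $D(i)<i+1-D(i)$ by the case hypothesis, and likewise for $\omega^{(i+1)}$.

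The routine part is the four degree-comparisons in each case; the one spot that needs a little care is verifying that $i+1-D(i)\leq D(i)$ under the hypothesis $2D(i)\geq i+1$, which is exactly what lets the bound on $\sigma^{(i+1)}$ survive when we add $\Delta_ix\tau^{(i)}$ to $\sigma^{(i)}$ without $D$ being allowed to increase. This monotonicity trick between the two branches of the definition of $D$ is the whole point of the piecewise definition, and once it is observed the induction closes mechanically on all four inequalities at once.
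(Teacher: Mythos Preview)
Your proof is correct and follows essentially the same approach as the paper: induction on $i$ with a case split according to the two branches of the definition of $D(i+1)$, using that $\max\bigl(D(i),\,i+1-D(i)\bigr)=D(i+1)$ in either branch to control $\deg\sigma^{(i+1)}$ (and analogously $\deg\omega^{(i+1)}$). The paper organizes the cases slightly differently---it treats $\tau^{(i+1)}$ first and then $\sigma^{(i+1)}$, leaving (2) to the reader as ``almost identical''---but the argument is the same.
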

\begin{proof} We will prove both claims by induction. The base case of (1)  is  trivial. Suppose that the inequalities are true for $i$ and  consider the polynomial                                                                                                   $\tau^{(i+1)}(x)$. Depending on the values of $\Delta_i$ and $D(i)$ we have:
  $$\deg\left(\tau^{(i+1)}\right)=
  \begin{cases}
    1+\deg\left(\tau^{(i)}\right)\leq 1+i-D(i)=1+i-D(i+1)\\
    \deg\left(\sigma^{(i)}\right)\leq D(i)=i+1-D(i+1)
  \end{cases}$$  
So, in both cases, $\deg\left(\tau^{(i+1)}(x)\right)\leq i+1-D(i+1)$. Moreover, using that 
$$\sigma^{(i+1)}(x)=\sigma^{(i)}(x)-\Delta_ix\tau^{(i)}(x)$$
we have that
\begin{align*}
  &\text{if }\Delta_i =0 \Longrightarrow \deg\left(\sigma^{(i+1)}\right)=\deg\left(\sigma^{(i)}\right)\leq D(i)=D(i+1)\\
   &\text{if }\Delta_i \neq 0 \Longrightarrow \deg\left(\sigma^{(i+1)}\right)\leq\max\left\{D(i), i+1-D(i)\right\}=D(i+1)
\end{align*}
and (1) is proved.
The proof of (2) is almost identical to that of (1), using the relation that defines $\omega^{(i+1)}(x)$ (definition \ref{defi:sigma_omega_i}).\\
\end{proof}

Consider now the set of all possible solutions of the key equation (\ref{keyequation_i}), that is
$$\mathcal M_i=\left\{(a(x),b(x))\in \F[x]^2 \;|\; a(x)S(x)\equiv b(x) \Mod{x^i}\right\}$$
Clearly $\mathcal M_i$ is a free submodule of rank 2 of $\F[x]^2$, since $\left\{(1, S(x)), (0,x^i)\right\}$ is obviously a basis of $\mathcal M_i$. In the following theorem we will show that even the set
$\left\{\left(\sigma^{(i)}(x),\omega^{(i)}(x)\right), \left(x\tau^{(i)}(x), x\gamma^{(i)}(x)\right)\right\}$ is a base for $\mathcal M_i$.

\begin{thm}\label{thm:BM2}
 Let $i$ be a fixed index in $\{1,\dots,d-1\}$. If $\left(a(x),b(x)\right)\in \mathcal M_i$ then there exist polynomials $u(x)$, $v(x) \in \F[x]$ such that $u(0)=a(0)$ and
 $$\begin{cases} a(x)=u(x)\sigma^{(i)}(x)+v(x)x\tau^{(i)}(x)\\
                 b(x)=u(x)\omega^{(i)}(x)+v(x)x\gamma^{(i)}(x)
   \end{cases}$$
 \noindent Moreover, if there exists $\delta\in\N^+$ such that $\deg(a(x))\leq\delta$ and $\deg(b(x))\leq\delta-1$ then we have $\deg(u(x))\leq \delta-D(i) \text{ and } \deg(v(x))\leq \delta+D(i)-i-1$.
\end{thm}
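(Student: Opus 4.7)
The plan is to solve the system
\[
\begin{pmatrix} \sigma^{(i)}(x) & x\tau^{(i)}(x) \\ \omega^{(i)}(x) & x\gamma^{(i)}(x) \end{pmatrix} \begin{pmatrix} u(x) \\ v(x) \end{pmatrix} = \begin{pmatrix} a(x) \\ b(x) \end{pmatrix}
\]
explicitly, using the ``Wronskian'' identity $\omega^{(i)}(x)\tau^{(i)}(x)-\sigma^{(i)}(x)\gamma^{(i)}(x)=x^{i-1}$ established in proposition \ref{prop:BM0}. Multiplying by $x$, the determinant of the coefficient matrix equals $-x^i$, which is nonzero in $\F(x)$, so Cramer's rule forces the \emph{unique} choice
\[
u(x)=\frac{b(x)\tau^{(i)}(x)-a(x)\gamma^{(i)}(x)}{x^{i-1}}, \qquad v(x)=\frac{a(x)\omega^{(i)}(x)-b(x)\sigma^{(i)}(x)}{x^{i}}
\]
in $\F(x)$. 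The work is then to show that these two rational functions actually lie in $\F[x]$, satisfy $u(0)=a(0)$, and obey the claimed degree bounds.

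The integrality step is the crux, and it reduces to the congruences from proposition \ref{prop:BM0}. For $v(x)$: since $(a,b)\in\mathcal{M}_i$ we have $b\equiv aS\Mod{x^i}$, and $\omega^{(i)}\equiv\sigma^{(i)}S\Mod{x^i}$, so
\[
a\omega^{(i)}-b\sigma^{(i)}\equiv aS\sigma^{(i)}-aS\sigma^{(i)}\equiv 0\Mod{x^i},
\]
which means $x^i$ divides the numerator. For $u(x)$: reducing $b\equiv aS\Mod{x^i}$ and $\gamma^{(i)}\equiv\tau^{(i)}S-x^{i-1}\Mod{x^i}$ modulo $x^{i-1}$ yields
\[
b\tau^{(i)}-a\gamma^{(i)}\equiv aS\tau^{(i)}-a\tau^{(i)}S\equiv 0\Mod{x^{i-1}}.
\]
Hence both $u(x)$ and $v(x)$ are genuine polynomials in $\F[x]$.

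To see $u(0)=a(0)$, evaluate the first equation of the system at $x=0$: the term $x\tau^{(i)}(x)v(x)$ vanishes, and $\sigma^{(i)}(0)=1$ (noted in the proof of proposition \ref{prop:BM0}), giving $a(0)=u(0)$ directly. Finally, the degree bounds follow from proposition \ref{prop:BM1} by a routine estimate on the numerators. Under the hypotheses $\Deg{a(x)}\leq\delta$ and $\Deg{b(x)}\leq\delta-1$, each term of $a\omega^{(i)}-b\sigma^{(i)}$ has degree at most $\delta+D(i)-1$, whence $\Deg{v(x)}\leq\delta+D(i)-1-i$; similarly each term of $b\tau^{(i)}-a\gamma^{(i)}$ has degree at most $\delta+i-D(i)-1$, whence $\Deg{u(x)}\leq\delta-D(i)$.

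I do not expect a real obstacle here: once one spots that (\ref{eq:bm0}) is exactly the ``unimodular up to a power of $x$'' condition needed for the pair $(\sigma^{(i)},\omega^{(i)})$, $(x\tau^{(i)},x\gamma^{(i)})$ to span $\mathcal{M}_i$ as a free $\F[x]$-module, the proof is a mechanical verification. The only point that requires a bit of care is that for $u$ one must divide out only $x^{i-1}$ (not $x^i$), which is why the relevant congruence for $\gamma^{(i)}$ is taken modulo $x^{i-1}$ rather than $x^i$.
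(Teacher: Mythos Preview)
Your proof is correct and essentially the same as the paper's: both define $u(x)=(b\tau^{(i)}-a\gamma^{(i)})/x^{i-1}$ and $v(x)=(a\omega^{(i)}-b\sigma^{(i)})/x^{i}$, use the congruences of proposition \ref{prop:BM0} to see these are polynomials, and then read off the degree bounds from proposition \ref{prop:BM1}. The only cosmetic differences are that you frame the definition via Cramer's rule (the paper writes out the verification of the system directly using (\ref{eq:bm0})) and that you obtain $u(0)=a(0)$ by evaluating the first equation at $0$, whereas the paper extracts the coefficient $a(0)$ of $x^{i-1}$ explicitly from the congruence $a\gamma^{(i)}\equiv b\tau^{(i)}-a(0)x^{i-1}\Mod{x^{i}}$.
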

\begin{proof}
By hypothesis we have that 
$$a(x)\omega^{(i)}(x)\equiv a(x)\sigma^{(i)}(x)S(x)\equiv b(x)\sigma^{(i)}(x) \Mod{ x^{i}}$$
$$a(x)\gamma^{(i)}(x)\equiv a(x)\left[\tau^{(i)}(x)S(x)-x^{i-1}\right]\equiv b(x)\tau^{(i)}(x)-a(0)x^{i-1} \Mod{ x^{i}}$$
so there exist polynomials $\tilde{u}(x)$ and $v(x)$ such that 
$$a(x)\omega^{(i)}(x)- b(x)\sigma^{(i)}(x)= x^{i}v(x)$$
$$a(x)\gamma^{(i)}(x)- b(x)\tau^{(i)}(x)= x^{i-1}\left[x\tilde u(x)-a(0)\right]$$
We define $u(x)=a(0)-x\tilde u(x)$, hence we have that $u(0)=a(0)$ and
\begin{equation}\label{eq:v}
 a(x)\omega^{(i)}(x)- b(x)\sigma^{(i)}(x)= x^{i}v(x)
\end{equation}
\begin{equation}\label{eq:u}
-a(x)\gamma^{(i)}(x)+b(x)\tau^{(i)}(x)= x^{i-1}u(x)
\end{equation}
from which, using (\ref{eq:bm0}), it follows that:
\begin{align*}
x^{i-1}&a(x)=\left[\omega^{(i)}(x)\tau^{(i)}(x)-\sigma^{(i)}(x)\gamma^{(i)}(x)\right]a(x)=\\
&=\tau^{(i)}(x)\left[a(x)\omega^{(i)}(x)-b(x)\sigma^{(i)}(x)\right]-\sigma^{(i)}(x)\left[a(x)\gamma^{(i)}(x)-b(x)\tau^{(i)}(x)\right]=\\
&=x^{i-1}\left[u(x)\sigma^{i)}(x)+v(x)x\tau^{(i)}(x)\right]
\end{align*}
and
\begin{align*}
x^{i-1}&b(x)=\left[\omega^{(i)}(x)\tau^{(i)}(x)-\sigma^{(i)}(x)\gamma^{(i)}(x)\right]b(x)=\\
&=\gamma^{(i)}(x)\left[a(x)\omega^{(i)}(x)-b(x)\sigma^{(i)}(x)\right]-\omega^{(i)}(x)\left[a(x)\gamma^{(i)}(x)-b(x)\tau^{(i)}(x)\right]=\\
&=x^{i-1}\left[u(x)\omega^{i)}(x)+v(x)x\gamma^{(i)}(x)\right]
\end{align*}
So the first part of the theorem \ref{thm:BM2} is proved. From (\ref{eq:v}) and (\ref{eq:u}) it also follows  that
$$i-1+\deg(u(x))\leq \max\left\{ \deg\left(\gamma^{(i)}(x)\right)+\deg\left(a(x)\right), \deg\left(\tau^{(i)}(x)\right)+\deg(b(x))\right\}$$
$$i+\deg(v(x))\leq \max\left\{ \deg\left(\omega^{(i)}(x)\right)+\deg(a(x)), \deg\left(\sigma^{(i)}(x)\right)+\deg(b(x))\right\}$$
If $\deg(b(x))<\deg(a(x))$, using the degree inequalities stated in proposition \ref{prop:BM1}, we obtain that
$$\max\left\{\deg\left(\gamma^{(i)}(x)\right)+\deg(a(x)),\deg\left(\tau^{(i)}(x)\right)+\deg(b(x))\right\}\leq \delta+i-D(i)-1$$
$$\max\left\{ \deg\left(\omega^{(i)}(x)\right)+\deg(a(x)), \deg\left(\sigma^{(i)}(x)\right)+\deg(b(x))\right\}\leq D(i)+\delta-1$$
so we have $\deg(u(x))\leq \delta-D(i)$ and $\deg(v(x))\leq \delta+D(i)-i-1$ and this concludes the proof.\\
\end{proof}

\begin{cor}\label{cor:BM2}
Let $i$ be a fixed index in $\{1,\dots,d-1\}$ and let $\delta\in\N^+$ be less or equal to $\pint{\frac{i}{2}}$. If  $(a(x),b(x))\in \mathcal M_i$, $a(0)\neq 0$, $\deg(a(x))\leq \delta$ and $\deg(b(x))\leq\delta-1$,  then we have 
$D(i)\leq \delta$ and
$$\begin{cases} a(x)=u(x)\sigma^{(i)}(x)\\
                      b(x)=u(x)\omega^{(i)}(x)
        \end{cases}$$
where $u(x)=\gcd(a(x), b(x))\in \F[x]$.

\end{cor}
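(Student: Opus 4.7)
The plan is to derive this corollary directly from Theorem \ref{thm:BM2}, using the extra hypothesis $a(0)\neq 0$ together with the degree bound $\delta\leq \pint{i/2}$ to force the ``auxiliary'' coefficient $v(x)$ to vanish.

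First I would apply Theorem \ref{thm:BM2} to the pair $(a(x),b(x))\in\mathcal M_i$ with the degree bound $\delta$. This produces polynomials $u(x),v(x)\in\F[x]$ satisfying
\begin{equation*}
a(x)=u(x)\sigma^{(i)}(x)+v(x)\,x\tau^{(i)}(x), \qquad b(x)=u(x)\omega^{(i)}(x)+v(x)\,x\gamma^{(i)}(x),
\end{equation*}
together with $u(0)=a(0)$ and the degree estimates $\deg(u(x))\leq \delta-D(i)$ and $\deg(v(x))\leq \delta+D(i)-i-1$.

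The first conclusion $D(i)\leq\delta$ comes for free: since $a(0)\neq 0$, we have $u(0)=a(0)\neq 0$, so $u(x)$ is a nonzero polynomial and $\deg(u(x))\geq 0$, which combined with the bound $\deg(u(x))\leq \delta-D(i)$ forces $D(i)\leq\delta$. For the main identities, I would argue by contradiction that $v(x)=0$. If $v(x)\neq 0$, then $\deg(v(x))\geq 0$ yields $D(i)\geq i+1-\delta$; coupled with $D(i)\leq\delta$ this gives $2\delta\geq i+1$, contradicting $\delta\leq \pint{i/2}$ (which implies $2\delta\leq i<i+1$ in both parity cases). Hence $v(x)=0$, and the displayed formulas $a(x)=u(x)\sigma^{(i)}(x)$, $b(x)=u(x)\omega^{(i)}(x)$ follow immediately.

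It then remains to identify $u(x)$ as $\gcd(a(x),b(x))$. This is the easy step: by Proposition \ref{prop:BM0}, $\gcd(\sigma^{(i)}(x),\omega^{(i)}(x))=1$, so from $a=u\sigma^{(i)}$, $b=u\omega^{(i)}$ we get (up to the usual unit ambiguity in $\F[x]$) $\gcd(a(x),b(x))=u(x)\cdot\gcd(\sigma^{(i)}(x),\omega^{(i)}(x))=u(x)$. No step here is really an obstacle; the only mildly delicate point is the parity analysis of the inequality $\delta\leq\pint{i/2}$ to rule out $v(x)\neq 0$, and this is handled uniformly by the estimate $2\delta\leq i<i+1$.
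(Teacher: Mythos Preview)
Your proof is correct and follows essentially the same route as the paper: apply Theorem~\ref{thm:BM2}, show $D(i)\leq\delta$, deduce $v(x)=0$ from the degree bound and $\delta\leq\pint{i/2}$, and then read off $u(x)=\gcd(a,b)$ via Proposition~\ref{prop:BM0}. The one mild difference is in how you obtain $D(i)\leq\delta$: you use the bound $\deg(u(x))\leq\delta-D(i)$ that is already part of the conclusion of Theorem~\ref{thm:BM2}, whereas the paper re-derives this inequality by going back to the identity $-a(x)\gamma^{(i)}(x)+b(x)\tau^{(i)}(x)=x^{i-1}u(x)$ from inside the proof of the theorem and combining it with Proposition~\ref{prop:BM1}. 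Your argument is therefore slightly more modular, relying only on the statement of Theorem~\ref{thm:BM2} rather than on an equation internal to its proof.
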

\begin{proof}
 Since $(a(x),b(x))\in \mathcal M_i$, from theorem \ref{thm:BM2} it follows that there are $u(x)$ and $v(x)$ in $\F[x]$ such that $u(0)=a(0)$ and
 $$\begin{cases} a(x)=u(x)\sigma^{(i)}(x)+v(x)\tau^{(i)}(x)\\
                 b(x)=u(x)\omega^{(i)}(x)+v(x)\gamma^{(i)}(x)
   \end{cases}$$
 Since $u(0)=a(0)\neq 0$, we have that $\deg(u(x))\geq 0$ hence, from  (\ref{eq:u}) it follows that 
 $$\deg\left(-a(x)\gamma^{(i)}(x)+b(x)\tau^{(i)}(x)\right)\geq i-1$$
 that is 
 $$\Deg{a(x)\gamma^{(i)}(x)}\geq i-1 \quad \text{ or } \quad \Deg{b(x)\tau^{(i)}(x)}\geq i-1$$
 Using proposition \ref{prop:BM1}, we see that
 $$\Deg{a(x)}\geq D(i) \quad \text{ or } \quad \Deg{b(x)}\geq D(i)-1$$
 In both cases, using the hypothesis on the degree of $a(x)$ or $b(x)$, we conclude that $D(i)\leq \delta$. Since $\delta\leq\pint{\frac{i}{2}}$ and $D(i)\leq \delta$, using theorem \ref{thm:BM2}, we have that  
 $$\Deg{v(x)}\leq \delta+D(i)-i-1\leq 2\delta-i-1\leq2\pint{\dfrac{i}{2}}-i-1<0$$ 
 that is $v(x)=0$.  Moreover, since  $\gcd\left(\sigma^{(i)}(x),\omega^{(i)}(x)\right)=1$, it is clear that $u(x)=\gcd(a(x), b(x))$ and the proof is  concluded.\\
\end{proof}

\begin{cor}\label{cor:BM2bis}
 We recall the definitions given on page \pageref{defi:sigma_omega_i} of the polynomials $\sigma^{(i)}(x)$'s and $\omega^{(i)}(x)$'s for $i\in\{0,1,\dots,d-2\}$:\\
 
 $\begin{cases}\sigma^{(0)}=1\\
 \sigma^{(i+1)}(x)=\sigma^{(i)}(x)-\Delta_ix\tau^{(i)}(x)
 \end{cases}$
 $\begin{cases}\omega^{(0)}=0\\
 \omega^{(i+1)}(x)=\omega^{(i)}(x)-\Delta_ix\gamma^{(i)}(x)
 \end{cases}$\\

 \noindent If $e\leq t$, then we have  
 \begin{equation*}\begin{cases} 
     \sigma^{(d-1)}(x)=\sigma(x)\\
     \omega^{(d-1)}(x)=\omega(x)
   \end{cases}\end{equation*}
 Moreover $D(d-1)=e$.
\end{cor}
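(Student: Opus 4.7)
The plan is to derive both identities as an immediate consequence of Corollary \ref{cor:BM2} applied to the pair $(\sigma(x), \omega(x))$, which is essentially the whole point of that uniqueness result.

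First, I would check that the hypotheses of Corollary \ref{cor:BM2} are met with the choice $i = d-1$, $\delta = e$, and $(a(x), b(x)) = (\sigma(x), \omega(x))$. The pair lies in $\mathcal{M}_{d-1}$ by the key equation \eqref{keyequation}. From Remark \ref{rem:sigmaproperties} we know $\sigma(0) = 1 \neq 0$, $\deg(\sigma(x)) = e = \delta$, and $\deg(\omega(x)) \leq e - 1 = \delta - 1$. Finally, the assumption $e \leq t = \lfloor (d-1)/2 \rfloor$ gives $\delta \leq \lfloor i/2 \rfloor$, so all hypotheses are satisfied.

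Next, Corollary \ref{cor:BM2} yields $D(d-1) \leq e$ and the existence of $u(x) = \gcd(\sigma(x), \omega(x)) \in \F[x]$ with $\sigma(x) = u(x)\sigma^{(d-1)}(x)$ and $\omega(x) = u(x)\omega^{(d-1)}(x)$. But Remark \ref{rem:sigmaproperties} tells us $\gcd(\sigma(x), \omega(x)) = 1$, so $u(x)$ is a nonzero constant in $\F$; evaluating at $0$ and using $u(0) = \sigma(0) = 1$ forces $u(x) = 1$. This immediately gives $\sigma^{(d-1)}(x) = \sigma(x)$ and $\omega^{(d-1)}(x) = \omega(x)$, which is the first claim.

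For the equality $D(d-1) = e$, I already have $D(d-1) \leq e$ from the step above. For the reverse inequality, Proposition \ref{prop:BM1} gives $\deg(\sigma^{(d-1)}(x)) \leq D(d-1)$, and since we just showed $\sigma^{(d-1)}(x) = \sigma(x)$ has degree exactly $e$, this forces $D(d-1) \geq e$. Combining the two bounds completes the proof. The argument is essentially mechanical given Corollary \ref{cor:BM2}; there is no real obstacle beyond checking that the degree and non-vanishing conditions on $(\sigma, \omega)$ match the corollary's hypotheses.
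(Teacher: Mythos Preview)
Your proof is correct and follows essentially the same approach as the paper: apply Corollary~\ref{cor:BM2} with $i=d-1$, $\delta=e$, and $(a,b)=(\sigma,\omega)$, then use Proposition~\ref{prop:BM1} to get the reverse inequality $D(d-1)\geq e$. The only difference is that you spell out in more detail why $u(x)=1$ (via $\gcd(\sigma,\omega)=1$ and $u(0)=\sigma(0)=1$), whereas the paper absorbs this into the phrase ``we immediately conclude.''
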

\begin{proof}
 By the key equation (\ref{keyequation}), we already know that $(\sigma(x), \omega(x))\in \mathcal{M}_{d-1}$. Hence, using remark \ref{rem:sigmaproperties} and corollary \ref{cor:BM2}, we immediately conclude that $\sigma^{(d-1)}(x)=\sigma(x)$,  $\omega^{(d-1)}(x)=\omega(x)$ and $D(d-1)\leq e$. 
 Since $$e=\Deg{\sigma(x)}=\Deg{\sigma^{(d-1)}(x)}\leq D(d-1)$$
 the proof is completed.
\end{proof}

Before entering into the details of the \BM decoding algorithm and stating it explicitly, we observe that
if we know the polynomial $\sigma(x)$, then the polynomial $\omega(x)$ is completely determined, since the key equation (\ref{keyequation})  also implies that the coefficients of the error-evaluator polynomial are equal to the coefficients of the product $\sigma(x)S(x)$ for all terms $1,x,\dots, x^{e-1}$. In other words the coefficients of $\omega(x)$ are determined by:
\begin{equation}\label{omega}
 \begin{cases}
  \omega_0=S_1\\
  \omega_i= S_{i+1}+\displaystyle\sum_{j=1}^i S_{i+1-j}\,\sigma_j \quad\quad \text{ for } i=1,2,\dots,e-1
 \end{cases}
\end{equation}
Thus the computation of the polynomials $\omega^{(i)}(x)$'s and $\gamma^{(i)}(x)$'s is not necessary and, in order to save arithmetic operations and memory space, we prefer to calculate the error-evaluator polynomial $\omega(x)$ by (\ref{omega}) after we have computed the error-locator polynomial $\sigma(x)$. \\

\begin{alg}[\textbf{\BM decoding algorithm} for $\RS$]\hspace{2cm}
\label{alg:BM}
\begin{description}
\item[\texttt{Input}:] the received word $\rw(x)$;
 \item[\texttt{Output}:] the codeword $\cw(x)$;
\end{description}
\texttt{Begin}
\begin{enumerate}[\bfseries \BM.1]
        \item \label{BM1}
              \textit{(syndrome computation)}\\[-1cm]
              \begin{tabbing}
               \texttt{for }\= $i=1,2,\dots, d-1$ \texttt{ do}\\
               \> $S_i:=\rw(\alpha^i)$;\\
               \texttt{endfor}               
              \end{tabbing}
        \item \label{BM2}
              \textit{(error-locator polynomial computation)}\\
              $\sigma^{(0)}:=1$;\\ 
              $\tau^{(0)}:=1$;\\
              $D(0):=0$;                   
               \begin{tabbing} 
                 \texttt{for} \= $i=0,1,\dots,d-2$ \texttt{ do}\\
                 \>$\Delta_i:\displaystyle=\sum_{j=0}^{D(i)}S_{i+1-j}\,\sigma_j^{(i)}$;\\
                 \>$\sigma^{(i+1)}(x):=\sigma^{(i)}(x)-\Delta_ix\tau^{(i)}(x)$;\\[0,2cm]
                 \>\texttt{if} \= ($\Delta_i=0$ or $2D(i)\geq i+1$)   \texttt{then}\\ 
                 \>\>$D(i+1):=D(i)$;\\
                 \>\>$\tau^{(i+1)}(x):=x\tau^{(i)}(x)$;\\
                 \>\texttt{else}  \= \\
                 \>\>$D(i+1):=i+1-D(i)$;\\
                 \>\>$\tau^{(i+1)}(x):=\frac{\sigma^{(i)}(x)}{\Delta_i}$;\\
                 \texttt{endfor}\\
                 $e:=D(d-1)$;\\
                 $\sigma(x):=\sigma^{(d-1)}(x)$;
               \end{tabbing}
       \item \label{BM3}
	     \textit{(finding error positions)}\\
             \texttt{calculate  the error positions $p_1,p_2,\dots,p_e$ and the elements $X_1^{-1}, X_2^{-1},\dots,X_e^{-1}$ using Chien's search (alg.\ \ref{alg:chien});}
       \item \label{BM4}
             \textit{(finding error values)}\\[0,4cm]
             $\displaystyle\omega(x):=S_1+\sum_{i=1}^{e-1}\left(S_{i+1}+\sum_{j=1}^i S_{i+1-j}\,\sigma_j\right)x^i\,;$\\
             \begin{tabbing}
               \texttt{for} \= $i=1,2,\dots,e$ \texttt{ do }\\[0,2cm]
               \>$E_i:=-\dfrac{\omega(X_i^{-1})}{\sigma'(X_i^{-1})}$;\\
               \texttt{endfor}
             \end{tabbing}            
\end{enumerate}
\texttt{Return } $\cw(x):=\displaystyle \rw(x)-\sum_{i=1}^e E_{i}x^{p_i};$\\
\texttt{End}\\
\end{alg}

When $e\leq t$, corollary \ref{cor:BM2bis} ensures the correctness of the \BM decoding algorithm. Its computational complexity (ignoring \refbm{BM1} and \refbm{BM3}) is of order $O(t^2)$. In fact, the cost of the $i$th iterative step in \refbm{BM2} is due to:
\begin{itemize}
 \item the computation of the $i$th discrepancy ($i$ multiplications and $i$ additions);
 \item the computation of the coefficients of $\sigma^{(i+1)}(x)$ as 
       $$\begin{cases}
          \sigma^{(i+1)}_0:=1\\     
          \sigma^{(i+1)}_j:=\sigma^{(i)}_j-\Delta_i\tau_{j-1}^{(i)}
         \end{cases}$$
       for any $j=1,2,\dots,i+1$ ($i+1$ multiplications and $i+1$ additions);
 \item the computation of  the coefficients of $\tau^{(i+1)}(x)$ as
       $$\begin{cases}
           \tau^{(i+1)}_0:=0\\
           \tau^{(i+1)}_j:=\tau^{(i)}_{j-1} \\
         \end{cases} \quad\text{ or }\quad 
         \begin{cases}
           \delta:=\Delta_i^{-1}\\
           \tau^{(i+1)}_l:=\delta\cdot\sigma^{(i)}_{l} \\
         \end{cases}     $$
       for  $j=1,2,\dots,i+1$ and  $l=0,1,\dots,i$   (at most one inversion and $i+1$ multiplications);
\end{itemize}
Hence the $i$th iterative step in \refbm{BM2} has a complexity upper bounded by $3i+2$ multiplications, $2i+1$ additions and one inversion. Thus summing for $i=0$ to $i=d-2$ and recalling that $t=\pint{\frac{d-1}{2}}$ (that is $d-1=2t+1$ or $d-1=2t$), we obtain that the total complexity of \refbm{BM2} is upper bounded by:
\begin{center}
 \begin{tabular}{ll}
  $2t+1$ & inversions\\
  $6t^2+7t+4$ & multiplications\\
  $4t^2+4t+1$ & additions
 \end{tabular}
\end{center}
Step \refbm{BM4} does not increase the order of this upper bound because:
\begin{itemize}
 \item the coefficient $\omega_i$ is computed by $i$  multiplications and $i$ additions, thus the computation of $\omega(x)$ requires a number of multiplications and additions less than $e^2$;
 \item the computation of the error $E_i$ requires two polynomial evaluations and one division; using the same procedure seen for Chien's search (see algorithm \ref{alg:chien}) we can evaluate polynomials $\omega(x)$ and $\sigma'(x)$ in all the field elements with a negligible cost compared to the cost of computing the coefficients $\omega_i$'s and the following $e$ divisions $\dfrac{\omega(X_i^{-1})}{\sigma'(X_i^{-1})}$.
 \end{itemize}
 
\begin{rem}
 In section \ref{sect:decodingRS}, we studied Horner's method to evaluate polynomials for computing the syndromes $S_i=\rw(\alpha^i)$. Despite that we prefer use Chien's search to evaluate the polynomials $\omega(x)$ and $\sigma'(x)$ because these polynomials have lower degree than $\rw(x)$ ($n>d-1>t\geq e$) and are evaluated over a wider range of elements than $\rw(x)$. These two differences make it inefficient to implement the evaluation of $\omega(x)$ and $\sigma'(x)$ in a manner similar to used for syndrome computations, since it would takes $n-1$  circuits like figure \ref{fig:horner}. While only two circuits like figure \ref{fig:chien} are sufficient, if we use Chien's search.
\end{rem}

\begin{rem}\label{rem:pipelinedBM}
We note the \BM decoding algorithm \ref{alg:BM} is an example of decoder for Reed-Solomon codes in which the steps 3 and 4 of the general outline are not necessarily consecutive (see section \ref{sect:decodingRS}). Indeed we note that if the computation of $\omega(x)$ is brought forward at the end of \refbm{BM2}, then the evaluations of the polynomials $\omega(x)$ and $\sigma'(x)$ in $\alpha^i$ can be simultaneously implemented with the same evaluation of $\sigma(x)$ done in step \BM.3.
Moreover, each time that the evaluation of $\sigma(x)$ gives a value equal to zero, the division necessary to calculate $E_i$ in \BM.4 can be implement in a separate divider, while the polynomial evaluations go on. For example if $\sigma(\alpha)=0$, then the divider calculates $\dfrac{\omega(\alpha)}{\sigma'(\alpha)}$ in \BM.4 while the circuits that implement Chien's search  continue to evaluate $\sigma(x)$, $\omega(x)$, and $\sigma'(x)$ in $\alpha^2$, $\alpha^3$, \dots
Thanks to this pipelined strategy, after the $n$ loops in which Chien's search evaluates the polynomials 
$\sigma(x)$, $\omega(x)$, and $\sigma'(x)$ in all the field elements, some of the divisions needed to compute the error values may have been already executed. Thus in practice the number of divisions that remain to perform in \BM.4, after the end of step \BM.3, is less than $e$.\\
\end{rem}

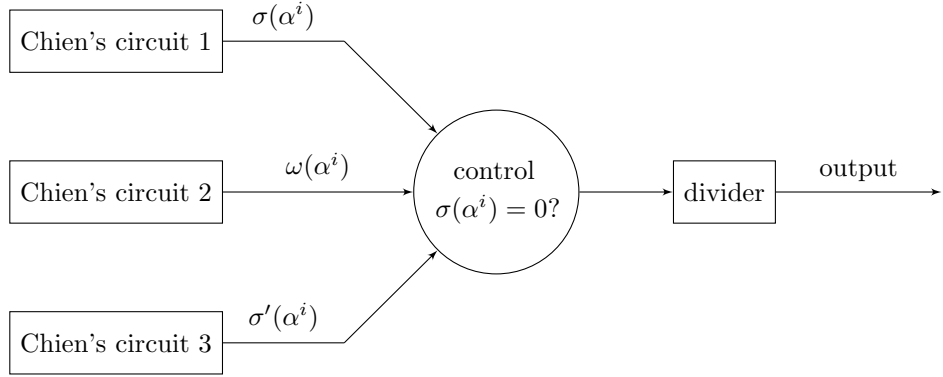
\begin{figure*}[h!]
\begin{center}
\begin{tikzpicture}[auto, node distance=2cm,>=latex']
    \node [block] (c1){\footnotesize Chien's circuit 1};
    \node [block, below of=c1](c2){\footnotesize Chien's circuit 2};
    \node [block, below of=c2](c3){\footnotesize Chien's circuit 3};
    \node [coordinate, right of=c1, node distance=3cm](con){};
  \node [sum, right of=c2, node distance=5cm, text width=16mm](mem){\footnotesize $\;$ control \\  $\sigma(\alpha^i)=0?$};
    \node [coordinate, right of=c3, node distance=3cm](c4){}; 
    \node [block, right of=mem, node distance=3cm](div){\footnotesize divider};
    \node [output, right of=div, node distance=3cm](output){};

    \draw [-] (c1) -- node{{\footnotesize $\sigma(\alpha^i)$}}(con);
    \draw [->] (c2) -- node{{\footnotesize $\omega(\alpha^i)$}}(mem);
    \draw [->] (c4) -- (mem);
    \draw (c3)-- node{{\footnotesize $\sigma'(\alpha^i)$}}(c4);
    \draw [->] (con) -- (mem);
   \draw [->](mem) -- (div);
   \draw [->] (div) -- node{{\footnotesize output}}(output);
\end{tikzpicture}
\caption{pipelined architecture for \BM.3 and \BM.4}\label{fig:pipelinedBM}
\end{center}
\end{figure*}

\begin{rem}
 One last observation on the polynomial evaluation: if the field $\F$ has characteristic equal to $2$, then $$\sigma'(x)=\sum_{\substack{j=1\\ j \text{ odd}}}^e \sigma_jx^j$$
 Thus the value $\sigma'(\alpha^i)$ can be obtained as a byproduct of the computation of  $\sigma(\alpha^i)$ summing the odd terms of the latter.\\
\end{rem}

In addition, we show with the next proposition that it is possible to obtain a new formula to calculate error values in which the error-evaluator polynomial $\omega(x)$ doesn't appear. 
\begin{prop}\label{prop:BM3} Let $\tau^{(d-1)}(x)$ be the last polynomial of the polynomial sequence $\left\{\tau^{(j)}(x)\right\}_{j=0}^{d-1}$ (definition \ref{defi:tau_gamma}) and let $\sigma'(x)$ be the formal derivative of the error-locator polynomial $\sigma(x)$. Then for any $i\in\{1,2,\dots,e\}$, we have that
\begin{equation}\label{forney2}
E_i=-\frac{\left(X_i^{-1}\right)^{d-2}}{\sigma'(X_i^{-1})\tau^{(d-1)}(X_i^{-1})}
\end{equation}
\end{prop}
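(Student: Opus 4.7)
The plan is to reduce the claimed formula to the ordinary Forney formula \eqref{forney} by re-expressing $\omega(X_i^{-1})$ in terms of $\tau^{(d-1)}(X_i^{-1})$. The key resource is identity \eqref{eq:bm0} from Proposition \ref{prop:BM0}, which at the index $i = d-1$ reads
\[
\omega^{(d-1)}(x)\tau^{(d-1)}(x) - \sigma^{(d-1)}(x)\gamma^{(d-1)}(x) = x^{d-2}.
\]
By Corollary \ref{cor:BM2bis}, under the hypothesis $e \leq t$ we have $\sigma^{(d-1)}(x) = \sigma(x)$ and $\omega^{(d-1)}(x) = \omega(x)$, so this identity becomes
\[
\omega(x)\tau^{(d-1)}(x) - \sigma(x)\gamma^{(d-1)}(x) = x^{d-2}.
\]

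The next step is to specialize at $x = X_i^{-1}$, for each $i \in \{1, 2, \dots, e\}$. By Remark \ref{rem:sigmaproperties}, the element $X_i^{-1}$ is a root of $\sigma(x)$, so the second term vanishes and we obtain
\[
\omega(X_i^{-1})\,\tau^{(d-1)}(X_i^{-1}) = (X_i^{-1})^{d-2}.
\]
Before dividing I need to know that $\tau^{(d-1)}(X_i^{-1}) \neq 0$, and this is immediate: if the product vanished we would have $(X_i^{-1})^{d-2} = 0$, forcing $X_i = 0$, contradicting the definition $X_i = \alpha^{p_i} \in \F^*$.

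Hence
\[
\omega(X_i^{-1}) = \frac{(X_i^{-1})^{d-2}}{\tau^{(d-1)}(X_i^{-1})}.
\]
Plugging this into Forney's formula \eqref{forney} yields the claim. There is no real obstacle here; the proof is almost mechanical once identity \eqref{eq:bm0} and Corollary \ref{cor:BM2bis} are in hand. The only point that requires a brief argument (rather than a quotation from an earlier result) is the non-vanishing of $\tau^{(d-1)}(X_i^{-1})$, handled above in one line.
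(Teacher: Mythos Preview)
Your proof is correct and follows exactly the same route as the paper: specialize identity \eqref{eq:bm0} at $i=d-1$, evaluate at $X_i^{-1}$ using $\sigma(X_i^{-1})=0$, and substitute the resulting expression for $\omega(X_i^{-1})$ into Forney's formula. You even add a small justification the paper omits, namely the non-vanishing of $\tau^{(d-1)}(X_i^{-1})$.
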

\begin{proof}
We write the relation (\ref{eq:bm0}) seen in proposition \ref{prop:BM0} setting $i=d-1$ and we obtain that
$$\omega(x)\tau^{(d-1)}(x)-\sigma(x)\gamma^{(d-1)}(x)=x^{d-2}$$
Evaluating the former equality in $X_i^{-1}$, we have 
$$\omega(X_i^{-1})=\frac{\left(X_i^{-1}\right)^{d-2}}{\tau^{(d-1)}(X_i^{-1})}$$
At last  substituting this expression in Forney's formula  we can conclude the proof.\\
\end{proof}
For this reason, it is possible to implement the \BM decoder without computing the error-evaluator polynomial $\omega(x)$ saving memory space and operations. In this case  we can substitute  step \refbm{BM4}  with
 \begin{enumerate}[\bfseries \BM.4b]
   \item \label{BM4b} 
         \begin{tabbing}
           \texttt{for } \= $i=1,2,\dots,e$ \texttt{ do }\\[0,2cm]
           \>$E_i:=-\dfrac{\left(X_i^{-1}\right)^{d-2}}{\sigma'(X_i^{-1})\tau^{(d-1)}(X_i^{-1})}$; \\
           \texttt{endfor}
         \end{tabbing}        
 \end{enumerate}
Step \BM.4b can be implemented via the same pipelined strategy described in remark \ref{rem:pipelinedBM} for \refbm{BM4}. In this way the polynomial evaluation needed by (\ref{forney2}) are simultaneously implemented with step \BM.3 and afterwards step \BM.4b requires at most $e$ divisions and $e$ multiplications.\\

We give an example of decoding of a Reed-Solomon code using the \BM decoding algorithm \ref{alg:BM}:
\begin{ex}\label{ex_bm} 
Let $\alpha$ be a primitive element of $\mathbb{F}_{2^4}$ satisfying $\alpha^4+\alpha+1=0$. Consider  over $\mathbb{F}_{2^4}$ the code, $RS(15,9,\alpha)$ generated by 
$$g(x)=(x-\alpha)(x-\alpha^2)\cdots(x-\alpha^8)$$
The code has distance $9$, so $t=4$.  Suppose that the sent codeword is $\cw=\vct{0}$ and the error vector is
$$\e(x)=\alpha^2x^2+\alpha x^8+\alpha^7x^{13}$$
Clearly $\rw(x)=\e(x)$ and the syndrome values are:
$$\begin{array}{llll}
    S_1=\rw(\alpha)=\alpha^{12} & S_2=\rw(\alpha^2)=0    &S_3=\rw(\alpha^3)=0  & S_4=\rw(\alpha^4)=\alpha^{5}\\
    S_5=\rw(\alpha^5)=\alpha^{11}   & S_6=\rw(\alpha^6)=\alpha^{13} &S_7=\rw(\alpha^7)=\alpha^{3} &S_8=\rw(\alpha^8)=\alpha
  \end{array}$$ 
Hence the iterations in \refbm{BM2} are:
\begin{tabbing}
  ($i=0$)$\xrightarrow{\hspace*{1cm}}\,$ \= $\Delta_0=S_1=\alpha^{12}$;\\
                          \>$\sigma^{(1)}(x)=\sigma^{(0)}(x)-\Delta_0x\tau^{(0)}=1+\alpha^{12}x$; \\
                          \>$\Delta_0\neq0 \text{ and }2D(0)<1\Longrightarrow$ \= $D(1)=1$;\\
                          \>\> $\tau^{(1)}(x)=\dfrac{1}{\Delta_0}=\alpha^3$;\\
\end{tabbing}
\begin{tabbing}
  ($i=1$)$\xrightarrow{\hspace*{1cm}}\,$ \= $\Delta_1=S_2+S_1\sigma_1^{(1)}=\alpha^{9}$;\\
                          \>$\sigma^{(2)}(x)=\sigma^{(1)}(x)-\Delta_1x\tau^{(1)}=1$; \\
                          \>$2D(1)\geq2\Longrightarrow$ \= $D(2)=1$;\\
                          \>\> $\tau^{(2)}(x)=\alpha^3x $;\\
\end{tabbing}
\begin{tabbing}
  ($i=2$)$\xrightarrow{\hspace*{1cm}}\,$ \= $\Delta_2=S_3+S_2\sigma_1^{(2)}=0$;\\
                          \>$\sigma^{(3)}(x)=\sigma^{(2)}(x)=1$; \\
                          \>$\Delta_2=0 \Longrightarrow$ \= $D(3)=1$;\\
                          \>\> $\tau^{(3)}(x)=\alpha^3x^2 $;\\
\end{tabbing}
\begin{tabbing}
  ($i=3$)$\xrightarrow{\hspace*{1cm}}\,$   \= $\Delta_3=S_4+S_3\sigma_1^{(3)}=\alpha^{5}$;\\
                          \>$\sigma^{(4)}(x)=\sigma^{(3)}(x)-\Delta_3x\tau^{(3)}=1+\alpha^8x^3$; \\
                          \>$\Delta_3\neq0 \text{ and }2D(3)<4\Longrightarrow$ \= $D(4)=3$;\\
                          \>\> $\tau^{(4)}(x)=\dfrac{\sigma^{(3)}}{\Delta_3}=\alpha^{10}$;\\
\end{tabbing}
\begin{tabbing}
  ($i=4$)$\xrightarrow{\hspace*{1cm}}\,$ \= $\Delta_4=S_5+S_4\sigma_1^{(4)}+S_3\sigma_2^{(4)}+S_2\sigma_3^{(4)}=\alpha^{11}$;\\
                          \>$\sigma^{(5)}(x)=\sigma^{(4)}(x)-\Delta_4x\tau^{(4)}=1+\alpha^6x+\alpha^8x^3$; \\
                          \>$2D(4)\geq5\Longrightarrow$ \= $D(5)=3$;\\
                          \>\> $\tau^{(5)}(x)=\alpha^{10}x$;\\
\end{tabbing}
\begin{tabbing}
  ($i=5$)$\xrightarrow{\hspace*{1cm}}\,$\= $\Delta_5=S_6+S_5\sigma_1^{(5)}+S_4\sigma_2^{(5)}+S_3\sigma_3^{(5)}=\alpha^{14}$;\\
                   \>$\sigma^{(6)}(x)=\sigma^{(5)}(x)-\Delta_5x\tau^{(5)}=1+\alpha^6x+\alpha^9x^2+\alpha^8x^3$; \\
                          \>$2D(5)\geq6\Longrightarrow$ \= $D(6)=3$;\\
                          \>\> $\tau^{(6)}(x)=\alpha^{10}x^2$;\\
\end{tabbing}
\begin{tabbing}
 ($i=6$)$\xrightarrow{\hspace*{1cm}}\,$\=$\Delta_6=S_7+S_6\sigma_1^{(6)}+S_5\sigma_2^{(6)}+S_4\sigma_3^{(6)}=0$;\\
                          \>$\sigma^{(7)}(x)=\sigma^{(6)}(x)=1+\alpha^6x+\alpha^9x^2+\alpha^8x^3$; \\
                          \>$\Delta_6=0 \Longrightarrow$ \= $D(7)=3$;\\
                          \>\> $\tau^{(7)}(x)=\alpha^{10}x^3$;\\
\end{tabbing}
\begin{tabbing}
 ($i=7$)$\xrightarrow{\hspace*{1cm}}\,$\=$\Delta_7=S_8+S_7\sigma_1^{(7)}+S_6\sigma_2^{(7)}+S_5\sigma_3^{(7)}=0$;\\
                         \>$\sigma^{(8)}(x)=\sigma^{(7)}(x)=1+\alpha^6x+\alpha^9x^2+\alpha^8x^3$; \\
                          \>$\Delta_7=0\Longrightarrow$ \= $D(8)=3$;\\
                          \>\> $\tau^{(8)}(x)=\alpha^{10}x^4$;
\end{tabbing}
At the end  \refbm{BM2}  sets $e=D(8)=3$ and 
$$\sigma(x)=\alpha^8x^3+\alpha^9x^2+\alpha^6x+1=
\left(1+\alpha^{2}x\right)\left(1+\alpha^8x\right)\left(1+\alpha^{13}x\right)$$
After that \refbm{BM3} calculates that
$$X_1=\alpha^{2}, \quad X_2=\alpha^{8}, \quad X_3=\alpha^{13}$$
Hence the error positions are 
$$p_1=2, \quad p_2=8, \quad p_3=13$$
Finally in \refbm{BM4} the error-evaluator polynomial is computed 
$$\omega(x)=\alpha^6x^2+\alpha^3x+\alpha^{12}$$
and the error values are found by  Forney's formula:
$$E_1=\alpha^2,\quad E_2=\alpha, \quad E_3=\alpha^7 $$
Thus 
$$\e(x)=E_1x^{p_1}+E_2x^{p_2}+E_3x^{p_3}=\alpha^2x^2+\alpha x^8+\alpha^7x^{13}$$
and the received word $\rw$ is correctly decoded in $\cw=\rw-\e=\vct{0}$.\\
\end{ex}

\subsection*{Inversionless \BM decoder}
 As remembered in section \ref{sect:decodingRS}, the inversion is one of the most complicated operations  among those of finite field arithmetic. For this reason, an implementation  in which inversions can be replaced by some multiplications is preferable.
 
 For the \BM decoder the discrepancy inversions executed the step \BM.2 can be avoided introducing the following tools:
 \begin{defi}\label{defi:inversionless}
  We recursively define the function $\beta:\N\rightarrow\F^{\,*}$ and the polynomial families $\left\{\myhat\sigma^{(i)}(x)\right\}_{i=0}^{d-1}$ and $\left\{\myhat\tau^{(i)}(x)\right\}_{i=0}^{d-1}$ as $\beta(0)=1$, $\myhat\sigma^{(0)}(x)=1$,\\ $\myhat\tau^{(0)}(x)=1$ and
   \begin{align*}
    &\myhat\sigma^{(i+1)}(x)=\beta(i)\myhat\sigma^{(i)}(x)-\myhat\Delta_ix\myhat\tau^{(i)}(x)\\[0,2cm]
  &\myhat\tau^{(i+1)}(x)=\begin{cases} x\myhat\tau^{(i)}(x)\quad&\text{ if }\;\myhat\Delta_i=0 \text{ or } 2D(i)\geq i+1\\
                 \myhat\sigma^{(i)}(x)\quad&\text{ if }\;\myhat\Delta_i\neq0\text{ and }2D(i)<i+1
        \end{cases}\\
  &\beta(i+1)=\begin{cases} \beta(i)\quad&\text{ if }\;\myhat\Delta_i=0 \text{ or } 2D(i)\geq i+1\\
                \myhat\Delta_i\quad&\text{ if }\;\myhat\Delta_i\neq0\text{ and }2D(i)<i+1
        \end{cases} 
 \end{align*}
where $\myhat\Delta_i$ is  the coefficient of the term $x^i$ in the polynomial $\myhat\sigma^{(i)}(x)S(x)$.
\end{defi}

It is immediate to prove by induction that:
\begin{prop}\label{prop:inversionless}
 If $i\in\{0,1,\dots,d-1\}$, then 
 $$\myhat\tau^{(i)}(x)=\beta(i)\tau^{(i)}(x)\quad \text{ and }\quad \myhat\sigma^{(i)}(x)=\displaystyle\pt{\prod_{j=0}^{i-1}\beta(j)}\sigma^{(i)}(x)$$
 \end{prop}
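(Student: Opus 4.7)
The plan is a straightforward induction on $i$. I would handle both equalities simultaneously because they feed into each other through the recursive definitions.

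\textbf{Base case.} For $i=0$, we have $\myhat\tau^{(0)}(x)=1$, $\beta(0)=1$, $\tau^{(0)}(x)=1$, so $\myhat\tau^{(0)}(x)=\beta(0)\tau^{(0)}(x)$. Similarly $\myhat\sigma^{(0)}(x)=1=\sigma^{(0)}(x)$ and the empty product $\prod_{j=0}^{-1}\beta(j)$ equals $1$, so the second identity holds as well.

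\textbf{Inductive step.} Assume both identities hold at step $i$. The first key observation is that, since $\myhat\sigma^{(i)}(x)$ and $\sigma^{(i)}(x)$ differ only by the nonzero scalar $\mu_i := \prod_{j=0}^{i-1}\beta(j) \in \F^{\,*}$, the coefficient of $x^i$ in $\myhat\sigma^{(i)}(x)S(x)$ is exactly $\mu_i$ times the coefficient of $x^i$ in $\sigma^{(i)}(x)S(x)$; that is,
\[
\myhat\Delta_i \;=\; \mu_i\,\Delta_i.
\]
In particular $\myhat\Delta_i=0$ if and only if $\Delta_i=0$, so the branching condition governing the definitions of $\myhat\tau^{(i+1)}$, $\tau^{(i+1)}$ and $\beta(i+1)$ is the same for the hatted and un-hatted families.

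From here I would split into the two cases of Definition \ref{defi:inversionless}. In the case $\myhat\Delta_i=0$ or $2D(i)\geq i+1$, $\beta(i+1)=\beta(i)$ and $\myhat\tau^{(i+1)}(x)=x\myhat\tau^{(i)}(x)=\beta(i)\cdot x\tau^{(i)}(x)=\beta(i+1)\tau^{(i+1)}(x)$. In the other case, $\beta(i+1)=\myhat\Delta_i=\mu_i\Delta_i$ and $\myhat\tau^{(i+1)}(x)=\myhat\sigma^{(i)}(x)=\mu_i\sigma^{(i)}(x)=\mu_i\Delta_i\cdot\frac{\sigma^{(i)}(x)}{\Delta_i}=\beta(i+1)\tau^{(i+1)}(x)$. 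Either way, the first identity is proved at step $i+1$.

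For the second identity, I would substitute the inductive hypotheses and the computation $\myhat\Delta_i=\mu_i\Delta_i$ directly into the recursion for $\myhat\sigma^{(i+1)}$, factoring out $\beta(i)\mu_i=\mu_{i+1}$:
\[
\myhat\sigma^{(i+1)}(x)=\beta(i)\mu_i\sigma^{(i)}(x)-\mu_i\Delta_i\cdot x\beta(i)\tau^{(i)}(x)=\mu_{i+1}\bigl[\sigma^{(i)}(x)-\Delta_i x\tau^{(i)}(x)\bigr]=\mu_{i+1}\sigma^{(i+1)}(x).
\]
This closes the induction. There is no real obstacle: the whole argument rests on the single observation that $\myhat\Delta_i$ and $\Delta_i$ differ by the same scalar factor $\mu_i$ that relates $\myhat\sigma^{(i)}$ to $\sigma^{(i)}$, which is exactly what makes the branching conditions coincide and lets the scalar factors propagate cleanly through the recursion.
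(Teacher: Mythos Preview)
Your proof is correct and follows essentially the same approach as the paper: induction on $i$, with the key observation that $\myhat\Delta_i=\bigl(\prod_{j=0}^{i-1}\beta(j)\bigr)\Delta_i$ allowing the scalar factors to propagate through both recursions. If anything, you are slightly more explicit than the paper in pointing out that $\mu_i\in\F^{\,*}$ forces the branching conditions for the hatted and un-hatted families to coincide, a point the paper leaves implicit.
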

 \begin{proof}
  The base case is trivial. Suppose that the equalities are true for $i$, then
  $$\myhat\Delta_i=\displaystyle\pt{\prod_{j=0}^{i-1}\beta(j)}\Delta_i$$
  and thus
  \begin{align*}
   \myhat\sigma^{(i+1)}&=\beta(i)\myhat\sigma^{(i)}(x)-\myhat\Delta_ix\myhat\tau^{(i)}(x)=\\
   &=\beta(i)\pt{\prod_{j=0}^{i-1}\beta(j)}\sigma^{(i)}(x)
   -\displaystyle\pt{\prod_{j=0}^{i-1}\beta(j)}\Delta_ix\beta(i)\tau^{(i)}(x)=\\
   &=\pt{\prod_{j=0}^{i}\beta(j)}\pt{\sigma^{(i)}(x)-\Delta_ix\tau^{(i)}(x)}=\pt{\prod_{j=0}^{i}\beta(j)}\sigma^{(i+1)}
 \end{align*}
 Moreover we have that either 
 $$\myhat\tau^{(i+1)}(x)= x\myhat\tau^{(i)}(x)=x\beta(i)\tau^{(i)}(x)=\beta(i+1)\tau^{(i+1)}(x)$$
 or
 $$\myhat\tau^{(i+1)}(x)= \myhat\sigma^{(i)}(x)=\displaystyle\pt{\prod_{j=0}^{i-1}\beta(j)}\sigma^{(i)}(x)=
 \dfrac{\myhat\Delta_i}{\Delta_i}\sigma^{(i)}(x)=\beta(i+1)\tau^{(i+1)}(x)$$
 This concludes the proof.\\
 \end{proof}

The \BM decoding algorithm \ref{alg:BM} implemented with the polynomials $\myhat\sigma^{(i)}(x)$ and $\myhat\tau^{(i)}(x)$ in the place of the polynomials $\sigma^{(i)}(x)$ and $\tau^{(i)}(x)$ actually finds scalar multiples $b\cdot\sigma(x)$ and $b\cdot\omega(x)$ instead of the polynomials $\sigma(x)$ and $\omega(x)$, where
$$b=\prod_{j=0}^{d-2}\beta(j)$$
However, it is obvious that Chien's search will find the same error positions and it follows from Forney's formula (\ref{forney}) that the same error vales are obtained. While formula (\ref{forney2}) has to be modified in
$$E_i=-\frac{b\,\beta(d-1)\left(X_i^{-1}\right)^{d-2}}{\myhat\sigma'(X_i^{-1})\myhat\tau^{(d-1)}(X_i^{-1})}$$

\section{Avoiding Malfunctions in  \BM} \label{sect:BMmalf}
In this section we will study the necessary and sufficient conditions to make the \BM decoding algorithm \ref{alg:BM} for Reed-Solomon codes a $t$-bounded distance decoder.\\

Consider again the code $RS(n,d,\alpha)$ with error correction capability equal to $t$. Let $\cw$, $\e$ and $\rw=\cw+\e$ respectively be the transmitted codeword, the vector error and the received word. 
The input given to the \BM decoding algorithm \ref{alg:BM} is $\rw$ and no assumption is made on the weight of $\e$. Let $\mathcal{B}$ be the set defined in section \ref{sect:error-correting codes}, we have already shown that if $\rw\in\mathcal B$ the \BM decoding algorithm \ref{alg:BM} either decodes correctly $\rw$ (when $\wt(\e)\leq t$) or runs into an inevitable decoder error occurs (if $\wt(\e)> t$).  It remains to study what happens when $\rw \notin \mathcal B$.  For this aim we denote:
\begin{itemize}
 \item[-]$\tl{e}$ the degree of $\sigma^{(d-1)}(x)$ and   $$\sigma^{(d-1)}(x)=\sigma^{(d-1)}_{\tl{e}}x^{\tl{e}}+\sigma^{(d-1)}_{\tl{e}-1}x^{\tl{e}-1}+\cdots+\sigma^{(d-1)}_1 x+1,$$
 \item[-]$\tl{X}_1,\tl{X}_2,..., \tl{X}_{\tl{e}}$ the inverses of the roots of $\sigma^{(d-1)}(x)$ calculated in \refbm{BM3};
 \item[-]$\tl{E}_i$ the $i$th assumed error value,
 \item[-]$\vct{\tl{e}}$ the assumed error vector.
\end{itemize}
Moreover we define $\tl{S}_i=\vct{\tl{e}}(\alpha^i)$ and $\tl{S}(x)=\tl{S}_1+\tl{S}_2x+\cdots+\tl{S}_{d-1}x^{d-1}$. Finally, let  $\vct{\tl{c}}=\rw-\vct{\tl{e}}$ be again the output vector.

First of all we give two example of possible malfunctions for the \BM decoder.
\begin{ex}\label{ex_bm_mal1}
Let $\alpha$ be a primitive element of $\mathbb{F}_{2^4}$ satisfying $\alpha^4+\alpha+1=0$. Consider  over $\mathbb{F}_{2^4}$ the code, $RS(15,5,\alpha)$ generated by 
$$g(x)=(x-\alpha)(x-\alpha^2)(x-\alpha^3)(x-\alpha^4)$$
The code has distance $5$, so $t=2$.  Suppose that the sent codeword is $\cw=\vct{0}$ and the error vector is
$$\e(x)=\alpha^6+\alpha^3x+\alpha^4x^2+x^{7}$$
Clearly $\rw(x)=\e(x)$, $e=4$  and the syndrome values are:
$$\begin{array}{ll}
    S_1=\rw(\alpha)=\alpha^{3} & S_2=\rw(\alpha^2)=\alpha^5 \\
    S_3=\rw(\alpha^3)=\alpha^7  & S_4=\rw(\alpha^4)=\alpha^{8}\\
  \end{array}$$ 
The iterations \refbm{BM2} are:
\begin{tabbing}
  ($i=0$)$\xrightarrow{\hspace*{1cm}}\,$ \= $\Delta_0=S_1=\alpha^{3}$;\\
                          \>$\sigma^{(1)}(x)=\sigma^{(0)}(x)-\Delta_0x\tau^{(0)}=1+\alpha^{3}x$; \\
                          \>$\Delta_0\neq0 \text{ and }2D(0)<1\Longrightarrow$ \= $D(1)=1$;\\
                          \>\> $\tau^{(1)}(x)=\dfrac{1}{\Delta_0}=\alpha^{12}$;\\
\end{tabbing}
\begin{tabbing}
  ($i=1$)$\xrightarrow{\hspace*{1cm}}\,$ \= $\Delta_1=S_2+S_1\sigma_1^{(1)}=\alpha^5+\alpha^6=\alpha^{9}$;\\
                          \>$\sigma^{(2)}(x)=\sigma^{(1)}(x)-\Delta_1x\tau^{(1)}=1+\alpha^2x$; \\
                          \>$2D(1)\geq2\Longrightarrow$ \= $D(2)=1$;\\
                          \>\> $\tau^{(2)}(x)=\alpha^{12}x $;\\
\end{tabbing}
\begin{tabbing}
  ($i=2$)$\xrightarrow{\hspace*{1cm}}\,$ \= $\Delta_2=S_3+S_2\sigma_1^{(2)}=\alpha^7+\alpha^7=0$;\\
                          \>$\sigma^{(3)}(x)=\sigma^{(2)}(x)=1+\alpha^2x$; \\
                          \>$\Delta_2=0 \Longrightarrow$ \= $D(3)=1$;\\
                          \>\> $\tau^{(3)}(x)=\alpha^{12}x^2 $;\\
\end{tabbing}
\begin{tabbing}
  ($i=3$)$\xrightarrow{\hspace*{1cm}}\,$   \= $\Delta_3=S_4+S_3\sigma_1^{(3)}=\alpha^8+\alpha^9=\alpha^{12}$;\\
                          \>$\sigma^{(4)}(x)=\sigma^{(3)}(x)-\Delta_3x\tau^{(3)}=1+\alpha^2x+\alpha^9x^3$; \\
                          \>$\Delta_3\neq0 \text{ and }2D(3)<4\Longrightarrow$ \= $D(4)=3$;\\
                          \>\> $\tau^{(4)}(x)=\dfrac{\sigma^{(3)}}{\Delta_3}=\alpha^3+\alpha^{5}x$;
\end{tabbing}
Hence  the \BM decoding algorithm \ref{alg:BM} sets $\tl{e}=3$ and since 
$$\sigma^{(4)}(x)=1+\alpha^2x+\alpha^9x^3=
\alpha^9\left(x+\alpha^{5}\right)\left(x^2+\alpha^5x+\alpha\right)$$
\refbm{BM3}  cannot proceed because the polynomial $x^2+\alpha^5x+\alpha$ is irreducible over $\F[x]$. In this case the \BM decoder have to declare a decoder failure. Note  that $\tl{e}=D(4)$ but $\tl{e}>t$.
\end{ex}

\begin{ex} \label{ex_bm_mal2}
Consider the same code of example \ref{ex_bm_mal1}, supposing that $\cw=\vct{0}$ and $$\e(x)=\alpha^3x+\alpha x^2+x^{10}$$
then $\rw(x)=\e(x)$, $e=4$  and the syndrome values are:
$$\begin{array}{ll}
    S_1=\rw(\alpha)=\alpha^{6} & S_2=\rw(\alpha^2)=\alpha^5 \\
    S_3=\rw(\alpha^3)=\alpha^5  & S_4=\rw(\alpha^4)=\alpha^{5}\\
  \end{array}$$ 
The iterations \refbm{BM2}  in this case are:
\begin{tabbing}
  ($i=0$)$\xrightarrow{\hspace*{1cm}}\,$ \= $\Delta_0=S_1=\alpha^{6}$;\\
                          \>$\sigma^{(1)}(x)=\sigma^{(0)}(x)-\Delta_0x\tau^{(0)}=1+\alpha^{6}x$; \\
                          \>$\Delta_0\neq0 \text{ and }2D(0)<1\Longrightarrow$ \= $D(1)=1$;\\
                          \>\> $\tau^{(1)}(x)=\dfrac{1}{\Delta_0}=\alpha^{9}$;\\
\end{tabbing}
\begin{tabbing}
  ($i=1$)$\xrightarrow{\hspace*{1cm}}\,$ \= $\Delta_1=S_2+S_1\sigma_1^{(1)}=\alpha^5+\alpha^{12}=\alpha^{14}$;\\
                          \>$\sigma^{(2)}(x)=\sigma^{(1)}(x)-\Delta_1x\tau^{(1)}=1+\alpha^{14}x$; \\
                          \>$2D(1)\geq2\Longrightarrow$ \= $D(2)=1$;\\
                          \>\> $\tau^{(2)}(x)=\alpha^{9}x $;\\
\end{tabbing}
\begin{tabbing}
  ($i=2$)$\xrightarrow{\hspace*{1cm}}\,$ \= $\Delta_2=S_3+S_2\sigma_1^{(2)}=\alpha^5+\alpha^2=\alpha^8$;\\
                          \>$\sigma^{(3)}(x)=\sigma^{(2)}(x)-\Delta_2x\tau^{(2)}=1+\alpha^{14}x+\alpha^2x^2$; \\
                          \>$\Delta_2\neq0 \text{ and }2D(2)<3\Longrightarrow$ \= $D(3)=2$;\\
                          \>\> $\tau^{(3)}(x)=\dfrac{\sigma^{(2)}}{\Delta_2}=\alpha^7+\alpha^{6}x $;\\
\end{tabbing}
\begin{tabbing}
  ($i=3$)$\xrightarrow{\hspace*{1cm}}\,$ \= $\Delta_3=S_4+S_3\sigma_1^{(3)}+S_2\sigma_2^{(3)}=\alpha^5+\alpha^4+\alpha^7=\alpha^{11}$;\\
                          \>$\sigma^{(4)}(x)=\sigma^{(3)}(x)-\Delta_3x\tau^{(3)}=1+x$; \\
                          \>$2D(3)\geq4\Longrightarrow$ \= $D(4)=2$;\\
                          \>\> $\tau^{(4)}(x)=\alpha^7x+\alpha^{6}x^2 $;
\end{tabbing}
Thus $\tl{e}=1$ and, since $\sigma^{4}(x)=1+x$, \refbm{BM3}  calculates that
$\tl{X}_1=1=\alpha^{0}$. The assumed error position is $j_1=0$.
Finally in \refbm{BM4}  the algorithm calculate  the error-evaluator polynomial $\omega(x)=\alpha^6$ and the assumed error value $\tl{E}_1=\alpha^6$. Thus $\vct{\tl{e}}=\tl{E}_1x^{j_1}=\alpha^6$ and the received word $\rw$ is wrongly decoded in $$\vct{\tl{c}}=\rw-\vct{\tl{e}}=\alpha^6+\alpha^3x+\alpha x^2+x^{10}$$
Evidently a decoder malfunction has happened. Note that $D(4)=2>\tl{e}=1$.
\end{ex}

In both the examples it occurs that $\rw\notin \mathcal B$ but the \BM decoder does not declare the failure and carries on the decoding, bringing about a decoder malfunction. In order to avoid such behavior of the \BM decoder we have to add to its implementations the conditions necessary and sufficient to assure that  $\rw\in\mathcal B$. For this reason we give the following proposition. 

\begin{prop}\label{prop:malfunctionBM}
 With previous notations, we have that $\rw\in\mathcal B$ if and only if
 \begin{enumerate}[\ \ \ (1)]
   \item $\sigma^{(d-1)}(x)$ has exactly  $\tl{e}$ distinct roots, all belonging to the field $\F$ and all different from zero;
   \item $D(d-1)=\tl{e}$ and $\tl{e}\leq t$;
  \end{enumerate}
\end{prop}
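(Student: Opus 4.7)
The plan is to handle the two implications separately, with the forward direction being an easy consequence of Corollary~\ref{cor:BM2bis} and the backward direction requiring us to show that the polynomial pair produced by the algorithm coincides with the one that would arise from the vector $\vct{\tl{e}}$ as a genuine error.

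For the forward implication, suppose $\rw\in\mathcal{B}$. Then there exists a codeword $\vct{c}^{*}\in RS(n,d,\alpha)$ with $\vct{e}^{*}=\rw-\vct{c}^{*}$ of weight $e^{*}\leq t$. Since the algorithm depends only on the syndromes of $\rw$, and these coincide with the syndromes of $\vct{e}^{*}$ relative to the codeword $\vct{c}^{*}$, I would apply Corollary~\ref{cor:BM2bis} to the pair $(\vct{c}^{*},\vct{e}^{*})$ in place of $(\cw,\e)$ to conclude that $\sigma^{(d-1)}(x)$ is the genuine error-locator polynomial $\sigma^{*}(x)$ of $\vct{e}^{*}$ and that $D(d-1)=e^{*}$. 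Then conditions (1) and (2) follow from Remark~\ref{rem:sigmaproperties} and the inequality $e^{*}\leq t$, setting $\tl{e}=e^{*}$.

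For the backward implication, assume conditions (1) and (2) and consider the vector $\vct{\tl{e}}(x)=\sum_{i=1}^{\tl{e}}\tl{E}_{i}x^{p_{i}}$ produced by steps \refbm{BM3} and \refbm{BM4}. I would first observe that, since $\sigma^{(d-1)}(0)=1$ (by an easy induction on the defining recurrence) and condition (1) holds, the polynomial $\sigma^{(d-1)}(x)$ coincides with $\tl{\sigma}(x)=\prod_{i=1}^{\tl{e}}(1-\tl{X}_{i}x)$. Next, I would verify that the polynomial $\omega(x)$ computed in \refbm{BM4} equals $\omega^{(d-1)}(x)$: both satisfy the congruence $\sigma^{(d-1)}(x)S(x)\equiv\cdot\pmod{x^{\tl{e}}}$ and both have degree at most $\tl{e}-1$ (the bound on $\omega^{(d-1)}$ coming from Proposition~\ref{prop:BM1}(2) together with $D(d-1)=\tl{e}$). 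By Forney's formula (\ref{forney}) applied with $\sigma^{(d-1)}$ and $\omega^{(d-1)}$, the values $\tl{E}_{i}$ satisfy $\omega^{(d-1)}(\tl{X}_{i}^{-1})=\tl{\omega}(\tl{X}_{i}^{-1})$ at all $\tl{e}$ distinct points $\tl{X}_{i}^{-1}$, where $\tl{\omega}(x)=\sum_{i=1}^{\tl{e}}\tl{E}_{i}\tl{X}_{i}\prod_{j\neq i}(1-\tl{X}_{j}x)$ is the formal error-evaluator associated with $\vct{\tl{e}}$; since both polynomials have degree at most $\tl{e}-1$, they are identically equal.

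To finish, I would write down the two key equations
\begin{align*}
\sigma^{(d-1)}(x)\,S(x)&\equiv\omega^{(d-1)}(x) &&\pmod{x^{d-1}},\\
\tl{\sigma}(x)\,\tl{S}(x)&\equiv\tl{\omega}(x) &&\pmod{x^{d-1}},
\end{align*}
where $\tl{S}(x)=\sum_{i=1}^{d-1}\tl{S}_{i}x^{i-1}$ with $\tl{S}_{i}=\vct{\tl{e}}(\alpha^{i})$, the second being just the key equation for the vector $\vct{\tl{e}}$. Subtracting and using that $\sigma^{(d-1)}(0)=1$, so that $\sigma^{(d-1)}$ is invertible modulo $x^{d-1}$, I would conclude $S(x)\equiv\tl{S}(x)\pmod{x^{d-1}}$, hence $S_{i}=\tl{S}_{i}$ for all $i=1,\dots,d-1$. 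This gives $\vct{\tl{c}}(\alpha^{i})=S_{i}-\tl{S}_{i}=0$ for every such $i$, so $\vct{\tl{c}}\in RS(n,d,\alpha)$ by Proposition~\ref{prop:rscode}(2); since $\dist(\rw,\vct{\tl{c}})=\wt(\vct{\tl{e}})\leq\tl{e}\leq t$, we obtain $\rw\in\mathcal{B}$.

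The main obstacle I anticipate is the identification $\omega=\omega^{(d-1)}$ and $\tl{\omega}=\omega^{(d-1)}$, which is what allows the whole argument to go through: without condition (2) pinning down $\deg(\omega^{(d-1)})\leq\tl{e}-1$, the interpolation step that forces $\tl{\omega}=\omega^{(d-1)}$ would fail, and the argument would not close. The rest of the reasoning is essentially bookkeeping built on the key equation and on Proposition~\ref{prop:BM0}--\ref{prop:BM1}.
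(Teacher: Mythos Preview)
Your proof is correct and follows essentially the same strategy as the paper: for the forward direction both of you invoke the correctness of the algorithm (Corollary~\ref{cor:BM2bis}) applied to the nearby codeword, and for the backward direction both of you show $\sigma^{(d-1)}(x)\tl{S}(x)\equiv\omega^{(d-1)}(x)\pmod{x^{d-1}}$, compare with Proposition~\ref{prop:BM0}, and cancel $\sigma^{(d-1)}$ (a unit modulo $x^{d-1}$) to force $S(x)=\tl{S}(x)$.

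The only real difference is in how the congruence $\sigma^{(d-1)}\tl{S}\equiv\omega^{(d-1)}$ is obtained. The paper expands $\sigma^{(d-1)}(x)\tl{S}(x)$ directly, substitutes $\tl{E}_j=-\omega^{(d-1)}(\tl{X}_j^{-1})/(\sigma^{(d-1)})'(\tl{X}_j^{-1})$, and recognizes the resulting sum as the Lagrange interpolant of $\omega^{(d-1)}$ through the points $\tl{X}_j^{-1}$; condition~(2) ensures $\deg(\omega^{(d-1)})\leq\tl{e}-1$, so the interpolant is $\omega^{(d-1)}$ itself. You instead quote the key equation for the formal error $\vct{\tl{e}}$ to get $\tl{\sigma}\tl{S}\equiv\tl{\omega}$, and then match $\tl{\omega}$ with $\omega^{(d-1)}$ by checking their values at the $\tl{e}$ points $\tl{X}_i^{-1}$. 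Your packaging is a bit more modular (it reuses Proposition~\ref{keyequation} as a black box rather than redoing the computation), but the underlying interpolation step and its dependence on condition~(2) are identical.
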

\begin{proof}
 If  $\rw\in\mathcal B$, then there exists a unique  $\vct{c}_1\in\ RS(n,d,\alpha)$ such that $\rw \in \overline{B_t(\vct{c}_1)}$. We repeat the construction saw in section \ref{sect:BMalg} with $\vct{c_1}$  in  place of $\cw$. Since $\wt(\rw-\vct{c}_1)\leq t$, then $\vct{c}_1$ is equal to the output vector $\vct{\tl{c}}$ and  the right side conditions are accomplished.  On the other hand, we suppose that conditions (1) and (2) are satisfied and we prove that $\rw\in\mathcal B$. Since $\rw=\vct{\tl{c}}+\vct{\tl{e}}$ and $\tl{e}\leq t$,  we need only to prove that $\vct{\tl{c}}\in RS(n,d,\alpha)$. For condition (1) we may assume that $\sigma^{(d-1)}(x)=\displaystyle\prod_{i=1}^{\tl{e}}\left(1-\tl{X}_ix\right)$ and  that  
 $\tl{S}_i=\vct{\tl{e}}(\alpha^i)=\displaystyle\sum_{j=1}^{\tl{e}}\tl{E}_{j}\tl{X}_j^i$ for any $i=1,2,\dots,\tl{e}$. Moreover we know that $\tl{E}_i=-\dfrac{\omega^{(d-1)}(\tl{X}_i^{-1})}{\left(\sigma^{(d-1)}\right)'(\tl{X}_i^{-1})}$ for any $i=1,2,\dots,\tl{e}$.  Then
 \begin{align*}
  \sigma^{(d-1)}(x)\tl{S}(x)&=
   \prod_{i=1}^{\tl{e}}\left(1-\tl{X}_ix\right)\left(\sum_{l=1}^{d-1}x^{l-1}\sum_{j=1}^{\tl{e}}\tl{E}_{j}\tl{X}_j^l\right)=\\
   &=\prod_{i=1}^{\tl{e}}\left(1-\tl{X}_ix\right)\left(\sum_{j=1}^{\tl{e}}\tl{E}_{j}\sum_{l=1}^{d-1}x^{l-1}\tl{X}_j^l\right)=\\
   &=\sum_{j=1}^{\tl{e}}\left(\tl{E}_{j}\prod_{i=1}^{\tl{e}}\left(1-\tl{X}_ix\right)\sum_{l=1}^{d-1}x^{l-1}\tl{X}_j^l\right)=\\
   &=\sum_{j=1}^{\tl{e}}\left(-\frac{\omega^{(d-1)}(\tl{X}_j^{-1})}{\left(\sigma^{(d-1)}\right)'(\tl{X}_j^{-1})}
   \prod_{i=1}^{\tl{e}}\left(1-\tl{X}_ix\right)\sum_{l=1}^{d-1}x^{l-1}\tl{X}_j^l\right)\\
 \end{align*}
Now we observe that the sum on $l$ is equal to $\displaystyle\tl{X}_j\frac{1-(\tl{X}_jx)^{d-1}}{1-\tl{X}_jx}$ and that
$$\left(\sigma^{(d-1)}\right)'(\tl{X}_j^{-1})=-\tl{X}_j\prod_{i=1\\i\neq j}^{\tl{e}}\left(1-\tl{X}_i\tl{X}_j^{-1}\right).$$
Thus we have
$$\sigma^{(d-1)}(x)\tl{S}(x)=
  \sum_{j=1}^{\tl{e}}\left(\omega^{(d-1)}(\tl{X}_j^{-1}) \prod_{i\neq j}\frac{1-\tl{X}_ix}{1-\tl{X}_i\tl{X}_j^{-1}}
  \left(1-(\tl{X}_jx)^{d-1}\right)\right)=$$
$$=\sum_{j=1}^{\tl{e}}\left(\omega^{(d-1)}(\tl{X}_j^{-1}) \prod_{i\neq j}\frac{1-\tl{X}_ix}{1-\tl{X}_i\tl{X}_j^{-1}}\right)
  +(\tl{X}_jx)^{d-1}\left[\sum_{j=1}^{\tl{e}}\omega^{(d-1)}(\tl{X}_j^{-1}) \prod_{i\neq j}\frac{1-\tl{X}_ix}{1-\tl{X}_i\tl{X}_j^{-1}}\right]$$
  
\noindent The first sum on the right side is the Lagrange interpolation formula for the unique polynomial that has degree less then $\tl{e}$ and takes value $\omega^{(d-1)}(\tl{X}_j^{-1})$ at $\tl{X}_j^{-1}$ for $i=1,2,\dots,\tl{e}$. Since $\Deg{\omega^{(d-1)}(x)}\leq D(d-1)-1=\tl{e}-1$, this unique polynomial must be $\omega^{(d-1)}(x)$ itself. So we obtain that
$$\sigma^{(d-1)}(x)\tl{S}(x)\equiv \omega^{(d-1)}(x)\Mod{x^{d-1}}$$
But for proposition \ref{prop:BM1} we know that $$\sigma^{(d-1)}(x)S(x)\equiv \omega^{(d-1)}(x)\Mod{x^{d-1}}$$
Hence it follows that $S(x)=\tl{S}(x)$, that is 
$$S_i-\tl{S}_i=\rw(\alpha^i)-\vct{\tl{e}}(\alpha^i)=\vct{\tl{c}}(\alpha^i)=0$$
for any $i=1,2,\dots, d-1$. This concludes the proof.
\end{proof}

Thus we can conclude that adding the conditions (1) and (2) of proposition \ref{prop:malfunctionBM} at the end of step \refbm{BM2}, the \BM decoding algorithm \ref{alg:BM} becomes a $t$-bounded distance decoding algorithm.\\

Observe that since the elements $\tl{X}_1,\tl{X}_2,..., \tl{X}_{\tl{e}}$ are the inverses of the roots of $\sigma^{(d-1)}(x)$, by (\ref{eq:bm0})  it follows that 
$$\tl{E}_i=-\frac{\omega^{(d-1)}(\tl{X}_i^{-1})}{\left(\sigma^{(d-1)}\right)'(\tl{X}_i^{-1})}=
-\frac{\left(\tl{X}_i^{-1}\right)^{d-2}}{\left(\sigma^{(d-1)}\right)'(\tl{X}_i^{-1})\cdot\tau^{(d-1)}(\tl{X}_i^{-1})}$$ for any $i=1,2,\dots,\tl{e}$. Hence the proposition \ref{prop:malfunctionBM}  also holds for the implementation of the \BM decoder with the formula (\ref{forney2}) in the place of Forney's formula (\ref{forney}) (see proposition \ref{prop:BM3}).

\section{Comparing \fPGZ.2  and  \BM.2}\label{sect:comparison}
From a computational point of view, when the average number of errors is much less than the error correction capability $t$, then the \fPGZ  decoding algorithm \ref{alg:fastPGZ} (on page \pageref{alg:fastPGZ}) has some significant advantages in comparison with the \BM decoding algorithm \ref{alg:BM} (on page \pageref{alg:BM}). In fact, whatever the number of errors is, the complexity of \BM decoding algorithm \ref{alg:BM} is of order $O(t^2)$, whereas the complexity of \fPGZ  decoding algorithm \ref{alg:fastPGZ} is of order $O(et)$ and so it depends on $e$. 
Nevertheless the program structure in the \fPGZ  decoding algorithm \ref{alg:fastPGZ}, especially when the singularity gap is strictly greater than 1, is more complicated and requires more program code than the \BM decoding algorithm \ref{alg:BM}.\\

Now we want to examine and compare the structure of the two iterative algorithms in order to find some common features and computations. For this purpose we state the following two remarks and theorem \ref{thm:comparison}.
\begin{rem}\label{rem:reciproco}
 We recall the one-to-one correspondence between vectors in $(\F)^i$ and polynomials in $\F[X]$ of degree less than $i$ studied in section \ref{sect:cyclic codes} and given by
 $$\vct{v}=(v_0,v_1, \dots,v_{i-1}) \longleftrightarrow \vct{v}(x)=\sum_{j=0}^{i-1}v_{j} x^j=v_0+v_1x+\cdots+v_{i-1}x^{i-1}$$
 and we note that there exists another possible correspondence. Indeed if $\vct{v}=(v_0,v_1, \dots,v_{i-1})\in (\F)^i$ then we will call  $\overline{\vct{v}}(x)$ the unique polynomial of $\F[x]$ of degree less than $i$ given by:
 $$\overline{\vct{v}}(x)=\sum_{j=0}^{i-1}v_{j} x^{i-1-j}=v_{i-1}+v_{i-2}x\cdots+v_0x^{i-1}$$
 Usually $\overline{\vct{v}}(x)$ is called the \emph{reciprocal polynomial} of $\vct{v}(x)$. Observe that 
 $$\overline{\vct{v}}(x)=x^{i-1}\cdot\vct{v}(x^{-1})$$
 \end{rem}
 
\begin{rem}\label{aiutino3}
 Let $i$ be an index less or equal to $t$ and such that $\det(A_i)\neq 0$. Recall that in section \ref{sect:fastPGZ} we have defined $\vct{w}^{(i)}=\left(w_0^{(i)},w_1^{(i)},\dots,w_{i-1}^{(i)}\right)^T$ as the vector in $(\F)^i$ that satisfies
 \begin{equation}\label{eq:aiutino3}
        A_i\vct{w}^{(i)}=-\begin{mymatrix} S_{i+1}\\ S_{1+2}\\ \vdots \\ S_{2i}\\
                          \end{mymatrix} 
\end{equation}
 Now we introduce the polynomial  
 $$P_{\vct{w}^{(i)}}(x)\overset{def}{=}x\cdot\overline{\vct{w}}^{(i)}(x)+1
 =w_0^{(i)}x^i+w_1^{(i)}x^{i-1}+\cdots+w_{i-1}^{(i)}x+1$$
 and we observe that for any $j\in\{1,2,\dots,t-i\}$, the element $\varepsilon_j$, defined as the sum 
 $$\varepsilon_j=S_{i+j}w_0^{(i)}+S_{i+j+1}w_1^{(i)}+\cdots+S_{2i+j-1}w_{i-1}^{(i)}+S_{2i+j}\in\F$$
 (see definiton \ref{defi:epsilon}) is equal to the coefficient of $x^{2i+j-1}$ in $P_{\vct{w}^{(i)}}(x)S(x)$.
 \end{rem}
 
 We further recall that the vectors $\vct{w}^{(i)}$'s are computed in step \reffpgz{fpgz2} of the \fPGZ  decoding algorithm \ref{alg:fastPGZ}. While the \BM decoding algorithm \ref{alg:BM} in step \refbm{BM2} calculates the polynomials $\sigma^{(i)}(x)$'s as in definition \ref{defi:sigma_omega_i}. Finally we remind that for any $j\in\{0,1,\dots,d-1\}$, $\Delta_j$ is defined as the coefficient of $x^j$ in the polynomial $\sigma^{(j)}(x)S(x)$. 
 
\begin{thm} \label{thm:comparison}
Let $i$ an index less or equal to $t$ and such that $\det(A_i)\neq 0$ and let $\vct{w}^{(i)}$ be the vector in $(\F)^i$ that satisfies  (\ref{eq:aiutino3}). If $r$ is the singularity gap as in definition \ref{defi:epsilon}, i.e.\  
$$r=\min\left\{j\in\{1,2,\dots,t-i\} \;|\; \varepsilon_j\neq 0\right\}\in\N^+$$ 
then we have that:
$$P_{\vct{w}^{(i)}}(x)=\sigma^{(2i)}(x)=\sigma^{(2i+1)}(x)=\cdots=\sigma^{(2i+r-1)}(x)$$
Moreover it follows from this that $\Delta_{2i+r-1}=\varepsilon_r\neq 0$ and that if $r\geq2$, then $\Delta_{2i}=\Delta_{2i+1}=\cdots=\Delta_{2i+r-2}=0$;
\end{thm}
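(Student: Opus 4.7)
My plan is to handle the base case $P_{\vct{w}^{(i)}}(x)=\sigma^{(2i)}(x)$ via Corollary \ref{cor:BM2}, and then to propagate the equality through the indices $2i, 2i+1,\ldots,2i+r-1$ by a short induction that simultaneously computes the discrepancies $\Delta_{2i+k}$. For the setup, set $P:=P_{\vct{w}^{(i)}}$. Expanding the system (\ref{eq:aiutino3}) row by row shows that for every $m\in\{1,\ldots,i\}$ the quantity $\sum_{l=0}^{i-1}w_l^{(i)}S_{m+l}+S_{m+i}$ vanishes; a direct calculation (the same one underlying Remark \ref{aiutino3}) identifies these $i$ sums with the coefficients of $x^i,x^{i+1},\ldots,x^{2i-1}$ in the product $P(x)S(x)$. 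Hence there exists $\omega^{*}(x)$ of degree at most $i-1$ with $P(x)S(x)\equiv\omega^{*}(x)\pmod{x^{2i}}$, i.e.\ $(P,\omega^{*})\in\mathcal{M}_{2i}$. Since $P(0)=1\neq 0$, $\deg P\le i$, and $i=\pint{2i/2}$, Corollary \ref{cor:BM2} together with the degree bound from Theorem \ref{thm:BM2} (applied with $\delta=i$) yields $D(2i)\le i$ and a polynomial $u(x)$ of degree $\le i-D(2i)$ such that $P(x)=u(x)\sigma^{(2i)}(x)$.

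The crucial point, and the main obstacle, is to upgrade $D(2i)\le i$ to the equality $D(2i)=i$, so that $u$ is forced to be constant. I would argue this by contradiction: if $l:=D(2i)<i$, then $\sigma^{(2i)}$ provides the linear recurrence $S_k=-\sum_{j=1}^{l}\sigma^{(2i)}_j S_{k-j}$ valid for every $k\in\{l+1,\ldots,2i\}$. For any column index $c\ge l+1$ of the Hankel matrix $A_i$ and any row $r\in\{1,\ldots,i\}$, the entry $S_{r+c-1}$ has index $r+c-1\in\{l+1,\ldots,2i\}$, so the recurrence applies and expresses it as a linear combination of entries sitting in columns $c-1,c-2,\ldots,c-l$ of $A_i$. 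Iterating this observation, every column $l+1,\ldots,i$ of $A_i$ lies in the span of the first $l$ columns, so $\rk(A_i)\le l<i$, contradicting $\det(A_i)\neq 0$. Therefore $D(2i)=i$, $u$ is a constant, and from $P(0)=\sigma^{(2i)}(0)=1$ we conclude $u=1$, proving the base case $P_{\vct{w}^{(i)}}(x)=\sigma^{(2i)}(x)$.

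The induction on $k$ is then immediate. Assuming $\sigma^{(2i+k)}(x)=P(x)$ for some $0\le k\le r-2$, the discrepancy $\Delta_{2i+k}$ is, by (\ref{defi:Delta}), the coefficient of $x^{2i+k}$ in $\sigma^{(2i+k)}(x)S(x)=P(x)S(x)$, which by Remark \ref{aiutino3} equals $\varepsilon_{k+1}$. Since $k+1\le r-1$, the definition of the singularity gap gives $\varepsilon_{k+1}=0$, so $\Delta_{2i+k}=0$ and the recursion in Definition \ref{defi:sigma_omega_i} forces $\sigma^{(2i+k+1)}(x)=\sigma^{(2i+k)}(x)=P(x)$. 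Running the same computation at $k=r-1$ yields $\Delta_{2i+r-1}=\varepsilon_r\neq 0$ by the definition of $r$, while the intermediate equalities $\Delta_{2i}=\Delta_{2i+1}=\cdots=\Delta_{2i+r-2}=0$ have already been collected along the way, concluding the proof.
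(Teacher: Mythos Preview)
Your proof is correct. It differs from the paper's in two places, and the comparison is worth noting. For the identification $P_{\vct{w}^{(i)}}=\sigma^{(2i)}$, the paper does not prove $D(2i)=i$; instead it observes from Corollary~\ref{cor:BM2} that $\deg\sigma^{(2i)}\le i$ and $\deg\omega^{(2i)}\le i-1$, reads off that the vector $(\sigma^{(2i)}_i,\dots,\sigma^{(2i)}_1)^T$ satisfies the \emph{same} system $A_i\vct{x}=-(S_{i+1},\dots,S_{2i})^T$ that defines $\vct{w}^{(i)}$, and concludes by uniqueness of the solution. Your column--rank argument on $A_i$ is a valid alternative and actually gives the extra information $D(2i)=i$, at the price of a slightly longer detour. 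For the chain of equalities, the paper applies Corollary~\ref{cor:BM2} simultaneously at all levels $2i,2i+1,\dots,2i+r-1$ and uses that $u=\gcd(P_{\vct{w}^{(i)}},f^{(i)})$ is the same factor each time, so the $\sigma^{(2i+j)}$'s all coincide in one stroke and the vanishing of the $\Delta_{2i+j}$'s is extracted afterwards from the recursion; your forward induction computing $\Delta_{2i+k}=\varepsilon_{k+1}$ step by step is more elementary and yields the discrepancy values directly. A minor stylistic point: you reuse the letter $r$ for a row index inside the rank argument while $r$ already denotes the singularity gap; choosing a different symbol there would avoid ambiguity.
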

\begin{proof}
From  (\ref{eq:aiutino3}) and $\varepsilon_1=\varepsilon_2=\cdots=\varepsilon_{r-1}=0$ it follows that
 \begin{equation*}
  \begin{mymatrix}  S_1      & \cdots & S_i\\
                   \vdots   &        &\vdots\\
                   S_{i}    & \cdots & S_{2i-1}\\
                   S_{i+1}    & \cdots & S_{2i}\\
                   \vdots   &        &\vdots\\
                   S_{i+r-1}    & \cdots & S_{2i-r-2}\\
  \end{mymatrix}\vct{w}^{(i)}=-\begin{mymatrix} S_{i+1}\\ \vdots \\ S_{2i}\\ S_{2i+1}\\ \vdots\\ S_{2i+r-1}\\
                              \end{mymatrix}
 \end{equation*}
that is
$$ \sum_{l=0}^{i-1} S_{j+l}\,w_l^{(i)} + S_{i+j}=0 \quad\quad\quad\forall \;j=1,2,\dots,i+r-1 $$
Since $S(x)=\displaystyle\sum_{j=1}^{d-1}S_jx^{j-1}$ and 
$P_{\vct{w}^{(i)}}(x)=w_0^{(i)}x^i+w_1^{(i)}x^{i-1}+\cdots+w_{i-1}^{(i)}x+1$,
the sums on the right side in the former equalities are the coefficients of $x^i, x^{i+1},\dots, x^{2i+r-2}$ in the polynomial $P_{\vct{w}^{(i)}}(x)S(x)$. These coefficients are all zero, hence there exists  $f^{(i)}(x)\in\F[X]$ such that $\Deg{f^{(i)}}\leq i-1$ and
$$ P_{\vct{w}^{(i)}}(x)S(x)\equiv f^{(i)}(x) \Mod{x^{2i+j}} \quad \quad \forall\; j=0,1,\dots, r-1$$
So we have obtained that $\left(P_{\vct{w}^{(i)}}(x), f^{(i)}(x)\right)\in\mathcal{M}_{2i+j}$, $\forall\;j=0,1,\dots,r-1$. Now we may use the corollary \ref{cor:BM2} to conclude that, for any $j=0,1,\dots,r-1$
 \begin{equation}\label{sistemacomp}
   \begin{cases}
      P_{\vct{w}^{(i)}}(x)=u(x)\sigma^{(2i+j)}(x)\\
      f^{(i)}(x)=u(x)\omega^{(2i+j)}(x)\\
  \end{cases}
 \end{equation}
where $u(x)=\gcd\left(P_{\vct{w}^{(i)}}(x), f^{(i)}(x)\right)$.\\
It immediately follows  from the first equations of systems (\ref{sistemacomp}) that
$$\sigma^{(2i)}(x)=\sigma^{(2i+1)}(x)=\dots=\sigma^{(2i+r-1)}(x)$$
and therefore, using that $\sigma^{(j+1)}(x)=\sigma^{(j)}(x)-\Delta_jx\tau^{(j)}(x)$
we find that if $r\geq2$, then $\Delta_{2i}=\Delta_{2i+1}=\dots=\Delta_{2i+r-2}=0$.
It also follows from  (\ref{sistemacomp}) with $j=0$ that
$$\Deg{\sigma^{(2i)}(x)}\leq i\quad \text{ and }\quad \Deg{\omega^{(2i)}(x)}\leq i-1.$$
In particular, due to the relation $\sigma^{(2i)}(x)S(x)\equiv\omega^{(2i)}(x)\Mod{x^{2i}}$, the upper bound of $\Deg{\omega^{(2i)}}$ implies that the coefficients of $x^i,x^{i+1},\dots, x^{2i-1}$ in $\sigma^{(2i)}(x)S(x)$ are all zero. Hence, we may write 
$$\sigma^{(2i)}(x)=\sigma^{(2i)}_ix^i+\sigma^{(2i)}_{i-1}x^{i-1}+\cdots+\sigma^{(2i)}_1x+1$$ and
$$\begin{cases}
     S_{i+1}+\sigma^{(2i)}_1S_i+\cdots+\sigma^{(2i)}_iS_1=0\\
     S_{i+2}+\sigma^{(2i)}_1S_{i+1}+\cdots+\sigma^{(2i)}_iS_2=0\\
     \cdots\\
     S_{2i}+\sigma^{(2i)}_1S_{2i-1}+\cdots+\sigma^{(2i)}_iS_i=0\\
\end{cases}$$
that is 
$$A_i\begin{mymatrix}
             \sigma^{(2i)}_i\\ \sigma^{(2i)}_{i-1}\\ \vdots\\ \sigma^{(2i)}_1                      
    \end{mymatrix}=-\begin{mymatrix}
                        S_{i+1}\\ S_{i+2}\\\vdots \\S_{2i}    
                   \end{mymatrix}$$
Since $\det(A_i)\neq0$, from the definition of the vector $\vct{w}^{(i)}$  we deduce that 
$$\vct{w}^{(i)}=\begin{mymatrix}
             \sigma^{(2i)}_i\\ \sigma^{(2i)}_{i-1} \\\vdots\\ \sigma^{(2i)}_1                      
    \end{mymatrix}$$
Thus $P_{\vct{w}^{(i)}}(x)=\sigma^{(2i)}(x)$.

Finally we will show that $\Delta_{2i+r-1}=\varepsilon_r$. As we saw in remark \ref{aiutino3}, $\varepsilon_r$ is the coefficient of $x^{2i+r-1}$ in the polynomial $P_{\vct{w}^{(i)}}(x)S(x)$, while (\ref{defi:Delta}) on page \pageref{defi:Delta} defines $\Delta_{2i+r-1}$ as the coefficient of $x^{2i+r-1}$ in the polynomial $\sigma^{(2i+r-1)}(x)S(x)$. Since $P_{\vct{w}^{(i)}}(x)=\sigma^{(2i+r-1)}(x)$, it follows that $\Delta_{2i+r-1}=\varepsilon_r$, which is different from zero by definition.\\
\end{proof}

Theorem \ref{thm:comparison} shows that every vector $\vct{w}^{(i)}$ computed during the step \reffpgz{fpgz2} corresponds always to the polynomial  $\sigma^{(2i)}(x)$, in the sense that the components of $\vct{w}^{(i)}$  are  the coefficients of $\sigma^{(2i)}(x)$. In other words the intermediate outcomes of the \fPGZ  decoding algorithm are a subset of the intermediate outcomes of the \BM decoder.
The vice versa is not true, indeed, as we can see comparing example \ref{ex_fastpgz} and example \ref{ex_bm},
some of polynomials $\sigma^{(j)}(x)$'s may not appear in the set of the polynomials $P_{\vct{w}^{(j)}}(x)$'s.
In example \ref{ex_fastpgz} we decode the received word $\rw(x)=\alpha^2x^2+\alpha x^8+\alpha^7x^{13}$ by the \fPGZ  decoder and we have as intermediate outcomes the polynomials 
$$P_{\vct{w}^{(1)}}(x)=0x+1=1 \hspace{.5cm} P_{\vct{w}^{(3)}}(x)=\alpha^8x^3+\alpha^9x^2+\alpha^6x+1$$
While, in example \ref{ex_bm}, decoding the same received word with the \BM decoder, we obtain the polynomials:
$$\sigma^{(0)}(x)=1\hspace{.5cm} \sigma^{(1)}(x)=1+\alpha^{12}\hspace{.5cm} \sigma^{(2)}(x)=\sigma^{(3)}(x)=1$$
$$\hspace{-0.2cm}\sigma^{(4)}(x)=1+\alpha^8x^3\hspace{.7cm}\sigma^{(5)}(x)=1+\alpha^6x+\alpha^8x^3$$
$$\sigma^{(6)}(x)=\sigma^{(7)}(x)=\sigma^{(8)}(x)=1+\alpha^6x+\alpha^9x^2+\alpha^8x^3$$
It easy verify that 
$$P_{\vct{w}^{(1)}}(x)=\sigma^{(2)}(x)=\sigma^{(3)}(x)$$
$$P_{\vct{w}^{(3)}}(x)=\sigma^{(6)}(x)=\sigma^{(7)}(x)=\sigma^{(8)}(x)$$
and the polynomial $\sigma^{(1)}(x)$, $\sigma^{(4)}(x)$, $\sigma^{(5)}(x)$ do not corresponds to any polynomial of the form $P_{\vct{w}^{(i)}}(x)$.\\

Moreover theorem \ref{thm:comparison} states that when the singularity gap $r$ of the $i$th iterative step of \reffpgz{fpgz2} is greater than 1 (that is the \fPGZ  decoder ``jumps'' from $i$ to $i+r$) then even the iterative steps \refbm{BM2} ``jump'' from $2i$ to $2i+r$, since 
$\sigma^{(2i)}(x)=\sigma^{(2i+1)}(x)=\cdots=\sigma^{(2i+r-1)}(x)$.

\chapter{Error Value Formulas}
In the former chapters we saw that both in the \fPGZ  decoding algorithm \ref{alg:fastPGZ} and in the \BM decoding algorithm \ref{alg:BM} the error locations are determined by the roots of the error-locator polynomial $\sigma(x)$ through an exhaustive search among the elements of $\F$. Instead, the error values are calculated by different methods in each different decoding algorithm. In the \reffpgz{fpgz4}, the error values $E_1,E_2,\dots,E_e$ are calculated by the \BP algorithm \ref{alg:BP} to solve the following linear system 
\begin{equation*}
 \begin{mymatrix} X_1   & X_2  & \cdots & X_e \\
                 X_1^2   & X_2^2  & \cdots & X_e^2 \\
                 \vdots &\vdots & &\vdots\\
                 X_1^e   & X_2^e  & \cdots & X_e^e \\
  \end{mymatrix}
 \begin{mymatrix} E_1\\ E_2 \\ \vdots \\ E_e\\
   \end{mymatrix}= 
   \begin{mymatrix} S_{1}\\ S_{2} \\ \vdots \\ S_{e}\\
   \end{mymatrix}. 
\end{equation*}  
where $X_i$ is the inverse of $i$th root of $\sigma(x)$ and $S_i$ is the $i$th syndrome (see definitions \ref{defi:sigma_omega} and \ref{defi:syndromes}).
In the \BM decoding algorithm \ref{alg:BM}, the error values are usually determined simultaneously to the error positions using the error-evaluator polynomial $\omega(x)$ and Forney's formula (\ref{forney}) (see remark \ref{rem:pipelinedBM}). But we have  also shown in proposition \ref{prop:BM3} that the computation of the error-evaluator polynomial in the \BM decoding algorithm  can be avoided. Indeed, we proved that for any $i\in\{1,2,\dots,e\}$:
$$E_i=-\frac{\left(X_i^{-1}\right)^{d-2}}{\sigma'(X_i^{-1})\tau^{(d-1)}(X_i^{-1})}$$
where the polynomial $\tau^{(d-1)}(x)$ is one of the auxiliary polynomials calculate in the \BM decoding algorithm.\\

The aim of this chapter is to complete the comparison between the \fPGZ  and the \BM decoders showing how to calculate the error values  in both  the decoding algorithms by using the linear algebra of the syndrome matrix and the byproducts of the computations for the error-locator polynomial.

\section{Horiguchi's formula}\label{sect:horiguchi}
In \cite{horiguchi} T. Horiguchi presents a new error-evaluation method for computing error values in decoding Reed-Solomon codes through  the \BM decoding algorithm using simply tools of linear algebra. In this section we presented this method in order to apply it both to the \fPGZ  and the \BM decoding algorithms. With this aim, we introduce the following tools:       
\begin{defi} \label{defi:B_and_v}
For any $i\in\{1,2,\dots, t\}$, let $B_i$ be the $i\times (i+1)$ matrix defined by
 $$ B_i=\begin{mymatrix}
                         S_1          & S_2             & \cdots & S_{i+1} \\
                         S_2          & S_3             & \cdots & S_{i+2}\\
                         \vdots       & \vdots          &        & \vdots \\
                         S_{i}    & S_{i+1}        & \cdots & S_{2i}\\
         \end{mymatrix}$$ 
 Moreover for any $j=0,1,\dots,i$ we call $B_i^{(j)}$ the square $i\times i$ matrix obtained from $B_i$ removing the $(j+1)$th column and  we define 
 $$k_{j}^{(i)}=(-1)^{i+j}\det(B_i^{(j)})$$
 Finally we call $\vct{k}^{(i)}$ the vector in $(\F)^{i+1}$ given by
        $$ \vct{k}^{(i)}=\begin{mymatrix}
                          k^{(i)}_0\\k^{(i)}_1\\ \vdots\\ k^{(i)}_{i}
                          \end{mymatrix}$$
\end{defi}

\begin{rem}
 Note that for any $i\in\{2,3,\dots,t\}$ we have that
\begin{equation}\label{rel_det}
k_{i}^{(i)}=(-1)^{2i}\det(B_i^{(i)})=\det(A_i)= \displaystyle \sum_{j=0}^{i-1} S_{i+j}\,k_j^{(i-1)}
\end{equation}
where $A_i$ is the $i\times i$ leading principal minor of the syndrome matrix (see definition \ref{def:syndromematrix}). Moreover for any $i\in\{1,2, \dots,t\}$ and $j\in\{1,2, \dots,i\}$, expanding the determinant of the following singular matrix along the last row 
$$\begin{mymatrix}
                         S_1          & S_2             & \cdots & S_{i+1} \\
                         S_2          & S_3             & \cdots & S_{i+2}\\
                         \vdots       & \vdots          &        & \vdots \\
                         S_{i}    & S_{i+1}        & \cdots & S_{2i}\\
                         S_{j}    & S_{j+1}        & \cdots & S_{j+i}\\
         \end{mymatrix}$$
we obtain that $\displaystyle\sum_{l=0}^{i} S_{j+l}k_l^{(i)} =0$. Hence $B_i\vct{k}^{(i)}=\vct{0}$ for any $\forall\,i\in\{1,2, \dots,t\}$.
\end{rem}

We recall that $\cw\in RS(n,d,\alpha)$, $\e$ and $\rw=\cw+\e$ respectively denote the transmitted codeword, the vector error and the received word. Whereas $t$ is the error correction capability of the code $RS(n,d,\alpha)$. From now on, throughout this  chapter, we suppose that $e=\wt(\e)>1$.
\begin{prop}\label{prop:kernel}
If $e=\wt(\e)$, then 
$$\dim(\ker(B_e))=\dim(\ker(B_{e-1}))=1$$  
In particular $\ker(B_e)=\emph{\text{span}}\{\vct{k}^{(e)}\}$ and $\ker(B_{e-1})=\emph{\text{span}}\{\vct{k}^{(e-1)}\}$.
\end{prop}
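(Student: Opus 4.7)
The plan is to handle the two kernels separately, exploiting what we already know about ranks of the Hankel syndrome matrices from Proposition \ref{prop:syndromematrix}. The first key observation is that deleting the last column of $B_e$ yields precisely the leading principal minor $A_e$, so by definition $k_e^{(e)} = (-1)^{2e}\det(B_e^{(e)}) = \det(A_e)$, which is nonzero by Proposition \ref{prop:syndromematrix}(1). Hence $\vct{k}^{(e)} \neq \vct{0}$, and combined with the remark that $B_e \vct{k}^{(e)} = \vct{0}$, we get $\text{span}\{\vct{k}^{(e)}\} \subseteq \ker(B_e)$.

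To upgrade this inclusion to equality, I would bound the dimension from above. The matrix $B_e$ is $e \times (e+1)$ and contains the non-singular $e \times e$ matrix $A_e$ as a submatrix, so $\rk(B_e) = e$; by rank-nullity $\dim(\ker(B_e)) = 1$, which closes this half of the statement.

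For $B_{e-1}$, the step that I expect to be the main obstacle is proving $\vct{k}^{(e-1)} \neq \vct{0}$. The naive approach would be to invoke $\det(A_{e-1}) \neq 0$, but this is not available: Proposition \ref{prop:syndromematrix}(3) only guarantees non-singularity of $A_i$ at the maximal index $i = e$, and a lower leading principal minor may well be singular. The workaround is to go through ranks again. Observe that $B_{e-1}$ is obtained from $A_e$ by deleting the last row; since $\rk(A_e) = e$ and deletion of one row lowers the rank by at most one, $\rk(B_{e-1}) \geq e-1$, and the row count forces equality $\rk(B_{e-1}) = e-1$. By rank-nullity, $\dim(\ker(B_{e-1})) = 1$.

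It remains to exhibit $\vct{k}^{(e-1)}$ as a nonzero kernel vector. Since $\rk(B_{e-1}) = e-1$, at least one of the $(e-1)\times(e-1)$ minors $\det(B_{e-1}^{(j)})$ is nonzero, so at least one coordinate $k_j^{(e-1)}$ of $\vct{k}^{(e-1)}$ is nonzero. Combined with $B_{e-1}\vct{k}^{(e-1)} = \vct{0}$ and the one-dimensionality of the kernel, this yields $\ker(B_{e-1}) = \text{span}\{\vct{k}^{(e-1)}\}$ and finishes the proof.
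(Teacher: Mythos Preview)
Your proof is correct. The treatment of $B_e$ is identical to the paper's. For $B_{e-1}$ you take a slightly different route: the paper first invokes the Laplace expansion identity $k_e^{(e)}=\sum_{j=0}^{e-1}S_{e+j}\,k_j^{(e-1)}$ (equation~(\ref{rel_det})) to deduce directly from $k_e^{(e)}\neq 0$ that some $k_l^{(e-1)}\neq 0$, and then reads off both $\vct{k}^{(e-1)}\neq\vct{0}$ and $\rk(B_{e-1})=e-1$ from that single fact. You instead observe that $B_{e-1}$ is $A_e$ with its last row removed, get $\rk(B_{e-1})=e-1$ by a pure rank argument, and only then conclude that some maximal minor---hence some $k_j^{(e-1)}$---is nonzero. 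Both arguments are equally short; yours has the small advantage of not needing the cofactor identity~(\ref{rel_det}), while the paper's approach makes the link between $\vct{k}^{(e)}$ and $\vct{k}^{(e-1)}$ explicit, which is in the spirit of how these vectors are used downstream.
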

\begin{proof}
We saw in proposition \ref{prop:syndromematrix} that the square matrix $A_e$ is non-singular, so the matrix $B_e$ has rank equal to $e$ and it immediately follows that the dimension of $\ker(B_e)$ is equal to 1. Moreover as $k_{e}^{(e)}=\det(A_e)\neq0$ and $\vct{k}^{(e)}\in \ker(B_e)$,  it is clear that $\ker(B_e)=\text{span}\{\vct{k}^{(e)}\}$.
Since $$\displaystyle k_{e}^{(e)}= \sum_{j=0}^{e-1} S_{e+j}\,k_j^{(e-1)}$$ from $k_{e}^{(e)}\neq 0$ it follows that there is an index $l\in \{1,2,\dots,e\}$ such that $k_l^{(e-1)} \neq 0$. Therefore, $\vct{k}^{(e-1)}\neq \vct{0}$ and there is a $(e-1)\times (e-1)$ non-singular minor in the matrix $B_{e-1}$. Hence we can conclude that $\rk(B_{e-1})=e-1$, i.e.\ $\ker(B_{e-1})$ has dimension equal to 1, and that  $\ker(B_{e-1})=\text{span}\{\vct{k}^{(e-1)}\}$.\\
\end{proof}

We recall the notation used for polynomials of $\F[x]$ (see remark \ref{rem:reciproco}) :
\begin{rem}\label{rem:reciproco2}
 If $\vct{v}=(v_0, v_1, \dots,v_{i-1})\in (\F)^i$ then we will call:
 \begin{align*}
 &\vct{v}(x)\overset{def}{=}\sum_{j=0}^{i-1}v_{j} x^j=v_0+v_1x+\cdots+v_{i-1}x^{i-1}\\
 &\overline{\vct{v}}(x)\overset{def}{=}\sum_{j=0}^{i-1}v_{j} x^{i-1-j}=v_{i-1}+v_{i-2}x+\cdots+v_0x^{i-1}\\
 &P_{\vct{v}}(x)\overset{def}{=}1+v_{i-1}x+\dots+v_{0}x^1 
 \end{align*}
 \end{rem}

\begin{cor}\label{cor:kersigma}
Let $\overline{\vct{k}}^{(e)}(x)$ be the polynomial obtained from vector $\vct{k}^{(e)}$ as in remark \ref{rem:reciproco2}, that is 
$\overline{\vct{k}}^{(e)}(x)=k_0^{(e)}x^e+\cdots+k_{e-1}^{(e)}x+k_e^{(e)}$. 
If $e\leq t$, then we have that
$$ \sigma(x)=\displaystyle \frac{1}{k_{e}^{(e)}}\,\overline{\vct{k}}^{(e)}(x)$$
\end{cor}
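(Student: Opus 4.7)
The plan is to identify the coefficient vector of $\sigma(x)$, suitably ordered, as the (unique up to scalar) generator of $\ker(B_e)$, and then use the normalization forced by $\sigma(0)=1$.

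First I would recast the key equation in terms of the matrix $B_e$. By the linear system (\ref{sistema2Matrice}), the coefficients of $\sigma(x)$ satisfy
\begin{equation*}
A_e \begin{mymatrix} \sigma_e\\ \sigma_{e-1}\\ \vdots\\ \sigma_1 \end{mymatrix} = -\begin{mymatrix} S_{e+1}\\ S_{e+2}\\ \vdots\\ S_{2e} \end{mymatrix}.
\end{equation*}
Since the column $-(S_{e+1},\dots,S_{2e})^T$ is precisely minus the last column of $B_e$ and $A_e$ consists of the first $e$ columns of $B_e$, the above is equivalent to
\begin{equation*}
B_e \begin{mymatrix} \sigma_e\\ \sigma_{e-1}\\ \vdots\\ \sigma_1 \\ 1 \end{mymatrix} = \vct{0}.
\end{equation*}

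Next I would invoke proposition \ref{prop:kernel}: since $e\leq t$, $\ker(B_e)$ is one-dimensional and spanned by $\vct{k}^{(e)}$. Hence there exists $c\in\F$ with $(\sigma_e,\sigma_{e-1},\dots,\sigma_1,1)^T = c\,\vct{k}^{(e)}$. Comparing the last component and using $k_e^{(e)}=\det(A_e)\neq 0$ forces $c=1/k_e^{(e)}$, which gives $\sigma_{e-j} = k_j^{(e)}/k_e^{(e)}$ for $j=0,1,\dots,e$ (including the case $j=e$, which recovers $\sigma_0=1$).

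Finally, substituting these identities into $\sigma(x)=\sum_{j=0}^e \sigma_{e-j} x^{e-j}$ and recognizing the resulting sum as $\frac{1}{k_e^{(e)}}\sum_{j=0}^e k_j^{(e)} x^{e-j} = \frac{1}{k_e^{(e)}}\overline{\vct{k}}^{(e)}(x)$ completes the proof. There is no serious obstacle here; the only subtle point is the bookkeeping between the natural ordering of the coefficients of $\sigma(x)$ (from low to high degree) and the ordering used in definition \ref{defi:B_and_v} for $\vct{k}^{(e)}$, which is exactly why the reciprocal polynomial $\overline{\vct{k}}^{(e)}(x)$ rather than $\vct{k}^{(e)}(x)$ appears on the right-hand side.
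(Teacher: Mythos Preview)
Your proof is correct and follows essentially the same approach as the paper: show that the coefficient vector $(\sigma_e,\dots,\sigma_1,1)^T$ lies in $\ker(B_e)$, invoke proposition~\ref{prop:kernel} to write it as a scalar multiple of $\vct{k}^{(e)}$, and fix the scalar via the last component. Your version is in fact slightly more explicit than the paper's in spelling out the passage from the vector identity back to the polynomial identity.
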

\begin{proof}
If $\sigma(x)=\sigma_ex^e + \cdots+\sigma_1x+1$, then we call   
$\vct{\sigma}=\left(\sigma_e, \dots, \sigma_1, 1\right)^T\in(\F)^{e+1}  $
Since $e\leq t$, from the linear system (\ref{sistema1Matrice}) on page \pageref{sistema1Matrice} it follows that $\vct{\sigma} \in \ker(B_e)$. Thus, by proposition \ref{prop:kernel}, there exists  $\lambda \in \F$ such that $\vct{\sigma}=\lambda \vct{k}^{(e)}$. Using the last component of the former equality we find  that $\lambda=\frac{1}{v^{(e)}_{e}}$. Thus
$$ \vct{\sigma}= \frac{1}{k_{e}^{(e)}}\vct{k}^{(e)}$$
and the  proof is completed.
\end{proof}

We recall that if $e=\wt(\e)$ is the number of the error that have occurred in the positions $p_1,p_2,\dots,p_e$, then the elements $X_1,X_2,\dots,X_e$ are defined as $X_i=\alpha^{p_i}$ and are the inverses of the roots of the error-locator polynomial $\sigma(x)$ (see definition \ref{defi:sigma_omega}).

\begin{thm}[Horiguchi's formula]\label{thm:horiguchi}
 Consider the vectors $\vct{k}^{(e-1)}$ and $\vct{k}^{(e)}$ as in definition \ref{defi:B_and_v} and let $\overline{\vct{k}}^{(e-1)}(x)$ be the polynomial obtained as in remark \ref{rem:reciproco2}.
 For any $i\in\{1,2,\dots,e\}$, the following equation holds: 
 $$E_i=-\frac{k_{e}^{(e)} \cdot \left(X_i^{-1}\right)^{2(e-1)}}{\sigma'(X_i^{-1})\overline{\vct{k}}^{(e-1)}(X_i^{-1})}$$
where $\sigma'(x)$ is the formal derivative of the error-locator polynomial $\sigma(x)$.
\end{thm}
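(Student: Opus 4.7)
The plan is to leverage Forney's formula $E_i = -\omega(X_i^{-1})/\sigma'(X_i^{-1})$ and reduce Horiguchi's formula to the polynomial identity
$$\omega(X_i^{-1})\,\overline{\vct{k}}^{(e-1)}(X_i^{-1}) \;=\; k_e^{(e)}\,(X_i^{-1})^{2(e-1)}.$$
To prove this, I would first analyse the polynomial $\overline{\vct{k}}^{(e-1)}(x)\,S(x)$. Expanding the product and using that the $j$-th row of $B_{e-1}\vct{k}^{(e-1)}=\vct{0}$ exactly equals $\sum_{l=0}^{e-1}S_{j+l}\,k_l^{(e-1)}$, a direct inspection shows that the coefficients of $x^{e-1},x^{e},\dots,x^{2e-3}$ in $\overline{\vct{k}}^{(e-1)}(x)S(x)$ all vanish. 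The next coefficient, that of $x^{2e-2}$, equals $\sum_{l=0}^{e-1}S_{e+l}\,k_l^{(e-1)}$, which by the identity (\ref{rel_det}) is precisely $k_e^{(e)}$. Hence there exist polynomials $\tilde{\omega}(x)$ of degree at most $e-2$ and $h(x)$ such that
$$\overline{\vct{k}}^{(e-1)}(x)\,S(x)\;=\;\tilde{\omega}(x)+k_e^{(e)}x^{2e-2}+x^{2e-1}h(x).$$

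Next, I would multiply the key equation $\sigma(x)S(x)\equiv\omega(x)\Mod{x^{d-1}}$ by $\overline{\vct{k}}^{(e-1)}(x)$ and substitute the above decomposition. Since $e\leq t=\pint{(d-1)/2}$ implies $2e-1\leq d-1$, we may reduce modulo $x^{2e-1}$, and using $\sigma(x)x^{2e-2}\equiv x^{2e-2}\Mod{x^{2e-1}}$ (because $\sigma(0)=1$) we obtain
$$\sigma(x)\tilde{\omega}(x)+k_e^{(e)}x^{2e-2}\;\equiv\;\overline{\vct{k}}^{(e-1)}(x)\,\omega(x)\Mod{x^{2e-1}}.$$
Both sides are polynomials of degree at most $2e-2$, so the congruence is an equality in $\F[x]$.

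Evaluating this identity at the root $X_i^{-1}$ of $\sigma(x)$ makes the term $\sigma(X_i^{-1})\tilde{\omega}(X_i^{-1})$ vanish, leaving
$$k_e^{(e)}\,(X_i^{-1})^{2(e-1)}\;=\;\overline{\vct{k}}^{(e-1)}(X_i^{-1})\,\omega(X_i^{-1}),$$
which is exactly the identity required. Substituting the resulting expression for $\omega(X_i^{-1})$ into Forney's formula (\ref{forney}) yields Horiguchi's formula. The main technical step, and the part most prone to off-by-one slips, is the bookkeeping in the first paragraph: correctly lining up the indices of $\overline{\vct{k}}^{(e-1)}$ with those of $S(x)$, so that the $e-1$ kernel equations and the relation (\ref{rel_det}) land precisely on the coefficients of $x^{e-1},\dots,x^{2e-2}$.
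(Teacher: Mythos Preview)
Your argument is correct and is genuinely different from the paper's proof. The paper proceeds via determinant manipulations: it introduces an auxiliary Hankel matrix $\widehat{A}_{e-1}^{(i)}$ built from ``shifted'' error data $\widehat{E}_j^{(i)}=E_jX_j(1-X_jX_i^{-1})$ and $\widehat{X}_j^{(i)}=X_jX_i^{-1}$, computes its determinant in two ways (once by a Vandermonde--diagonal factorization, once by a direct Laplace-type expansion that produces $\overline{\vct{k}}^{(e-1)}(X_i^{-1})$), and then takes the ratio $\det(A_e)/\overline{\vct{k}}^{(e-1)}(X_i^{-1})$ to reach the formula.

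Your route is polynomial-algebraic rather than determinantal: you recognise that the kernel relation $B_{e-1}\vct{k}^{(e-1)}=\vct{0}$ is exactly what makes the middle block of coefficients of $\overline{\vct{k}}^{(e-1)}(x)S(x)$ vanish, you multiply the key equation by $\overline{\vct{k}}^{(e-1)}(x)$, and then a degree count turns the congruence modulo $x^{2e-1}$ into an equality that can be evaluated at $X_i^{-1}$. This is essentially the same mechanism the paper later uses (in a different context) for proposition~\ref{prop:BM3}, and it is shorter and more conceptual than the determinant computation. A pleasant side effect of your identity $k_e^{(e)}(X_i^{-1})^{2(e-1)}=\overline{\vct{k}}^{(e-1)}(X_i^{-1})\omega(X_i^{-1})$ is that, since $k_e^{(e)}=\det(A_e)\neq 0$, it shows automatically that $\overline{\vct{k}}^{(e-1)}(X_i^{-1})\neq 0$, so the denominator in Horiguchi's formula is nonzero. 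The paper's approach, on the other hand, makes the Vandermonde structure behind $\overline{\vct{k}}^{(e-1)}(X_i^{-1})$ explicit, which is useful if one wants to read off the product expansion of that quantity.
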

\begin{proof} 
 In order to prove the theorem, we introduce the following elements:\\
 for any $i, j \in\{1,2,\dots,e\}$ let
 \begin{align*}
    &\widehat{E}^{(i)}_j=E_jX_j(1-X_jX_i^{-1})\in \F,\\
    &\widehat{X}^{(i)}_j=X_jX_i^{-1} \in \F.
 \end{align*}
 Moreover for any $l\geq1$ we consider the sum given by
 $$\widehat{S}^{(i)}_l=\sum_{\substack{j=1\\j\neq i}}^{e}\widehat{E}^{(i)}_j\left(\widehat{X}^{(i)}_j\right)^{l-1}.$$
 Finally let $\widehat{A}_{e-1}^{(i)}$ be  the $(e-1)\times(e-1)$ matrix defined by
 $$\widehat{A}_{e-1}^{(i)}=\begin{mymatrix}
                  \widehat{S}^{(i)}_1 & \widehat{S}^{(i)}_2 & \cdots &\widehat{S}^{(i)}_{e-1}\\ 
                  \widehat{S}^{(i)}_2 & \widehat{S}^{(i)}_3 & \cdots &\widehat{S}^{(i)}_{e}\\
                  \vdots            & \vdots            &        &\vdots\\
                  \widehat{S}^{(i)}_{e-1} & \widehat{S}^{(i)}_e & \cdots &\widehat{S}^{(i)}_{2e-3}\\ 
               \end{mymatrix}$$
 It is easy to verify that $\widehat{A}_{e-1}^{(i)}=(\widehat{V}^{(i)})^T\widehat{D}^{(i)}\widehat{V}^{(i)}$, where $\widehat{V}^{(i)}$ and $\widehat{D}^{(i)}$ are the square matrices defined by:
 $$\widehat{V}^{(i)}=\begin{mymatrix}
                  1 & \widehat{X}^{(i)}_1 & \cdots &\left(\widehat{X}^{(i)}_1\right)^{e-2}\\ 
                  \vdots            & \vdots            &        &\vdots\\
                  1 & \widehat{X}^{(i)}_{i-1} & \cdots &\left(\widehat{X}^{(i)}_{i-1}\right)^{e-2}\\
                  1 & \widehat{X}^{(i)}_{i+1} & \cdots &\left(\widehat{X}^{(i)}_{i+1}\right)^{e-2}\\
                  \vdots            & \vdots            &        &\vdots\\
                  1 & \widehat{X}^{(i)}_e & \cdots &\left(\widehat{X}^{(i)}_e\right)^{e-2}\\ 
               \end{mymatrix}$$
  $$\widehat{D}^{(i)}=\begin{mymatrix}
                 \widehat{E}^{(i)}_1\widehat{X}^{(i)}_1   & 0      &\cdots  &\cdots   &\cdots    & 0 \\
                  0                                   &\ddots  &        &         &          & \vdots \\
                 \vdots                               &        &\widehat{E}^{(i)}_{i-1}\widehat{X}^{(i)}_{i-1}   &  &    \\
                                                      &        &    &\widehat{E}^{(i)}_{i-1}\widehat{X}^{(i)}_{i-1}  &  &\vdots \\
                 \vdots                                &        &        &         & \ddots         & 0 \\
                 0                                 & \cdots &\cdots  &\cdots        &   0 & \widehat{E}^{(i)}_{e}\widehat{X}^{(i)}_{e}\\
         \end{mymatrix}$$  
 
 \noindent Hence, by Binet's formula, we have that
 \begin{align*}
      \det(\widehat{A}_{e-1}^{(i)})&=\prod_{\substack{j=1\\j\neq i}}^{e}\widehat{E}^{(i)}_j\widehat{X}^{(i)}_j
                     \prod_{\substack{l>j\\l,j\neq i}}(\widehat{X}^{(i)}_l-\widehat{X}^{(i)}_j)^2=\\
                 &=\left(X_i^{-1}\right)^{(e-1)(e-2)}\prod_{\substack{j=1\\j\neq i}}^{e}E_jX_j(1-X_jX_i^{-1})\prod_{\substack{l>j\\l,j\neq i}}(X_l-X_j)^2
 \end{align*}
 On the other hand, we observe that 
 \begin{align*}
          \widehat{S}^{(i)}_l& =\sum_{\substack{j=1\\j\neq i}}^{e}\widehat{E}^{(i)}_j\left(\widehat{X}^{(i)}_j\right)^{l-1}
                     \overset{\displaystyle\overset{\displaystyle\widehat{E}^{(i)}_i=0}{\big\downarrow}}{=}
                   \sum_{j=1}^{e}\widehat{E}^{(i)}_j\left(\widehat{X}^{(i)}_j\right)^{l-1}=\\
                  &=\sum_{j=1}^{e}E_jX_j(1-X_jX_i^{-1})\left(X_jX_i^{-1}\right)^{l-1}=\\
                  &=\sum_{j=1}^{e}E_jX_j^l\left(X_i^{-1}\right)^{l-1}-\sum_{j=1}^{e}E_jX_j^{l+1}\left(X_i^{-1}\right)^{l}=\\ 
                  &=S_l\left(X_i^{-1}\right)^{l-1}-S_{l+1}\left(X_i^{-1}\right)^{l}
 \end{align*} 
 Thus $\widehat{A}_{e-1}^{(i)}$ is equal to  
 $$\begin{mymatrix}
     S_1-S_2X_i^{-1} &   \cdots & S_{e-1}\left(X_i^{-1}\right)^{e-2}-S_{e}\left(X_i^{-1}\right)^{e-1}\\
     S_2X_i^{-1}-S_3\left(X_i^{-1}\right)^2 & \cdots & S_{e}\left(X_i^{-1}\right)^{e-1}-S_{e+1}\left(X_i^{-1}\right)^{e}\\ 
     \vdots                        &        &\vdots\\
     S_{e-1}\left(X_i^{-1}\right)^{e-2}-S_{e}\left(X_i^{-1}\right)^{e-1} &\cdots & S_{2e-3}\left(X_i^{-1}\right)^{2e-4}-S_{2e-2}\left(X_i^{-1}\right)^{2e-3}\\ 
               \end{mymatrix} $$
  and, by linearity, we have that 
 \begin{align*} 
     \det(\widehat{A}_{e-1}^{(i)})&=
          \det\begin{mymatrix}
               S_1 &  S_2X_i^{-1} &\cdots & S_e\left(X_i^{-1}\right)^{e-1}\\
               S_2X_i^{-1} &  S_3\left(X_i^{-1}\right)^{2} &\cdots & S_{e+1}\left(X_i^{-1}\right)^{e}\\
               \vdots & \vdots & & \vdots\\
               S_{e-1} \left(X_i^{-1}\right)^{e-2} &  S_e\left(X_i^{-1}\right)^{e-1} &\cdots & S_{2e-2}\left(X_i^{-1}\right)^{2e-3}\\
               1 & 1 & \cdots &1\\
              \end{mymatrix}=\\ \\
          &= \left(X_i^{-1}\right)^{\frac{(e-2)(e-1)}{2}}\det
               \begin{mymatrix}
               S_1 &  S_2X_i^{-1} &\cdots & S_e\left(X_i^{-1}\right)^{e-1}\\
               S_2 &  S_3X_i^{-1} &\cdots & S_{e+1}\left(X_i^{-1}\right)^{e-1}\\
               \vdots & \vdots & & \vdots\\
               S_{e-1}&  S_eX_i^{-1} &\cdots & S_{2e-2}\left(X_i^{-1}\right)^{e-1}\\
               1 & 1 & \cdots &1\\
              \end{mymatrix}=\\ \\
        &=\left(X_i^{-1}\right)^{\frac{(e-2)(e-1)}{2}}\sum_{j=0}^{e-1}\left(X_i^{-1}\right)^{\frac{(e-1)e}{2}-j}k_j^{(e-1)}=\\
         &=\left(X_i^{-1}\right)^{\frac{(e-2)(e-1)}{2}+\frac{(e-1)e}{2}-e+1}\sum_{j=0}^{e-1}\left(X_i^{-1}\right)^{e-1-j}k_j^{(e-1)}=\\
          &=\left(X_i^{-1}\right)^{(e-2)(e-1)}\overline{\vct{k}}^{(e-1)}(X_i^{-1})
 \end{align*}
 Now, comparing the two formulas just obtained for $\det(A_i)$, we deduce that   
 $$\overline{\vct{k}}^{(e-1)}(X_i^{-1})=\prod_{\substack{j=1\\j\neq i}}^{e}E_jX_j(1-X_jX_i^{-1})\prod_{\substack{l>j\\l,j\neq i}}(X_l-X_j)^2$$
 Finally, using the last equation we can calculate that
 \begin{align*}
  \frac{k_{e}^{(e)}}{\overline{\vct{k}}^{(e-1)}(X_i^{-1})}=
  \frac{\det(A_e)}{\overline{\vct{k}}^{(e-1)}(X_i^{-1})}
      &=\frac{\displaystyle\prod_{j=1}^{e}E_jX_j\prod_{l>j}(X_l-X_j)^2}{\displaystyle\prod_{\substack{j=1\\j\neq i}}^{e}E_jX_j(1-X_jX_i^{-1})\prod_{\substack{l>j\\l,j\neq i}}(X_l-X_j)^2}=\\
      &=\frac{E_iX_i}{\displaystyle\prod_{\substack{j=1\\j\neq i}}^{e}
      (1-X_jX_i^{-1})}\prod_{\substack{j=1\\j\neq i}}^{e}(X_i-X_j)^2=\\
      &=E_iX_i \; X_i^{2(e-1)}\prod_{\substack{j=1\\j\neq i}}^{e}(1-X_jX_i^{-1})=\\
      &=-E_iX_i^{2(e-1)}\sigma'(X_i^{-1})
    \end{align*}  This concludes the proof.\\
\end{proof}
\vspace{2cm}
\begin{cor}[generalized Horiguchi's formula]\label{cor:horiguchi}\hspace{2cm}\\
 If $\vct{u}=(u_0,u_1,\dots,u_{e-1})\in \ker(B_{e-1})\setminus\{\vct{0}\}$, then for any $i\in\{1,2,\dots,e\}$ it holds that
 $$E_i=-\frac{\displaystyle\left(X_i^{-1}\right)^{2(e-1)}\sum_{j=0}^{e-1} S_{e+j}\,u_j}{\sigma'(X_i^{-1})\overline{\vct{u}}(X_i^{-1})}$$
 where $\overline{\vct{u}}(x)=u_0x^{e-1}+u_1 x^{e-2}+\cdots+u_{e-1}$ as in remark \ref{rem:reciproco2}.
\end{cor}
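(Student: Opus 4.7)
The plan is to derive the generalized formula from Horiguchi's original formula (Theorem \ref{thm:horiguchi}) by exploiting the one-dimensionality of $\ker(B_{e-1})$ established in Proposition \ref{prop:kernel}. Since both sides of the claimed identity are homogeneous of degree zero in $\vct{u}$ (the numerator is linear in $\vct{u}$ via the sum $\sum_{j=0}^{e-1} S_{e+j}u_j$, and the denominator is linear in $\vct{u}$ via $\overline{\vct{u}}(X_i^{-1})$), scaling $\vct{u}$ by a nonzero constant should leave the right-hand side unchanged. This suggests the proof reduces to the already-known case $\vct{u} = \vct{k}^{(e-1)}$.

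First I would invoke Proposition \ref{prop:kernel} to conclude that $\ker(B_{e-1}) = \mathrm{span}\{\vct{k}^{(e-1)}\}$, so that any $\vct{u} \in \ker(B_{e-1}) \setminus \{\vct{0}\}$ can be written as $\vct{u} = \lambda\,\vct{k}^{(e-1)}$ for a unique $\lambda \in \F^{\,*}$. Next I would compute the two relevant quantities:
\begin{align*}
\sum_{j=0}^{e-1} S_{e+j}\,u_j &= \lambda \sum_{j=0}^{e-1} S_{e+j}\,k_j^{(e-1)} = \lambda\, k_e^{(e)},\\
\overline{\vct{u}}(X_i^{-1}) &= \lambda\, \overline{\vct{k}}^{(e-1)}(X_i^{-1}),
\end{align*}
where the first identity uses relation (\ref{rel_det}), namely $k_e^{(e)} = \det(A_e) = \sum_{j=0}^{e-1} S_{e+j}\,k_j^{(e-1)}$, and the second is immediate from the definition of the reciprocal polynomial in Remark \ref{rem:reciproco2} and the linearity of $\vct{v} \mapsto \overline{\vct{v}}(x)$.

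Finally I would substitute these two expressions into the right-hand side of the claimed identity and observe that the factor $\lambda$ cancels between numerator and denominator, yielding
$$-\frac{(X_i^{-1})^{2(e-1)}\,\lambda\, k_e^{(e)}}{\sigma'(X_i^{-1})\,\lambda\, \overline{\vct{k}}^{(e-1)}(X_i^{-1})} = -\frac{k_e^{(e)} \cdot (X_i^{-1})^{2(e-1)}}{\sigma'(X_i^{-1})\,\overline{\vct{k}}^{(e-1)}(X_i^{-1})},$$
which equals $E_i$ by Theorem \ref{thm:horiguchi}. There is no real obstacle here: the corollary is essentially the observation that Horiguchi's formula is projectively invariant under the one-dimensional choice of kernel vector, and the hard work has already been done in proving Theorem \ref{thm:horiguchi} and Proposition \ref{prop:kernel}. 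The only minor check worth flagging is that one must ensure $\overline{\vct{k}}^{(e-1)}(X_i^{-1}) \neq 0$ so that the division by $\lambda\,\overline{\vct{k}}^{(e-1)}(X_i^{-1})$ is legitimate, but this is already implicit in the statement of Theorem \ref{thm:horiguchi}, whose proof exhibited this quantity as a nonzero product of nonzero factors.
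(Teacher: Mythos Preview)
Your proof is correct and follows essentially the same approach as the paper: invoke Proposition \ref{prop:kernel} to write $\vct{u}=\lambda\vct{k}^{(e-1)}$, use relation (\ref{rel_det}) to identify the numerator sum as $\lambda k_e^{(e)}$, observe the $\lambda$'s cancel, and conclude by Theorem \ref{thm:horiguchi}. Your added remark that $\overline{\vct{k}}^{(e-1)}(X_i^{-1})\neq 0$ follows from the product formula in the proof of Theorem \ref{thm:horiguchi} is a nice extra check the paper leaves implicit.
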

\begin{proof}
 By proposition \ref{prop:kernel}, we know that there exists $\lambda\in \F^*$ such that $$\vct{u}=\lambda\vct{k}^{(e-1)}$$
 Hence we have that
 \begin{align*}
  \frac{\displaystyle\left(X_i^{-1}\right)^{2(e-1)}\sum_{j=0}^{e-1} S_{e+j}\,u_j}{\sigma'(X_i^{-1})\overline{\vct{u}}(X_i^{-1})}&=
   \frac{\displaystyle\left(X_i^{-1}\right)^{2(e-1)}\sum_{j=0}^{e-1} S_{e+j}\,\lambda v^{(e-1)}_j}{\sigma'(X_i^{-1})\lambda\overline{\vct{k}}^{(e-1)}(X_i^{-1})}\quad\;=\\
   &=\frac{\displaystyle\left(X_i^{-1}\right)^{2(e-1)}\sum_{j=0}^{e-1} S_{e+j}\,v^{(e-1)}_j}{\sigma'(X_i^{-1})\overline{\vct{k}}^{(e-1)}(X_i^{-1})}\underset{\displaystyle\underset{\text{equation }(\ref{rel_det})}{\displaystyle\uparrow}}{=}\\
   &=\frac{ \left(X_i^{-1}\right)^{2(e-1)}k_{e}^{(e)}}{\sigma'(X_i^{-1})\overline{\vct{k}}^{(e-1)}(X_i^{-1})}=-E_i
 \end{align*}  Thus the proof is completed.
\end{proof}

Throughout the next two sections, the formula stated in corollary \ref{cor:horiguchi} will be the key to find new error-evaluation formulas for the \fPGZ  and the \BM decoding  algorithms.

\section{Application to the \fPGZ  Decoder}
\label{sect:horiguchiPGZ}
In this section we will show how use corollary \ref{cor:horiguchi} of Horiguchi's formula in order to find a new method of computing error values suited to  the \fPGZ  decoder. We recall the \fPGZ  decoding algorithm \ref{alg:fastPGZ} (summarized on page \pageref{alg:fastPGZ}): in step \reffpgz{fpgz2} the vectors  $\vct{w}^{(i)}\in(\F)^i$ satisfying
 \begin{equation*}
 A_i\vct{w}^{(i)}=-\begin{mymatrix} S_{i+1}\\ S_{i+2}\\ \vdots \\ S_{2i}\\
   \end{mymatrix}
 \end{equation*}
are computed for any index $i\leq t$ such that  $\det(A_i)\neq 0$.
In particular we will assume to have computed 
$$\theta=\max\left\{j<e \;|\;\det(A_j)\neq0\right\}$$
and we will show how use the vector $\vct{w}^{(\theta)}$, which is a byproduct of the computation of $\sigma(x)$, and Horiguchi's formula (theorem \ref{thm:horiguchi}) in order to calculate the error values.

\begin{prop}\label{prop:HoriPGZ}
 Let $\vct{w}^{(\theta)}=\left(w_0^{(\theta)},w_1^{(\theta)},\dots,w_{\theta-1}^{(\theta)}\right)^T\in(\F)^\theta$ be the vector satisfying 
 \begin{equation}\label{aiutino4}
 A_\theta\vct{w}^{(\theta)}=-\begin{mymatrix} S_{\theta+1}\\ S_{\theta+2} \\\vdots \\ S_{2\theta}\\
   \end{mymatrix}
 \end{equation} and
 \begin{align*}
&P_{\vct{w}^{(\theta)}}(x)
=w_0^{(\theta)}x^\theta+w_1^{(\theta)}x^{\theta-1}+\cdots+w_{\theta-1}^{(\theta)}x+1\in\F[x]\\[.3cm]
&\varepsilon_{e-\theta}=S_{e}w_0^{(\theta)}+S_{e+1}w_1^{(\theta)}+\cdots+S_{e+\theta-1}w_{\theta-1}^{(\theta)}+S_{e+\theta}\in\F
\end{align*}
(see remark \ref{rem:reciproco2} and definition \ref{defi:epsilon}).
 Then for any $i\in\{1,2,\dots, e\}$ we have that
 \begin{equation}\label{rel:horiPGZ}
  E_i=-\frac{\varepsilon_{e-\theta}\cdot \displaystyle\left( X_i^{-1}\right)^{e+\theta-1}}{\displaystyle\sigma'(X_i^{-1})P_{\vct{w}^{(\theta)}}( X_i^{-1})}
 \end{equation}
 \end{prop}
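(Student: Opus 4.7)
The plan is to reduce this to the generalized Horiguchi formula (corollary \ref{cor:horiguchi}), which computes $E_i$ from \emph{any} nonzero element of $\ker(B_{e-1})$. So the key step is to build such a kernel element out of $\vct{w}^{(\theta)}$ and then check that, once everything is substituted in, the resulting formula simplifies to the claimed expression.

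Concretely, I would define the padded vector
$$\vct{u}=\bigl(w_0^{(\theta)},w_1^{(\theta)},\dots,w_{\theta-1}^{(\theta)},\,1,\,0,\dots,0\bigr)^T\in(\F)^e,$$
with the entry $1$ in position $\theta$ and $e-\theta-1$ trailing zeros. To verify $B_{e-1}\vct{u}=\vct{0}$, I would split the rows of $B_{e-1}\vct{u}$ into two blocks: the first $\theta$ rows give exactly the equations $\sum_{k=0}^{\theta-1}S_{l+k}w_k^{(\theta)}+S_{l+\theta}=0$ for $l=1,\dots,\theta$, which hold by the defining system (\ref{aiutino4}) for $\vct{w}^{(\theta)}$; the remaining $e-\theta-1$ rows give $\varepsilon_j$ (in the sense of definition \ref{defi:epsilon}) for $j=1,\dots,e-\theta-1$. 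Here is the one real observation in the proof: since $\theta=\max\{j<e:\det(A_j)\neq0\}$ and $\det(A_e)\neq0$, theorem \ref{thm:fastPGZ} forces the singularity gap at step $\theta$ to equal $r=e-\theta$, so $\varepsilon_1=\cdots=\varepsilon_{e-\theta-1}=0$ and $\varepsilon_{e-\theta}\neq0$. This identifies the singularity gap with the jump $\theta\to e$ and is the only place where the hypothesis on $\theta$ enters; it is the step I expect to need the most care to state cleanly.

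Having placed $\vct{u}\in\ker(B_{e-1})\setminus\{\vct{0}\}$, I would apply corollary \ref{cor:horiguchi} directly. The numerator sum there evaluates, by construction of $\vct{u}$, to
$$\sum_{j=0}^{e-1}S_{e+j}u_j=\sum_{j=0}^{\theta-1}S_{e+j}w_j^{(\theta)}+S_{e+\theta}=\varepsilon_{e-\theta}.$$
For the denominator, I would rewrite
$$\overline{\vct{u}}(x)=\sum_{j=0}^{e-1}u_j\,x^{e-1-j}=x^{e-\theta-1}\bigl(w_0^{(\theta)}x^{\theta}+\cdots+w_{\theta-1}^{(\theta)}x+1\bigr)=x^{e-\theta-1}P_{\vct{w}^{(\theta)}}(x),$$
so $\overline{\vct{u}}(X_i^{-1})=(X_i^{-1})^{e-\theta-1}P_{\vct{w}^{(\theta)}}(X_i^{-1})$.

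Substituting both computations into corollary \ref{cor:horiguchi} yields
$$E_i=-\frac{(X_i^{-1})^{2(e-1)}\,\varepsilon_{e-\theta}}{\sigma'(X_i^{-1})\,(X_i^{-1})^{e-\theta-1}P_{\vct{w}^{(\theta)}}(X_i^{-1})},$$
and combining the powers of $X_i^{-1}$ (using $2(e-1)-(e-\theta-1)=e+\theta-1$) gives exactly (\ref{rel:horiPGZ}). Note that $P_{\vct{w}^{(\theta)}}(X_i^{-1})\neq0$ automatically, because $\overline{\vct{u}}(X_i^{-1})$ cannot vanish in corollary \ref{cor:horiguchi}, so no extra non-degeneracy check is needed.
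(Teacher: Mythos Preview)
Your proposal is correct and follows essentially the same route as the paper's proof: pad $\vct{w}^{(\theta)}$ with a $1$ and zeros to obtain a nonzero element of $\ker(B_{e-1})$, invoke corollary~\ref{cor:horiguchi}, and simplify. Your explicit appeal to theorem~\ref{thm:fastPGZ} to identify the singularity gap $r=e-\theta$ (hence $\varepsilon_1=\cdots=\varepsilon_{e-\theta-1}=0$) is exactly what the paper means by ``the definition of $\theta$ implies\dots,'' and your closing remark that $P_{\vct{w}^{(\theta)}}(X_i^{-1})\neq0$ is a nice addition the paper leaves implicit.
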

\begin{proof}
 Recalling definition \ref{defi:B_and_v}, system (\ref{aiutino4}) implies that
\begin{equation*}
 B_\theta\begin{mymatrix}\vct{w}^{(\theta)}\\ 1\\
   \end{mymatrix}=\vct{0}
\end{equation*}
Further the definition of $\theta$ implies that the coefficients 
$$\varepsilon_j=S_{\theta+j}w_0^{(\theta)}+\cdots+S_{2\theta+j-1}w_{\theta-1}^{(\theta)}+S_{2\theta+j}$$
are equal to zero for any $j\in\{1,2,\dots,e-\theta-1\}$.
It follows that
$$\begin{mymatrix}
     & B_{\theta}\\
    S_{\theta+1} & \cdots & S_{2\theta+1}\\
    \vdots &             &\vdots\\
    S_{e-1} & \cdots & S_{e+\theta-1}\\
   \end{mymatrix}\begin{mymatrix}
                     \vct{w}^{(\theta)} \\ 1
                 \end{mymatrix}=\vct{0}$$
Hence we have that $\begin{mymatrix}
           \vct{w}^{(\theta)} \\ 1 \\ \vct{0}^{e-\theta-1}
       \end{mymatrix}\in\ker(B_{e-1})\setminus\{\vct{0}\}$ and by corollary \ref{cor:horiguchi} we can conclude that:
\begin{align*}
 E_i&=-\frac{\displaystyle\left[S_{e+\theta}+\sum_{j=0}^{\theta-1} S_{e+j}w_j^{(\theta)}\right]\left( X_i^{-1}\right)^{2(e-1)}}{\displaystyle\sigma'(X_i^{-1})\left( X_i^{-1}\right)^{e-\theta-1}
 P_{\vct{w}^{(\theta)}}( X_i^{-1})}=\\ \\
    &=-\frac{\varepsilon_{e-\theta}\cdot\displaystyle\left( X_i^{-1}\right)^{e+\theta-1}}{\displaystyle\sigma'(X_i^{-1})P_{\vct{w}^{(\theta)}}( X_i^{-1})}.
\end{align*}
\end{proof}


Proposition \ref{prop:HoriPGZ} gives an alternative way to compute the error values $E_i$'s in the \fPGZ decoder without the need of separately solving the linear system (\ref{syst:GZ}) in step \reffpgz{fpgz4}. Indeed  (\ref{rel:horiPGZ}) relates each $E_i$,  independently of the others error values, directly  with the syndromes, the roots of $\sigma(x)$ and  the vector $\vct{w}^{(\theta)}$. Note that the latter  and the coefficient $\varepsilon_{e-\theta}$ are byproducts of the computation of $\sigma(x)$, thus no extra computations are needed to know them. Step \fPGZ.4 can be replaced with\\
\begin{enumerate}[\bfseries \fPGZ.4b]
 \item  \begin{tabbing}
           \texttt{for } \= $i=1,2,\dots,e$ \texttt{ do }\\[0,2cm]
           \>$E_i:=-\dfrac{\varepsilon_{e-\theta}\cdot \displaystyle\left( X_i^{-1}\right)^{e+\theta-1}}{\displaystyle\sigma'(X_i^{-1})P_{\vct{w}^{(\theta)}}( X_i^{-1})}$; \\
           \texttt{endfor}\\
         \end{tabbing}  
\end{enumerate}

\noindent As already seen in remark \ref{rem:pipelinedBM} for the \BM decoder, formula (\ref{rel:horiPGZ})  has the advantages of allowing to the \fPGZ decoder the calculation of error values together with those of positions. Indeed the polynomials $\sigma'(x)$ and $ P_{\vct{w}}^{(\theta)}(x)$ can be evaluated in $\alpha^i$ simultaneously to the error-locator polynomial during Chien's search of \reffpgz{fpgz3} and, when $\sigma(\alpha^i)=0$, the multiplication and the division necessary to calculate the occurring error value can be executed by a multiplier and a divider while the polynomial evaluations go on in the element $\alpha^{i+1}, \alpha^{i+2}, \dots$

With this procedure step \fPGZ.4b, after the polynomial evaluations, requires $e$ division and $e$ multiplications to compute the error values $E_i$'s.

\section{Comparison with the \BM Decoder}
Now consider the \BM decoding algorithm \ref{alg:BM} as described on page \pageref{alg:BM}. In particular we focus our attention on step 2 of algorithm \ref{alg:BM}. In that step the algorithm finds the polynomials $\sigma^{(i)}(x)$ and $\omega^{(i)}(x)$ such that $$\sigma^{(i)}(x)S(x)\equiv\omega^{(i)}(x)\Mod{x^i}$$
for any $i=1,2,\dots,d-1$ (see definition \ref{defi:sigma_omega} and proposition \ref{prop:BM0}).

In this section, using again Horiguchi's formula, we will show how the polynomials $\sigma^{(i)}(x)$'s can be used to find the error values in the place of the error-evaluator polynomial. For this purpose we will give the following definition, for which we recall that
$$\Delta_i\displaystyle=\sum_{j=0}^{i}S_{i+1-j}\,\sigma_j^{(i)}$$
is the coefficient of $x^i$ in the polynomial $\sigma^{(i)}(x)S(x)$ and that the function $D:\N\rightarrow\N$ is defined in definition \ref{defi:D(i)} on page \pageref{defi:D(i)}. 

\begin{defi}\label{defi:C}
Let $C:\mathbb N\rightarrow \mathbb Z$ the integer function defined by:
 \begin{align*} &C(0)=-1;\\
                &C(i+1)=\begin{cases}
                            C(i)   &\text{ if } \Delta_i=0 \text{ or } 2D(i)\geq i+1;\\
                            i     &\text{ otherwise;}
           \end{cases}
 \end{align*}
\end{defi}

We observe that $C(i)$ represents the iterative step where the most recent change of the function $D(i)$ occurred prior to step $i$. In other words, $$C(i)=j \Longleftrightarrow D(i)=D(i-1)=\cdots=D(j+1)>D(j)$$

\begin{lem}\label{lem:delta}
 Let $c=C(d-1)$. If $e\leq t$, then we have that
 $$c-D(c)=e-1$$
 and moreover $1+D(c) \leq e$.
\end{lem}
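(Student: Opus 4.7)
The plan is to unwind the definitions of $C$ and $D$ and combine them with Corollary \ref{cor:BM2bis}, which tells us $D(d-1)=e$. Since $C$ records the most recent iteration at which $D$ strictly changed, identifying $c=C(d-1)$ with that index should directly yield both claims.

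First I would observe, from Definition \ref{defi:C}, that $C(i+1)\neq C(i)$ precisely when $D(i+1)\neq D(i)$, and that in the iterations between two changes of $C$ the value of $D$ remains constant. Consequently, letting $c=C(d-1)$, one has
\begin{equation*}
D(c+1)=D(c+2)=\cdots=D(d-1),
\end{equation*}
while at step $c$ itself the update rule fell into the second case of Definition \ref{defi:D(i)}, namely $\Delta_c\neq 0$ and $2D(c)<c+1$, so that
\begin{equation*}
D(c+1)=c+1-D(c).
\end{equation*}
By Corollary \ref{cor:BM2bis}, since $e\leq t$, we have $D(d-1)=e$, so in particular $D(c+1)=e$. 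Substituting into the previous equality gives $c+1-D(c)=e$, i.e.\ $c-D(c)=e-1$, which is the first claim.

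For the second claim, I would exploit the inequality $2D(c)<c+1$ that governed the update at step $c$. It implies $D(c)<(c+1)/2<c+1-D(c)=D(c+1)=e$, so $D(c)\leq e-1$ and therefore $1+D(c)\leq e$. No step here should pose any real difficulty: once the interpretation of $c$ as the last change-point of $D$ is made precise, the two assertions are immediate algebraic consequences of the update rule together with the already-established identity $D(d-1)=e$.
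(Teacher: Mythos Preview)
Your proof is correct and follows essentially the same approach as the paper: both identify $c$ as the last index at which $D$ changes, use the update rule $D(c+1)=c+1-D(c)$ together with $D(d-1)=e$ from Corollary~\ref{cor:BM2bis} to obtain $c-D(c)=e-1$, and then deduce $D(c)<e$. The only cosmetic difference is that the paper reads off $D(c)<e$ directly from the strict increase $D(c+1)>D(c)$, whereas you pass through the inequality $2D(c)<c+1$; both routes are immediate.
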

\begin{proof}
 By definition \ref{defi:C}, since $C(d-1)=c$, we have that
 $$D(d-1)=D(d-2)=\cdots=D(c+1)=c+ 1- D(c)>D(c)$$
 By corollary \ref{cor:BM2}, we know that $D(d-1)=e$. Hence $e>D(c)$ and $e=c+ 1- D(c)$.\\
\end{proof}
\newpage
\begin{prop}\label{prop:HoriBM}
 If $c=C(d-1)$ and $e\leq t$, then for any $i\in\{1,2,\dots e\}$ the following equation holds:
 \begin{equation}\label{rel:horiBM}
E_i=-\frac{\displaystyle\left( X_i^{-1}\right)^{c}\Delta_c}{\displaystyle\sigma'(X_i^{-1})\sigma^{(c)}( X_i^{-1})}  
 \end{equation}
 where $\Delta_c$ is the coefficient of $x^c$ in the polynomial $\sigma^{(c)}(x)S(x)$ as defined in section \ref{sect:BMalg}.
\end{prop}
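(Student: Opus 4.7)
The plan is to reduce the formula to the generalized Horiguchi formula (Corollary \ref{cor:horiguchi}) by exhibiting an explicit nonzero element of $\ker(B_{e-1})$ built from the coefficients of $\sigma^{(c)}(x)$. The essential structural input is Lemma \ref{lem:delta}, which gives the identity $D(c)=c-e+1$; consequently $\deg\bigl(\sigma^{(c)}(x)\bigr)\leq D(c)\leq e-1$, and $2(e-1)-c+D(c)=e-1$, a bookkeeping coincidence that will make the degrees line up.

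First I would record the coefficient consequences of $\sigma^{(c)}(x)S(x)\equiv\omega^{(c)}(x)\Mod{x^c}$. Since $\deg(\omega^{(c)})\leq D(c)-1$ by proposition \ref{prop:BM1} and $\Delta_c$ is the coefficient of $x^c$ in $\sigma^{(c)}(x)S(x)$, one reads off
$$\sum_{j=0}^{D(c)}\sigma_j^{(c)}\,S_{m-j}=0\quad\text{for } m=D(c)+1,\ldots,c,\qquad \sum_{j=0}^{D(c)}\sigma_j^{(c)}\,S_{c+1-j}=\Delta_c.$$
Next I would define the candidate kernel vector $\vct{u}=(u_0,u_1,\ldots,u_{e-1})^T\in(\F)^e$ by
$$u_l=\begin{cases}\sigma_{D(c)-l}^{(c)} & \text{if }0\leq l\leq D(c),\\ 0 & \text{if }D(c)<l\leq e-1,\end{cases}$$
so that $\overline{\vct{u}}(x)=x^{2(e-1)-c}\sigma^{(c)}(x)$ (this is exactly why I reverse the coefficients and pad with zeros on the right). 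Note $u_{D(c)}=\sigma_0^{(c)}=1$, so $\vct{u}\neq\vct{0}$.

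Then I would verify $\vct{u}\in\ker(B_{e-1})$. The $k$th row of $B_{e-1}\vct{u}$ is
$$\sum_{l=0}^{e-1}S_{k+l}\,u_l=\sum_{l=0}^{D(c)}S_{k+l}\,\sigma_{D(c)-l}^{(c)}=\sum_{j=0}^{D(c)}\sigma_j^{(c)}\,S_{k+D(c)-j},$$
after the change of variable $j=D(c)-l$; this is precisely the coefficient of $x^{k+D(c)-1}$ in $\sigma^{(c)}(x)S(x)$. For $k=1,2,\ldots,e-1=c-D(c)$ the index $k+D(c)-1$ ranges over $D(c),\ldots,c-1$, so by the first vanishing above this coefficient is $0$. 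Hence $\vct{u}\in\ker(B_{e-1})\setminus\{\vct{0}\}$.

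Finally I would plug $\vct{u}$ into Corollary \ref{cor:horiguchi}. The numerator sum simplifies by the same change of variable and the relation $e+D(c)=c+1$:
$$\sum_{j=0}^{e-1}S_{e+j}\,u_j=\sum_{j=0}^{D(c)}\sigma_j^{(c)}\,S_{e+D(c)-j}=\sum_{j=0}^{D(c)}\sigma_j^{(c)}\,S_{c+1-j}=\Delta_c,$$
while the denominator factor becomes $\overline{\vct{u}}(X_i^{-1})=(X_i^{-1})^{2(e-1)-c}\sigma^{(c)}(X_i^{-1})$. Substituting and cancelling the common factor $(X_i^{-1})^{2(e-1)-c}$ yields
$$E_i=-\frac{(X_i^{-1})^{2(e-1)}\,\Delta_c}{\sigma'(X_i^{-1})\,(X_i^{-1})^{2(e-1)-c}\sigma^{(c)}(X_i^{-1})}=-\frac{(X_i^{-1})^c\,\Delta_c}{\sigma'(X_i^{-1})\,\sigma^{(c)}(X_i^{-1})},$$
which is (\ref{rel:horiBM}). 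The only delicate point is the bookkeeping in step three (getting the shift, the reversal, and the zero-padding simultaneously right so that $\overline{\vct{u}}$ has degree exactly $\leq e-1$ and the vanishing indices land in the interval $[D(c),c-1]$); once the identity $D(c)=c-e+1$ is used, everything snaps into place.
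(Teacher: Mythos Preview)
Your proof is correct and follows essentially the same route as the paper: you build the kernel vector $\bigl(\sigma_{D(c)}^{(c)},\ldots,\sigma_1^{(c)},1,0,\ldots,0\bigr)^T\in\ker(B_{e-1})$ from the vanishing of the coefficients of $x^{D(c)},\ldots,x^{c-1}$ in $\sigma^{(c)}(x)S(x)$, apply Corollary~\ref{cor:horiguchi}, and simplify via $e+D(c)=c+1$ from Lemma~\ref{lem:delta}. Your exponent bookkeeping with $\overline{\vct{u}}(x)=x^{2(e-1)-c}\sigma^{(c)}(x)$ is a slightly more explicit version of the paper's denominator computation, but the argument is the same.
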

\begin{proof}
The \BM decoding algorithm \ref{alg:BM} calculates the polynomials
$$\sigma^{(c)}(x)=\sigma_{D(c)}^{(c)}x^{D(c)}+\cdots+\sigma_{1}^{(c)}x+1$$
$$\omega^{(c)}(x)=\omega_{D(c)-1}^{(c)}x^{D(c)-1}+\cdots+\omega_{1}^{(c)}x+1$$
such that
$$\sigma^{(c)}(x)S(x)\equiv \omega^{(c)}(x)\Mod{x^{c}}$$
It follows that the coefficients of $x^{D(c)}, \dots, x^{c-1}$ in the polynomial $\sigma^{(c)}(x)S(x)$ are equal to zero. Hence if we call 
$\vct{\sigma}^{(c)}=\left(\sigma_{D(c)}^{(c)},\dots,\sigma_1^{(c)}, 1\right)^T\in(\F)^{D(c)+1}$, then we have
$$\begin{mymatrix}
   S_1 &\cdots & S_{D(c)+1}\\
   S_2 &\cdots & S_{D(c)+2}\\
   \vdots & &\vdots\\
   S_{c-D(c)} &\cdots & S_{c}\\
  \end{mymatrix}\vct{\sigma}^{(c)}=\vct{0}^{(c-D(c))}$$
By lemma \ref{lem:delta}, we obtain that
$$\begin{mymatrix}
   S_1 &\cdots & S_e\\
   S_2 &\cdots & S_{e+1}\\
   \vdots & &\vdots\\
   S_{e-1} &\cdots & S_{2e-2}\\
  \end{mymatrix}
  \begin{mymatrix}
   \vct{\sigma}^{(c)}\\ \vct{0}^{(e-D(c)-1)}
  \end{mymatrix}=B_{e-1}\begin{mymatrix}
   \vct{\sigma}^{(c)}\\ \vct{0}^{(e-D(c)-1)}
  \end{mymatrix}=\vct{0}^{(e-1)}$$ 
Thus $$\begin{mymatrix}
   \vct{\sigma}^{(c)}\\ \vct{0}^{(e-D(c)-1)}
  \end{mymatrix}\in \ker(B_{e-1})\setminus\{\vct{0}^{(e)}\}$$ 
Now we use corollary \ref{cor:horiguchi} and we have that
\begin{align*}
E_i&=-\frac{\displaystyle\left( X_i^{-1}\right)^{2(e-1)}
     \left[S_{e+D(c)}+\sum_{j=1}^{D(c)} S_{e+D(c)-j}\sigma_j^{(c)}\right]}
     {\displaystyle\sigma'(X_i^{-1})\left( X_i^{-1}\right)^{e-1-D(c)}\sigma^{(c)}( X_i^{-1})}\underset{\displaystyle\underset{e+D(c)=c+1}{\displaystyle\big\uparrow}}{=}\\ \\
   &=-\frac{\displaystyle\left( X_i^{-1}\right)^{c}\Delta_c}{\displaystyle\sigma'(X_i^{-1})\sigma^{(c)}( X_i^{-1})}
\end{align*}
This concludes the proof.
\end{proof}

Since the polynomial $\sigma^{(c)}(x)$ and the coefficient $\Delta_c$ are obtained as byproducts of the computation for $\sigma(x)$, the formula stated in \ref{prop:HoriBM} permits a more efficient decoding algorithm than Forney's formula (\ref{forney}) saving the cost of computing the error-evaluator polynomial $\omega(x)$.

\begin{rem}
We have already seen in proposition \ref{prop:BM3} that the computation of the error-evaluator polynomial $\omega(x)$ in  step \refbm{BM4} can be avoided. Indeed, we proved that for any $i\in\{1,2,\dots,e\}$
\begin{equation*}
 E_i=-\frac{\left(X_i^{-1}\right)^{d-2}}{\sigma'(X_i^{-1})\tau^{(d-1)}(X_i^{-1})}\tag{\ref{forney2}}
\end{equation*}

where the polynomial $\tau^{(d-1)}(x)$ is the last of the auxiliary polynomials calculate in step \refbm{BM2}. It is easy to show that the latter and  (\ref{rel:horiBM}) are the same formula. Indeed
\begin{align*}
 C(d-1)=c\, &\,\Rightarrow D(d-1)=D(d-2)=\cdots=D(c+1)>D(c)\\
                  &\,\Rightarrow \tau^{(d-1)}(x)=x\tau^{(d-2)}(x)=\cdots=x^{d-c-2}\tau^{(c+1)}(x)=\\
                  &\hspace*{2,3cm}=x^{d-c-2}\frac{\sigma^{(c)}(x)}{\Delta_c}
\end{align*}
Hence
$$E_i =-\frac{\left(X_i^{-1}\right)^{d-2}}{\sigma'(X_i^{-1})\tau^{(d-1)}(X_i^{-1})}=
-\frac{\displaystyle\left( X_i^{-1}\right)^{c}\Delta_c}{\displaystyle\sigma'(X_i^{-1})\sigma^{(c)}( X_i^{-1})}$$
\end{rem}

As seen in section \ref{sect:BMalg}, the polynomials in (\ref{rel:horiBM}) and in (\ref{forney2}) are evaluated in the field element using Chien's search circuit (figure \ref{fig:chien}).

\noindent We observe that
$$e\leq t\leq \frac{d-1}{2} \quad \Longrightarrow \quad  e\leq d-1-e$$
and moreover 
$$\Deg{\sigma^{(c)}(x)}\leq D(c)\leq e-1$$
$$\Deg{\tau^{(d-1)}(x)}\leq d-1-D(d-1)=d-1-e$$
Thus $\Deg{\sigma^{(c)}(x)}<\Deg{\tau^{(d-1)}(x)}$ and so (\ref{rel:horiBM}) can be a better choice than  (\ref{forney2}) because the evaluation of $\tau^{(d-1)}(x)$ in the field elements needs more circuit components than the evaluation of $\sigma^{(c)}(x)$.

\chapter{Parallel Implementation}
The continuing improvements of microelectronics technology leads to the availability of integrated circuits (microchips) with high-speed parallel architectures. Indeed, as at the state of the art one microchip can contain a large number of circuits and it is possible that independent tasks are accomplished simultaneously by different circuits of the same microchip. A very simple example is a microchip with $m$ adders:  with the computational time complexity of only one addition, this microchip  can add two vectors of $m$ components because the $i$th adder of the microchip adds the $i$th components of the two vectors simultaneously for $i=1,2,\dots,m$. The same idea is used to implement the multiplication of a vector of $m$ components by a scalar element with $m$ multipliers and with a time complexity equal to only one multiplication. Another example regards the sum of $m$ elements: observing that
\begin{align*}
\sum_{i=1}^m a_i&=\pt{\sum_{i=1}^{\Pint{m/2}} a_i}+ \pt{\sum_{i=\Pint{m/2}+1}^m a_i} 
\end{align*}
and repeating this splitting, the sum of $m$ elements can be computed with $\Pint{\log_2m}$ additions using  at most $m$ adders forming a tree of depth $\Pint{\log_2m}$ (see figure \ref{fig:adderTree}).

We can note immediately that the complexity of parallel algorithms is estimated in terms  of the time (as the classical sequential algorithms) and of the space (number of circuit elements) that they take.

We also observe that the improvement brought by the parallel implementation also depends on the inner parallelism of the instructions that have to be implemented: adding two vectors (or two polynomials) and multiplying a vector (or a polynomial) by a scalar element are operations that have a high level of inner parallelism. Instead the same is not true for the sum of $m$ elements. Indeed with $m$ adders the complexity of this operation decrease only from $m$ to $\Pint{\log_2m}$ additions.
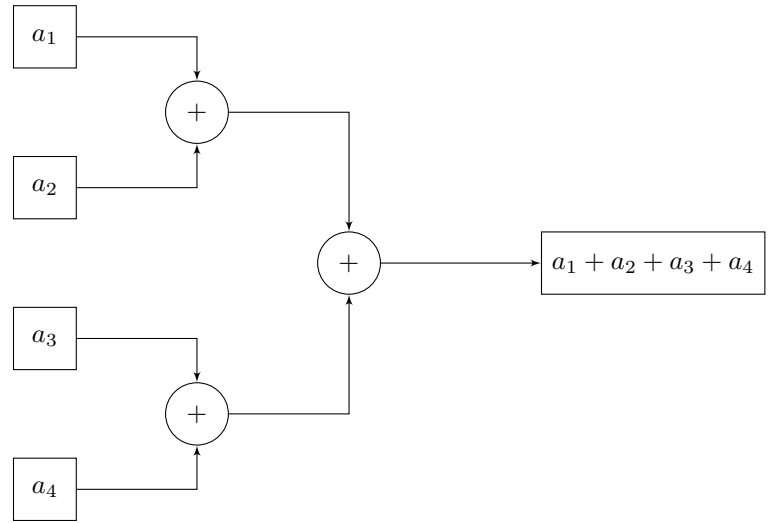
\begin{figure*}[h!]
\begin{center}
\begin{tikzpicture}[auto, node distance=1cm,>=latex', font=\footnotesize]
    \node [block](a1){$a_1$};
    \node [coordinate, below of=a1](p1){};
    \node [block, below of=p1](a2){$a_2$};
    \node [coordinate, below of=a2](p2){};
    \node [block, below of=p2](a3){$a_3$};
    \node [coordinate, below of=a3](p3){};
    \node [block, below of=p3](a4){$a_4$};
    \node [sum, right of=p1, node distance=2cm](s1){$+$};
    \node [sum, right of=p3, node distance=2cm](s2){$+$};
    \node [sum, right of=p2, node distance=4cm](s3){$+$};
    \node [block, right of=s3, node distance=4cm](output){$a_1+a_2+a_3+a_4$};
    
      \draw [->] (a1) -|  (s1);
      \draw [->] (a2) -|  (s1);
      \draw [->] (a3) -|  (s2);
      \draw [->] (a4) -|  (s2); 
      \draw [->] (s1) -| (s3);
      \draw [->] (s2) -| (s3);
      \draw [->] (s3) -- (output);
\end{tikzpicture}
\caption{adder tree for summing four elements}\label{fig:adderTree}
\end{center}
\end{figure*}

In more complicated cases, it is necessary to have a fundamental understanding of the algorithm structure to realize the full potential of parallel computing. Indeed often it may happen that the most efficient algorithms for classic sequential implementation are not necessarily the most efficient for parallel architectures.  For this, the task of this chapter is revising the main steps of the \PGZ and the \BM decoding algorithms studied in chapters 2 and 3  in order to understand when a parallel implementation of these steps can bring some significant advantages decreasing the computational cost of the decoding. Moreover, when it is necessary,  we will modify some steps of the decoding algorithms mentioned above in order to achieve the maximum benefits from their parallelization.\\

\section{Parallel Implementation of \PGZ Decoder}
As seen in section \ref{sect:PGZ},  step \refpgz{pgz2} of the \PGZ decoding algorithm \ref{alg:PGZ} calculates first  $e$ the number of errors that occur, computing for $i=t, t-1, \dots$ the determinants of the matrices
$$A_i=\begin{mymatrix}
                        S_1          & S_2             & \cdots & S_{i} \\
                        S_2          & S_3             & \cdots & S_{i+1}\\
                        \vdots       & \vdots          &        & \vdots \\
                        S_{i}    & S_{i+1}        & \cdots & S_{2i-1}\\
        \end{mymatrix}$$ 
and setting $e:=\max\{ i\leq t\;\;|\; \det(A_i)\neq0\}$, and after calculates the error-locator polynomial $\sigma(x)$ solving the linear system
$$A_e\begin{mymatrix} \sigma_e\\ \sigma_{e-1} \\ \vdots \\ \sigma_1\\
   \end{mymatrix}= - 
   \begin{mymatrix} S_{e+1}\\ S_{e+2} \\ \vdots \\ S_{2e}\\
   \end{mymatrix} $$
   
In this section we will give a parallel implementation of this procedure, exploiting the Laplace expansion for determinants for the calculation of the determinants of the matrices $A_i$'s, which are nothing but the leading  principal minors of the syndrome matrix 
$$A=\begin{mymatrix}
                        S_1          & S_2             & \cdots & S_{t+1} \\
                        S_2          & S_3             & \cdots & S_{t+2}\\
                        \vdots       & \vdots          &        & \vdots \\
                        S_{d-1-t}    & S_{d-t}        & \cdots & S_{d-1}\\
        \end{mymatrix}$$
and exploiting  Horiguchi's formula (theorem \ref{thm:horiguchi}) for finding the error-locator polynomial using the minors already computed during the calculation of $e$.
More generally, we will use the properties of the minors of the syndrome matrix, which have been proved in section \ref{sect:horiguchi}, to also express the error values in terms of minors already computed, modifying so  the implementation of the step \refpgz{pgz4}.

For this, we introduce the following tools:
\begin{defi} \label{defi:D}
Let $i$ be an index less or equal to $t$. We consider the set $\mathcal{N}_i$ defined by
$$\mathcal{N}_i=\left\{\vct{j}=(j_1,j_2,\dots,j_i)\in\N^i \;|\; 1\leq j_1<j_2<\cdots<j_i\leq t\right\}$$
For any $\vct{j}=(j_1,j_2,\dots,j_i)\in\mathcal N_i$ we define $\myd^{(i)}_{\vct{j}}$ as the determinant of the $i\times i$ minor of $A$ given by the first $i$ rows and the columns $j_1,j_2, \dots, j_i$. That is
$$\myd^{(i)}_{\vct{j}}=\det\begin{mymatrix}
                         S_{j_1} & S_{j_2} &\cdots & S_{j_i}\\
                         S_{j_1+1} & S_{j_2+1} &\cdots & S_{j_i+1}\\
                         \vdots &\vdots & &\vdots\\
                         S_{j_1+i-1} & S_{j_2+i-1} &\cdots & S_{j_i+i-1}\\
                        \end{mymatrix}$$
\end{defi}

We now observe that using the Laplace expansion for determinants we have that:
\begin{equation}\label{eq:D}
\myd^{(i)}_{(j_1,j_2,\dots,j_i)}=\sum_{l=1}^i S_{j_l+i-1}(-1)^{i+l}\myd^{(i-1)}_{(j_1,\dots,j_{l-1},j_{l+1},\dots,j_i)}
\end{equation}
for any $(j_1,j_2,\dots,j_i)\in\mathcal N_i$ and for any $i\in\{1,2,\dots,t\}$.
\begin{rem}
We recall that the elements $k^{(i)}_j$'s are defined in section \ref{sect:horiguchi} as
$$k^{(i)}_j=(-1)^{i+j}\begin{mymatrix}
                         S_1          & S_2             & \cdots & S_{i+1} \\
                         S_2          & S_3             & \cdots & S_{i+2}\\
                         \vdots       & \vdots          &        & \vdots \\
                         S_{i}    & S_{i+1}        & \cdots & S_{2i}\\
         \end{mymatrix}$$
for any $i\geq1$, $j\in\{0,1,\dots,i\}$ and we note that they are a subset of the elements $\myd^{i}_{\vct{j}}$'s. More precisely for any $i\geq1$, $j\in\{0,1,\dots,i\}$ we have that:
\begin{equation}\label{eq:k_and_d}
k^{(i)}_j=(-1)^{i+j}\myd^{(i)}_{\pt{1,\dots,j,j+2,\dots,i+1}}
\end{equation}
\end{rem}

The next proposition shows that we can express the number of errors that occur, the error-locator polynomial and the error values in terms of the minors $\myd^{(e)}_{\vct{j}}$'s.
\begin{prop}\label{prop:pPGZ1}
 If $e\leq t$, then
 \begin{enumerate}[\ \ \ (1)]
  \item $e=\min\left\{i\in\{1,2,\dots,t\}\,|\,  \myd^{(i+1)}_{\vct{j}}=0 \;\;\forall\, \vct{j}\in\mathcal N_{i+1}\right\}$;
  \item $\sigma(x)= \displaystyle\frac{1}{\myd^{(e)}_{(1,2,\dots,e)}}\left(\sum_{j=0}^{e}(-1)^{e+j}\myd^{(e)}_{(1,    \dots,j,j+2,\dots,e+1)}\,x^{e-j}\right)$;
  \item For any $i\in\{1,2,\dots,e\}$ we have that
   $$E_i=-\displaystyle\frac{\myd^{(e)}_{(1,2,\dots,e)} \left(X_i^{-1}\right)^{2(e-1)}}{\sigma'(X_i^{-1})\widehat{\omega}(X_i^{-1})}$$
  where $\displaystyle\widehat{\omega}(x)=\sum_{j=0}^{e-1} (-1)^{e-1+j}\myd^{(e-1)}_{(1,\dots,j,j+2,\dots,e)}\,x^{e-1-j}$
 \end{enumerate} 
\end{prop}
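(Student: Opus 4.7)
The plan is to reduce all three statements to results already proved in the text, using the identification
\[
k^{(i)}_j=(-1)^{i+j}\myd^{(i)}_{(1,\ldots,j,j+2,\ldots,i+1)}
\]
recorded in equation (\ref{eq:k_and_d}) as a dictionary between the $k^{(i)}_j$'s appearing in Horiguchi's setup and the minors $\myd^{(i)}_{\vct{j}}$ of definition \ref{defi:D}. With this dictionary, (2) becomes a rewriting of corollary \ref{cor:kersigma} and (3) becomes a rewriting of theorem \ref{thm:horiguchi}; only (1) needs a small fresh argument, and it follows from the rank computation in proposition \ref{prop:syndromematrix}.

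For (1), I would argue as follows. Proposition \ref{prop:syndromematrix} gives $\rk(A)=e$ and $\det(A_e)\neq 0$; since $\myd^{(e)}_{(1,2,\ldots,e)}=\det(A_e)$, this already takes care of the non-vanishing side. For the vanishing side at $i=e$, every $(e+1)\times(e+1)$ submatrix of $A$ is singular, so $\myd^{(e+1)}_{\vct{j}}=0$ for every $\vct{j}\in\mathcal{N}_{e+1}$ (vacuous when $e=t$, since $\mathcal{N}_{t+1}=\emptyset$). For the minimality, fix any $i<e$; the first $e$ rows of $A$ contain the non-singular square block $A_e$ and so are linearly independent, hence any $i+1$ of them are linearly independent too. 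Restricting these first $i+1$ rows to the first $e\leq t$ columns yields an $(i+1)\times e$ matrix of rank $i+1$, which therefore admits an $(i+1)\times(i+1)$ non-singular submatrix supported on column indices $j_1<\cdots<j_{i+1}$ in $\{1,\ldots,e\}\subseteq\{1,\ldots,t\}$. That column tuple lies in $\mathcal{N}_{i+1}$ and the associated minor $\myd^{(i+1)}_{\vct{j}}$ is non-zero, so $i$ is not in the set on the right-hand side.

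For (2), by corollary \ref{cor:kersigma} we have $\sigma(x)=\overline{\vct{k}}^{(e)}(x)/k^{(e)}_e$ with $\overline{\vct{k}}^{(e)}(x)=\sum_{j=0}^{e}k^{(e)}_j x^{e-j}$; substituting the dictionary and $k^{(e)}_e=\myd^{(e)}_{(1,2,\ldots,e)}$ gives exactly the stated expression. For (3), Horiguchi's formula (theorem \ref{thm:horiguchi}) reads
\[
E_i=-\frac{k^{(e)}_e\,(X_i^{-1})^{2(e-1)}}{\sigma'(X_i^{-1})\,\overline{\vct{k}}^{(e-1)}(X_i^{-1})},
\]
and the same substitutions convert $\overline{\vct{k}}^{(e-1)}(x)=\sum_{j=0}^{e-1}k^{(e-1)}_j x^{e-1-j}$ into the polynomial $\widehat{\omega}(x)$ defined in the statement while turning $k^{(e)}_e$ into $\myd^{(e)}_{(1,2,\ldots,e)}$, producing the formula in (3).

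The only genuinely new content is part (1), and within it the one subtle point is to ensure that the column indices of the witnessing non-singular $(i+1)\times(i+1)$ submatrix remain inside $\{1,\ldots,t\}$ as required by $\mathcal{N}_{i+1}$; restricting from the outset to the first $e\leq t$ columns of $A$, where the non-singularity of $A_e$ already lives, handles this automatically. Parts (2) and (3) are essentially cosmetic rewrites with no substantive obstacle.
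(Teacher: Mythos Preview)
Your proposal is correct and, for parts (2) and (3), it is exactly what the paper does: both the paper and you simply invoke the dictionary (\ref{eq:k_and_d}) together with corollary \ref{cor:kersigma} and theorem \ref{thm:horiguchi}.

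For the minimality direction in part (1) there is a small but genuine difference. The paper argues indirectly via the Laplace recursion (\ref{eq:D}): if the minimum $m$ were smaller than $e$, then the vanishing of all $\myd^{(m+1)}_{\vct{j}}$ cascades upward through (\ref{eq:D}) to force $\myd^{(e)}_{(1,\ldots,e)}=\det(A_e)=0$, contradicting proposition \ref{prop:syndromematrix}. You instead argue directly: since $A_e$ is non-singular, the first $i+1\leq e$ rows of $A$, restricted to the first $e\leq t$ columns, have full row rank, so some $(i+1)\times(i+1)$ minor with column indices in $\{1,\ldots,e\}\subseteq\{1,\ldots,t\}$ is non-zero. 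Your route is slightly more self-contained (it does not rely on iterating the recursion), while the paper's route has the advantage of reusing precisely the formula (\ref{eq:D}) that drives the parallel algorithm being set up. Both are short and elementary; neither buys a real technical advantage over the other.
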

\begin{proof}
 To prove (1) we define $$m=\min\left\{i\in\{1,2,\dots,t\}\,|\,  \myd^{(i+1)}_{\vct{j}}=0 \;\forall \vct{j}\in\mathcal N_{i+1}\right\}$$ and we will show that $e=m$.
 Recalling proposition \ref{prop:syndromematrix}, we have that
 $$e\leq t\Rightarrow e=\rk(A)\Rightarrow (\myd^{(e+1)}_{\vct{j}}=0\; \;\forall \vct{j}\in \mathcal N_{e+1})\Rightarrow m\leq e$$ 
 Further we note that by (\ref{eq:D}), if $m<e$, then
 $\myd^{(e)}_{(1,2,\dots,e)}=\det(A_e)=0$ and this is absurd (see proposition \ref{prop:syndromematrix}). Thus $e=m$ and (1) is proved.  Using (\ref{eq:k_and_d}), (2) follows immediately by  corollary \ref{cor:kersigma} and  (3) follows from  Horiguchi's formula (theorem \ref{thm:horiguchi}).
\end{proof}

\begin{rem}
 In order to save divisions, we may consider the polynomial
 $$\myhat\sigma(x)\overset{def}{=}\myd^{(e)}_{(1,2,\dots,e)}\cdot\sigma(x)$$
 It is clear that $\myhat\sigma(x)$ has the same roots as the error-locator polynomial $\sigma(x)$ and that it holds
 $$E_i=-\displaystyle\frac{\pt{\myd^{(e)}_{(1,2,\dots,e)}}^2 \left(X_i^{-1}\right)^{2(e-1)}}{\myhat\sigma'(X_i^{-1})\widehat{\omega}(X_i^{-1})}$$
 for any $i\in\{1,2,\dots,e\}$ 
\end{rem}

Finally we deduce from (\ref{eq:D}) that if we already know the minor $\myd^{(i-1)}_{\vct{j}}$ for every $\vct{j}\in\mathcal{N}_{i-1}$, then any minor  $\myd^{(i)}_{(j_1,\dots,j_i)}$ can be calculated with only $i$ independent multiplications and one sum of $i$ elements. Furthermore we observe that the computation of one $\myd^{(i)}_{(j_1,\dots,j_i)}$ is independent of the computation of the others minors $i\times i$. Hence we may think of an iterative parallel algorithm in which the $i$th step computes at the same time all the determinants $\myd^{(i)}_{\vct{j}}$  using (\ref{eq:D}).  We write in details this algorithm, which we will call \emph{parallel Peterson-Gorenstein-Zierler} (\pPGZ) \emph{decoding algorithm}: 

\begin{alg}[\textbf{\pPGZ decoding algorithm} for $\RS$] \hspace{4cm}
\label{alg:pPGZ}
\begin{description}
\item[\texttt{Input}:]the received word $\rw(x)$;
\item[\texttt{Output}:]the codeword $\cw(x)$;
\end{description}
\texttt{Begin}
\begin{enumerate}[\bfseries \pPGZ.1]
        \item \begin{tabbing}
              \textit{(syndrome computation)}\\
               \texttt{for }\= $i=1,2,\dots, d-1$ \texttt{ do}\\
               \> $S_i:=\rw(\alpha^i)$;\\
               \texttt{endfor}               
              \end{tabbing}
        \item \textit{(error-locator polynomial computation)}\\[-1cm]
              \begin{enumerate}[a)]
               \item \begin{tabbing}
                       {\small\textit{(inizialization)}}\\
                       $t:=\pint{\frac{d-1}{2}}$;\\
                       $i:=1$;\\
                       \texttt{whi}\=\texttt{le }   $(i\leq t)$ \texttt{ repeat }\\
                       \>$\myd^{(1)}_i:=S_i$;\\
                       \>$i:=i+1$;\\
                       \texttt{endwhile}
                     \end{tabbing}
               \item \begin{tabbing} 
                     {\small\textit{(iterative parallel procedure)}}\\
         $i:=2$;\\
         \texttt{whi}\=\texttt{le} $(i\leq t)$ \texttt{ do in parallel }\\[0,3cm] 
         \>$\myd^{(i)}_{(j_1,\dots,j_i)}:=\displaystyle\sum_{l=1}^i S_{j_l+i-1}(-1)^{i+l}\myd^{(i-1)}_{(j_1,\dots,j_{l-1},j_{l+1},\dots,j_i)}\;$;\\[0,3cm] 
         \>\texttt{if }\= (all $\myd^{(i)}_{(j_1,\dots,j_i)}$'s are zero)  \texttt{ then }\\[0,4cm]
         \>\>\texttt{if } \=$\displaystyle\left(\myd^{(i-1)}_{(1,\dots,i-1)}\neq0\right)$ \texttt{ then }\\[0.3cm]
         \>\>\>$e:=i-1$;\\
         \>\>\>\texttt{go to step c};\\[0.2cm]
         \>\>\texttt{else declare a malfunction};\\[0.2cm]
         \>\texttt{else} $i:=i+1$;\\[0.2cm]
         \texttt{endwhile}\\
         $e:=t$;
         \end{tabbing}
        \item  $\myhat\sigma(x):= \displaystyle\sum_{j=0}^{e}(-1)^{e+j}\myd^{(e)}_{(1,\dots,j,j+2,             			\dots,e+1)}\,x^{e-j};$\\[0,2cm]
               $\displaystyle\widehat{\omega}(x):=\sum_{j=0}^{e-1} (-1)^{e-1+j}\myd^{(e-1)}_{(1,\dots,j,j+2,\dots,e)}\,x^{e-1-j}$;\\         
         \end{enumerate}
                  
       \item \textit{(finding errors positions)}\\
             \texttt{calculate  the error positions $p_1,p_2,\dots,p_e$ and the elements $X_1^{-1}, X_2^{-1},\dots,X_e^{-1}$ using Chien's search;}\\
             
       \item \begin{tabbing}
             \textit{(finding error values)}\\
             \texttt{for } \= $i=1,2,\dots,e$ \texttt{ do in parallel}\\[0,4cm] 
             \>$E_i:=-\displaystyle\frac{\pt{\myd^{(e)}_{(1,2,\dots,e)}}^2 \left(X_i^{-1}\right)^{2(e-1)}}{\myhat\sigma'(X_i^{-1})\widehat{\omega}(X_i^{-1})}$\\
             \texttt{endfor}\\
             \end{tabbing}
                         
\end{enumerate}
\texttt{Return } $\cw(x):=\displaystyle \rw(x)-\sum_{i=1}^e E_{i}x^{p_i};$\\
\texttt{End}\\
\end{alg}

The correctness of the \pPGZ algorithm is assured by proposition \ref{prop:pPGZ1}. As regards the computational cost, we observe that since
$$\#\{\myd^{(i)}_{\vct{j}}\;\;|\;\; \vct{j}\in\mathcal{N}_{i}\}=\#\,\mathcal{N}_{i}=
\binom{t}{i}$$
in order that the $i$th step  of \pPGZ.2 can calculate the minors $\myd^{(i)}_{\vct{j}}$'s simultaneously for every $\vct{j}\in\mathcal{N}_{i}$, at most $\displaystyle\binom{t}{i}$ circuit (with shared memory) are necessary. Each of these is formed from $i$ multipliers and $i$ adders and  computes one of the minor $\myd^{(i)}_{\vct{j}}$ with:
\begin{itemize}
 \item 1 multiplication  (the multipliers compute simultaneously the products of the type $S_{j_l+i-1}(-1)^{i+l}\cdot\myd^{(i-1)}_{(j_1,\dots,j_{l-1},j_{l+1},\dots,j_i)}$);
 \item $\Pint{\log_2i}$ additions (the products computed are added in an adder tree of maximum depth $\Pint{\log_2t}$);
\end{itemize}
As $i$ goes from 1 to $e$,  the computational cost of \pPGZ.2 is upper bounded by $e$ multiplications and $e\Pint{\log_2e}$ additions using at most
$$\displaystyle\max\left\{\binom{t}{i}\,|\, 1\leq i\leq e\right\}=\binom{t}{\pint{t/2}}$$
circuits. That is 
$$ t\binom{t}{\pint{t/2}}  \text{ multipliers}\quad \quad  t\binom{t}{\pint{t/2}}  \text{ adders}$$
In order that this number is not too big, we have to suppose to be correcting Reed-Solomon codes with a small correction capability. For example if $t\leq8$ we need only some hundreds of multipliers and adders that nowadays can be held in only one microchip (see table \ref{tab:binomiale}). 

\begin{table}[h!]
 \begin{center}
 {
\renewcommand{\arraystretch}{2.1}
 \begin{tabularx}{300pt}{|c|X|X|X|X|X|X|X|}
 \hline
  $t$ & 3 & 4 & 5 & 6 & 7 & 8 & 9\\
  \hline
  $\displaystyle t\binom{t}{\pint{t/2}}$ & 9 & 24 & 50 & 120 & 245 & 560 & 1134\\[0,2cm]
 \hline
 \end{tabularx}}
 \caption{number of circuit elements required by \pPGZ.2}\label{tab:binomiale}
\end{center}
\end{table}


As already seen for step \fPGZ.4b (see section \ref{sect:horiguchiPGZ}), for the implementation of \pPGZ.3 and \pPGZ.4  we suppose that there are 4 circuits for Chien's search that compute all the polynomial evaluations needed to the error positions and error values computation with a negligible computational time complexity compared to the cost of other operations involved. 
After the polynomial evaluations, first the multiplications $\myhat{\sigma}'(X_i^{-1})\cdot\widehat{\omega}(X_i^{-1})$ are executed at the same time by  $e$ multipliers and then the divisions needed to calculate $E_i$'s are simultaneously handled by $e$ dividers. 
The complexity of the main step of the \pPGZ decoding algorithm are summarized in the following table:\\
\begin{table}[h]
 \begin{center}
 {\renewcommand{\arraystretch}{2.3}
 \begin{tabular}{|c|c|c|}
 \hline
    & \emph{time complexity} & \emph{space complexity}\\
  \hline
  \textbf{\pPGZ.2} & \begin{tabular}{cl}
                   $e$ &{\small multiplications}\\[-.3cm]
                 $e\Pint{\log_2e} $ &{\small additions}
             \end{tabular} 
          & \begin{tabular}{ll}
                   $t\binom{t}{\pint{t/2}}$  &{\small multipliers}\\[-.3cm]
                   $t\binom{t}{\pint{t/2}}$  &{\small adders}
             \end{tabular}  
             \\[0,2cm]
 \hline
  \textbf{\pPGZ.4} & \begin{tabular}{ll}
                   $1$ &{\small division}\\[-.3cm]
                   $1$ &{\small multiplication}
             \end{tabular} 
           & \begin{tabular}{cl}
                   $t$  &{\small dividers}\\[-.3cm]
                   $2t$  &{\small multipliers}
             \end{tabular}\\[0,2cm]
 \hline
 \end{tabular}}
 \caption{parallel complexity of \pPGZ.2 and \pPGZ.4 }\label{tab:pPGZ}
\end{center}
\end{table}
 
%

\section{Parallel Implementation of the  \BM decoder}
 We recall the implementation given in section \ref{sect:BMalg} of the \BM decoding algorithm \ref{alg:BM}, in which the error-locator polynomial is calculated computing recursively the polynomials $\sigma^{(1)}(x), \sigma^{(2)}(x),\dots,\sigma^{(d-1)}(x)=\sigma(x)$ (see definition \ref{defi:sigma_omega_i} and corollary \ref{cor:BM2bis}) by the instructions of the step \refbm{BM2} recalled below.
 \begin{description}
  \item[\textit{BM}.2]\textit{(error-locator computation)}\\[-1cm]
            \begin{tabbing} 
             $\sigma^{(0)}:=1$;\\ 
              $\tau^{(0)}:=1$;\\
              $D(0):=0$;  \\
                 \texttt{for} \= $i=0,1,\dots,d-2$ \texttt{ do}\\
                 \>$\Delta_i:\displaystyle=\sum_{j=0}^{D(i)}S_{i+1-j}\,\sigma_j^{(i)}$;\\
                 \>$\sigma^{(i+1)}(x):=\sigma^{(i)}(x)-\Delta_ix\tau^{(i)}(x)$;\hspace{3,5cm}(\textasteriskcentered)\\[0,2cm]
                 \>\texttt{if} \= ($\Delta_i=0$ or $2D(i)\geq i+1$)   \texttt{then}\\ 
                 \>\>$D(i+1):=D(i)$;\\
                 \>\>$\tau^{(i+1)}(x):=x\tau^{(i)}(x)$;\hspace{4,05cm}(\textborn)\\
                 \>\texttt{else}  \= \\
                 \>\>$D(i+1):=i+1-D(i)$;\\
                 \>\>$\tau^{(i+1)}(x):=\frac{\sigma^{(i)}(x)}{\Delta_i}$;\hspace{4cm}(\textborn)\\
                 \texttt{endfor}\\
                 $e:=D(d-1)$;\\
                 $\sigma(x):=\sigma^{(d-1)}(x)$;
               \end{tabbing}
  \end{description}
 
We can note that the polynomial updates (\textasteriskcentered) and (\textborn)  do not constitute obstacles to parallel computation. Indeed as seen we can add two polynomials or multiply a polynomial by a field element with the cost of one multiplication or one addition (with a linear number of multipliers and adders). Thus the instructions (\textasteriskcentered) and (\textborn) can be implemented at same time, after the discrepancy $\Delta_i$ has been computed, with the overall cost of one inversion, one multiplication an one addition.
It is the computation of discrepancy the $\Delta_i$
that constitutes the parallel computing bottleneck. This is because $\Delta_i$ is computed via a sum of $i$ elements (which requires one multiplication but $\Pint{\log_2i}$ additions implemented in a circuit with $i$ multipliers and $i$ adders) and moreover because  it must be computed before the polynomial update (\textasteriskcentered). 

In order to eliminate this bottleneck and reach a greater parallelism, the \BM decoding algorithm \ref{alg:BM} is modified following an idea of Dilip V. Sarwate and Naresh R. Shanbhag in \cite{sarwateP}. 

For this we introduce the following polynomials:
\begin{defi} \label{defi:DeltaTheta}
Let $\sigma^{(i)}(x)$ and $\tau^{(i)}$ the polynomials given in definitions \ref{defi:sigma_omega_i} and \ref{defi:tau_gamma}. If $S(x)=S_1+S_2x+\cdots+s_{d-1}x^{d-2}$ is the syndrome polynomial (definition \ref{defi:syndromes}), then for any $i\in\{0,1,\dots,d-1\}$ we define the polynomials 
\begin{align*}
\Delta^{(i)}(x)&=\sigma^{(i)}(x)S(x)\\
\Theta^{(i)}(x)&=\tau^{(i)}(x)S(x)
\end{align*}
Moreover $\Delta_j^{(i)}$ and $\Theta_j^{(i)}$ will denote respectively the coefficient of the term $x^j$  in $\Delta^{(i)}(x)$ and in $\Theta^{(i)}(x)$.\\
\end{defi}

We recall that the $i$th discrepancy $\Delta_i$ is defined as the coefficient of the term $x^i$ in the polynomial $\sigma^{(i)}(x)S(x)$, hence it is obvious that
$$\Delta_i^{(i)}=\Delta_i$$
for every $i=0,1,\dots,d-2$. 

The following proposition states the  recursive properties of the polynomial families  $\left\{\Delta^{(i)}(x)\right\}_{i=0}^{d-1}$ and $\left\{\Theta^{(i)}(x)\right\}_{i=0}^{d-1}$;
\begin{prop}\label{prop:DeltaTheta}
With the notations of definition \ref{defi:DeltaTheta}, it holds that:
\begin{enumerate}[\ \ \ (1)]
 \item $\Delta^{(0)}(x)=S(x)$ and $\,\Theta^{(0)}(x)=S(x)$;
 \item For any $i\in\{0,1,\dots,d-2\}$
       \begin{equation}\label{rel:Delta}
        \Delta^{(i+1)}(x)=\Delta^{(i)}(x)-\Delta_i^{(i)}x\Theta^{(i)}(x)
       \end{equation}

 \item If $D:\N\rightarrow\N$ is the function stated in definition \ref{defi:D(i)} as $D(0)=0$ and
$$D(i+1)= \begin{cases} D(i)  \quad &\text{ if } \;\Delta_i^{(i)}=0 \text{ or } 2D(i)\geq i+1\\
                      i+1-D(i) \quad &\text{ if } \;\Delta_i^{(i)}\neq0 \text{ and } 2D(i)< i+1
        \end{cases}$$
 then
 \begin{equation}\label{rel:Theta}
  \Theta^{(i+1)}(x)=\begin{cases} x\Theta^{(i)}(x)\quad&\text{ if }\;\Delta_i^{(i)}=0 \text{ or } 2D(i)\geq i+1\\
                 \frac{\Delta^{(i)}(x)}{\Delta_i^{(i)}}\quad&\text{ if }\;\Delta_i^{(i)}\neq0\text{ and }2D(i)<i+1
        \end{cases}
 \end{equation}
 for any $i\in\{0,1,\dots,d-2\}$
 \end{enumerate}
\end{prop}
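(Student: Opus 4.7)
The proof is essentially a direct verification obtained by multiplying the recursive relations defining $\sigma^{(i)}(x)$ and $\tau^{(i)}(x)$ by the syndrome polynomial $S(x)$.

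For part (1), I would simply observe that definition \ref{defi:sigma_omega_i} sets $\sigma^{(0)}(x)=1$ and definition \ref{defi:tau_gamma} sets $\tau^{(0)}(x)=1$. Multiplying both by $S(x)$ gives $\Delta^{(0)}(x)=\Theta^{(0)}(x)=S(x)$.

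For part (2), the plan is to take the recurrence
$$\sigma^{(i+1)}(x)=\sigma^{(i)}(x)-\Delta_ix\tau^{(i)}(x)$$
from definition \ref{defi:sigma_omega_i}, multiply both sides by $S(x)$, and use the identity $\Delta_i=\Delta_i^{(i)}$ (which holds by definition of the $i$th discrepancy as the coefficient of $x^i$ in $\sigma^{(i)}(x)S(x)$) together with the definitions of $\Delta^{(i+1)}(x)$ and $\Theta^{(i)}(x)$. This yields exactly (\ref{rel:Delta}).

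For part (3), I would split into the two cases appearing in definition \ref{defi:tau_gamma}. In the first case ($\Delta_i^{(i)}=0$ or $2D(i)\geq i+1$) we have $\tau^{(i+1)}(x)=x\tau^{(i)}(x)$, and multiplying by $S(x)$ gives $\Theta^{(i+1)}(x)=x\Theta^{(i)}(x)$. In the second case ($\Delta_i^{(i)}\neq 0$ and $2D(i)<i+1$) we have $\tau^{(i+1)}(x)=\sigma^{(i)}(x)/\Delta_i$, so multiplication by $S(x)$ gives $\Theta^{(i+1)}(x)=\Delta^{(i)}(x)/\Delta_i^{(i)}$, matching (\ref{rel:Theta}). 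There is no real obstacle here: the entire proposition is a purely formal consequence of multiplying the defining recurrences by $S(x)$ and renaming quantities according to definition \ref{defi:DeltaTheta}.
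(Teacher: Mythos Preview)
Your proposal is correct and matches the paper's proof essentially line for line: each part is obtained by multiplying the defining recurrence for $\sigma^{(i)}$ or $\tau^{(i)}$ by $S(x)$ and invoking $\Delta_i=\Delta_i^{(i)}$.
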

\begin{proof}
 \begin{enumerate}
  \item Trivial using that $\sigma^{(0)}(x)=\tau^{(0)}(x)=1$ by definition;
  \item From the definition \ref{defi:sigma_omega_i}, it follows that
  \begin{align*}
    \Delta^{(i+1)}(x)&=\sigma^{(i+1)}(x)S(x)=\left(\sigma^{(i)}(x)-\Delta_ix\tau^{(i)}(x)\right)S(x)=\\
                     &=\sigma^{(i)}(x)S(x)-\Delta_i^{(i)}x\tau^{(i)}(x)S(x)=\\
                     &=\Delta^{(i)}(x)-\Delta_i^{(i)}x\Theta^{(i)}(x)
  \end{align*}
  \item From definition \ref{defi:tau_gamma} it follows that if $\Delta_i^{(i)}=0$ or $ 2D(i)\geq i+1$ then
  $$\Theta^{(i+1)}(x)=\tau^{(i+1)}(x)S(x)=x\tau^{(i)}(x)S(x)=x\Theta^{(i)}(x)$$
  Whereas if $\Delta_i^{(i)}\neq0$ and $2D(i)<i+1$ then
  $$\Theta^{(i+1)}(x)=\tau^{(i+1)}(x)S(x)= \frac{\sigma^{(i)}(x)}{\Delta_i^{(i)}}S(x)= \frac{\Delta^{(i)}(x)}{\Delta_i^{(i)}}$$
  \end{enumerate}
\end{proof}

Proposition \ref{prop:DeltaTheta} allows us to calculate the discrepancies $\Delta_1,\Delta_2,\dots,\Delta_{d-2}$ avoiding the sum of several elements as in (\textasteriskcentered). Indeed it is sufficient to implement the update of the polynomial $\Delta^{(i)}(x)$ as described in the previous proposition in order to compute the $(i+1)$th discrepancy at the same time that the polynomial $\sigma^{(i)}(x)$  is been computed.

Moreover we note that for any $j<i$, the coefficients $\Delta_j^{(i)}$ and $\Theta_j^{(i)}$  cannot affect the value of any later discrepancies $\Delta_{i+k}$. Consequently, we need not store the coefficients $\Delta_j^{(i)}$ and $\Theta_j^{(i)}$ for $j<i$ and we can save memory space and circuit elements, defining
\begin{align}
 \widehat\Delta^{(i)}(x)&\overset{def}{=}\sum_{j=0}^{d-2}\Delta_{i+j}^{(i)}\,x^j\label{Deltacap}\\
 \widehat\Theta^{(i)}(x)&\overset{def}{=}\sum_{j=0}^{d-2}\Theta_{i+j}^{(i)}\,x^j\label{Thetacap}
\end{align}
From this it holds that:
\begin{prop}\label{prop:DeltaTheta2}
 If $\widehat\Delta^{(i)}_j$ is the coefficient of $x^j$ in the polynomial $\widehat\Delta^{(i)}(x)$, then
 \begin{enumerate}[\ \ \ (1)]
  \item $\widehat\Delta^{(0)}=S(x)$ and $\;\widehat\Theta^{(0)}(x)=S(x)$;
  \item For any $i\in\{0,1,\dots,d-2\}$, $$\displaystyle\widehat\Delta^{(i+1)}(x) =\pt{\sum_{j=1}^{d-2}\widehat\Delta^{(i)}_jx^{j-1}}-\widehat\Delta^{(i)}_0\widehat\Theta^{(i)}(x)$$
  \item For any $i\in\{0,1,\dots,d-2\}$, 
  $$\widehat\Theta^{(i+1)}(x)=
  \begin{cases} \widehat\Theta^{(i)}(x)\quad&\text{ if }\;\widehat\Delta_0^{(i)}=0 \text{ or } 2D(i)\geq i+1\\
  \dfrac{1}{\widehat\Delta_0^{(i)}}\pt{\displaystyle\sum_{j=1}^{d-2}\widehat\Delta^{(i)}_jx^{j-1}}\quad&\text{ if }\;\widehat\Delta_0^{(i)}\neq0\text{ and } 2D(i)<i+1
        \end{cases}$$
 \end{enumerate}
 \end{prop}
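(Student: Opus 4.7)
Part (1) is a direct unwinding of the definitions: since by construction $\sigma^{(0)}(x)=\tau^{(0)}(x)=1$, Proposition \ref{prop:DeltaTheta}.(1) gives $\Delta^{(0)}(x)=\Theta^{(0)}(x)=S(x)$; then substituting $\Delta^{(0)}_j=\Theta^{(0)}_j=S_{j+1}$ into (\ref{Deltacap})--(\ref{Thetacap}) with $i=0$ recovers $\widehat\Delta^{(0)}(x)=\widehat\Theta^{(0)}(x)=\sum_{j=0}^{d-2} S_{j+1}x^j=S(x)$.

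For Parts (2) and (3) the strategy is to translate the recurrences of Proposition \ref{prop:DeltaTheta} coefficient by coefficient and then re-index via $\widehat\Delta^{(i)}_j=\Delta^{(i)}_{i+j}$ and $\widehat\Theta^{(i)}_j=\Theta^{(i)}_{i+j}$. Concretely, for Part (2) I would read off the coefficient of $x^{(i+1)+j}$ on both sides of (\ref{rel:Delta}), obtaining
\[
\widehat\Delta^{(i+1)}_j \;=\; \Delta^{(i+1)}_{i+1+j} \;=\; \Delta^{(i)}_{i+1+j}-\Delta^{(i)}_i\,\Theta^{(i)}_{i+j} \;=\; \widehat\Delta^{(i)}_{j+1}-\widehat\Delta^{(i)}_0\,\widehat\Theta^{(i)}_j,
\]
valid for every $j\in\{0,1,\dots,d-3\}$; summing $x^j\cdot(\,\cdot\,)$ over this range yields exactly the formula in (2). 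Part (3) is analogous: in the branch $\widehat\Delta^{(i)}_0=0$ or $2D(i)\geq i+1$ the rule $\Theta^{(i+1)}(x)=x\Theta^{(i)}(x)$ gives $\Theta^{(i+1)}_{(i+1)+j}=\Theta^{(i)}_{i+j}$, hence $\widehat\Theta^{(i+1)}=\widehat\Theta^{(i)}$; in the other branch $\Theta^{(i+1)}(x)=\Delta^{(i)}(x)/\Delta^{(i)}_i$ gives $\widehat\Theta^{(i+1)}_j=\widehat\Delta^{(i)}_{j+1}/\widehat\Delta^{(i)}_0$, which assembled into a polynomial is precisely the stated expression.

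The only genuinely non-formal step, and the step I would flag as the \emph{one obstacle to watch}, is the treatment of the top index $j=d-2$, because $\widehat\Delta^{(i)}$ only stores coefficients up to $x^{d-2}$ and a naive re-indexing would require also knowing $\widehat\Delta^{(i)}_{d-1}=\Delta^{(i)}_{i+d-1}$ (and similarly $\Theta^{(i)}_{i+d-1}$). I would resolve this by a degree argument: since $D(i)\le i$ for all $i$ (an immediate induction from Definition \ref{defi:D(i)}), Proposition \ref{prop:BM1} gives $\deg(\sigma^{(i)})\le D(i)\le i$ and $\deg(\tau^{(i)})\le i-D(i)\le i$, while $\deg(S)\le d-2$; therefore
\[
\deg\Delta^{(i)}\le i+d-2,\qquad \deg\Theta^{(i)}\le i+d-2,
\]
so $\Delta^{(i)}_{i+d-1}=\Theta^{(i)}_{i+d-1}=0$. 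This ensures that both the ``missing'' leading terms vanish, and the recurrences of (2) and (3) are exact equalities of polynomials of degree at most $d-2$, as stated.

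In summary, the proof is essentially a routine coefficient match with the substitution $\widehat\Delta^{(i)}_j=\Delta^{(i)}_{i+j}$, $\widehat\Theta^{(i)}_j=\Theta^{(i)}_{i+j}$, conditional on the degree bound above, which controls the boundary behaviour at $j=d-2$. No induction on $i$ is needed beyond Part (1): Parts (2) and (3) follow from the corresponding identities in Proposition \ref{prop:DeltaTheta}, reinterpreted through the truncations (\ref{Deltacap}) and (\ref{Thetacap}).
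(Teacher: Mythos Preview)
Your proposal is correct and follows essentially the same approach as the paper: a direct coefficient-by-coefficient translation of the recurrences in Proposition~\ref{prop:DeltaTheta} via the re-indexing $\widehat\Delta^{(i)}_j=\Delta^{(i)}_{i+j}$, $\widehat\Theta^{(i)}_j=\Theta^{(i)}_{i+j}$, with the vanishing of $\Delta^{(i)}_{i+d-1}$ (from $\deg\sigma^{(i)}+\deg S\le i+d-2$) handling the top coefficient. The paper's proof is slightly more terse and only explicitly invokes $\Delta^{(i)}_{i+d-1}=0$ (the analogous bound for $\Theta^{(i)}$ that you mention is not actually needed, since the first branch of (3) has no boundary issue and the second branch again reduces to the $\Delta$-bound).
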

\begin{proof}
 \begin{enumerate}
  \item By (\ref{Deltacap}) and (\ref{Thetacap}) it follows that $\widehat\Delta^{(0)}(x)=\Delta^{(0)}(x)=S(x)$ and $\widehat\Theta^{(0)}(x)=\widehat\Theta^{(0)}(x)=S(x)$.
  \item By (\ref{Deltacap}) and (\ref{rel:Delta}) it follows that 
        \begin{align*}
        \displaystyle\widehat\Delta^{(i+1)}(x)&=\sum_{j=0}^{d-2}\Delta_{i+1+j}^{(i+1)}\,x^j =\\
        &=\sum_{j=0}^{d-2}\pt{\Delta_{i+1+j}^{(i)}-\Delta_{i}^{(i)}\Theta_{i+j}^{(i)}}\,x^j \overset{\overset{\Delta^{(i)}_{i+d-1}=0}{\downarrow}}{=}\\
        &=\sum_{j=1}^{d-2}\Delta_{i+j}^{(i)}\,x^{j-1}-\Delta_{i}^{(i)}\sum_{j=0}^{d-2}\Theta_{i+j}^{(i)}\,x^j=\\
        &=\pt{\sum_{j=1}^{d-2}\widehat\Delta^{(i)}_jx^{j-1}}-\widehat\Delta^{(i)}_0\widehat\Theta^{(i)}(x)
        \end{align*}
        where we used that $\Delta^{(i)}_{i+d-1}=0$ because $$\Deg{\Delta^{(i)}(x)}=\Deg{\sigma^{(i)}(x)}+\Deg{S(x)}\leq i+d-2$$
  \item (3) follows by (\ref{Thetacap}) and (\ref{rel:Theta}) similarly to what already done for (2).
 \end{enumerate}
\end{proof}

Before showing in detail the parallel implementation of the \BM decoding algorithm due to the reformulated discrepancy computation as seen in proposition \ref{prop:DeltaTheta2}, we present in the following proposition the relation between the error values and the coefficients of the polynomial $\widehat\Delta^{(d-1)}(x)$. We will use it to reformulate the step \refbm{BM4}.

\begin{prop}\label{prop:DeltaTheta3}
 Let $\widehat \Delta^{(d-1)}_j$'s be the  coefficients of the polynomial $\widehat \Delta^{(d-1)}(x)$ and let $e=\wt(\e)$ the number of errors which occurred. If $e\leq t$  then for any $i\in\{1,2,\dots,e\}$ it holds that
 \begin{equation} \label{fine}
 E_i=\dfrac{\widehat\Delta^{(d-1)}(X_i^{-1})\cdot\pt{X_i^{-1}}^{d-1}}{\sigma'(X_i^{-1})} 
 \end{equation}
\end{prop}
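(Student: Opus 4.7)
The plan is to relate the polynomial $\widehat\Delta^{(d-1)}(x)$ directly to the error-evaluator polynomial $\omega(x)$ and then conclude via Forney's formula (\ref{forney}).

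First, I would invoke corollary \ref{cor:BM2bis}: since $e\leq t$, we have $\sigma^{(d-1)}(x)=\sigma(x)$ and $\omega^{(d-1)}(x)=\omega(x)$, so by definition \ref{defi:DeltaTheta} the polynomial $\Delta^{(d-1)}(x)$ equals $\sigma(x)S(x)$. The key equation (\ref{keyequation}) then says that $\sigma(x)S(x)\equiv\omega(x)\pmod{x^{d-1}}$, i.e.\ the coefficients of $x^0,x^1,\dots,x^{d-2}$ in $\Delta^{(d-1)}(x)$ are precisely those of $\omega(x)$. By definition of $\widehat\Delta^{(d-1)}(x)=\sum_{j=0}^{d-2}\Delta^{(d-1)}_{d-1+j}\,x^j$, the remaining (higher) coefficients of $\Delta^{(d-1)}(x)$ are collected in $\widehat\Delta^{(d-1)}(x)$; here I would check the degree bound $\deg(\sigma(x)S(x))\leq e+d-2\leq 2d-3$ (which holds because $e\leq t\leq(d-1)/2$) to ensure no truncation occurs. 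This yields the decomposition
$$\sigma(x)\,S(x)\;=\;\omega(x)+x^{d-1}\,\widehat\Delta^{(d-1)}(x).$$

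Next, I would evaluate this polynomial identity at $x=X_i^{-1}$. Since $X_i^{-1}$ is a root of $\sigma(x)$ (by definition \ref{defi:sigma_omega}), the left-hand side vanishes and we obtain
$$\omega(X_i^{-1})\;=\;-\bigl(X_i^{-1}\bigr)^{d-1}\widehat\Delta^{(d-1)}(X_i^{-1}).$$
Substituting this expression into Forney's formula gives
$$E_i\;=\;-\frac{\omega(X_i^{-1})}{\sigma'(X_i^{-1})}\;=\;\frac{\widehat\Delta^{(d-1)}(X_i^{-1})\cdot\bigl(X_i^{-1}\bigr)^{d-1}}{\sigma'(X_i^{-1})},$$
which is exactly (\ref{fine}).

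The only nontrivial step is the decomposition $\sigma(x)S(x)=\omega(x)+x^{d-1}\widehat\Delta^{(d-1)}(x)$, which is really just a bookkeeping statement: the key equation fixes the low-degree part of $\sigma(x)S(x)$ as $\omega(x)$, and $\widehat\Delta^{(d-1)}$ is designed precisely to capture the remaining high-degree part (after shifting back by $x^{d-1}$). Once this observation is made, the rest of the argument is a one-line substitution into Forney's formula, and no new ideas are required.
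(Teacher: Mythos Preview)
Your proof is correct and follows essentially the same route as the paper: both derive the decomposition $\sigma(x)S(x)=\omega(x)+x^{d-1}\widehat\Delta^{(d-1)}(x)$ from the key equation and the degree bound, evaluate at $X_i^{-1}$ to kill the left-hand side, and then substitute into Forney's formula.
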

\begin{proof}
 By corollary \ref{cor:BM2bis} we know that if $e\leq t$ then $\sigma^{(d-1)}(x)=\sigma(x)$. Thus we have that:
 $$\Delta^{(d-1)}(x)=\sigma^{(d-1)}(x)S(x)=\sigma(x)S(x)$$
 and by the key equation (\ref{keyequation}) it follows that 
 $$\Delta^{(d-1)}(x)\equiv \omega(x) \Mod{x^{d-1}}$$
 Now we observe that $\Deg{\Delta^{(d-1)}(x)}=\Deg{\sigma(x)}+\Deg{S(x)}=e+d-2$, while $\Deg{\omega(x)}\leq e-1$. Thus we conclude that $\Delta^{(d-1)}_{e+d-2+j}=0$ for any $j\leq1$ and that the previous congruence implies:
 \begin{align*}
  \Delta^{(d-1)}(x)  &=\omega(x)+\pt{\Delta^{(d-1)}_{d-1}+\Delta^{(d-1)}_{d}x+\cdots+\Delta^{(d-1)}_{e+d-2}x^{e-1}}x^{d-1}=\\[0,2cm]
  &=\omega(x)+\widehat\Delta^{(d-1)}(x)\,x^{d-1}
 \end{align*}
 Hence for any $i\in\{1,2,\dots,e\}$ we have that
 $$0=\Delta^{(d-1)}(X_i^{-1})=\omega(X_i^{-1})+\widehat\Delta^{(d-1)}(X_i^{-1})\cdot\pt{X_i^{-1}}^{d-1}$$
 and substituting $\omega(X_i^{-1})=-\widehat\Delta^{(d-1)}(X_i^{-1})\cdot\pt{X_i^{-1}}^{d-1}$ in  Forney's formula (\ref{forney}) we can conclude the proof.\\
\end{proof}
\vspace*{1cm}

We state now the \emph{parallel Berlekamp-Massey} (\pBM) \emph{decoding algorithm:}

\begin{alg}[\textbf{\pBM decoding algorithm} for $\RS$]\hspace{2cm}
\label{alg:pBM}
\begin{description}
\item[\texttt{Input}:] the received word $\rw(x)$;
 \item[\texttt{Output}:] the  codeword $\cw(x)$;
\end{description}
\texttt{Begin}
\begin{enumerate}[\bfseries \pBM.1]
        \item \label{pBM1}
              \textit{(syndrome computation)}\\[-1cm]
              \begin{tabbing}
               \texttt{for }\= $i=1,2,\dots, d-1$ \texttt{ do in parallel}\\
               \>$S_i:=\rw(\alpha^i)$;\\
               \texttt{endfor}               
              \end{tabbing}
        \item \label{pBM2}
              \textit{(error-locator polynomial computation)}\\
              $\sigma^{(0)}:=1$;\\ 
              $\tau^{(0)}:=1$;\\
              $\widehat\Delta^{(0)}:=S(x)$;\\
              $\widehat\Theta^{(0)}:=S(x)$;\\
              $D(0):=0$;                   
               \begin{tabbing} 
                 \texttt{for} \= $i=0,1,\dots,d-2$ \texttt{ do in parallel}\\[0,2cm]
                 \>$\sigma^{(i+1)}(x):=\sigma^{(i)}(x)-\widehat\Delta_0^{(i)}x\tau^{(i)}(x)$;\\[0,2cm]
                 \>$\displaystyle\widehat\Delta^{(i+1)}(x) :=\pt{\sum_{j=1}^{d-2}\widehat\Delta^{(i)}_jx^{j-1}}-\widehat\Delta^{(i)}_0\widehat\Theta^{(i)}(x)$;\\[0,2cm]                                    
                 \>\texttt{if} \= ($\widehat\Delta_0^{(i)}=0$ or $2D(i)\geq i+1$)   \texttt{then}\\[0,2cm] 
                 \>\>$D(i+1):=D(i)$;\\
                 \>\>$\tau^{(i+1)}(x):=x\tau^{(i)}(x)$;\\
                 \>\>$\widehat\Theta^{(i+1)}(x):=\widehat\Theta^{(i)}(x)$;\\                                                                                                                                                                                                                                                               
                 \>\texttt{else}  \= \\
                 \>\>$D(i+1):=i+1-D(i)$;\\[0,2cm]
                 \>\>$\displaystyle\tau^{(i+1)}(x):=\frac{\sigma^{(i)}(x)}{\widehat\Delta_0^{(i)}}$;\\[0,2cm]
                 \>\>$\displaystyle\widehat\Theta^{(i+1)}(x):=\dfrac{1}{\widehat\Delta_0^{(i)}}\pt{\displaystyle\sum_{j=1}^{d-2}\widehat\Delta^{(i)}_jx^{j-1}}$;\\                                                                                                                                                                                                                                                        
                 \texttt{endfor}\\[0,2cm]
                 $e:=D(d-1)$;\\
                 $\sigma(x):=\sigma^{(d-1)}(x)$;\\
               \end{tabbing}
       \item \label{pBM3}
	     \textit{(finding error positions)}\\
             \texttt{calculate  the error positions $p_1,p_2,\dots,p_e$ and the elements $X_1^{-1}, X_2^{-1},\dots,X_e^{-1}$ using Chien's search;}\\[0,5cm]
       \item \label{pBM4}
             \textit{(finding the error values)}\\[-1,2cm]
             \begin{tabbing}
             \texttt{for }\= $i=1,2,\dots,e$ \texttt{ do in parallel}\\[0,2cm]
             \>$E_i:=\dfrac{\widehat\Delta^{(d-1)}(X_i^{-1})\cdot\pt{X_i^{-1}}^{d-1}}{\sigma'(X_i^{-1})}$;\\
             \texttt{endfor}
             \end{tabbing}
\end{enumerate}
\texttt{Return } $\cw(x):=\displaystyle \rw(x)-\sum_{i=1}^e E_{i}x^{p_i};$\\
\texttt{End}
\end{alg}

The correctness of the \pBM   decoding algorithm follows by propositions \ref{prop:DeltaTheta2} and \ref{prop:DeltaTheta3}.
The instructions of step \pBM.2 can be implemented in a systolic architecture  composed of an array of $d+t$ circuit of the same type, which store and update the polynomial coefficients, and of a control unit, which computes the function $D$ and determines whether the discrepancy is zero computing its inverse if necessary (see figure \ref{fig:pBM2}).

\begin{figure*}[h!]
\begin{center}
\begin{tikzpicture}[auto, node distance=2cm,>=latex']
    \node [coordinate](pall0){};
    \node [block, right of=pall0, node distance=98mm](control){\footnotesize CONTROL};
    \node [block, below of=pall0, text width=1cm](s1){$c_0$};
    \node [block, right of=s1,text width=1cm](s2){$c_1$};
    \node [coordinate, right of=s2](spazio1){};
    \node [block, right of=spazio1,text width=1cm](s3){$c_{d-3}$};
    \node [block, right of=s3,text width=1cm](s4){$c_{d-2}$};
    \node [pall, below of=s1,node distance=3cm] (pall1){};
    \node [pall, below of=s2, node distance=3cm] (pall2){};
    \node [pall, below of=s3, node distance=3cm] (pall3){};
    \node [pall, below of=s4, node distance=3cm] (pall4){};
    \node [pall, right of=pall4, node distance=6cm](pall5){};
    \node [block, below of=pall1, text width=1cm] (p1){$c_{d-1}$};
    \node [block, right of=p1,text width=1cm](p2){$c_{d}$};
    \node [coordinate, right of=p2](spazio2){};
    \node [block, right of=spazio2,text width=1cm](p3){$c_{d+t-2}$};
    \node [block, right of=p3,text width=1cm](p4){$c_{d+t-1}$};
    \node [pall, below of=p1,node distance=3cm] (pall6){};
    \node [pall, below of=p2, node distance=3cm] (pall7){};
    \node [pall, below of=p3, node distance=3cm] (pall8){};
    \node [pall, below of=p4, node distance=3cm] (pall9){};
    \node [pall, right of=pall9, node distance=6cm](pall10){};
     \draw [->] (s2) --  (s1);
     \draw [->] (s4) --  (s3);
     \draw [->] (p2) --  (p1);
     \draw [] (control) --(pall10);
     \draw []  (pall10) -- (pall6);
     \draw []  (pall1) -- (pall5);
     \draw [->] (p4) -- (p3); 
     \draw [->, dashed](p3) -- (p2);
     \draw [->, dashed](s3)-- (s2);
     \draw [->] (pall1) -- (s1);
     \draw [->] (pall2) -- (s2);
     \draw [->] (pall3) -- (s3);
     \draw [->] (pall4) -- (s4);
     \draw [->] (pall6) -- (p1);
     \draw [->] (pall7) -- (p2);
     \draw [->] (pall8) -- (p3);
     \draw [->] (pall9) -- (p4);
     \draw [] (s1) -- node{{\scriptsize $\widehat\Delta_0^{(i)}$}}(pall0);
     \draw [->] (pall0) -- (control);
\end{tikzpicture}
\caption{architectures for \pBM.2}\label{fig:pBM2}
\end{center}
\end{figure*}
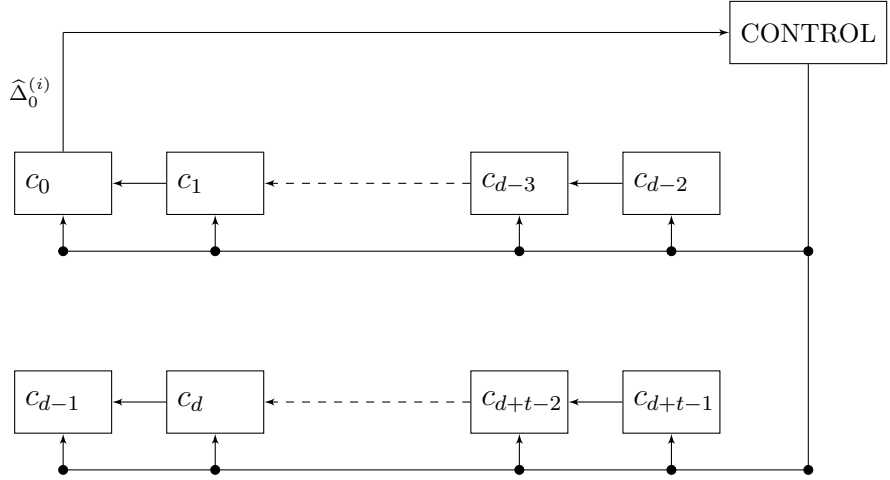

More precisely, each circuit is composed of two storage devices, two multipliers and an adder. After $i$ steps the circuit $c_j$ for $j=0,1,\dots,d-2$ contains the coefficients $\widehat\Delta^{(i)}_j$ and $\widehat\Theta^{(i)}_j$, while for $j=d-1,d, \dots, d+t-1$ stores the coefficients $\sigma^{(i)}_{d+t-1-j}$ and $\tau^{(i)}_{d+t-1-j}$. During the $(i+1)$th step, the coefficients contained in each circuit shift left and the circuit $c_j$ carries out the following instructions to update the polynomial coefficient which it contains:
\begin{align*}
&\widehat\Delta^{(i+1)}_j=\widehat\Delta^{(i)}_{j+1}-\widehat\Delta^{(i)}_0\widehat\Theta^{(i)}_j \\[0,2cm]
&\widehat\Theta^{(i+1)}_j=
  \begin{cases} \widehat\Theta^{(i)}_j\quad&\text{ if }\;\widehat\Delta_0^{(i)}=0 \text{ or } 2D(i)\geq i+1\\
  \dfrac{\widehat\Delta^{(i)}_{j+1}}{\widehat\Delta_0^{(i)}}\quad&\text{ if }\;\widehat\Delta_0^{(i)}\neq0\text{ and } 2D(i)<i+1
       \end{cases}\\
&\text{for } j=0,1,\dots,d-2       
\end{align*}
or        
\begin{align*}
&\sigma^{(i+1)}_j=\sigma^{(i)}_{j}-\widehat\Delta^{(i)}_0\tau^{(i)}_{j-1} \\[0,2cm]
&\tau^{(i+1)}_j=
  \begin{cases} \tau^{(i)}_{j-1}\quad&\text{ if }\;\widehat\Delta_0^{(i)}=0 \text{ or } 2D(i)\geq i+1\\
  \dfrac{\sigma^{(i)}_{j}}{\widehat\Delta_0^{(i)}}\quad&\text{ if }\;\widehat\Delta_0^{(i)}\neq0\text{ and } 2D(i)<i+1
        \end{cases}\\
&\text{for } j=d-1,d, \dots, d+t-1        
\end{align*}
Step \pBM.4  requires 2 circuits for Chien's search in order to evaluate the polynomials $x^{d-1}\widehat{\Delta}^{(d-1)}(x)$ and $\sigma'(x)$ in the field elements. As usual these polynomial evaluations are executed at the same time of the evaluation
of $\sigma(x)$ done in step \pBM.3 with a negligible cost with respect to the others instructions. 
Following the evaluations, the error values $E_i$'s are computed executing simultaneously the divisions needed with $e$ dividers. Thus, recalling that $t=\pint{\frac{d-1}{2}}$, the complexity of step \pBM.2 and \pBM.4 can be summarized as in table \vref{tab:pBM}. Note that  we consider the upper bound of the quantities involved, which is reached when $d$ is even and $d-1=2t+1$. When $d$ is odd and $d-1=2t$, the number of arithmetic operations needed decreases to  $2t$, while $6t+2$ multipliers and $3t+1$ adders are enough.

\begin{table}[h]
 \begin{center}
 {\renewcommand{\arraystretch}{2.1}
 \begin{tabular}{|c|c|c|}
 \hline
    & \emph{time complexity} & \emph{space complexity}\\
  \hline
  \textbf{\pBM.2} & \begin{tabular}{ll}
                   $2t+1$ &{\small inversions}\\[-.3cm]
                   $2t+1$ &{\small multiplications}\\[-.3cm]
                   $2t+1$ &{\small additions}
             \end{tabular} 
          & \begin{tabular}{cl}
                   1  &{\small inversion circuit}\\[-.3cm]
                   $6t+4$  &{\small multipliers}\\[-.3cm]
                   $3t+2$  &{\small adders}
             \end{tabular}   \\[0,2cm]
 \hline
  \textbf{\pBM.4} & \begin{tabular}{ll}
                   $1$ &{\small division}
             \end{tabular} 
           & \begin{tabular}{ll}
                   $t$  &{\small dividers}
             \end{tabular}\\[0,2cm]
 \hline
 \end{tabular}}
 \caption{parallel complexity of \pBM.2 and \pBM.4 }\label{tab:pBM}
\end{center}
\end{table}

\begin{rem}\label{rem:inversionless}
As seen at the end of section \ref{sect:BMalg}, the \BM decoding algorithm \ref{alg:BM} can be modified to avoid the inversions computed in step \BM.2. If in  this section we consider the polynomials $\myhat\sigma^{(i)}(x)$ and $\myhat\tau^{(i)}(x)$ (see definition \ref{defi:inversionless}) instead of the polynomials $\sigma^{(i)}(x)$ and $\tau^{(i)}(x)$, then in step \pBM.2 
we will have the following instructions 
\begin{enumerate}[\bfseries \pBM.2b]
       \item 
              $\myhat\sigma^{(0)}:=1$;\\ 
              $\myhat\tau^{(0)}:=1$;\\
              $\myhat\Delta^{(0)}:=S(x)$;\\
              $\myhat\Theta^{(0)}:=S(x)$;\\
              $D(0):=0$;\\ $\beta(0):=1$;                   
               \begin{tabbing} 
                 \texttt{for} \= $i=0,1,\dots,d-2$ \texttt{ do in parallel}\\[0,2cm]
                 \>$\myhat\sigma^{(i+1)}(x):=
                 \beta(i)\myhat\sigma^{(i)}(x)-\myhat\Delta_0^{(i)}x\myhat\tau^{(i)}(x)$;\\[0,2cm]
                 \>$\displaystyle\myhat\Delta^{(i+1)}(x): =\beta(i)\pt{\sum_{j=1}^{d-2}\myhat\Delta^{(i)}_jx^{j-1}}-\myhat\Delta^{(i)}_0\myhat\Theta^{(i)}(x)$;\\[0,2cm]                                    
                 \>\texttt{if} \= ($\myhat\Delta_0^{(i)}=0$ or $2D(i)\geq i+1$)   \texttt{then}\\[0,2cm] 
                 \>\>$D(i+1):=D(i)$;\\
                 \>\>$\beta(i+1):=\beta(i)$;\\
                 \>\>$\myhat\tau^{(i+1)}(x):=x\myhat\tau^{(i)}(x)$;\\
                 \>\>$\myhat\Theta^{(i+1)}(x):=\myhat\Theta^{(i)}(x)$;\\                                                                                                                                                                                                                                                               
                 \>\texttt{else}  \= \\
                 \>\>$D(i+1):=i+1-D(i)$;\\
                 \>\>$\beta(i+1):=\myhat\Delta_0^{(i)}$\\
                 \>\>$\myhat\tau^{(i+1)}(x):=\myhat\sigma^{(i)}(x)$;\\[0.2cm]
                 \>\>$\myhat\Theta^{(i+1)}(x):=\displaystyle\sum_{j=1}^{d-2}\myhat\Delta^{(i)}_jx^{j-1}$;\\                                                                                                                                                                                                                                                        
                 \texttt{endfor}\\
                 $e:=D(d-1)$;\\
                 $\myhat\sigma(x):=\myhat\sigma^{(d-1)}(x)$;
               \end{tabbing}
 \end{enumerate}

In this way each discrepancy inversion  is replaced by multiplications (of the coefficients of the polynomials involved) executable at the same time the other multiplications already present (see \cite{sarwateP} for more details). So the polynomials $\myhat{\sigma}^{(d-1)}(x)=b\cdot \sigma^{(d-1)}$ and $\myhat{\Delta}^{(d-1)}(x)=b\cdot\widehat{\Delta}^{(d-i)}(x)$, where
$$b=\prod_{j=0}^{d-2}\beta(j)$$
can be computed by the circuit $c_j$ with a  complexity upper bounded by:\\
\begin{table}[h!]
 \begin{center}
 {\renewcommand{\arraystretch}{2.1}
 \begin{tabular}{|c|c|c|}
 \hline
   & \emph{time complexity} & \emph{space complexity}\\
  \hline
   \textbf{\pBM.2b} &\begin{tabular}{ll}
                   $2t+1$ &{\small multiplications}\\[-.3cm]
                   $2t+1$ &{\small additions}
             \end{tabular} 
          & \begin{tabular}{cl}
                   $6t+4$  &{\small multipliers}\\[-.3cm]
                   $3t+2$  &{\small adders}
             \end{tabular}   \\[0,2cm]
 \hline
 \end{tabular}}
 \caption{parallel complexity of inversionless \pBM.2}\label{tab:INpBM}
\end{center}
\end{table}
\end{rem}
Finally we note that the error value formula (\ref{fine}) used in \pBM.4 does not change, indeed it holds that
$$E_i=\dfrac{\myhat\Delta^{(d-1)}(X_i^{-1})\cdot\pt{X_i^{-1}}^{d-1}}{\myhat\sigma'(X_i^{-1})} $$

\backmatter

\chapter{Conclusions}

In this thesis we have studied and proposed several decoding algorithms for Reed-Solomon codes, dwelling on their computational time complexity. In particular, we have proved that, when the decoding is expressed in terms of linear systems and tools of linear algebra,  a detailed study of the matrices involved (the syndrome matrix $A$, the matrices $A_i$ and $B_j$) leads to efficient procedures for computing both the error-locator polynomial and the error values. This permits to see the linear algebra approach in a new light, since it is now competitive with other decoding strategies. 
Finally the linear algebra techniques allow to reach the goal of a parallel implementation of the \PGZ decoding strategy.
We note  that the computational time cost is lower for the \pPGZ  decoding algorithm \ref{alg:pPGZ} than for the \pBM   decoding algorithm \ref{alg:pBM} (proposed in \cite{sarwateP}),
but the second one allows a parallel implementation that employs a smaller number of circuit elements arranged in a simpler systolic architecture (see table \vref{tab:comparision}). Thus the \pPGZ  decoding algorithm \ref{alg:pPGZ} can represent the better choice for Reed-Solomon codes in some special cases, as when the error correction capability is small and the transmission channel is quite good,  while the \pBM   decoding algorithm \ref{alg:pBM} allows a linear time decoding with a linear number of hardware elements also in more general cases.


As intermediate results we have proved the following:
\begin{itemize}
%
 \item from a theoretical point of view, we have used the \fPGZ decoding algorithm \ref{alg:fastPGZ} to uncover the existing relationship  between  the leading principal minors $A_i$ of the syndrome matrix and the discrepancies $\Delta_j$ computed by the \BM decoding algorithm \ref{alg:BM}. Indeed in theorem \ref{thm:comparison} we have proved that if $\det(A_i)\neq0$, then
 $$r=\min\{j\,|\,\det(A_{i+j})=0\}  \Rightarrow  \Delta_{2i}=\cdots=\Delta_{2i+r-2}=0 
 \text{ and }\Delta_{2i+r-1}\neq0 $$
 We have achieved this result by comparing the intermediate outcomes of the \fPGZ decoding algorithm \ref{alg:fastPGZ} with the ones of the \BM decoding algorithm \ref{alg:BM};
   
 \item we have found the necessary and sufficient conditions that must be added to the \fPGZ and the \BM decoding algorithms in order to make them $t$-bounded distance decoding algorithms;
 
 \item for what concerns the error value computation, we have proved a new formula for the \BM decoder (used in \BM.4b), which  needs fewer arithmetic operations than Forney's formula. Moreover we have noted that step \BM.4 (or \BM.4b)  allows an advantageous pipelined implementation with step \BM.3. In step \fPGZ.4 this is not possible because the error values are calculated by solving a linear system in which the coefficient matrix is formed by some powers of the $\sigma(x)$ roots.  We have eliminated this disadvantage of the \fPGZ decoding algorithm by proving in proposition \ref{prop:HoriPGZ} that
 $$E_i=-\dfrac{\varepsilon_{e-\theta}\cdot \displaystyle\left( X_i^{-1}\right)^{e+\theta-1}}{\displaystyle\sigma'(X_i^{-1})P_{\vct{w}^{(\theta)}}( X_i^{-1})}$$
 This formula allows to compute each $E_i$ using the vector $\vct{w}^{(\theta)}$ and coefficient $\varepsilon_{e-\theta}$, which are byproducts of the computation of $\sigma(x)$ in \fPGZ.2. Moreover it can be executed by a pipelined implementation with step \fPGZ.3;
 
 \item as regards the parallel implementation, we have proved that the proposed \pPGZ  decoding algorithm \ref{alg:pPGZ} has an $O(e)$ multiplicative time complexity with $O(t\cdot\binom{t}{\pint{t/2}})$ circuit elements. Moreover we have studied the \pBM   decoding algorithm \ref{alg:pBM}, which has an $O(t)$ multiplicative time complexity with $O(t)$ circuit elements. For the second one our main contribution concerns the formalization of the proof of the algorithm correctness.\\
 \begin{table}[h!]
 \begin{center}
 {\renewcommand{\arraystretch}{2.1}
 \begin{tabular}{|c|c|c|}
 \hline
   & \emph{time complexity} & \emph{space complexity} \\
  \hline
   \textbf{\pPGZ} &  \begin{tabular}{cl}
                 $1$ &{\small divisions}\\[-.3cm]
                   $e+1$ &{\small multiplications}\\[-.3cm]
                 $e\Pint{\log_2e} $ &{\small additions}
             \end{tabular} 
         & \begin{tabular}{cl}
                  $t$ &{\small dividers}\\[-.3cm]
                   $t\binom{t}{\pint{t/2}}+2t$  &{\small multipliers}\\[-.3cm]
                   $t\binom{t}{\pint{t/2}}$  &{\small adders}
             \end{tabular} \\[0,2cm]
 \hline
 \textbf{\pBM} & \begin{tabular}{cl}
                 $1$ &{\small divisions}\\[-.3cm]
                  $2t+1$ &{\small multiplications}\\[-.3cm]
                   $2t+1 $ &{\small additions}
             \end{tabular} 
      & \begin{tabular}{cl}
                  $t$ &{\small dividers}\\[-.3cm]
                  $6t+4$ &{\small multipliers}\\[-.3cm]
                   $3t+2$  &{\small adders}
             \end{tabular} \\[-1.3cm]
            {\footnotesize (inversionless)} & &\\[0,2cm]
 \hline
 \end{tabular}}
 \caption{complexity comparison}\label{tab:comparision}
\end{center}
\end{table}

\end{itemize}

\addcontentsline{toc}{chapter}{\bibname}
\nocite{*}
\bibliographystyle{amsalpha}
\bibliography{bibliografia}

\end{document}